\makeatletter\@addtoreset{equation}{section} \makeatother
\renewcommand{\tilde}{\widetilde}
\newcommand{\nad}[2]{\genfrac{}{}{0pt}{}{#1}{#2}}
\newcommand{\p}{\partial}
\newcommand{\mref}[1]{(\ref{#1})}
\renewcommand{\a}{\alpha}
\def\beq#1#2\eeq{%
        \begin{equation}%
        \label{#1}%
            #2%
        \end{equation}%
    }
\def \a {\alpha}
\def \vf {\varphi}
\def \A {\mathcal{A}}
\def \C {\mathbb{C}}
\def \R {\mathbb{R}}
\def \Z {\mathbb{Z}}
\def \N {\mathbb{N}}
\def \amn {\mathcal{A}_{(m,1^n)}}
\newcommand{\arrange}{ \mathcal{A}_{(m,1^n)} }
\newcommand{\rank}{\text{rk}}
\newtheorem{theorem}[equation]{Theorem}
\newtheorem{proposition}[equation]{Proposition}
\newtheorem{lemma}[equation]{Lemma}
\newtheorem{corollary}[equation]{Corollary}
\theoremstyle{remark}
\newtheorem{remark}[equation]{Remark}
\theoremstyle{definition}
\newtheorem{definition}[equation]{Definition}
\title{A class of  Baker-Akhiezer arrangements}
\author{M.\,Feigin, D.\,Johnston}
\begin{document}

\begin{abstract}
We study a class of arrangements of lines with multiplicities on the plane which admit the Chalykh-Veselov Baker-Akhiezer function. These arrangements are obtained by adding multiplicity one lines in an invariant way to any dihedral arrangement with invariant multiplicities.
We describe all the Baker-Akhiezer arrangements  when at most one line has multiplicity higher than 1. We study associated algebras of quasi-invariants which are isomorphic to the commutative algebras of quantum integrals for the generalized Calogero-Moser operators. We compute the Hilbert series of these algebras and we conclude that the algebras are Gorenstein. We also show that there are no  other arrangements with Gorenstein algebras of quasi-invariants when at most one line has multiplicity bigger than 1.
\end{abstract}

\address{School of Mathematics and Statistics, University of Glasgow, 15 University Gardens,
Glasgow G12 8QW, UK}

\email{misha.feigin@glasgow.ac.uk, davidjohnston12@hotmail.co.uk}

\maketitle

\section{Introduction}

The notion of a multi-dimensional Baker-Akhiezer (BA)  function 
was introduced by Chalykh and Veselov in \cite{CV}
in connection with quantum Calogero-Moser systems. In that context the BA function is a special common eigenfunction of the Calogero-Moser operator and its quantum integrals.
Chalykh, Styrkas, Veselov and one of the authors studied the BA functions associated
with finite sets of  vectors in $\mathbb{C}^N$ taken with integer multiplicities \cite{CSV}, \cite{CFV}. Besides the  relevance to quantum integrable systems of Calogero-Moser type it was established in \cite{CFV} that the BA functions are closely related to the Huygens' Principle in the Hadamard sense (see also \cite{BerVes}, \cite{BL}).

Let us recall the construction of \cite{CSV} in more detail. Let $A$ be a finite set of non-collinear vectors $\alpha_i \in \mathbb{C}^N$, $ 0 \leq i \leq n$, for some $n \in \N$. Let $m: A\rightarrow \mathbb{N}$ be the multiplicity function. Let $m_i= m(\a_i)$. For $\lambda=(\lambda_1, \ldots ,\lambda_N)$ and $x=(x_1, \ldots, x_N)$ let $(\lambda,x)=\displaystyle \sum_{i=1}^N \lambda_ix_i$ be the standard inner product in $\mathbb{C}^N$.
\begin{definition}{\label a}
A function $\phi(\lambda, x), \lambda, x\in \mathbb{C}^N$ is called the {\it Baker-Akhiezer function} associated with the configuration
 $\mathcal{A}=(A,m)$
 if the following two conditions are fulfilled:
\begin{itemize}
\item $\phi(\lambda,x)$ has the form $$\phi(\lambda,x)=P(\lambda,x)e^{(\lambda,x)},$$ where $P(\lambda,x)$ is a polynomial in $\lambda$ and $x$ with the highest term $ \prod_{\alpha_i \in A}(\alpha_i, \lambda)^{m_i}     (\alpha_i, x)^{m_i}$;
\item For all $\alpha_i \in A$
 $$\partial_{\alpha_i}^{2s-1}\phi(\lambda,x)|_{\Pi_i}=0,$$ where $1 \le s \le m_i$,
$\Pi_i$ is  the hyperplane $(\alpha_i, x)=0$, and $\partial_{\alpha_i}=(\alpha_i, \frac{\partial}{\partial x})$ is the normal derivative for this hyperplane.
\end{itemize}
\end{definition}
The following conditions provide an effective way to check if the BA function exists.

\begin{theorem}\cite{CSV, CFV}\label{thm1} The Baker-Akhiezer function exists for a configuration $\mathcal{A}$ if and only if  for any $\alpha_j \in A$ the following two sets of conditions hold
\begin{equation}\label{1stconds}
\sum_{i=0, \, i\neq j}^n\frac{m_{i}(\alpha_j, \alpha_i)^{2k-1}}{(x, \alpha_i)^{2k-1}}|_{\Pi_j}=0,   \tag{$\alpha_j(k)$}
\end{equation}
\begin{equation}\label{2ndconds}
\sum_{i=0, \, i\neq j}^n\frac{m_i(m_{i}+1)(\alpha_i, \alpha_i)(\alpha_j, \alpha_i)^{2k-1}}{(\alpha_i, x)^{2k+1}}|_{\Pi_j}=0,   \tag{$\tilde{\alpha}_j(k)$}
\end{equation}
where $1 \leq k \leq m_{j}$.
\end{theorem}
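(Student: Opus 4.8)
The plan is to follow the method of \cite{CSV, CFV}: the defining vanishing conditions are local along each hyperplane $\Pi_j$ and, after Taylor expansion, amount to infinitely many constraints, so the aim is first to compress them to the finite explicit system $(\alpha_j(k))$, $(\tilde\alpha_j(k))$ by a residue analysis near each $\Pi_j$, and then to prove existence by a recursive construction of $P$ in which these identities are precisely the solvability obstructions. Concretely, I would fix $j$ and work on a Zariski-dense open part of $\Pi_j$, using that $\Pi_j$ meets the remaining hyperplanes in codimension $\ge 2$ so that only the local behaviour of $\phi$ near $\Pi_j$ matters. Writing $\phi=P(\lambda,x)e^{(\lambda,x)}$ and expanding $P=\sum_{r\ge 0}P^{(r)}$ into homogeneous components of degree $N-r$ in $x$, with $N=\sum_i m_i$ and $P^{(0)}=\prod_i(\alpha_i,\lambda)^{m_i}(\alpha_i,x)^{m_i}$ the prescribed leading term, the conditions $\partial_{\alpha_j}^{2s-1}\phi|_{\Pi_j}=0$, $1\le s\le m_j$, become a recursion determining each $P^{(r)}$ from the earlier ones up to the kernel of an explicit linear operator along $\Pi_j$.

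For the necessity direction I would use that a Baker-Akhiezer function is automatically a joint eigenfunction of the generalized Calogero-Moser operator $L=\Delta-\sum_i m_i(m_i+1)(\alpha_i,\alpha_i)(\alpha_i,x)^{-2}$, with $L\phi=(\lambda,\lambda)\phi$, and of its quantum integrals (this is itself a direct consequence of the BA axioms, as in \cite{CSV, CFV}). Imposing that $L\phi-(\lambda,\lambda)\phi$, and more generally the correction terms generated by the recursion, be regular along $\Pi_j$ forces the coefficients of the singular powers $(\alpha_j,x)^{-(2k-1)}$ and $(\alpha_j,x)^{-(2k+1)}$, $1\le k\le m_j$, to vanish; evaluating these coefficients against the known leading term $P^{(0)}$ reproduces exactly $(\alpha_j(k))$ and $(\tilde\alpha_j(k))$ respectively. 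The factor $m_i(m_i+1)(\alpha_i,\alpha_i)$ in the second family is the imprint of the potential of $L$, and the range $1\le k\le m_j$ matches the $m_j$ conditions imposed on $\Pi_j$.

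For sufficiency, assuming $(\alpha_j(k))$ and $(\tilde\alpha_j(k))$ for all $j$ and $k$, I would run the recursion and verify that at each step the solvability obstruction is a linear combination of the quantities annihilated by these identities, so that no step stalls; the process terminates because on each $\Pi_j$ the singular parts to be cancelled have order at most $2m_j$, and it produces a genuine polynomial $P$ with the required leading term. Uniqueness of the Baker-Akhiezer function for a given leading term then follows from the standard fact that a function $Q(\lambda,x)e^{(\lambda,x)}$ satisfying all the vanishing conditions with $\deg_x Q<N$ is identically zero.

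The step I expect to be the main obstacle is the equivalence asserted in the theorem: proving that the displayed conditions, which are only the obstructions visible at the level of the leading term $P^{(0)}$, already force the entire recursion to go through, with no further compatibility conditions appearing at later stages and with the output a polynomial rather than a formal power series. This is where one genuinely uses the non-collinearity of $A$, the precise shape of the potential of $L$, the cut-off $k\le m_j$, and control of the interference between the singular contributions of the different hyperplanes $\Pi_i$ at the same stage of the recursion.
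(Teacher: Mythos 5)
This theorem is imported by the paper from \cite{CSV} and \cite{CFV} with no proof given, so there is no internal argument to measure you against; the comparison has to be with the proofs in those references. Your overall architecture is the right one and does match theirs: localize along each $\Pi_j$ (where $\Pi_j$ meets the other hyperplanes in codimension two), expand $P$ into components determined recursively from the prescribed leading term, and exhibit the displayed identities as the obstructions visible at the level of $P^{(0)}$, with $(\tilde\alpha_j(k))$ tied to the locus/trivial-monodromy analysis of the potential and $(\alpha_j(k))$ to the leading coefficient.

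As a proof, however, there is a genuine gap, and it sits exactly where you say it does. The entire content of the ``if'' direction is that the $m_j$ conditions per hyperplane, which constrain the configuration only through the leading term, already guarantee that every step of the recursion for $P^{(1)},P^{(2)},\dots$ is solvable, with no new compatibility conditions produced by the lower-order terms or by the interference between different hyperplanes; you state this as ``the main obstacle'' rather than resolving it, so the theorem is not proved. (Termination is not the issue: $P^{(r)}$ has degree $N-r$ in $x$, so the recursion stops after $N$ steps automatically; solvability at each step is the issue.) A second, more localized problem is in your necessity argument for $(\tilde\alpha_j(k))$: with the normalization of Definition \ref{a}, where $P$ has highest term $\prod_i(\alpha_i,\lambda)^{m_i}(\alpha_i,x)^{m_i}$, the function $\phi$ does not vanish on $\Pi_j$, so $u\phi$ with $u=\sum_i m_i(m_i+1)(\alpha_i,\alpha_i)(\alpha_i,x)^{-2}$ has a non-removable double pole along $\Pi_j$ while $\Delta\phi$ is regular; demanding regularity of $(\Delta-u)\phi-(\lambda,\lambda)\phi$ for this $\phi$ gives a contradiction rather than the conditions. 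Already in one dimension with $m=1$ one has $\phi=(\lambda x-1)e^{\lambda x}$, which satisfies $(\partial_x^2-\frac{2}{x}\partial_x)\phi=\lambda^2\phi$ but not $(\partial_x^2-\frac{2}{x^2})\phi=\lambda^2\phi$. The residue computation must instead be carried out for the renormalized eigenfunction $\psi=\phi/\prod_i(\alpha_i,\lambda)^{m_i}(\alpha_i,x)^{m_i}$, which behaves like $(\alpha_j,x)^{-m_j}$ near $\Pi_j$; the conditions $(\tilde\alpha_j(k))$ then arise as the trivial-monodromy (locus) conditions on $u$ at $\Pi_j$, which is how \cite{CFV} obtains them. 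Without this gauge adjustment, and without an argument closing the sufficiency recursion, the proposal remains a plan rather than a proof.
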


The BA function exists for very special configurations only. Chalykh, Styrkas and Veselov showed that it exists in the Coxeter case that is for $A=R_+$ a positive half of a Coxeter root system $R$ and for the multiplicity function being invariant under the corresponding Coxeter group $W$  \cite{CSV} (see also \cite{CV}). Furthemore if all the multiplicities are equal then there are no other cases \cite{CSV}. The only non-Coxeter examples for which $\phi$ is known to exist are the deformed root systems $\mathcal{A}_N(m)$ and $\mathcal{C}_{N}(m,l)$, which were found by Chalykh, Veselov and one of the authors \cite{CFVdeformednote, CFVnote2}. These configurations can be viewed as deformations of the positive halves of the classical root systems $A_N$ and $C_N$.

In the paper \cite{CFV} Chalykh, Veselov and one of the authors introduced function $\psi$ by modifying the  axiomatics of the Baker-Akhiezer function. The idea was to make the axiomatics less restrictive in order to obtain a richer class of algebraically integrable Schr\"odinger operators of Calogero-Moser type. It was shown that if the BA function $\phi$ exists then so too does $\psi$ and $\phi=\psi$, although the converse is not true. The equations (\ref{2ndconds}) are both necessary and sufficient for the existence of $\psi$ \cite{CFV}.

A direct description of all configurations admitting the BA functions or those satisfying weaker restrictions (\ref{2ndconds}) (so called \emph{locus configurations}) is not obtained yet in general. On the plane a description of all the locus configurations follows with the use of  \cite{CFV} from the results of Berest and Lutsenko \cite{BL}, \cite{B} who found all two-dimensional operators which solve a restricted Hadamard's problem. Muller  investigated the existence of planar locus configurations in more detail  and showed that for any  `coarsely symmetric'  collection of multiplicities there exists a unique  real locus configuration \cite{Muller}.

In this paper we  study a class of  planar locus configurations which admit the BA function. We show that all Muller's locus configurations with coarsely symmetric multiplicities admit the BA functions, and we describe these configurations. Conjecturally, this provides all the planar locus configurations admitting the BA functions. We also study the rings of quantum integrals of the associated operators of Calogero-Moser type. We show that these rings have nice algebraic properties that do not hold in case of general locus configurations. Another nice feature of this subclass of planar locus configurations is explored in \cite{FHV} where the integrals of Macdonald-Mehta type associated with these configurations are explicitly computed.

Recall the observation of \cite{CV} that the algebras of quantum integrals for the Calogero-Moser systems become larger at the special values of coupling parameters. In the case of Calogero-Moser system associated to a Coxeter group $W$ this algebra extends the algebra of $W$-invariant polynomials $\C[x_1,\ldots,x_N]^W$ which is isomorphic to the algebra of quantum integrals at generic parameters.

\begin{definition}(cf. \cite{CSV}, \cite{FV}). \label{qinvsdefn} A polynomial $p \in \mathbb{C}[x_1, x_2, \ldots ,x_N]$ is called \emph{quasi-invariant} with respect to $\mathcal{A}=(A, m)$ if for all $\alpha_i \in A$ and $1\le s \le m_i$ one has
$\partial_{\alpha_i}^{2s-1}p|_{\Pi_i}=0$.
\end{definition}

These polynomials form an algebra $Q_{\mathcal{A}}$.
The relation of the algebra $Q_{\mathcal{A}}$ to the BA function $\phi$ and integrability of the corresponding generalized Calogero-Moser system is clarified by the following result.
\begin{theorem}\cite{CV}, \cite{CSV} If the Baker-Akhiezer function $\phi(\lambda,x)$ exists then for any polynomial $f \in Q_{\mathcal{A}}$ there exists a differential operator $L_f=L_f(x, \frac{\partial}{\partial x})$ such that
\begin{equation*}L_f\phi(\lambda,x)=f(\lambda)\phi(\lambda,x).
\end{equation*}
These operators form a commutative algebra isomorphic to $Q_{\mathcal{A}}$. The operator
\begin{equation}\label{CM-operator}
L=\Delta-\sum_{\alpha \in A}\frac{m_{\alpha}(m_{\alpha}+1)(\alpha,\alpha)}{(\alpha,x)^2}
\end{equation}
corresponds to $f(\lambda)=\lambda^2$.
\end{theorem}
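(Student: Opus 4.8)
The assertion has three ingredients: existence of an operator $L_f$ for every $f\in Q_{\mathcal{A}}$, the fact that $f\mapsto L_f$ identifies $Q_{\mathcal{A}}$ with a commutative algebra of operators, and the identification of the operator $L$ with $f(\lambda)=(\lambda,\lambda)$. The plan is to reduce the second point to the first by soft arguments, to prove existence by a dressing/conjugation construction, and to settle the third by a direct computation using the conditions $(\tilde{\alpha}_j(k))$.

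Two preliminary facts are needed. \emph{Uniqueness of} $\phi$ (as in \cite{CSV}): if $\phi'=P'(\lambda,x)e^{(\lambda,x)}$ is another Baker--Akhiezer function for $\mathcal{A}$, then $R:=P-P'$ has strictly smaller degree than $P$ while $Re^{(\lambda,x)}$ still satisfies $\partial_{\alpha_i}^{2s-1}(\,\cdot\,)|_{\Pi_i}=0$ for $1\le s\le m_i$; counting these vanishing conditions against the dimension of the space of such low-degree $R$ forces $R\equiv 0$. \emph{Faithfulness of} $\phi$: if a differential operator $L=\sum_\gamma b_\gamma(x)\partial_x^\gamma$ with rational coefficients satisfies $L\phi=0$, then $\sum_\gamma b_\gamma(x)(\partial_x+\lambda)^\gamma P\equiv 0$, and extracting the top $\lambda$-degree part yields $\big(\sum_{|\gamma|=\mathrm{ord}\,L}b_\gamma(x)\lambda^\gamma\big)\prod_i(\alpha_i,\lambda)^{m_i}(\alpha_i,x)^{m_i}\equiv 0$; the last factor is a nonzero polynomial, so the top-order coefficients of $L$ vanish, and by induction on the order $L=0$. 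Granting existence of the $L_f$, faithfulness makes each $L_f$ unique, so $f\mapsto L_f$ is well defined; applying $L_f$ to $L_g\phi=g(\lambda)\phi$ gives $L_fL_g\phi=f(\lambda)g(\lambda)\phi$, so $L_fL_g=L_{fg}$ and likewise $L_f+L_g=L_{f+g}$ by faithfulness, and the map is an injective algebra homomorphism whose image is therefore commutative and isomorphic to $Q_{\mathcal{A}}$.

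It remains to construct $L_f$ for $f\in Q_{\mathcal{A}}$, which is the crux. Writing $P(\lambda,x)=\sum_\beta p_\beta(x)\lambda^\beta$ and $\mathcal{Q}=\sum_\beta p_\beta(x)\partial_x^\beta$, we have $\phi=\mathcal{Q}\,e^{(\lambda,x)}$ with $\mathcal{Q}$ a differential operator with polynomial coefficients whose principal symbol is $\prod_i(\alpha_i,x)^{m_i}(\alpha_i,\xi)^{m_i}$. Since $f(\partial_x)e^{(\lambda,x)}=f(\lambda)e^{(\lambda,x)}$, an operator $L_f$ satisfies $L_f\phi=f(\lambda)\phi$ as soon as $L_f\mathcal{Q}=\mathcal{Q}f(\partial_x)$ as differential operators, the difference then annihilating $e^{(\lambda,x)}$ for all $\lambda$ and hence being zero by the symbol argument above. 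Thus it suffices to show $\mathcal{Q}f(\partial_x)$ is right-divisible by $\mathcal{Q}$, equivalently that the conjugate $L_f:=\mathcal{Q}f(\partial_x)\mathcal{Q}^{-1}$ --- a priori only a pseudo-differential operator of order $\deg f$ with principal symbol $f(\xi)$ --- has vanishing negative part and coefficients regular off $\bigcup_i\Pi_i$. This is a local question along each $\Pi_i$: in the normal coordinate $u=(\alpha_i,x)$ one is conjugating $f(\partial)$ by an operator whose leading behaviour is $u^{m_i}\partial_u^{m_i}$, and the terms that would otherwise create poles or negative-order contributions are governed precisely by the Taylor coefficients $\partial_{\alpha_i}^{2s-1}f|_{\Pi_i}$, $1\le s\le m_i$, which vanish because $f\in Q_{\mathcal{A}}$, together with the conditions $(\alpha_j(k))$ and $(\tilde{\alpha}_j(k))$ that force $\phi$ (hence the lower-order part of $\mathcal{Q}$) into the right shape. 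I would organise this as a descending induction on the order, cancelling at each stage the top non-differential component of $\mathcal{Q}f(\partial_x)\mathcal{Q}^{-1}$ against the appropriate quasi-invariance condition; carrying out this local cancellation cleanly is the main obstacle.

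Finally, $(\lambda,\lambda)$ is quasi-invariant with respect to $\mathcal{A}$, since along any $\alpha_i$ its only nonzero odd normal derivative is proportional to $(\alpha_i,\,\cdot\,)$, which vanishes on the corresponding hyperplane; so $L_{(\lambda,\lambda)}$ exists, and by faithfulness it is enough to check $L\phi=(\lambda,\lambda)\phi$. A direct computation gives
\begin{equation*}
L\phi-(\lambda,\lambda)\phi=\Big(\Delta P+2(\lambda,\nabla_x P)-P\sum_{\alpha\in A}\frac{m_{\alpha}(m_{\alpha}+1)(\alpha,\alpha)}{(\alpha,x)^{2}}\Big)e^{(\lambda,x)},
\end{equation*}
and the conditions $(\tilde{\alpha}_j(1))$ make the order-two poles on each $\Pi_j$ cancel, so the bracket is a polynomial, of $x$-degree strictly below that of $P$; one checks it still satisfies the vanishing conditions of the definition, whence by the uniqueness fact above it is identically zero. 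Hence $L\phi=(\lambda,\lambda)\phi$ and $L=L_{(\lambda,\lambda)}$.
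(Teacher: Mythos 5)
The paper offers no proof of this statement --- it is quoted from \cite{CV}, \cite{CSV} --- so your attempt can only be measured against the classical arguments. The soft parts of your proposal are in order: the faithfulness argument (extract the top $\lambda$-degree; the factor $\prod_i(\alpha_i,\lambda)^{m_i}(\alpha_i,x)^{m_i}$ is a nonzero polynomial) gives uniqueness of each $L_f$, and with it the map $f\mapsto L_f$ is an injective algebra homomorphism, so commutativity and the isomorphism with $Q_{\mathcal{A}}$ come for free. The problem is that the one claim carrying all the content --- existence of $L_f$ for quasi-invariant $f$ --- is reformulated but not proved. Writing $\phi=\mathcal{Q}e^{(\lambda,x)}$ and asking that $\mathcal{Q}f(\partial_x)$ be right-divisible by $\mathcal{Q}$, i.e.\ that $\mathcal{Q}f(\partial_x)\mathcal{Q}^{-1}$ have no negative-order part and no poles, is a correct and standard reduction, but it is exactly equivalent to the theorem. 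Your assertion that the obstruction is ``governed precisely by'' the derivatives $\partial_{\alpha_i}^{2s-1}f|_{\Pi_i}$ is the whole point and is nowhere derived: one must first extract from the axioms the precise local normal form of $\mathcal{Q}$ at a generic point of $\Pi_i$ (this is where the conditions \eqref{1stconds}, \eqref{2ndconds} enter), and then do the genuine one-variable computation showing that conjugating $f(\partial_u)$ by an operator with that normal form is regular exactly when $f',f''',\dots,f^{(2m_i-1)}$ vanish at $u=0$. You flag this step as ``the main obstacle''; it is, and until it is carried out the proposal is a plan rather than a proof. (Your uniqueness sketch for $\phi$ is also too quick --- the vanishing conditions are imposed on $Re^{(\lambda,x)}$, not on $R$, so a naive dimension count does not close --- but that at least is a quotable fact from \cite{CSV}.)

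The final paragraph contains an outright error rather than a gap. With the normalization of Definition 1.1, $P$ is a polynomial in $x$, so $\Delta P+2(\lambda,\nabla_xP)$ is a polynomial, while $P\sum_{\alpha}m_\alpha(m_\alpha+1)(\alpha,\alpha)(\alpha,x)^{-2}$ has a genuine second-order pole along each $\Pi_\alpha$ with leading coefficient proportional to $P|_{\Pi_\alpha}$, and this restriction does not vanish: the axioms impose relations among the odd normal derivatives of $Pe^{(\lambda,x)}$, not the vanishing of $P$ on $\Pi_\alpha$ (already for a single vector of multiplicity $1$ one finds $P=(\alpha,\lambda)(\alpha,x)-(\alpha,\alpha)$). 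The conditions $(\tilde{\alpha}_j(1))$ constrain only the configuration, not $P$, and cannot produce the cancellation you invoke; your bracket is not a polynomial, so the appeal to uniqueness never gets off the ground. The identity $L\phi=(\lambda,\lambda)\phi$ with $L$ as in \eqref{CM-operator} holds for the Baker--Akhiezer function in the normalization of \cite{CSV}, where the coefficients of $P$ in $\lambda$ are allowed denominators $\delta(x)=\prod_\alpha(\alpha,x)^{m_\alpha}$; equivalently, the polynomially normalized $\phi$ of Definition 1.1 is an eigenfunction of the gauge-conjugate $\delta L\delta^{-1}$. Your verification must be performed after this gauge transformation (where the second-order poles do cancel by the locus conditions); as written it attempts to verify an identity that is false.
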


In the Coxeter case  the algebra of quasi-invariants  $Q_{\mathcal{R}}$ is isomorphic to the maximal commutative algebra containing all $W$-invariant quantum integrals $L_f$ for the corresponding generalized Calogero-Moser operator \eqref{CM-operator} \cite{FV}. The algebras $Q_{\mathcal{R}}$ also have good algebraic structure, namely they are Gorenstein.
In the two-dimensional case this was established by Veselov and one of the authors in \cite{FV} where the Hilbert series of $Q_{\mathcal R}$ for any dihedral configuration $\mathcal R$ with constant multiplicity function was computed. For the general Coxeter case the Gorenstein property of  $Q_{\mathcal R}$
was established by Etingof and Ginzburg in \cite{eg}. Using results of Felder and Veselov \cite{FeV}, the  work  \cite{eg} also led to the calculation of the Hilbert series of $Q_{\mathcal R}$.

Although the  algebras  $Q_{\mathcal{A}}$ do not in general have the nice algebraic properties evident in the Coxeter case there exist interesting examples for which these properties do appear. In \cite{FeiginVes} it was shown that the rings $Q_{\mathcal{A}}$ corresponding to the configurations $\mathcal{A}_N(m)$ and $\mathcal{C}_{N}(m,l)$ are  Cohen-Macaulay. Feigin and Veselov also calculated the Hilbert series for the quasi-invariants corresponding to the two-dimensional configurations  $\mathcal{A}_2(m)$ and $\mathcal{C}_{2}(m,l)$ and found that the corresponding algebras are  Gorenstein.

In the present paper we compute the Hilbert series of the algebras of quasi-invariants for all the two-dimensional configurations admitting  the BA functions when at most one multiplicity is bigger than 1. It appears that the algebras are Gorenstein. We are exploring this property further  trying to approach integrability of the  generalized Calogero-Moser systems by studying the algebras of quasi-invariants of the corresponding configurations. It was noted by Etingof and Ginzburg in \cite{eg} that configurations with Gorenstein quasi-invariants are exceptional: in the case of two lines with multiplicities 1 the lines have to be orthogonal.
We describe all configurations on the plane with Gorenstein algebras of quasi-invariants when at most one multiplicity is bigger than~1. Remarkably, this brings us back to the configurations appearing in the beginning of this paper that is all such configurations admit the BA function.

The structure of the paper is as follows. In Section \ref{section-m1n} we study the class of configurations in $\mathbb{C}^2$ which admit the BA functions and which have type $(m,1^n)$ that is  $n$ vectors have multiplicity 1 and one vector has multiplicity $m\ge 1$. We show that there exists a configuration of this type for any number of vectors $n+1$ and for any multiplicity $m$ and that it is unique and real. This configuration is denoted $\arrange$, it is introduced in Definition \ref{m1n}. We firstly show in Theorem \ref{1st-id-implies-condig} that the conditions \eqref{1stconds} uniquely fix the configuration and then we verify in Proposition \ref{firstidimplieslocus} that the locus conditions \eqref{2ndconds} also hold.

In Section \ref{mm1n} we construct further planar configurations admitting the BA functions. Firstly we introduce a configuration $\A_{(m, \tilde m, 1^n)}$. It consists of pairwise non-collinear vectors $\a_0,\ldots, \a_{n+1}\in \R^2$ with the multiplicities $m_0=m, m_{n+1}=\tilde m$, and $m_i = 1$ for $i=1,\ldots,n$. We show in Theorem \ref{BA-exists-two-arb-mult} that $\A_{(m, \tilde m, 1^n)}$ admits the BA function. Then for any $q\in \N$ we introduce  configurations ${\mathcal{A}^q_{(m,\tilde m, 1^n)}} $ which also admit the BA functions (Corollary \ref{corr-on-qexpansionm1n}). The corresponding lines form dihedral arrangement with $2q$ lines with multiplicities $m, \tilde m$ and $n/2$ dihedral orbits of multiplicity 1.

In Section \ref{DarbouxTr}  we construct Darboux-Crum transformations that relate the Sturm-Liouville operators associated with the configuration $\A^q_{(m, \tilde m, 1^n)}$ with the operators with trivial potential. Using results of Berest, Cramer and Eshmatov \cite{BCE} this leads to the explicit formulas for the Baker-Akhiezer functions for the configurations  $\A^q_{(m, \tilde m, 1^n)}$.

In the remainder of the paper we turn our attention to the algebras of quasi-invariants $Q_{\mathcal{A}}$, where $\mathcal{A}$ has type $(m,1^n)$. We start with the preliminary analysis of the quasi-invariant conditions in Section \ref{quasi-preliminary}, where we find partial information on the Hilbert series.
In Section~\ref{poincare-am1n} we study quasi-invariant conditions in more detail. Most of the analysis is  relevant to the configuration $\amn$. Thus we determine the values of the symmetric polynomials in different coordinates of the configuration vectors, these values fix the configuration $\amn$. Further, in Proposition~\ref{lemma2m+2n-2} we establish that all the configurations of type $(m, 1^n)$ have the same dimension of the space of homogeneous quasi-invariants of degree $2(m+n-1)$ except for the configuration $\amn$. The main result of this Section is Theorem~\ref{anm-is-gor}     which gives the Hilbert series of the algebra of quasi-invariants $Q_{\amn}$. An immediate corollary is the fact that the algebra is  Gorenstein.

The main result of Section \ref{all-gorenstein-am1n} is the statement that there are no other configurations of type $(m, 1^n)$ with Gorenstein quasi-invariants except the configuration $\amn$. This is achieved by further study and analysis of possible Hilbert series and by making further use of Proposition~\ref{lemma2m+2n-2}.

\section{Baker-Akhiezer configurations of type $(m,1^n)$}
\label{section-m1n}

We will work in dimension $N=2$. It is convenient to make use of polar and complex coordinates instead of the standard Euclidean coordinate system. The identities from Theorem \ref{thm1} are rearranged as follows.

\begin{lemma}\label{1point} Consider the non-collinear vectors $\alpha_0, \alpha_1, \alpha_2, \ldots ,\alpha_n \in \mathbb{C}^2$. Let us assume that $(\alpha_s, \alpha_s)=1$ for all $s=0,\ldots, n$, and let $\alpha_s=(\cos{\varphi_s}, \sin{\varphi_s})$ for some $\varphi_s$.  Put $z_{s}=e^{2i\varphi_{s}}$.
Then for $1 \leq k \leq m_j$ the condition \mref{1stconds} is equivalent to
\beq{1stcondspolar}
\sum_{\nad{i=0}{i\neq j}}^n \frac{m_i(z_{i}+z_{j})^{2k-1}}{(z_{i}-z_{j})^{2k-1}}=0, \tag{$z_j(k)$}
\eeq
and the condition \mref{2ndconds} is equivalent to
\beq{2ndcondspolar}
 \sum_{\nad{i=0}{i\neq j}}^n\frac{m_{i}(m_{i}+1)z_{i}(z_{i}+z_{j})^{2k-1}}{(z_{i}-z_{j})^{2k+1}}=0.  \tag{$\tilde{z}_j(k)$}
\eeq
\end{lemma}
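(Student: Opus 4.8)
The plan is to perform the change of coordinates directly in the sums $(\alpha_j(k))$ and $(\tilde\alpha_j(k))$, using the geometry of $\mathbb{C}^2$ together with the hypothesis $(\alpha_s,\alpha_s)=1$. First I would record the two elementary identities that drive everything: for unit vectors $\alpha_i=(\cos\varphi_i,\sin\varphi_i)$ and $\alpha_j=(\cos\varphi_j,\sin\varphi_j)$ one has $(\alpha_i,\alpha_j)=\cos(\varphi_i-\varphi_j)$, and on the line $\Pi_j$ — where $x$ is (a multiple of) the vector orthogonal to $\alpha_j$, i.e. $x\propto(-\sin\varphi_j,\cos\varphi_j)$ — one has $(\alpha_i,x)\propto \sin(\varphi_i-\varphi_j)$, with the same proportionality constant for every $i$. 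Hence, restricting to $\Pi_j$, the ratio $(\alpha_j,\alpha_i)/(\alpha_i,x)$ equals $\cot(\varphi_i-\varphi_j)$ up to a nonzero $i$-independent factor, and so does each odd power $(\alpha_j,\alpha_i)^{2k-1}/(x,\alpha_i)^{2k-1}$; the global factor is killed by the vanishing of the sum. This reduces $(\alpha_j(k))$ to $\sum_{i\neq j} m_i\cot^{2k-1}(\varphi_i-\varphi_j)=0$.

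Next I would translate $\cot$ into the variables $z_s=e^{2i\varphi_s}$. Since $\cot\theta = i\,\frac{e^{2i\theta}+1}{e^{2i\theta}-1}$, setting $\theta=\varphi_i-\varphi_j$ gives $e^{2i\theta}=z_i/z_j$ and therefore
\[
\cot(\varphi_i-\varphi_j) \;=\; i\,\frac{z_i/z_j+1}{z_i/z_j-1} \;=\; i\,\frac{z_i+z_j}{z_i-z_j}.
\]
Raising to the power $2k-1$ produces a factor $i^{2k-1}=(-1)^{k-1}i$, again independent of $i$, hence irrelevant for a vanishing sum, and $(\alpha_j(k))$ becomes exactly $(z_j(k))$: $\sum_{i\neq j} m_i(z_i+z_j)^{2k-1}/(z_i-z_j)^{2k-1}=0$.

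For the second family I would do the same bookkeeping, now keeping track of the extra factors: $(\alpha_i,\alpha_i)=1$ by hypothesis, $(\alpha_j,\alpha_i)^{2k-1}$ contributes $\cos^{2k-1}(\varphi_i-\varphi_j)$ up to the usual constant, and the denominator $(\alpha_i,x)^{2k+1}$ contributes $\sin^{2k+1}(\varphi_i-\varphi_j)$ times an $i$-independent factor coming from the chosen parametrization of $\Pi_j$. Thus the $i$-th summand of $(\tilde\alpha_j(k))$ is, up to a nonzero $i$-independent constant, $m_i(m_i+1)\cos^{2k-1}(\varphi_i-\varphi_j)/\sin^{2k+1}(\varphi_i-\varphi_j)$. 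Using $\cos\theta/\sin\theta=\cot\theta$ and $1/\sin^2\theta = -\frac{4z_i z_j}{(z_i-z_j)^2}\cdot$(constant) — which follows from $\sin\theta = \frac{1}{2i}(e^{i\theta}-e^{-i\theta})$ so that $\sin^2(\varphi_i-\varphi_j) = -\tfrac14\,\frac{(z_i-z_j)^2}{z_i z_j}$ — one converts this expression into $m_i(m_i+1)\,z_i(z_i+z_j)^{2k-1}/(z_i-z_j)^{2k+1}$ up to a nonzero $i$-independent factor, which is precisely the summand of $(\tilde z_j(k))$. Summing over $i\neq j$ and discarding the common constant gives the stated equivalence.

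The only genuinely delicate point is verifying that the parametrization of the line $\Pi_j$ really does contribute the same nonzero constant to every term $(\alpha_i,x)$, so that it factors out of the sum and out of every odd (resp. odd) power uniformly; once that is checked, the rest is the routine trigonometric-to-rational substitution via $z_s=e^{2i\varphi_s}$ and the observation that powers of $i$ and of $2$ are $i$-independent and hence harmless in a homogeneous vanishing identity. I would also note in passing that all manipulations are valid as identities of rational functions on $\Pi_j$ away from the lines $\varphi_i=\varphi_j$, which is exactly where the original conditions $(\alpha_j(k))$, $(\tilde\alpha_j(k))$ are imposed.
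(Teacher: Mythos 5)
Your proposal is correct and follows essentially the same route as the paper: substitute $x=(-\sin\varphi_j,\cos\varphi_j)$ on $\Pi_j$, reduce each summand to $\cos^{2k-1}(\varphi_i-\varphi_j)/\sin^{2k+1}(\varphi_i-\varphi_j)$ (resp.\ $\cot^{2k-1}$), and convert to the $z$-variables via $\cot\theta=i(z_i+z_j)/(z_i-z_j)$ and $\sin^2\theta=-\tfrac14(z_i-z_j)^2/(z_iz_j)$, discarding the $i$-independent nonzero constants. Your explicit check that the parametrization constant and the powers of $i$ are uniform in $i$ is exactly the (implicit) content of the paper's computation.
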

\begin{proof} We consider the identity \mref{2ndconds}, the identity \mref{1stconds} is similar. Since $(\alpha_j, x)=0$ we take $x=(-\sin\varphi_{j}, \cos\varphi_{j})$. Then relation \mref{2ndconds} takes the form
\begin{align*}
&\sum_{\nad{i=0}{i\neq j}}^n \frac{m_{i}(m_{i}+1)(\cos\varphi_{i}\cos\varphi_{j}+\sin\varphi_{i}\sin\varphi_{j})^{2k-1}}{(-\cos\varphi_{j}\sin\varphi_{i}+\cos\varphi_{i}\sin\varphi_{j})^{2k+1}}=0,
\end{align*}
or, equivalently,
\begin{align*}
&\sum_{\nad{i=0}{i\neq j}}^nm_{i}(m_{i}+1)\frac{\cos^{2k-1}(\varphi_{j}-\varphi_{i})}{\sin^{2k+1}(\varphi_{j}-\varphi_{i})}=0.
\end{align*}
The latter relation takes the form \mref{2ndcondspolar} upon expressing trigonometric functions through the exponents.
\end{proof}

We say that a configuration $\mathcal A$ {\it has type} $(m,1^n)$ if $\mathcal A$ contains $n+1$ pairwise non-collinear vectors $\alpha_0, \alpha_1,\ldots,\alpha_n\in \C^2$ with the respective multiplicities $m_0=m \in \N$, $m_i=1$ for all $i=1,\ldots,n$. We are going to study  configurations which admit the BA functions. It is easy to see that if $\A$ admits the BA function then so does any other configuration obtained by applying an orthogonal transformation to $\A$. Furthermore, vectors from $\A$ may be multiplied by scalars preserving the existence of the BA function. We call configurations obtained this way {\it equivalent}. We assume throughout the paper that all the vectors in  the configurations we consider are non-isotropic: $(\a_i,\a_i)\ne 0$. First we prove the following.

\begin{proposition}
Let $\A$ be a configuration of non-collinear vectors $\a_0,\a_1,\ldots,\a_n \in \C^2$ with corresponding multiplicities $m_0=m \in \R$, $m_j=1$ for $j=1,\ldots,n$. Let $z_j$ be defined as in Lemma \ref{1point}. Consider the  polynomial
$P(w)= \prod_{j=1}^n (w-z_j)$.  Then $\A$ satisfies the identities $(z_j(1))$ for all $j=1,\ldots, n$ if and only if $P(w)$ satisfies the second order differential equation
\beq{difeq}
w(w-z_0) P'' - ( (n-1)(w-z_0) - m(w+z_0) )P' - mn P=0.
\eeq
\end{proposition}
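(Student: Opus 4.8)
The plan is to translate the $n$ identities $(z_j(1))$ into a single statement about the polynomial $P(w)=\prod_{j=1}^n(w-z_j)$ by recognizing the sums that appear as logarithmic-derivative expressions evaluated at the roots $w=z_j$. First I would fix $j$ and rewrite the identity $(z_j(1))$, which reads $\sum_{i\ne j} m_i\frac{z_i+z_j}{z_i-z_j}=0$ with $m_0=m$ and $m_i=1$ for $i\ge 1$. I would separate the $i=0$ term from the terms with $1\le i\le n$, $i\ne j$, and use the elementary identity $\frac{z_i+z_j}{z_i-z_j} = -1 - \frac{2z_j}{z_j-z_i}$ (or the cleaner $\frac{z_i+z_j}{z_i-z_j}= 1 + \frac{2z_j}{z_i-z_j}$) to convert each summand into a constant plus a term of the form $\frac{2z_j}{z_i-z_j}$. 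The sum $\sum_{i\ne j,\, 1\le i\le n}\frac{1}{z_j-z_i}$ is exactly the value at $w=z_j$ of $\frac{1}{2}\frac{P''(w)}{P'(w)}$ (since $\frac{P'}{P}=\sum\frac{1}{w-z_i}$ and differentiating and evaluating at a simple root $z_j$ kills the $i=j$ contribution appropriately); more precisely, at a simple root $z_j$ one has $\sum_{i\ne j}\frac{1}{z_j-z_i} = \frac{P''(z_j)}{2P'(z_j)}$.

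The key step, then, is to carry out this bookkeeping carefully. Writing $m_i = 1$ for $i\ge 1$, the identity $(z_j(1))$ becomes
\[
m\,\frac{z_0+z_j}{z_0-z_j} + \sum_{\substack{1\le i\le n\\ i\ne j}} \frac{z_i+z_j}{z_i-z_j} = 0.
\]
The second sum has $n-1$ terms; replacing $\frac{z_i+z_j}{z_i-z_j}$ by $-1 - \frac{2z_j}{z_i-z_j}$ gives $-(n-1) - 2z_j\sum_{i\ne j}\frac{1}{z_i-z_j} = -(n-1) + 2z_j\sum_{i\ne j}\frac{1}{z_j-z_i} = -(n-1) + z_j\,\frac{P''(z_j)}{P'(z_j)}$. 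Similarly $\frac{z_0+z_j}{z_0-z_j} = -1 - \frac{2z_j}{z_0-z_j} = -1 + \frac{2z_j}{z_j-z_0}$. Substituting and multiplying through by $P'(z_j)$ (nonzero since $z_j$ is a simple root of $P$, as the $z_i$ are distinct — this uses non-collinearity of the $\alpha_i$, so the $z_i=e^{2i\varphi_i}$ are pairwise distinct) yields, after clearing the remaining denominator $z_j-z_0$,
\[
z_j(z_j-z_0)P''(z_j) - \big((n-1)(z_j-z_0) - m(z_j+z_0)\big)P'(z_j) - mn\,P(z_j) = 0,
\]
where I have used $P(z_j)=0$ to introduce the last term freely. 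Thus the polynomial $F(w) := w(w-z_0)P''(w) - \big((n-1)(w-z_0)-m(w+z_0)\big)P'(w) - mn\,P(w)$ vanishes at $w=z_1,\dots,z_n$.

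To conclude that $F\equiv 0$, and hence $\mref{difeq}$ holds, I would observe that $\deg P = n$, so $\deg\big(w(w-z_0)P''\big) \le n$, $\deg\big((\cdots)P'\big)\le n$, and $\deg(mnP)=n$; hence $\deg F \le n$. Since $F$ has $n$ distinct roots $z_1,\dots,z_n$, it suffices to check that the coefficient of $w^n$ in $F$ vanishes. Writing $P(w)=w^n + \dots$, the leading term of $w^2 P''$ is $n(n-1)w^n$, of $-(n-1)w P'$ is $-(n-1)n\,w^n$, of $mw P'$ is $mn\,w^n$, and of $-mnP$ is $-mn\,w^n$; these cancel in pairs, so the $w^n$-coefficient of $F$ is zero, forcing $F\equiv 0$. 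Conversely, if $\mref{difeq}$ holds then evaluating at each root $w=z_j$ and reversing the above manipulation recovers $(z_j(1))$. I expect the only delicate point to be the sign/constant bookkeeping in the partial-fraction rewriting and the justification that the $w^n$-coefficient cancels; everything else is a direct identification of $\sum_{i\ne j}(z_j-z_i)^{-1}$ with $P''(z_j)/(2P'(z_j))$ at a simple root.
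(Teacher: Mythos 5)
Your argument is correct and follows essentially the same route as the paper's proof: rewrite $(z_j(1))$ via the logarithmic derivative of $P$ at its simple roots using $\sum_{i\neq j}(z_j-z_i)^{-1}=P''(z_j)/(2P'(z_j))$, deduce that the degree-$n$ polynomial $F(w)$ vanishes at $z_1,\dots,z_n$, and check that its $w^n$-coefficient cancels so that $F\equiv 0$. The only blemish is the stated ``elementary identity'' $\frac{z_i+z_j}{z_i-z_j}=-1-\frac{2z_j}{z_j-z_i}$, which is off by a sign (the parenthetical version $1+\frac{2z_j}{z_i-z_j}$ is the correct one); since the sign flip is applied uniformly to every term of an identity whose total is zero, the equation you ultimately derive agrees with the paper's and the proof stands.
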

\begin{proof}
We have $P'(z_1)=\prod_{j=2}^n (z_1-z_j)$ and
$$
\frac{P''(z_1)}{P'(z_1)} = \sum_{j=2}^n \frac {2}{z_1-z_j}.
$$
The relation $(z_1(1))$ has the form
$$
\sum_{j=2}^n \frac{z_1+z_j}{z_1-z_j} + \frac{m(z_1+z_0)}{z_1-z_0}=0,
$$
which can be rearranged as
$$
z_1 \frac{P''(z_1)}{P'(z_1)} + \frac{m(z_1+z_0)}{z_1-z_0}=n-1.
$$
Equivalently,
\beq{diffeqproof}
(w-z_0)w P''(w)+(m(w+z_0)-(n-1)(w-z_0))P'(w)=0
\eeq
if $w=z_1$. Similarly for any $j=1,\ldots,n$ the left-hand side of \mref{diffeqproof} vanishes at $w=z_j$ if and only if $(z_j(1))$ holds. Since the left-hand side of \mref{diffeqproof} is a polynomial in $w$ of degree $n$ the collection of conditions $(z_j(1))$ for $j=1,\ldots,n$ is equivalent to the differential equation \mref{difeq} as stated.
\end{proof}

\begin{theorem}\label{1st-id-implies-condig}
Let $\mathcal A$ be a configuration of type $(m,1^n)$. Let $\varphi_j, z_j$ be same as in Lemma~\ref{1point}, and suppose that the conditions $(z_j(1))$ hold for all $j=0,\ldots,n$. Let $e_k$ be the $k$-th elementary symmetric polynomial in the variables $z_1,\ldots, z_n$. Then
\begin{equation}\label{ehats}
e_k= \frac{(-1)^k \prod_{j=1}^k(m+j-1)(n-j+1)z_0^k}{k!\prod_{j=1}^k(m+n-j)}=\frac{(-1)^k\binom{n}{k}\binom{m+k-1}{k}z_0^k}{{\binom{m+n-1}{k}}}
\end{equation}
for any $k=1,\ldots,n$.
\end{theorem}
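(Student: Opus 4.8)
The plan is to use the differential equation \eqref{difeq} derived in the previous proposition. Since the conditions $(z_j(1))$ hold for $j=1,\ldots,n$, the polynomial $P(w)=\prod_{j=1}^n(w-z_j)$ satisfies
\[
w(w-z_0)P'' - \bigl((n-1)(w-z_0) - m(w+z_0)\bigr)P' - mnP = 0.
\]
Writing $P(w) = \sum_{k=0}^n (-1)^k e_k w^{n-k}$ and substituting into this ODE gives a recurrence relating consecutive coefficients $e_k$ and $e_{k-1}$. Concretely, I would collect the coefficient of $w^{n-k}$ in the ODE: the terms $w^2 P''$, $-(n-1)wP'$, $-mnP$ all contribute to the coefficient of $w^{n-k}$ through the monomial $w^{n-k}$ of $P$, while the terms $-z_0 w P''$ and $m z_0 P'$ (together with $m z_0 P'$ from expanding $-m(w+z_0)$ — note $m(w+z_0)P' = mwP' + mz_0P'$) shift the index and bring in the monomial $w^{n-k+1}$ of $P$, i.e. the coefficient $e_{k-1}$. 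Balancing these yields a first-order recurrence of the form $e_k = c_k\, z_0\, e_{k-1}$ for an explicit rational constant $c_k$, with $e_0 = 1$.

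After that, the remaining work is to identify $c_k$ precisely and check it produces the closed form \eqref{ehats}. I expect $c_k = -\dfrac{(m+k-1)(n-k+1)}{k(m+n-k)}$, which telescopes to exactly the product formula in \eqref{ehats}; the equivalence of the two displayed expressions for $e_k$ is then just rewriting the falling/rising factorials as binomial coefficients. One subtlety to handle carefully: the ODE as stated follows from $(z_j(1))$ for $j=1,\ldots,n$ only — I must confirm that imposing additionally $(z_0(1))$ (the condition at the line $\alpha_0$ of multiplicity $m$, which is genuinely different since it is a single equation not covered by the polynomial $P$) is consistent with, and does not over-determine, this solution. In fact the expected resolution is that $(z_0(1))$ holds automatically once the $e_k$ take the values in \eqref{ehats}, or imposes no new constraint beyond forcing $z_0 \neq 0$ and the $z_j$ distinct; alternatively one checks directly that $(z_0(1))$ is equivalent to a symmetric function identity in $z_1,\ldots,z_n$ that follows from the recurrence. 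I should state which route is taken and verify it.

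The main obstacle I anticipate is the bookkeeping in extracting the recurrence: one must track carefully the degree shifts caused by the $z_0$-terms versus the non-$z_0$-terms in \eqref{difeq}, and make sure the boundary cases $k=1$ and $k=n$ behave (in particular that the coefficient of the top power $w^n$ and the constant term of the ODE are consistent — the former should vanish identically, giving no condition, and the latter should be automatically satisfied). Once the recurrence $e_k/e_{k-1} = c_k z_0$ is correctly pinned down, deriving \eqref{ehats} by induction on $k$ is routine: the inductive step is $e_k = c_k z_0 e_{k-1}$ and one checks $c_k = -\frac{(m+k-1)(n-k+1)}{k(m+n-k)}$ multiplies the degree-$(k-1)$ formula into the degree-$k$ formula. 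I would also remark that this simultaneously reproves uniqueness of the configuration of type $(m,1^n)$ satisfying $(z_j(1))$, since the $e_k$ — hence $z_1,\ldots,z_n$ as roots of $P$ — are determined by $z_0$, which in turn is fixed up to the equivalence (rotation) described before the theorem.
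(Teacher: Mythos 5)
Your proposal is correct and follows essentially the same route as the paper: the authors also substitute $P(w)=\sum_k(-1)^{n-k}e_{n-k}w^k$ into \eqref{difeq}, read off $e_1$ from the degree-$(n-1)$ coefficient, and obtain the two-term recurrence $z_0(k+1)(n+m-k-1)e_{n-k-1}+(n-k)(m+k)e_{n-k}=0$, which is exactly your $e_k=c_kz_0e_{k-1}$ with $c_k=-\tfrac{(m+k-1)(n-k+1)}{k(m+n-k)}$, and then verify that \eqref{ehats} satisfies it. Your side remark about $(z_0(1))$ is not needed for the theorem itself (the conclusion uses only $j=1,\ldots,n$, and consistency of the full set of conditions is handled separately in the paper via the symmetry of the configuration and Theorem \ref{realexists}).
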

\begin{proof}
Let
$$
P(w)= \prod_{k=1}^n (w-z_k) = \sum_{k=0}^n (-1)^{n-k} e_{n-k} w^k,
$$
where $e_0=1$. The polynomial $P(w)$ satisfies the differential equation \mref{difeq}. By considering terms of degree $n-1$ we get
$e_1 = - m n z_0/(m+n-1)$ as required. We also have the following recurrence relation
\beq{rec}
z_0 (k+1)(n+m-k-1) e_{n-k-1} + (n-k)(m+k)e_{n-k}=0
\eeq
for any $k= 0, \ldots, n-2$. These relations uniquely determine all the elementary symmetric polynomials $e_k$, $k=1,\ldots,n$. It is easy to see that the values \mref{ehats} satisfy \mref{rec}. Hence the statement follows.

\end{proof}
Theorem \ref{1st-id-implies-condig} has the following corollary.

\begin{proposition}
There exists at most one up to equivalence configuration of type $(m,1^n)$ which satisfies identities $(z_j(1))$ for all $j=0,\ldots,n$. The corresponding arrangement of lines  is symmetric with respect to the line of multiplicity $m$.
\end{proposition}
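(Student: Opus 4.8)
The plan is to extract the uniqueness from Theorem~\ref{1st-id-implies-condig} and then separately argue the symmetry. For uniqueness: suppose $\mathcal A$ has type $(m,1^n)$ and satisfies $(z_j(1))$ for all $j=0,\ldots,n$. Up to the equivalence allowed (orthogonal transformations and rescaling of the vectors), we may normalise so that $(\alpha_s,\alpha_s)=1$ and $\alpha_s=(\cos\varphi_s,\sin\varphi_s)$, and moreover rotate so that $\varphi_0=0$, i.e.\ $z_0=1$. Then Theorem~\ref{1st-id-implies-condig} determines every elementary symmetric polynomial $e_k$ in $z_1,\ldots,z_n$ explicitly by \eqref{ehats}, hence determines the polynomial $P(w)=\prod_{j=1}^n(w-z_j)$ and therefore the multiset $\{z_1,\ldots,z_n\}$. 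Since $z_s=e^{2i\varphi_s}$ and the $\alpha_s$ are non-collinear, the angles $\varphi_s$ are determined modulo $\pi$, which is exactly the ambiguity that does not change the line spanned by $\alpha_s$; so the arrangement of lines is determined, and any two such configurations are equivalent. One should also note that the rotation bringing $z_0$ to $1$ is itself part of the equivalence, so no generality is lost.

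For the symmetry statement, the cleanest route is to observe that complex conjugation fixes the whole configuration. After the normalisation $z_0=1$, the coefficients of the differential equation \eqref{difeq} are real (indeed the $e_k$ in \eqref{ehats} are real since $z_0=1$ and $m,n$ are real), so $P(w)$ has real coefficients. Hence the multiset $\{z_1,\ldots,z_n\}$ is invariant under $z\mapsto \bar z$. In terms of angles this says $\{\varphi_1,\ldots,\varphi_n\}$ is invariant modulo $\pi$ under $\varphi\mapsto-\varphi$, i.e.\ under reflection in the $x$-axis, which is precisely the line spanned by $\alpha_0$, the line of multiplicity $m$. Thus the arrangement is symmetric with respect to the line of multiplicity $m$. (Equivalently, without normalising, one argues that the reflection in the line $\Pi_0^\perp$ of $\alpha_0$ sends the configuration to another configuration of type $(m,1^n)$ satisfying $(z_j(1))$, which by the uniqueness just proved must be equivalent to the original; a short check that the only equivalence fixing $\alpha_0$ and preserving the normalisation is the identity then gives that the reflection fixes the arrangement.)

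I would write it in the first form since it is more direct. The main point to be careful about — the only place where something could go wrong — is bookkeeping the equivalence relation: one must check that fixing $(\alpha_s,\alpha_s)=1$ and $z_0=1$ uses up exactly the scaling and rotational freedom, leaving only the reflection in the $x$-axis and the harmless replacements $\alpha_s\mapsto-\alpha_s$ as residual symmetries, so that ``$P(w)$ has real coefficients'' genuinely translates into a symmetry of the arrangement rather than of one particular choice of representative vectors. I do not expect any serious obstacle beyond this; the arithmetic content is entirely contained in Theorem~\ref{1st-id-implies-condig}. It is worth remarking that this proposition does not yet assert existence — that the $z_j$ produced by \eqref{ehats} actually come from genuine non-collinear unit vectors, i.e.\ are distinct and on the unit circle — which is presumably handled by the explicit construction of $\amn$ in Definition~\ref{m1n} together with Proposition~\ref{firstidimplieslocus}.
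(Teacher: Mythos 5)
Your uniqueness argument coincides with the paper's: after normalising $z_0=1$, Theorem \ref{1st-id-implies-condig} determines all the $e_k$, hence the polynomial $P$, hence the multiset $\{z_1,\dots,z_n\}$ and the arrangement. That part is fine.

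The symmetry step, however, has a genuine gap. A configuration of type $(m,1^n)$ consists of vectors in $\C^2$, so the angles $\varphi_s$ may be complex and the $z_s=e^{2i\varphi_s}$ need not lie on the unit circle; reality of the configuration is only established \emph{after} this proposition (Theorem \ref{realexists}). Reflection about the line of multiplicity $m$ acts on angles by $\varphi\mapsto 2\varphi_0-\varphi$, i.e.\ on the $z$-coordinates by $z\mapsto z_0^2/z$. What you actually derive from the reality of the coefficients of $P$ is invariance of the root multiset under $z\mapsto\bar z$, which corresponds to $\varphi\mapsto-\bar\varphi$; this agrees with the reflection only when $|z_s|=1$, i.e.\ when the $\varphi_s$ are real — exactly the fact you do not yet have. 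So ``the $e_k$ are real'' does not by itself yield the geometric symmetry. The paper uses a different property of the explicit formulas \eqref{ehats}, namely the palindromic identity $e_i=(-1)^n z_0^{2i-n}e_{n-i}$, which says directly that $z_0^2 z_i^{-1}$ is a root of $P$ whenever $z_i$ is; that \emph{is} the reflection symmetry, with no reality assumption. Your parenthetical fallback (reflect and invoke uniqueness) is closer in spirit, but the ``short check'' does not close as stated: the equivalence $g$ with $g(\mathcal A)=\sigma(\mathcal A)$ is only constrained to preserve the multiplicity-$m$ line, and in the case where $g$ acts as the reflection on lines one obtains a tautology rather than the conclusion (for $m=1$ even that constraint disappears). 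The clean repair is simply to check the palindromic identity for \eqref{ehats} and use it in place of reality.
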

\begin{proof}
Let us fix the vector of multiplicity $m$ by requiring $z_0=1$. Then the symmetric combinations $e_k$ of the coordinates $z_j$ of the vectors of multiplicities 1 are all determined by Theorem \ref{1st-id-implies-condig}. This fixes the configuration. Let
$$
P(z)=\prod_{i=1}^n(z-z_i)=\sum_{i=0}^n(-1)^ie_iz^{n-i}=\sum_{i=0}^n(-1)^ie_i z_0^{n-2i} z^{i},
$$
where we used the symmetry
\begin{equation}\label{e}
e_i=(-1)^nz_0^{2i-n}e_{n-i},\,\,\,1\leq i \leq n,
\end{equation}
which can be easily seen from the formulas \mref{ehats}. It follows that if $z=z_i$ is a solution to $P(z)=0$ then $z=z_0^2 z_i^{-1}$ is another solution, so the configuration is symmetric.
\end{proof}
In order to establish existence of the configuration described in Theorem  \ref{1st-id-implies-condig} one has to prove that the corresponding values of elementary symmetric polynomials $e_k$ define a point in $\C^n/S_n$ which is not in the discriminant set. Rather than showing this directly we are going to explain that for any multiplicities $m_0,\ldots,m_n$ there exists a (unique) {\it real} configuration $\mathcal A$ satisfying identities $(z_j(1))$ for all $j=0,\ldots,n$.

\begin{theorem}\label{realexists}
Let $\mathcal A$ be a configuration of vectors $\alpha_0, \ldots, \alpha_n \in \R^2$ with the corresponding multiplicities $m_0,\ldots, m_n \in \R_+$. Let $\varphi_j \in [0,\pi)$ be such that $\a_j = (\cos \varphi_j, \sin \vf_j)$. Suppose that $0=\vf_0 < \vf_1 <\ldots < \vf_n< \pi$.
Then for any choice of the multiplicities $m_0, \ldots,m_n$ there exists a unique configuration
 $\A$ which satisfies the relations $(\a_j(1))$ for all $j=0,\ldots, n$.
\end{theorem}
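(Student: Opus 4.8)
The plan is to recognize the system $(\alpha_j(1))$, $j=0,\dots,n$, as the critical-point equations of a strictly concave ``log-gas'' potential on a convex domain, so that existence and uniqueness become standard facts of convex analysis. Exactly as in the proof of Lemma~\ref{1point}, but with $k=1$ and the identity $(\alpha_j(1))$ in place of $(\tilde\alpha_j(k))$, restricting to $\Pi_j$ and using $(\alpha_j,x)=0$ turns $(\alpha_j(1))$ into the angular relation
\[
F_j(\varphi):=\sum_{i=0,\,i\ne j}^n m_i\cot(\varphi_j-\varphi_i)=0 .
\]
Fix $\varphi_0=0$; then the unknown is $\varphi=(\varphi_1,\dots,\varphi_n)$ in the open convex polytope $\Omega=\{0<\varphi_1<\dots<\varphi_n<\pi\}$, on which every difference $\varphi_j-\varphi_i$ with $i<j$ lies in $(0,\pi)$. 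Define on $\Omega$
\[
\Phi(\varphi)=\sum_{0\le i<j\le n} m_i m_j\,\ln\sin(\varphi_j-\varphi_i),\qquad\varphi_0=0 .
\]
A direct differentiation gives $\partial\Phi/\partial\varphi_j=m_j F_j(\varphi)$ for $j=1,\dots,n$, while the oddness of $\cot$ gives the identity $\sum_{j=0}^n m_j F_j\equiv 0$; hence once $F_1=\dots=F_n=0$ the remaining relation $F_0=0$ holds automatically. Thus a point of $\Omega$ solves all of $(\alpha_j(1))$ if and only if it is a critical point of $\Phi$.

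For existence I would run a boundary/compactness argument. Since $\sin(\varphi_j-\varphi_i)\in(0,1]$ on $\Omega$ we have $\Phi\le 0$ throughout, while on approaching any face of $\partial\Omega$ some difference $\varphi_j-\varphi_i$ with $m_i m_j>0$ tends to $0$ or to $\pi$ (the value $\pi$ only for the pair $\varphi_0,\varphi_n$), so $\Phi\to-\infty$ there. Consequently every superlevel set $\{\Phi\ge c\}$ is a compact subset of $\Omega$, so $\Phi$ attains an interior maximum, which is a critical point and hence yields the required configuration.

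For uniqueness I would show $\Phi$ is strictly concave on $\Omega$. Writing $a_{ij}:=m_i m_j\csc^2(\varphi_i-\varphi_j)>0$ for $i\ne j$, the Hessian $H$ of $\Phi$ in the variables $\varphi_1,\dots,\varphi_n$ has $H_{jj}=-\sum_{i=0,\,i\ne j}^n a_{ij}$ and $H_{jk}=a_{jk}$ for $j\ne k$. Thus $-H$ is precisely the principal submatrix of the weighted graph Laplacian of the complete graph on $\{0,1,\dots,n\}$ with edge weights $a_{ij}$, obtained by deleting the row and column indexed by $0$; so for $v\in\R^n$, setting $v_0:=0$,
\[
v^{\mathsf{T}}(-H)v=\sum_{0\le i<j\le n}a_{ij}(v_i-v_j)^2\ge 0,
\]
and equality forces $v_i=v_j$ for all $i,j$ (every $a_{ij}>0$), i.e. $v=0$. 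Hence $-H$ is positive definite throughout $\Omega$, $\Phi$ is strictly concave on the convex set $\Omega$, and its critical point is unique.

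The positivity $m_i>0$ enters twice and is essential: it makes the Laplacian weights $a_{ij}$ positive (giving strict concavity of $\Phi$) and it forces the boundary blow-down $\Phi\to-\infty$ (giving an interior maximum); this is exactly why the real configurations admit the clean existence-and-uniqueness statement above, unlike the complex situation. I do not expect any single computation to be the real obstacle; the care required is in the packaging — verifying the gradient identity $\partial_{\varphi_j}\Phi=m_jF_j$ together with the linear relation $\sum_j m_jF_j\equiv 0$ that disposes of the $(n+1)$st equation, and checking the boundary behaviour of $\Phi$ on every face of $\Omega$, including the face where $\varphi_n\to\pi$.
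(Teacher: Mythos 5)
Your proposal is correct and follows essentially the same route as the paper's own proof, which (citing Muller) introduces the potential $F(\psi)=\sum_{i<j}m_im_j\log\sin(\psi_j-\psi_i)$, identifies the relations $(\a_j(1))$ with its critical-point equations, and deduces existence and uniqueness from the boundary blow-down of $\log\sin$ and its concavity. The only difference is one of detail: the paper sketches these steps, whereas you supply the gradient identity, the relation $\sum_j m_jF_j\equiv 0$ eliminating the redundant equation, and the graph-Laplacian form of the Hessian explicitly.
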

\begin{proof}
We follow closely the proof of a similar statement established in \cite[Theorem 6.4]{Muller}.  Let $$F(\psi_1,\ldots,\psi_n)= \sum_{\nad{i,j =0}{i <j}}^n m_i m_j  f(\psi_j-\psi_i),$$
where  $f(x)=\log \sin x$ and $\psi_0=0$.
Then the collection of
identities $(\a_j(1))$, $j=0,\ldots,n$,  is equivalent to the property that $F$ has a critical point at $\psi_j = \vf_j$.

One can show that in the region
$0=\psi_0 < \psi_1 <\ldots < \psi_n< \pi$ the function
 $F$ has a unique critical point.  This uses convexity of the function $f(x)$ and the behaviour at the boundaries encoded by $f(x) \to -\infty$, as $x \to 0^+$ or $x \to \pi^-$. This is in full analogy with the case of \cite{Muller} where $f(x) = 1/\sin^2 x$ was considered.
\end{proof}

Now we are ready to introduce a particular configuration $\A$.

\begin{definition}\label{m1n}
 Let $\mathcal{A}$ be a collection of vectors $\alpha_0, \alpha_1, \ldots ,\alpha_n \in \mathbb{C}^2$ with multiplicities $m_{0}=m \in \Z_+$,  $m_{i}=1$ for $1 \leq i \leq n$. Suppose that $\a_0=(1,0)$.
Let $z_i$ be the same as in Lemma \ref{1point}, and  denote by $e_i$ the $i$-th elementary symmetric polynomial in the variables $z_i$, $1\leq i \leq n$. We say that $\mathcal{A}$ is the $\mathcal{A}_{(m,1^n)}$ configuration if
$$e_i=(-1)^i\binom{n}{i}\binom{m+i-1}{i}{\binom{m+n-1}{i}}^{-1}$$ for $1 \leq i \leq n$.
\end{definition}

It follows from Theorems \ref{1st-id-implies-condig}, \ref{realexists}  that the configuration $\mathcal{A}_{(m,1^n)}$ consists of $n+1$ pairwise non-collinear real vectors, and the corresponding arrangement of lines is symmetric with respect to the vector $\a_0$. The latter property can be rephrased as $z_i z_{n-i+1}=1$ for $1 \leq i \leq n$ where the vectors with multiplicity 1 are ordered appropriately.
In the case $m=1$ the vectors $\a_0,\ldots,\a_n$ are the normal vectors of the dihedral arrangement of $n+1$ lines in $\R^2$.

It appears that the configuration $\amn$ admits the BA function.

\begin{theorem}\label{BAanm}
A configuration $\A$ of type $(m,1^n)$ admits the BA function if and only if $\A$ is equivalent to the configuration $\amn$.
\end{theorem}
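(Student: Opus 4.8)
The plan is to prove both directions of the equivalence, using the necessary and sufficient conditions for the BA function from Theorem \ref{thm1} together with the structural results already obtained for type $(m,1^n)$ configurations.

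For the ``only if'' direction, suppose $\A$ is a configuration of type $(m,1^n)$ admitting the BA function. By Theorem \ref{thm1}, in particular the conditions \eqref{1stconds} must hold for every $\a_j\in\A$; passing to the complex coordinates $z_j$ as in Lemma \ref{1point}, this means the identities $(z_j(k))$ hold for $1\le k\le m_j$ and all $j$. In particular the first identities $(z_j(1))$ hold for all $j=0,\ldots,n$ (since $m_j\ge 1$ always). By Theorem \ref{1st-id-implies-condig}, these identities force the elementary symmetric polynomials $e_k$ of $z_1,\ldots,z_n$ to take exactly the values in \eqref{ehats}, after normalizing $z_0=1$. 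By the Proposition following Theorem \ref{1st-id-implies-condig} this determines $\A$ uniquely up to equivalence, and by Definition \ref{m1n} this configuration is precisely $\amn$ (once we know, via Theorem \ref{realexists}, that these symmetric function values genuinely come from a real configuration of pairwise non-collinear vectors). Hence $\A$ is equivalent to $\amn$.

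The substantive direction is the ``if'' direction: we must show that $\amn$ actually admits the BA function, i.e.\ that \emph{all} the conditions of Theorem \ref{thm1} hold, not merely $(z_j(1))$. Here I would split into two tasks. First, the identities $(z_j(k))$ for all $k$ with $1\le k\le m_j$: for $j=1,\ldots,n$ we have $m_j=1$, so only $(z_j(1))$ is required, and this holds by construction of $\amn$. The genuinely new content is the identities $(z_0(k))$ for $1\le k\le m$, and the locus conditions $(\tilde z_j(k))$. For the locus conditions, I expect to invoke Proposition \ref{firstidimplieslocus} (referenced in the introduction as ``the locus conditions \eqref{2ndconds} also hold''), which presumably shows that for configurations of type $(m,1^n)$ the first identities imply the second identities \eqref{2ndconds}; combined with the previous paragraph this gives all of \eqref{2ndcondspolar}. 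So the remaining obstacle is to verify $(z_0(k))$ for $k=2,\ldots,m$.

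To handle $(z_0(k))$ I would exploit the differential equation \eqref{difeq} for $P(w)=\prod_{j=1}^n(w-z_j)$, which is the hypergeometric-type equation characterizing $\amn$. The identity $(z_0(k))$ is a statement about $\sum_{j=1}^n \frac{(z_j+z_0)^{2k-1}}{(z_j-z_0)^{2k-1}}$ (there is no $i=0$ term now, and all the $m_i=1$ for $i\ge 1$), which can be rewritten in terms of the values of $P,P',\ldots$ and their derivatives at $w=z_0$, or more systematically in terms of the generating function $\sum_j \frac{1}{w-z_j}=\frac{P'(w)}{P(w)}$ expanded near $w=z_0$ and its relation to $\frac{P'}{P}$ evaluated against $\frac{w+z_0}{w-z_0}$. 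Using \eqref{difeq} one gets recursions (or closed forms via the known coefficients $e_k$, equivalently via the Jacobi/Gegenbauer polynomial that $P$ turns out to be after a change of variable $w=z_0\frac{1+t}{1-t}$) for the relevant symmetric functions, and one checks the required cancellation. The main obstacle I anticipate is precisely this combinatorial/special-function computation: identifying $P$ with a classical orthogonal polynomial and translating $(z_0(k))$ into a known identity for that polynomial (e.g.\ a contiguous relation or a vanishing of a certain derivative combination on the boundary), so that the verification becomes transparent rather than a brute-force induction on $k$.
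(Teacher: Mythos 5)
Your decomposition is essentially the paper's, and two of its three pieces are sound: the ``only if'' direction (the identities $(\a_j(1))$ forced by Theorem \ref{thm1} determine the $e_k$ via Theorem \ref{1st-id-implies-condig} and hence the configuration up to equivalence), and the locus conditions $(\tilde\a_j(1))$ at the multiplicity-one vectors, which Proposition \ref{firstidimplieslocus} supplies exactly as you anticipate. The gap is at the vector $\a_0$ of multiplicity $m$. First, Theorem \ref{thm1} demands not only $(\a_0(k))$ but also $(\tilde\a_0(k))$ for \emph{all} $k=1,\ldots,m$, whereas Proposition \ref{firstidimplieslocus} only delivers the case $k=1$; so your claim that it ``gives all of \eqref{2ndcondspolar}'' silently drops the conditions $(\tilde\a_0(k))$ for $k=2,\ldots,m$. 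Second, for the conditions $(\a_0(k))$, $k=2,\ldots,m$, which you correctly identify as remaining, you only sketch a special-function computation (identifying $P$ with a classical orthogonal polynomial and hunting for contiguity identities) without carrying it out, so as written the ``if'' direction is not established.

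Both problems are resolved at once by the symmetry of $\amn$, which is the one-line argument the paper uses. The arrangement is invariant under the reflection $\sigma$ fixing the line $\Pi_0$, with all reflected pairs of vectors carrying equal multiplicity; explicitly $z_iz_{n-i+1}=z_0^2$ after reordering. For $x\in\Pi_0$ one has $(\a_{i'},x)=\pm(\a_i,x)$ and $(\a_0,\a_{i'})=\mp(\a_0,\a_i)$ for the mirror image $\a_{i'}$ of $\a_i$, while $(\a_{i'},\a_{i'})=(\a_i,\a_i)$. Since the numerators of both \eqref{1stconds} and \eqref{2ndconds} contain the odd power $(\a_0,\a_i)^{2k-1}$, the contributions of $\a_i$ and $\a_{i'}$ cancel in pairs (and any $\a_i$ orthogonal to $\a_0$ contributes zero), so every identity $(\a_0(k))$ and $(\tilde\a_0(k))$, $1\le k\le m$, holds identically, with no computation. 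With this observation your argument closes: $(\a_j(1))$ for $j\ge 1$ hold by construction of $\amn$, $(\tilde\a_j(1))$ for $j\ge 1$ by Proposition \ref{firstidimplieslocus}, and all conditions at $j=0$ by symmetry.
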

It follows from Theorem  \ref{1st-id-implies-condig} and its proof that  a configuration $\A$ of type $(m,1^n)$ satisfies the identities $(\a_j(1))$ for all $j=0,\ldots, n$ if and only if $\A$ is equivalent to the configuration $\amn$. Since the configuration $\amn$ is symmetric the identities $(\tilde \alpha_0(k))$, $k=1,\dots,m$ are satisfied. Thus Theorem \ref{BAanm} is reduced to the following statement.

\begin{proposition}\label{firstidimplieslocus}
Suppose a configuration $\A$ of type $(m,1^n)$ satisfies the identities $(\alpha_j(1))$ for all $j=0,\ldots, n$. Then $\A$ satisfies the identities $(\tilde \alpha_j(1))$ for all $j=0,\ldots, n$.
\end{proposition}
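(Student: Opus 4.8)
The plan is to pass to the complex coordinates $z_j=e^{2i\vf_j}$ of Lemma~\ref{1point}, so that by that lemma the task becomes: deduce the identities $(\tilde z_j(1))$ for $j=0,\dots,n$ from the assumed identities $(z_j(1))$ for $j=0,\dots,n$. The case $j=0$ I would settle by symmetry. By Theorem~\ref{1st-id-implies-condig} and the proposition following it, $\A$ is equivalent to $\amn$, so the zero set $\{z_1,\dots,z_n\}$ of $P(w)=\prod_{i=1}^n(w-z_i)$ is invariant under the involution $w\mapsto z_0^2/w$; since $m_i(m_i+1)=2$ for $i\ge 1$, the identity $(\tilde z_0(1))$ reads $\sum_{i=1}^n f(z_i)=0$ with $f(w)=w(w+z_0)(w-z_0)^{-3}$, and a direct check gives $f(z_0^2/w)=-f(w)$, whence $\sum_i f(z_i)=\sum_i f(z_0^2/z_i)=-\sum_i f(z_i)$ and the sum vanishes. (This is the symmetry observation already recorded before the statement.) So the substance of the proposition is the case $1\le j\le n$.

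For such $j$ I would use the differential equation \mref{difeq}. By the proposition preceding Theorem~\ref{1st-id-implies-condig}, the identities $(z_l(1))$ for $l=1,\dots,n$ are equivalent to $P(w)=\prod_{i=1}^n(w-z_i)$ solving \mref{difeq}. Dividing \mref{difeq} by $P$ and putting $y=P'/P=\sum_{i=1}^n(w-z_i)^{-1}$ turns it into the Riccati equation
\[
y'+y^2=B(w)\,y+C(w),\qquad
B(w)=\frac{n+m-1}{w}+\frac{2m}{z_0-w},\qquad
C(w)=\frac{mn}{w(w-z_0)}.
\]
Now $w=z_j$ is a simple zero of $P$ (the $z_i$ are pairwise distinct) and a regular point of $B$ and $C$ (as $z_j\ne 0,z_0$), so near $z_j$ one has $y=(w-z_j)^{-1}+\sum_{k\ge 0}(-1)^k p_{k+1}(w-z_j)^k$ with $p_k:=\sum_{i\ne j}(z_j-z_i)^{-k}$. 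Matching the coefficients of $(w-z_j)^{-1}$, $(w-z_j)^0$ and $(w-z_j)^1$ in the Riccati equation gives
\[
p_1=\frac{B_0}{2},\qquad 3p_2=-p_1^2-B_1-C_0,\qquad 4p_3=B_1p_1+B_2+C_1,
\]
where $B_0,B_1,B_2$ and $C_0,C_1$ are the Taylor coefficients of $B$ and $C$ at $w=z_j$, read off at once from the partial fractions of $B,C$ and thus explicit rational functions of $z_j$ and $z_0$.

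Finally I would rewrite $(\tilde z_j(1))$ in these variables. Writing $z_i=z_j+(z_i-z_j)$ gives
\[
\frac{z_i(z_i+z_j)}{(z_i-z_j)^3}=\frac{1}{z_i-z_j}+\frac{3z_j}{(z_i-z_j)^2}+\frac{2z_j^2}{(z_i-z_j)^3},
\]
and the same expansion of the $i=0$ summand (with $z_0-z_j$ in place of $z_i-z_j$ and $m(m+1)$ in place of $2$) brings $(\tilde z_j(1))$ to the form
\[
2\bigl(-p_1+3z_jp_2-2z_j^2p_3\bigr)+m(m+1)\Bigl(\frac{1}{z_0-z_j}+\frac{3z_j}{(z_0-z_j)^2}+\frac{2z_j^2}{(z_0-z_j)^3}\Bigr)=0.
\]
Substituting the formulas for $p_1,p_2,p_3$ makes the left-hand side a rational function of $z_j$ and $z_0$, and the claim reduces to showing that this function vanishes identically; this is verified by a direct computation. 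That last step is the only place where real work is needed, and I expect it to be the main obstacle: the identity itself is elementary, but organising the cancellations takes some care, and it is the Riccati reformulation — which produces $p_1,p_2,p_3$ in closed form without ever computing $P'''(z_j)$ or $P''''(z_j)$ — that keeps the bookkeeping manageable.
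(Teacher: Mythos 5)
Your proposal is correct, and it reaches the statement by a route that is strategically the same as the paper's (handle $j=0$ by the symmetry of the configuration; for $1\le j\le n$ derive $(\tilde z_j(1))$ as a local algebraic consequence of the differential equation \mref{difeq} at the simple root $w=z_j$) but organized quite differently. The paper's Lemma \ref{lemma214} works with the same three quantities you call $p_1,p_2,p_3$ — there they appear as $f(z_1)$, $-f'(z_1)$, $\tfrac12 f''(z_1)$ for $f=P'/P-(w-z_1)^{-1}$ — but expresses them through $P''(z_1),P^{(3)}(z_1),P^{(4)}(z_1)$ via a Taylor expansion, and then the proof of the proposition eliminates $P^{(3)}$ and $P^{(4)}$ using the once- and twice-differentiated equations \mref{difeq2}, \mref{difeq3}. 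Your Riccati substitution $y=P'/P$ instead reads $p_1,p_2,p_3$ off directly as explicit rational functions of $z_j,z_0$, never touching the third and fourth derivatives of $P$ or the differentiated ODEs; that is a genuine simplification of the bookkeeping. As for the step you flagged as the main obstacle: it does close, and cleanly. With $a=z_j$, $d=z_0-z_j$ and $M=n+m-1$ one gets $p_1=\tfrac{M}{2a}+\tfrac{m}{d}$, $3p_2=\tfrac{M(4-M)}{4a^2}+\tfrac{m(1-m)}{ad}-\tfrac{m(m+2)}{d^2}$, $4p_3=\tfrac{M(2-M)}{2a^3}+\tfrac{m(1-m)}{a^2d}-\tfrac{m(1-m)}{ad^2}+\tfrac{2m(m+1)}{d^3}$; in $2(-p_1+3ap_2-2a^2p_3)$ the $1/a$-terms cancel identically and the remainder is exactly $-m(m+1)\bigl(d^{-1}+3ad^{-2}+2a^2d^{-3}\bigr)$, the negative of the $i=0$ contribution. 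Two minor points: in the $j=0$ case the involution $w\mapsto z_0^2/w$ may fix one of the $z_i$ (namely $z_i=-z_0$ for odd $n$), but your argument survives since your $f$ vanishes there; and strictly you only need the symmetry \mref{e} of the $e_k$ from Theorem \ref{1st-id-implies-condig}, not the full identification with $\amn$.
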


Before proving this proposition we analyze the identities $(\tilde z_j(1))$.
\begin{lemma}\label{lemma214}
Suppose a configuration $\A$ of type $(m,1^n)$ satisfies the condition $(z_1(1))$. Then the relation $(\tilde z_1(1))$ can be rearranged as
\begin{align}\label{star}
&\left(2m(m+1)z_0z_1(z_0+z_1)+2(z_1-z_0)^2 \gamma -3(z_1-z_0)\gamma^2+\gamma^3\right) P'(z_1)  \\
&+2z_1^2(z_1-z_0)^2\left( 2(z_1-z_0)-\gamma \right)P^{(3)}(z_1) +z_1^3(z_1-z_0)^3P^{(4)}(z_1)=0,\notag
\end{align}
where $P(z)=\prod_{j=1}^n (z-z_j)$, and
$\gamma =(n-1)(z_1-z_0)-m(z_1+z_0)$.
\end{lemma}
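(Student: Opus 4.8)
The plan is to rewrite the rational identity $(\tilde z_1(1))$ as a relation among $P'(z_1)$, $P^{(3)}(z_1)$ and $P^{(4)}(z_1)$ by turning the sums occurring in it into power sums $S_\ell=\sum_{i=2}^n(z_1-z_i)^{-\ell}$, expressing those power sums through logarithmic derivatives of $P$ at $z_1$, and then using $(z_1(1))$ to eliminate $P''(z_1)$.

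Separating the $i=0$ term and using $m_i(m_i+1)=2$ for $i\ge 2$, the identity $(\tilde z_1(1))$ reads
\[
\frac{m(m+1)z_0(z_0+z_1)}{(z_0-z_1)^3}+\sum_{i=2}^n\frac{2\,z_i(z_i+z_1)}{(z_i-z_1)^3}=0 .
\]
Writing $z_i=z_1-(z_1-z_i)$ and expanding $z_i(z_i+z_1)=2z_1^2-3z_1(z_1-z_i)+(z_1-z_i)^2$ gives $\dfrac{z_i(z_i+z_1)}{(z_i-z_1)^3}=-\dfrac{2z_1^2}{(z_1-z_i)^3}+\dfrac{3z_1}{(z_1-z_i)^2}-\dfrac1{z_1-z_i}$, so $(\tilde z_1(1))$ is equivalent to
\[
\frac{m(m+1)z_0(z_0+z_1)}{(z_1-z_0)^3}+4z_1^2S_3-6z_1S_2+2S_1=0 .
\]

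To express the $S_\ell$ through $P$, put $Q(w)=P(w)/(w-z_1)=\prod_{j=2}^n(w-z_j)$, so that $\frac{Q'}{Q}(w)=\sum_{j=2}^n\frac1{w-z_j}$; comparing Taylor coefficients at $w=z_1$ yields $S_1=\frac{Q'}{Q}(z_1)$, $S_2=\bigl(\frac{Q'}{Q}\bigr)^2-\frac{Q''}{Q}$ and $2S_3=\frac{Q'''}{Q}-3\frac{Q'Q''}{Q^2}+2\bigl(\frac{Q'}{Q}\bigr)^3$, all evaluated at $w=z_1$. Applying the Leibniz rule to $P(w)=(w-z_1)Q(w)$ gives $Q(z_1)=P'(z_1)$, $Q'(z_1)=\tfrac12P''(z_1)$, $Q''(z_1)=\tfrac13P^{(3)}(z_1)$, $Q'''(z_1)=\tfrac14P^{(4)}(z_1)$, so each $S_\ell$ becomes an explicit rational function of $P'(z_1),P''(z_1),P^{(3)}(z_1),P^{(4)}(z_1)$ and the last display becomes a rational identity in these four quantities. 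Now I invoke $(z_1(1))$: evaluating \mref{diffeqproof} at $w=z_1$ gives $z_1(z_1-z_0)P''(z_1)=\gamma P'(z_1)$, i.e. $P''(z_1)=\dfrac{\gamma}{z_1(z_1-z_0)}P'(z_1)$, and substituting this removes every occurrence of $P''(z_1)$. Multiplying through by $z_1(z_1-z_0)^3P'(z_1)$ and cancelling the overall factor $P'(z_1)^2$ — legitimate since $P'(z_1)=\prod_{j=2}^n(z_1-z_j)\ne 0$ as the $z_j$ are pairwise distinct, and $z_1(z_1-z_0)\ne 0$ — produces a relation of the shape $AP'(z_1)+BP^{(3)}(z_1)+CP^{(4)}(z_1)=0$ with $A,B,C$ polynomial in $z_0,z_1,\gamma$. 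Every operation is reversible, so this relation is equivalent to $(\tilde z_1(1))$ under the standing assumption $(z_1(1))$.

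It then remains to check that $A,B,C$ are exactly the coefficients in \mref{star}: one gets $C=z_1^3(z_1-z_0)^3$ at once, while collecting terms (conveniently, by first substituting $P''(z_1)=\beta P'(z_1)$ with $\beta=\gamma/(z_1(z_1-z_0))$ into the rational expressions for $S_2,S_3$, then clearing denominators) shows $B=2z_1^2(z_1-z_0)^2\bigl(2(z_1-z_0)-\gamma\bigr)$ and $A=2m(m+1)z_0z_1(z_0+z_1)+2(z_1-z_0)^2\gamma-3(z_1-z_0)\gamma^2+\gamma^3$. This last step is the one piece of genuine work — a routine but somewhat lengthy polynomial manipulation with no conceptual difficulty — and I expect the bookkeeping in reducing $A$ to the stated cubic in $\gamma$ to be the main nuisance.
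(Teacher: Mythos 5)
Your proposal is correct and follows essentially the same route as the paper: the partial-fraction split of $z_i(z_i+z_1)/(z_i-z_1)^3$ is the paper's identity \eqref{locusid}, your power sums $S_1,-S_2,\tfrac12 f''$ computed via $Q=P/(w-z_1)$ and Leibniz are exactly the paper's $f(z_1),f'(z_1),f''(z_1)$ obtained by Taylor-expanding $P'/P-1/(z-z_1)$, and the elimination of $P''(z_1)$ via $(z_1(1))$ and the final clearing of denominators coincide with the paper's substitution $f(z_1)=\gamma/(2z_1(z_1-z_0))$. (The only blemish is the phrase about cancelling an overall factor $P'(z_1)^2$ after multiplying by $z_1(z_1-z_0)^3P'(z_1)$ — once $P''=\gamma P'/(z_1(z_1-z_0))$ is substituted the identity is already linear in $P',P^{(3)},P^{(4)}$ and one just multiplies by $2z_1(z_1-z_0)^3P'(z_1)$; the stated coefficients $A,B,C$ do come out as claimed.)
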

\begin{proof}
Firstly we use the identity
 \begin{equation}\label{locusid}
\sum_{i=2}^{n}\frac{(z_1+z_i)z_i}{(z_1-z_i)^3}
=-3z_1\sum_{i=2}^{n}\frac{1}{(z_1-z_i)^2}+2z_1^2\sum_{i=2}^{n}\frac{1}{(z_1-z_i)^3}+\sum_{i=2}^{n}\frac{1}{z_1-z_i}
\end{equation}
to rearrange  the condition $(\tilde z_1(1))$ into
\beq{z1tilde}
\frac{m(m+1)(z_1+z_0)z_0}{2(z_1-z_0)^3}
+ f(z_1)+3z_1f^{'}(z_1)+z_1^2f^{''}(z_1) =0,
\eeq
where we introduced
$$
f(z)= \sum_{i=2}^n\frac{1}{z-z_i} = \frac{P^{'}(z)}{P(z)} - \frac{1}{z-z_1}.
$$
Note that due to the identity $(z_1(1))$ one has

$$m\frac{z_1+z_0}{z_1-z_0}+\sum_{i=2}^n\frac{z_1+z_i}{z_1-z_i}= m\frac{z_1+z_0}{z_1-z_0}+ 2z_1f(z_1)-(n-1) =0,
$$
hence
\begin{equation}\label{f(z_1)}
f(z_1)=\frac{P''(z_1)}{2P'(z_1)}=\frac{1}{2z_1}\left(n-1-\frac{m(z_1+z_0)}{z_1-z_0}\right) = \frac{\gamma}{2 z_1(z_1-z_0)}.
\end{equation}

Now we express $f^{'}(z_1)$ and  $f^{''}(z_1)$ in terms of $P$ by considering the Taylor expansion of $f(z)$ at $z=z_1$. Put $\varepsilon=z-z_1$. We have
{
\begin{align*}f(z)&=\frac{P^{'}(z)(z-z_1)-P(z)}{P(z)(z-z_1)}
=\frac{\frac{1}{2}P^{(2)}(z_1)+\frac{1}{3}P^{(3)}(z_1)\varepsilon+\frac{1}{8}P^{(4)}(z_1)\varepsilon^2+O(\varepsilon^3)}{P^{'}(z_1)+\frac{1}{2}P^{(2)}(z_1)\varepsilon+\frac{1}{6}P^{(3)}(z_1)\varepsilon^2+O(\varepsilon^3)}\\
&=\frac{1}{P^{'}(z_1)}\left(\frac{1}{2}P^{(2)}(z_1)+\frac{1}{3}P^{(3)}(z_1)\varepsilon+\frac{1}{8}P^{(4)}(z_1)\varepsilon^2+ O(\varepsilon^3)\right) \\
&\times\left(1-\frac{P^{(2)}(z_1)}{2P^{'}(z_1)}\varepsilon-\frac{P^{(3)}(z_1)}{6P^{'}(z_1)}\varepsilon^2+(\frac{P^{(2)}(z_1)}{2P^{'}(z_1)})^2\varepsilon^2+O(\varepsilon^3)\right)\\
&=\frac{P^{(2)}(z_1)}{2P^{'}(z_1)}+\frac{1}{P^{'}(z_1)}\left(\frac{P^{(3)}(z_1)}{3}-\frac{P^{(2)}(z_1)^2}{4P^{'}(z_1)}\right)\varepsilon+\frac{1}{P^{'}(z_1)}\left(\frac{1}{8}P^{(4)}(z_1) \right.
\\& -\frac{P^{(3)}(z_1)P^{(2)}(z_1)}{6P^{'}(z_1)}
\left.+\frac{1}{4}P^{(2)}(z_1)\left(\frac{P^{(2)}(z_1)^2}{2P^{'}(z_1)^2}-\frac{P^{(3)}(z_1)}{3P^{'}(z_1)}\right)\right)\varepsilon^2
+O(\varepsilon^3).\end{align*}
}
Thus we have the identities
\begin{align}\label{f'}
f^{'}(z_1)&=\frac{1}{P^{'}(z_1)}\left(\frac{P^{(3)}(z_1)}{3}-\frac{P^{(2)}(z_1)^2}{4P^{'}(z_1)}\right),\\ \label{f''}
f^{''}(z_1)&=\frac{1}{P^{'}(z_1)}\left(\frac{P^{(4)}(z_1)}{4}-\frac{P^{(3)}(z_1)P^{(2)}(z_1)}{2P^{'}(z_1)}+
\frac{P^{(2)}(z_1)^3}{4 P'(z_1)^2}\right).
\end{align}
Now we substitute expressions \mref{f(z_1)}, \mref{f'}, \mref{f''} into \mref{z1tilde}. We get

\begin{align*}
&{m(m+1)z_0z_1(z_1+z_0)}+ (z_1-z_0)^2 \gamma
+2 z_1^2(z_1-z_0)^3\left(\frac{P^{(3)}(z_1)}{P^{'}(z_1)}- 3 f(z_1)^2\right)\\
&+z_1^3(z_1-z_0)^3\left(\frac{P^{(4)}(z_1)}{2P^{'}(z_1)}-
\frac{2P^{(3)}(z_1)}{P'(z_1)} f(z_1)+4 f(z_1)^3\right)=0.
\end{align*}
By substituting the equality $f(z_1)=  \gamma/(2 z_1(z_1-z_0))$ we arrive at the required relation.
\end{proof}

We are now ready to establish Proposition \ref{firstidimplieslocus}.
\begin{proof}
By differentiating the equation \mref{difeq} we get the following two differential equations for $P(w)$:
\beq{difeq2}
w(w-z_0) P^{(3)} - ( (n-2)(w-z_0) - m(w+z_0) -w)P'' - (mn-m+n-1) P'=0,
\eeq
\beq{difeq3}
w(w-z_0) P^{(4)} - ( (n-3)(w-z_0) - m(w+z_0) - 2 w)P^{(3)} - (mn-2m+2n-4) P''=0.
\eeq
We use equations \mref{difeq}, \mref{difeq2}, \mref{difeq3} to rearrange the left-hand side of \mref{star}. We have
$$
2 z_1^2(z_1-z_0)^2\left( 2(z_1-z_0)-\gamma \right)P^{(3)}(z_1) +z_1^3(z_1-z_0)^3P^{(4)}(z_1)= a P'(z_1) + b P(z_1),
$$
where $a, b$ are polynomials in $z_1, z_0$ with
$$
a=
\left((\gamma -2 z_1 +z_0)\gamma + (m n -m+n-1)z_1(z_1-z_0) \right) \left(-\gamma-2z_0\right)
$$$$+ z_1(z_1-z_0)(mn-2m+2n-4)\gamma.
$$
Since $P(z_1)=0$ and $$a +
\left(2m(m+1)z_0z_1(z_0+z_1)+2(z_1-z_0)^2 \gamma -3(z_1-z_0)\gamma^2+\gamma^3\right)
=0$$ the Proposition follows.
\end{proof}

\section{Planar arrangements with dihedral symmetry}
\label{mm1n}

We start by fixing a particular configuration. It consists of pairwise non-collinear vectors $\a_0,\ldots, \a_{n+1}\in \R^2$ with the multiplicities $m_0=m, m_{n+1}=\tilde m$, and $m_i = 1$ for $i=1,\ldots,n$. We require that the vectors $\a_0$ and $\a_{n+1}$ are orthogonal and that the corresponding arrangement of lines is symmetric with respect to reflection about $\a_0$, so in particular $n$ is even. Let us also require that the conditions $(z_j(1))$ are satisfied for all $j=1,\ldots,n$. Indeed, it follows by the arguments of Theorem \ref{realexists} and \cite{Muller} that there exists a configuration  with these properties, and moreover it is unique up to equivalence. We denote it by  $\A_{(m, \tilde m, 1^n)}$.

Our further arguments are also close to the ones used in Section \ref{section-m1n}. We use the identities \mref{1stcondspolar} to find symmetric combinations of the coordinates determining $\a_i$ $(i=1,\ldots,n)$, and we show that the conditions \mref{1stcondspolar} imply the locus conditions \mref{2ndcondspolar}. This leads to the existence of the BA function for the configuration $\A_{(m, \tilde m, 1^n)}$.

\begin{proposition}
Let $z_j$ be defined for the configuration  $\A_{(m, \tilde m, 1^n)}$ in the same way as in Lemma \ref{1point}. Consider the corresponding polynomial
$P(w)= \prod_{j=1}^n (w-z_j)$.  Then $P(w)$ satisfies the second order differential equation
\begin{align}\label{difeq_two_mult}
w(w^2-z_0^2) P'' - \left((n-1)(w^2-z_0^2) - m(w+z_0)^2-\tilde m (w-z_0)^2 \right)P'
\notag \\
=\left(n(m+\tilde m) w + n(m-\tilde m) z_0\right)P.
\end{align}
Further, the elementary symmetric polynomials $e_k= e_k(z_1,\ldots, z_n)$ satisfy the recurrence relations
$$
(m+\tilde m +k -1)(k- n -1)e_{n-k+1} + (n-2k)(m-\tilde m)z_0 e_{n-k}
$$$$
+(m+\tilde m +n - k -1) (k+1) z_0^2e_{n-k-1} =0
$$
for  $k=1,\ldots,n-1$ with $e_n=z_0^n, e_{n-1}=z_0^{n-1}\frac{(m-\tilde m)n}{n+m+\tilde m-1}$.
\end{proposition}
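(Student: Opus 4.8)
The plan is to adapt the argument given above for $\amn$, adding two inputs. Orthogonality of $\a_0$ and $\a_{n+1}$ forces $z_{n+1}=-z_0$, so $\a_{n+1}$ contributes a term with a known coordinate. Moreover, the reflection in the line of $\a_0$ acts on the coordinates $z$ by the involution $z\mapsto z_0^2/z$, whose only fixed points $z=\pm z_0$ are the coordinates of $\a_0$ and $\a_{n+1}$; hence none of $z_1,\dots,z_n$ is fixed, the roots of $P$ split into $n/2$ pairs $\{z_j,z_0^2/z_j\}$, and comparing $P(w)$ with $\prod_{j=1}^n(w-z_0^2/z_j)$ gives $e_n=\prod_j z_j=z_0^n$ and $e_{n-k}=z_0^{\,n-2k}e_k$ for all $k$. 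The symmetry will be used to pin down a constant that the identities $(z_j(1))$, $j=1,\dots,n$, leave undetermined.

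First I would rewrite the identities. Fix $j\in\{1,\dots,n\}$. As in the derivation of \eqref{diffeqproof}, using $P'(z_j)=\prod_{i\ne j}(z_j-z_i)$ and $P''(z_j)/P'(z_j)=2\sum_{i\ne j}1/(z_j-z_i)$, one writes out $(z_j(1))$ with the $\a_0$-term (coordinate $z_0$, multiplicity $m$) and the $\a_{n+1}$-term (coordinate $-z_0$, multiplicity $\tilde m$), and clears denominators by $(w^2-z_0^2)P'(w)$. This shows that $(z_j(1))$ holds if and only if
\[
S(w):=w(w^2-z_0^2)P''(w)-B(w)P'(w),\qquad B(w):=(n-1)(w^2-z_0^2)-m(w+z_0)^2-\tilde m(w-z_0)^2,
\]
vanishes at $w=z_j$. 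Comparing leading terms shows $\deg S=n+1$ with leading coefficient $n(m+\tilde m)$; since $S$ vanishes at the $n$ roots of the monic polynomial $P$, one has $S=\bigl(n(m+\tilde m)w+b\bigr)P$ for some constant $b$.

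Next I would fix $b$, thereby obtaining \eqref{difeq_two_mult}. Evaluating $S=\bigl(n(m+\tilde m)w+b\bigr)P$ at $w=z_0$ kills the $P''$-term, and $B(z_0)=-4mz_0^2$, so $4mz_0^2P'(z_0)=\bigl(n(m+\tilde m)z_0+b\bigr)P(z_0)$ (here $z_0$ is not a root of $P$). Pairing $z_j$ with $z_0^2/z_j$ one checks $\frac1{z_0-z_j}+\frac1{z_0-z_0^2/z_j}=\frac1{z_0}$, hence $P'(z_0)/P(z_0)=\sum_{j=1}^n 1/(z_0-z_j)=n/(2z_0)$, and therefore $b=2mnz_0-n(m+\tilde m)z_0=n(m-\tilde m)z_0$. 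This is exactly \eqref{difeq_two_mult}. I expect this to be the step requiring genuine care: the identities $(z_j(1))$, $j=1,\dots,n$, do not by themselves single out the configuration — they leave $b$, equivalently $e_1$, free — so the reflection symmetry is genuinely needed, and it enters precisely through the evaluation $\sum_j 1/(z_0-z_j)=n/(2z_0)$.

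Finally, the recurrence follows by a routine computation. Writing $P(w)=\sum_{j=0}^n(-1)^{n-j}e_{n-j}w^j$ and substituting into \eqref{difeq_two_mult}, the coefficient of $w^k$ is a combination of $e_{n-k+1},e_{n-k},e_{n-k-1}$, since each of $w(w^2-z_0^2)$, $B(w)$ and $n(m+\tilde m)w+n(m-\tilde m)z_0$ shifts the power of $w$ by at most one. A short simplification — the coefficient of $e_{n-k+1}$ collapsing to $(k-n-1)(m+\tilde m+k-1)$ and that of $e_{n-k-1}$ to $(k+1)(m+\tilde m+n-k-1)z_0^2$ — turns the vanishing of this coefficient, for $1\le k\le n-1$, into the stated relation. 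The boundary value $e_n=z_0^n$ comes from the pairing of roots, while $e_{n-1}$ is obtained from the coefficient of $w^n$ in \eqref{difeq_two_mult} together with $e_{n-1}=z_0^{\,n-2}e_1$.
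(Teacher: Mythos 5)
Your proposal is correct and follows essentially the same route as the paper: both reduce the conditions $(z_j(1))$ to the vanishing at $w=z_j$ of the degree-$(n+1)$ polynomial $S(w)=w(w^2-z_0^2)P''-B(w)P'$, conclude $S=(n(m+\tilde m)w+\gamma)P$ by comparing leading coefficients, and then use the reflection symmetry of the configuration to pin down $\gamma$ and read off the recurrence from the coefficients of the resulting differential equation. The only (harmless) difference is in how $\gamma$ is fixed: the paper compares the coefficients of $w^0$ and $w^n$ and invokes $e_n=z_0^n$, $e_{n-1}=e_1z_0^{n-2}$, whereas you evaluate the polynomial identity at $w=z_0$ and compute $P'(z_0)/P(z_0)=n/(2z_0)$ from the pairing of roots under $z\mapsto z_0^2/z$ --- two equivalent uses of the same symmetry.
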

\begin{proof}
We have $z_{n+1}=-z_0$. Then the conditions $(z_1(j))$ take the form
\beq{poltwomult}
w(w^2-z_0^2)P''-((n-1)(w^2-z_0^2)-m(w+z_0)^2-\tilde m (w-z_0)^2))P' = 0
\eeq
if $w=z_j$ for any $j=1,\ldots,n$.  It follows that the polynomial in the left-hand side of \mref{poltwomult} equals $(\beta z+\gamma) P(z)$ with $\beta=n(m+\tilde m)$ and $\gamma \in \C$. By considering terms of degrees 0 and $n$ we get the equations
$$
e_{n-1} z_0^2(n+m+\tilde m -1) + \gamma e_n =0, \,\,  e_1(n+m+\tilde m -1) + 2(m-\tilde m) n z_0 = \gamma.
$$
Using the symmetry $e_{n-1}=e_1 z_0^{n-2}$, $e_n=z_0^n$ we derive $\gamma = (m-\tilde m)n z_0$ and $e_{n-1}=z_0^{n-1}\frac{(\tilde m - m)n}{n+m+\tilde m -1}$ as required.
The recurrence relation also follows.
\end{proof}

\begin{theorem}\label{BA-exists-two-arb-mult}
There exists the BA function for the configuration $\A_{(m, \tilde m, 1^n)}$.
\end{theorem}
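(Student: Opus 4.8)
The proof will follow the template of Section~\ref{section-m1n}. By Theorem~\ref{thm1} and Lemma~\ref{1point}, the configuration $\A_{(m,\tilde m,1^n)}$ admits the BA function precisely when the identities $(z_j(k))$ and $(\tilde z_j(k))$ hold for every index $j$ and every $1\le k\le m_j$. At the multiplicity-one lines the identities $(z_j(1))$, $j=1,\dots,n$, hold by the very definition of $\A_{(m,\tilde m,1^n)}$, so there only $(\tilde z_j(1))$ must be verified. At the two distinguished lines I would use the reflection symmetry. By construction the weighted arrangement is symmetric under the reflection $\sigma$ about the line spanned by $\a_0$; since $\a_{n+1}\perp\a_0$, that line is exactly $\Pi_{n+1}$, while $\sigma$ maps $\Pi_0$ to itself, and in either case $\sigma$ fixes the two distinguished lines and permutes the $n$ multiplicity-one lines among themselves preserving multiplicities. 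Restricting the left-hand side of $(z_0(k))$ or $(\tilde z_0(k))$ to $\Pi_0$ (respectively of $(z_{n+1}(k))$ or $(\tilde z_{n+1}(k))$ to $\Pi_{n+1}$) and pairing each summand with the one coming from the $\sigma$-image vector, the two contributions are opposite, because $\sigma$ produces a sign in the odd-order factor; moreover the unique vector lying on the relevant mirror line is orthogonal to the corresponding distinguished normal and hence contributes zero. Therefore $(z_0(k)),(\tilde z_0(k))$ hold for all $1\le k\le m$ and $(z_{n+1}(k)),(\tilde z_{n+1}(k))$ hold for all $1\le k\le\tilde m$, and the proof reduces to establishing $(\tilde z_j(1))$ for $j=1,\dots,n$.

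Since the computation is uniform in the labelling of the multiplicity-one vectors, it is enough to treat the one with coordinate $z_1$, and here I would reproduce, with the obvious changes, the analysis of Lemma~\ref{lemma214} and of the proof of Proposition~\ref{firstidimplieslocus}. With $P(z)=\prod_{j=1}^n(z-z_j)$ and $f(z)=\sum_{i=2}^n\frac{1}{z-z_i}=\frac{P'(z)}{P(z)}-\frac{1}{z-z_1}$, the relation $f(z_1)=\frac{P''(z_1)}{2P'(z_1)}$ and the expansions \mref{f'}, \mref{f''} of $f'(z_1)$ and $f''(z_1)$ remain valid verbatim, being consequences of the Taylor expansion of $f$ alone. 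Applying the identity \mref{locusid} exactly as before, but now retaining the extra summand produced by the vector $z_{n+1}=-z_0$ of multiplicity $\tilde m$, one rewrites $(\tilde z_1(1))$ --- under the hypothesis $(z_1(1))$ --- as a relation among $z_0,z_1,m,\tilde m,n$ and $P'(z_1),P''(z_1),P^{(3)}(z_1),P^{(4)}(z_1)$, modulo $P(z_1)=0$. This is the analogue of \mref{star}, differing from it by one additional term carrying the factor $\tilde m(\tilde m+1)$ and supported at $z_1=-z_0$.

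Finally I would invoke the second-order differential equation \mref{difeq_two_mult} for $P$, together with its first and second derivatives --- the counterparts of \mref{difeq2}, \mref{difeq3} --- to express $P''(z_1),P^{(3)}(z_1),P^{(4)}(z_1)$ in terms of $P'(z_1)$ using $P(z_1)=0$, exactly as in the proof of Proposition~\ref{firstidimplieslocus}. Substituting these into the analogue of \mref{star} collapses its left-hand side to $a\,P'(z_1)$ for an explicit polynomial $a=a(z_0,z_1,m,\tilde m,n)$, and the proof is complete once the identity $a\equiv 0$ is checked. This single polynomial identity is the one genuinely computational step and, in my estimation, the main obstacle: it is of the same nature as the closing computation in Proposition~\ref{firstidimplieslocus}, but appreciably heavier because of the extra parameter $\tilde m$ and of the inhomogeneous term $\bigl(n(m+\tilde m)w+n(m-\tilde m)z_0\bigr)P$ on the right-hand side of \mref{difeq_two_mult}; I would carry it out with the aid of computer algebra or by a carefully organised hand expansion. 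Granting $a\equiv 0$, all the identities $(z_j(k))$ and $(\tilde z_j(k))$ hold, so by Theorem~\ref{thm1} the BA function for $\A_{(m,\tilde m,1^n)}$ exists.
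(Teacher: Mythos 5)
Your proposal is correct and follows essentially the same route as the paper: reduce to the conditions $(\tilde z_j(1))$ at the multiplicity-one lines via the dihedral symmetry, rewrite $(\tilde z_1(1))$ as the analogue of \eqref{star} with the extra $\tilde m(\tilde m+1)$ term, and then use the differential equation \eqref{difeq_two_mult} and its two derivatives to express $P^{(2)},P^{(3)},P^{(4)}$ at $z_1$ through $P'(z_1)$ and $P(z_1)$ and verify that the coefficient of $P'(z_1)$ cancels. The paper likewise asserts rather than displays that final cancellation, so the one computational step you flag as the main obstacle is exactly the step the paper leaves to the reader.
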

\begin{proof}
In a similar fashion to the proof of Lemma \ref{lemma214}, the condition $(z_1(1))$ takes the form
\begin{align}\label{locus-id-two-mult}
&z_1^2 P^{(4)}(z_1)+4 z_1 (1-z_1 f) P^{(3)}(z_1) + 2z_1 (2 z_1 f - 3) f P^{(2)}(z_1) \\ \notag
&
+\left( 4 f + 2z_0\frac{m(m+1)(z_1+z_0)^4-\tilde m (\tilde m+1)(z_1-z_0)^4}{(z_1^2-z_0^2)^3} \right) P'(z_1)=0,
\end{align}
where
$$
f=\frac{(n-1)(z_1^2-z_0^2)-m(z_1+z_0)^2-\tilde m (z_1-z_0)^2}{2 z_1(z_1^2-z_0^2)}.
$$
On the other hand by differentiating \mref{difeq_two_mult} we get
\begin{align}\label{difeq2_twomult}
&z_1(z_1^2-z_0^2)P^{(3)}(z_1) + \left(3z_1^2-z_0^2-(n-1)(z_1^2-z_0^2)+m(z_1+z_0)^2+\tilde m (z_1-z_0)^2 \right) P^{(2)}(z_1)
\\ \notag
&- \left( (2(n-1)+(n-2)(m+\tilde m))z_1+(n-2)(m-\tilde m)z_0\right) P'(z_1) = n(m+\tilde m)P(z_1),
\end{align}
and
\begin{align}\label{difeq3_twomult}
&z_1(z_1^2-z_0^2)P^{(4)}(z_1)
+ \left(2(3z_1^2-z_0^2)-(n-1)(z_1^2-z_0^2)+m(z_1+z_0)^2+\tilde m (z_1-z_0)^2 \right) P^{(3)}(z_1)
\\ \notag
&+ \left( (10-4n+(4-n)(m+\tilde m))z_1+(4-n)(m-\tilde m)z_0\right) P^{(2)}(z_1)
\\ \notag
&- 2(n-1)(m+\tilde m+1)P'(z_1) =0
\end{align}
Using equations \mref{difeq_two_mult}, \mref{difeq2_twomult}, \mref{difeq3_twomult} we can express each of $P^{(4)}(z_1)$, $P^{(3)}(z_1)$, and $P^{(2)}(z_1)$ as a linear combination of $P'(z_1)$ and $P(z_1)$. One can check that after the substitution of these expressions into the relation \mref{locus-id-two-mult} the term with $P'(z_1)$ cancels, hence the locus condition \mref{locus-id-two-mult} is satisfied. In the same way all the conditions $(z_j(1))$ hold for $j=1,\ldots, n$. Due to the symmetry of the configuration $\A_{(m, \tilde m, 1^n)}$ all the conditions \mref{1stconds}, \mref{2ndconds} are satisfied so the BA function exists by Theorem \ref{thm1}.
\end{proof}

In the special case $\tilde m=0$ or $\tilde m =1$ the configuration $\A_{(m, \tilde m, 1^n)}$ is reduced to the configuration $\amn$ and $\A_{(m, 1^{n+1})}$ respectively.
Now we are going to generate further configurations with Baker-Akhiezer functions. Let $\A$ be a configuration of vectors in $\C^2$ with the multiplicity function $m$. For any vector $\a_i\in \A$ let $\a_i=(\cos \vf_i, \sin \vf_i)$ for some $\vf_i \in \C$. For any positive integer $q$ we define a new configuration $T_q(\A)$. The number of vectors in $\widehat \A = T_q(\A)$ is $q$ times the number of vectors in $\A$. For each vector $\a_i \in \A$ we define new vectors $\a_{i,s} \in \widehat \A$, where $s=1,\ldots, q$. Let $\vf_{i,s}=\vf_i +\pi s/q$. Define $\a_{i,s} = (\cos \vf_{i,s}, \sin \vf_{i,s})$. The multiplicity function on $\widehat \A$ is defined by $m(\a_{i,s})=m(\a_i)$. The importance of the operation $T_q$ is explained by the following proposition.

\begin{proposition}\label{prop-on-qexpansion}
Let $\A$ admit the BA function. Then $\widehat \A=T_q(\A)$ also admits the  BA function for any $q \in \Z_{\ge 1}$.
\end{proposition}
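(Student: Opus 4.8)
The plan is to verify the defining conditions of Theorem \ref{thm1} for $\widehat\A = T_q(\A)$ directly, exploiting the fact that $T_q$ multiplies all angles $\vf_i$ by introducing copies shifted by $\pi s/q$, which in the variable $z = e^{2i\vf}$ amounts to multiplying by roots of $e^{2\pi i s/q}$. First I would fix a reference vector $\a_{j_0,s_0}\in\widehat\A$ and write down the conditions $(z_{j_0,s_0}(k))$ and $(\tilde z_{j_0,s_0}(k))$ from Lemma \ref{1point} for $1\le k\le m(\a_{j_0})$. Writing $\zeta = e^{2\pi i/q}$, the coordinate of $\a_{i,s}$ is $z_{i,s} = \zeta^{s} z_i$ where $z_i = e^{2i\vf_i}$. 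Up to an overall rotation (which preserves existence of the BA function, as already noted in the paper) we may assume $s_0 = q$, i.e.\ the reference vector is one of the original-angle copies, so its $z$-coordinate is simply $z_{j_0}$.

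The key step is to split the sum over $i'\ne (j_0,q)$ in $(z_{j_0}(k))$ into the $q$ sub-sums indexed by $s=1,\dots,q$, each running over $i\ne j_0$ (together with the extra terms $i = j_0$, $s\ne q$). For the inner sum over $i$ with fixed $s$ one has terms
\[
\frac{m_i(z_{i,s}+z_{j_0})^{2k-1}}{(z_{i,s}-z_{j_0})^{2k-1}} = \frac{m_i(\zeta^{s}z_i+z_{j_0})^{2k-1}}{(\zeta^{s}z_i - z_{j_0})^{2k-1}},
\]
and I would reorganize this as a sum over the $\A$-configuration with the \emph{rescaled} reference point $z_{j_0}\zeta^{-s}$, i.e.\ it equals the left-hand side of condition $(\zeta^{-s}z_{j_0}(k))$ for the original configuration $\A$ but with the reference vector not in $\A$. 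So the plan is: for $s\ne q$ the "reference" point $\zeta^{-s}z_{j_0}$ is a generic point not equal to any $z_i$, and I claim the corresponding expression vanishes identically as a rational function of a free point $w$; for $s = q$ it is exactly the original condition $(z_{j_0}(k))$ which holds by hypothesis. To handle the $s\ne q$ case and also the leftover terms $i=j_0$, $s\ne q$, I would observe that summing $\sum_{s: \zeta^s \ne 1}\frac{(\zeta^s z_{j_0}+z_{j_0})^{2k-1}}{(\zeta^s z_{j_0}-z_{j_0})^{2k-1}} = \sum_{s\ne q}\frac{(\zeta^s+1)^{2k-1}}{(\zeta^s-1)^{2k-1}}$, which is a sum invariant under $s\mapsto q-s$ of odd powers of $(\zeta^s+1)/(\zeta^s-1)$, and this pairing sends each term to its negative, so the total is zero. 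The main obstacle is making precise and uniform the claim that for each $s$ the combined contribution reduces to a known identity for $\A$: the cleanest route is to prove the functional identity that for a \emph{free} variable $w$,
\[
\sum_{s=1}^{q}\ \Bigl(\sum_{i}\text{[terms of }(z_{j_0}(k))\text{ with }z_{j_0}\text{ replaced by }\zeta^{-s}w, \ z_i\in\A]\Bigr)
\]
is, after clearing denominators, a polynomial in $w$ whose vanishing at $w = z_{j_0}$ follows from the $q$ separate instances of Theorem \ref{1st-id-implies-condig}'s conditions (or rather the hypothesis that $\A$ satisfies all $(\a_j(k))$) applied to each rotated copy; the delicate point is that the copies $s\ne q$ contribute identities $(\a_j(k))$ for $\A$ evaluated at a point \emph{not} on any line of $\A$, which are identically satisfied only because the rational function in $(\alpha_j(k))$ extends to a rational identity in $x$ off $\Pi_j$ — this is where I would need to invoke that the conditions \eqref{1stcondspolar}, being $z_j(k)$-type rational identities, hold for all values of the reference coordinate once they hold for the configuration, by analyticity/degree counting.

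For the locus conditions $(\tilde z_{j_0}(k))$ the argument is structurally identical: the left-hand side again decomposes over $s=1,\dots,q$, each sub-sum matching the left-hand side of $(\tilde\a_j(k))$ for the rotated copy of $\A$, which vanishes because $\A$ admits the BA function (so by Theorem \ref{thm1} it satisfies \eqref{2ndconds}, equivalently \eqref{2ndcondspolar}), together with a self-interaction term $\sum_{s\ne q}\frac{m_{j_0}(m_{j_0}+1)\zeta^s z_{j_0}(\zeta^s z_{j_0}+z_{j_0})^{2k-1}}{z_{j_0}^{2k+1}(\zeta^s-1)^{2k+1}}$ over the copies of $\a_{j_0}$ itself, which vanishes by the same $s\mapsto q-s$ antisymmetry of $(\zeta^s+1)^{2k-1}/(\zeta^s-1)^{2k+1}\cdot\zeta^s$ once one notes $\zeta^s(\zeta^s-1)^{-2} = \zeta^{q-s}(\zeta^{q-s}-1)^{-2}$ is symmetric while the remaining odd factor $(\zeta^s+1)^{2k-1}(\zeta^s-1)^{-(2k-1)}$ is antisymmetric. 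Finally, I would check the leading-term normalization: the highest term of the BA polynomial for $\widehat\A$ must be $\prod_{i,s}(\a_{i,s},\lambda)^{m(\a_i)}(\a_{i,s},x)^{m(\a_i)}$, but this is automatic since the BA function, when it exists, is unique and its polynomial factor has exactly the prescribed highest term by Definition \ref{a}. So the whole proof rests on Theorem \ref{thm1}, and the only real content is the two trigonometric-sum reorganizations plus the root-of-unity antisymmetry lemma; I expect the bookkeeping of which index sets contribute — particularly separating the "cross" terms $i\ne j_0$ from the "self" terms $i=j_0,\ s\ne q$ — to be the part most prone to sign and off-by-one errors.
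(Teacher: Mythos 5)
Your overall framework (verify the conditions of Theorem \ref{thm1} for $\widehat\A$, split each sum into cross terms and self terms, and kill the self terms by the antisymmetry of $\bigl((\zeta^s+1)/(\zeta^s-1)\bigr)^{2k-1}$ under $s\mapsto q-s$) is reasonable, and your self-term computation is correct and matches the paper's. The fatal problem is in the cross terms. You group them by the rotation index $s$, so that each group is the left-hand side of $(z_{j_0}(k))$ for the original configuration $\A$ evaluated at a displaced reference point $\zeta^{-s}z_{j_0}$, and you then assert that this expression ``vanishes identically as a rational function of a free point $w$'' because the conditions ``hold for all values of the reference coordinate once they hold for the configuration, by analyticity/degree counting.'' No such principle exists. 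The function $R_k(w)=\sum_{i}m_i\bigl((z_i+w)/(z_i-w)\bigr)^{2k-1}$ is a nonconstant rational function of $w$; in fact $R_k(w)\to(-1)^{2k-1}\sum_i m_i\neq 0$ as $w\to\infty$, so it can never vanish identically. Already for two orthogonal multiplicity-one lines ($z_1=1$, $z_2=-1$, $k=1$) one finds $R_1(w)=2(1+w^2)/(1-w^2)$, which vanishes only at $w=\pm i$. The conditions $(z_j(k))$ are genuinely pointwise conditions at the configuration's own points $w=z_j$, so the groups with $s\neq s_0$ do not vanish individually and your decomposition does not close.

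The mechanism that actually works is the transposed grouping: for each \emph{fixed} original vector $\a_i$, sum over its $q$ copies $\a_{i,s}$, and use a multiple-angle identity to collapse that orbit sum. The paper does this by rewriting \eqref{1stconds} as the vanishing of the odd $\vf$-derivatives of $\sum_{i\neq j}m_i\log\sin(\vf-\vf_i)$ at $\vf=\vf_j$ and invoking $\sin(q\vf-\vf_i)=2^{q-1}\prod_{s=1}^q\sin(\vf-\vf_{i,s})$; equivalently, in your language, $\sum_{s=1}^{q}\cot^{2k-1}(\theta+\pi s/q)$ equals a triangular combination $\sum_{l\le k}c_{kl}\cot^{2l-1}(q\theta)$ with $c_{kk}\ne 0$, so each orbit contributes a linear combination of the hypotheses $(\a_{j_0}(l))$, $l\le k\le m_{j_0}$, plus nothing else. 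Note that for this identity to apply the copies must be the $q$ preimages of $\vf_i$ under $\vf\mapsto q\vf$ modulo $\pi$, i.e.\ $q\vf_{i,s}\equiv\vf_i\pmod{\pi}$; your normalization $z_{i,s}=\zeta^s z_i$ (equivalently $\vf_{i,s}=\vf_i+\pi s/q$) does not have this property, and with it the construction can even degenerate to collinear vectors and fails to produce the dihedral arrangements of Corollary \ref{corr-on-qexpansionm1n}. The same two corrections are needed for your treatment of the locus conditions \eqref{2ndcondspolar}.
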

\begin{proof}
Let us consider equalities \mref{1stconds} for the configuration $\A$. They are equivalent to the collection of the equalities
\beq{trigproof}
\p_\vf^{2k-1} \left(v(\vf) - m_j \log \sin (\vf-\vf_j)\right)|_{\vf=\vf_j} =0,
\eeq
where  $v(\vf) =  \log \prod_{\nad{\a_i \in \A}{i \ne j}} \sin(\vf-\vf_i)^{m_i}$ and $k=1,\ldots,m_j$. This implies that
$$
\p_\vf^{2k-1} \left(v(q\vf) - m_j \log \sin (q\vf-\vf_j)\right)|_{\vf=\vf_{j,s}} =0
$$
for $s=1,\ldots,q$.
Notice that
$$
\sin (q \vf - \vf_i) = 2^{q-1} \prod_{s=1}^q \sin(\vf-\vf_{i,s}),
$$
and that
$
\sum_{s=2}^q \cot(\vf - \vf_{j,s})
$
is invariant under the transformation $\vf \to -\vf +2\vf_{j,1}$ (the symmetry about $\vf_{j,1}$). It follows that
$$
\p_\vf^{2k-1} \left( \log \prod_{\nad{t =1}{\a_i \in \A,  i \ne j}}^q \sin(\vf-\vf_{i,t})^{m_i} - m_i \log \sin (\vf-\vf_{j,s})\right)|_{\vf=\vf_{j,s}} =0,
$$
which means that equalities \mref{1stconds} for the configuration $\widehat \A$ hold. Similarly the equalities \mref{2ndconds} hold and hence the statement follows by Theorem \ref{thm1}.
\end{proof}

Fix $q\in \Z_{\ge 1}$ and define ${\mathcal{A}^q_{(m,\tilde m, 1^n)}} =T_q({\mathcal{A}_{(m,\tilde m, 1^n)}})$.
\begin{corollary}\label{corr-on-qexpansionm1n}

The configuration ${\mathcal{A}^q_{(m,\tilde m, 1^n)}} $ admits the BA functions. The collection of lines of the corresponding arrangement forms the set of mirrors of the dihedral group $I_2(2q)$ together with $n/2$  $I_2(2q)$-orbits where all the multiplicities are 1.
%
\end{corollary}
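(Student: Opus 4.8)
The plan is to deduce both assertions directly from Proposition \ref{prop-on-qexpansion} and the combinatorics of the operation $T_q$. First I would observe that the configuration $\A_{(m,\tilde m,1^n)}$ admits the BA function by Theorem \ref{BA-exists-two-arb-mult}, so by Proposition \ref{prop-on-qexpansion} the configuration ${\mathcal{A}^q_{(m,\tilde m,1^n)}} = T_q(\A_{(m,\tilde m,1^n)})$ admits the BA function for every $q \in \Z_{\ge 1}$. This gives the first sentence of the statement with no further work.

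For the description of the arrangement, I would trace through the definition of $T_q$ applied to $\A_{(m,\tilde m,1^n)}$. The original configuration has vectors $\a_0,\a_1,\ldots,\a_{n+1}$ with $\a_0,\a_{n+1}$ orthogonal of multiplicities $m,\tilde m$, and $n$ vectors $\a_1,\ldots,\a_n$ of multiplicity $1$ forming an arrangement symmetric about $\a_0$. Writing $\a_0 = (\cos\vf_0,\sin\vf_0)$ and $\a_{n+1}=(\cos(\vf_0+\pi/2),\sin(\vf_0+\pi/2))$, the operation $T_q$ replaces $\a_0$ and $\a_{n+1}$ by the $2q$ vectors at angles $\vf_0 + \pi s/q$ and $\vf_0 + \pi/2 + \pi s/q$, $s=1,\ldots,q$; together these are the $2q$ equally spaced directions $\vf_0 + \pi s/(2q)$, $s=1,\ldots,2q$, which are precisely the normals to the $2q$ mirror lines of the dihedral group $I_2(2q)$. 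Two of these $2q$ lines (those coming from $\a_0,\a_{n+1}$) carry multiplicities $m$ and $\tilde m$ — here I should note that when $m\neq\tilde m$ the two "special" lines within the $I_2(2q)$ mirror set are distinguished, and when they are among the $2q$ mirrors the remaining $2q-2$ carry the multiplicities coming from the $T_q$-copies, which for these two vectors are again $m$ and $\tilde m$ alternately; I would phrase the multiplicity assignment carefully so as to match the wording "$2q$ lines with multiplicities $m,\tilde m$". The remaining $n$ multiplicity-one vectors of $\A_{(m,\tilde m,1^n)}$ each spawn $q$ copies under $T_q$, giving $nq$ multiplicity-one lines; because the original $n$ lines already come in $n/2$ pairs symmetric about $\a_0$ (by the symmetry built into $\A_{(m,\tilde m,1^n)}$), each such pair together with its $T_q$-images fills out a full $I_2(2q)$-orbit of size $2q$, and there are $n/2$ such orbits. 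Hence the multiplicity-one part consists of $n/2$ orbits of $I_2(2q)$, all with multiplicity $1$.

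The main obstacle, such as it is, is purely bookkeeping: one must check that the $T_q$-images of the two symmetric multiplicity-one vectors in a pair, together with the pair itself, genuinely coincide with a single $I_2(2q)$-orbit rather than overlapping or splitting, and that no unintended collinearities are created. This follows from the fact that $T_q$ acts by the uniform angular shifts $\vf \mapsto \vf + \pi s/q$ and that reflection about $\a_0$ is an element of $I_2(2q)$, so the group generated by these shifts and that reflection is exactly $I_2(2q)$; its orbit through a generic direction has size $2q$, matching the count $2 \cdot q$ contributed by a symmetric pair. I would also remark that the genericity of the $\vf_j$ (guaranteed by Theorem \ref{realexists} / the uniqueness statement preceding Theorem \ref{BA-exists-two-arb-mult}) ensures the multiplicity-one orbits are disjoint from each other and from the mirror set, so the arrangement has exactly the claimed combinatorial type.
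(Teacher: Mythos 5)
Your overall route is the same as the paper's (which states the corollary without a separate proof): existence of the BA function is immediate from Theorem \ref{BA-exists-two-arb-mult} combined with Proposition \ref{prop-on-qexpansion}, and the rest is bookkeeping on the angles. That part of your write-up is fine.

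There is, however, a concrete error in the bookkeeping step. You take the definition of $T_q$ at face value, with $\vf_{i,s}=\vf_i+\pi s/q$, and then assert that the images of $\a_0$ (at angle $\vf_0$) and of $\a_{n+1}$ (at angle $\vf_0+\pi/2$) together give the $2q$ equally spaced directions $\vf_0+\pi s/(2q)$. Modulo $\pi$ the first family is $\{\vf_0+2k\pi/(2q)\}$ and the second is $\{\vf_0+(q+2k)\pi/(2q)\}$; their union is the full set of $2q$ mirrors only when $q$ is odd. For even $q$ the two families coincide (e.g.\ $q=2$, $\vf_0=0$ gives $\{0,\pi/2\}$ twice), so with this reading the lines of multiplicity $m$ and $\tilde m$ collide and the claimed $I_2(2q)$ mirror set does not appear. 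The resolution is that the operation actually used in the proof of Proposition \ref{prop-on-qexpansion} is the dilated one, $\vf_{i,s}=(\vf_i+\pi s)/q$: these are the roots of $\sin(q\vf-\vf_i)$, which is what makes the factorization $\sin(q\vf-\vf_i)=2^{q-1}\prod_{s=1}^q\sin(\vf-\vf_{i,s})$ and the substitution $v(q\vf)$ in that proof work. With this reading the images of $\a_0$ and $\a_{n+1}$ are the even and odd multiples of $\pi/(2q)$ (shifted by $\vf_0/q$), i.e.\ exactly the $2q$ mirrors of $I_2(2q)$ for every $q$, and each symmetric pair $\{\pm\vf_j\}$ of multiplicity-one directions maps onto the full $I_2(2q)$-orbit $\{\pm\vf_j/q+\pi s/q\}$ of size $2q$, giving the $n/2$ orbits. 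Note also that disjointness of these orbits is then automatic rather than a genericity statement: $(\vf_i+\pi s)/q\equiv(\vf_j+\pi t)/q \pmod{\pi}$ forces $\vf_i\equiv\vf_j\pmod{\pi}$ because $|\vf_i-\vf_j|<\pi$, so distinct lines of $\A_{(m,\tilde m,1^n)}$ produce disjoint $q$-tuples. Your final paragraph's appeal to the group generated by the shifts and the reflection about $\a_0$ is sound, but it cannot repair the even-$q$ collision above; the fix has to come from using the correct form of $T_q$.
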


\section{Darboux transformations}\label{DarbouxTr}

In this Section we construct specific Darboux-Crum transformation which relates the Sturm-Liouville operator associated with the configuration $\A^q_{(m, \tilde m, 1^n)}$ with the operator with trivial potential.

Define $Q(w) = P(w) w^{-\frac{n}{2}}$, where $P(w)= \prod_{j=1}^n (w-z_j)$ satisfies the differential equation
\mref{difeq_two_mult}. Then $Q(w)$ satisfies the differential equation
\begin{align}\label{difeq_two_mult_Q}
w(w^2-z_0^2) Q_{ww} + \left((w^2-z_0^2) + (w-z_0)^2\tilde m + (w+z_0)^2 m \right)Q_{w}
\notag
-\frac{n}{2}(m+\tilde m+\frac{n}{2})\frac{w^2-z_0^2}{w}Q=0.
\end{align}

Let now $w=e^{2 i \varphi}$
so that
$$
\p_w=-\frac{i}{2w}\p_\vf, \quad \p^2_w = -\frac{1}{4w^2}\p^2_\vf+\frac{i}{2w^2}\p_\vf.
$$
Put $z_0=1$. Then as a function of $\vf$, $Q$ satisfies the differential equation
\beq{Qvf-diff-eq}
Q_{\vf \vf}+2 (m \cot \vf - \tilde  m \tan \vf) Q_\vf +n(2(m+\tilde m)+n) Q=0.
\eeq
Note also that
$$
Q(\vf) = \left(\prod_{j=1}^n (e^{2 i \vf} - e^{2i \vf_j})\right) e^{-i n \vf} 
= \varepsilon (2i)^n \prod_{j=1}^n \sin (\vf - \vf_j),
$$
where
$\varepsilon = e^{i \sum_{j=1}^n \vf_j}$. 
Define $\tilde Q=Q (\cos \vf)^{\tilde m} (\sin\vf)^m$. It follows from \mref{Qvf-diff-eq} that $\tilde Q$ satisfies the differential equation
\beq{eqQtilde}
L_{m,\tilde m} \tilde Q = (n+m+\tilde m)^2 \tilde Q,
\eeq
where
\beq{lmmt}
L_{m,\tilde m} = - \p^2_\vf + \frac{m(m -1)}{\sin^2 \vf} + \frac{\tilde m(\tilde m-1)}{\cos^2 \vf} .
\eeq

Let us apply Darboux transformation to the operator $L_{m,\tilde m}$ at the level $(n+m+\tilde m)^2$ with the help of function $\tilde Q$. That is we represent  $L_{m,\tilde m}- (n+m+\tilde m)^2 = A^* A$ where $A=\p_\vf-\tilde Q_\vf/\tilde Q$, $A^*=-\p_\vf-\tilde Q_\vf/\tilde Q$  and we define new operator $L=A A^*+(n+m+\tilde m)^2$. Then
\beq{Ldarbaux}
L=-\p^2_\vf + \frac{m(m +1)}{\sin^2 \vf}+  \frac{\tilde m(\tilde m+1)}{\cos^2 \vf} +\sum_{j=1}^n \frac{2}{\sin^2(\vf - \vf_j)}.
\eeq
The operator $L$ has singularities on the lines of the configuration ${\mathcal A}_{{(m,\tilde m, 1^n)}}$ (at $z_0=1$) rotated by $\pi/2$. If $\tilde m=0$ then the singularities are on the lines of the rotated configuration
${\mathcal A}_{{(m, 1^n)}}$ (at $z_0=1$).

\begin{theorem}\label{Darboux-theorem}
Suppose $m \ge  \tilde m$ with $\tilde m, m \in \Z_{\ge 0}$. Let $\chi_j(\vf)=\sin(k_j \vf)$, where $1\le j \le m$, and $k_j=j$ for $j=1,\ldots, m-\tilde m$, $k_{m - \tilde m +j}=m - \tilde m +2j$ for $j=1,\ldots,\tilde m-1$, and $k_{m}=\tilde m+m +n$. Then
\beq{Lform}
L=-\p^2_\vf -2\left(\frac{\p}{\p \vf }\right)^2 \log W[\chi_1,\ldots, \chi_{m}],
\eeq
where $W$ is the Wronskian of the corresponding functions $\chi_1(\vf),\ldots, \chi_{m}(\vf)$, and $L$ is given by \mref{Ldarbaux}. Further to that,
\beq{propnu}
Q(\vf)= \nu (\cos \vf)^{-\frac{\tilde m(\tilde m+1)}{2}}(\sin \vf)^{-\frac{m(m+1)}2} W[\chi_1,\ldots, \chi_{m}] ,
\eeq
where $\nu=2^{-\frac{\tilde m(\tilde m+1)}{2}-\frac{m(m-1)}{2}} (-1)^{\frac{m(m-1)}{2}}\prod_{\nad{p,q=1}{p>q}}^{m} (k_p-k_q)^{-1}$.
\end{theorem}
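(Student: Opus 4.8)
The plan is to exploit the chain of Darboux transformations connecting $L_{m,\tilde m}$ to the free operator $-\p_\vf^2$, reading off the $\chi_j$ as the iterated intertwining eigenfunctions. First I would recall the well-known factorization of the P\"oschl-Teller operator $L_{m,\tilde m}$ in \eqref{lmmt}: for $m\ge \tilde m$ there is a sequence of elementary Darboux steps lowering the pair $(m,\tilde m)$ down through $(m-1,\tilde m),(m-2,\tilde m),\dots$ until one parameter vanishes, and then lowering the remaining one; at each step the intertwining function is an explicit trigonometric function, and the bound-state levels one removes are precisely $k_j^2$ with $k_j$ as listed (the $j$ for $j\le m-\tilde m$, then $m-\tilde m+2j$ for the $\tilde m-1$ further steps, reflecting the shift of quantum numbers once the two singular terms interact). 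Concretely, $L_{m,\tilde m}$ is obtained from $-\p_\vf^2$ by the Darboux-Crum transformation with seed functions $\chi_1,\dots,\chi_{m-1}$ and the first $\tilde m-1$ of the shifted family; this is the standard Crum formula $L_{m,\tilde m}=-\p_\vf^2-2(\p/\p\vf)^2\log W[\chi_1,\dots,\chi_{m-1},\dots]$. Adding the final seed $\chi_m=\sin((\tilde m+m+n)\vf)$, which is an eigenfunction of $L_{m,\tilde m}$ at level $(\tilde m+m+n)^2$ by direct verification, performs exactly the Darboux step described in the text (the one built from $\tilde Q$), and the Crum formula then yields \eqref{Lform}.

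The key steps, in order, are: (1) verify by direct substitution that each $\chi_j$ with $k_j$ as given is an eigenfunction at level $k_j^2$ of the appropriate intermediate operator, so that the Wronskian $W[\chi_1,\dots,\chi_m]$ is nowhere zero on the relevant domain and the Crum construction is valid; (2) invoke the Crum-Darboux theorem: if $L_0=-\p_\vf^2$ and $\psi_1,\dots,\psi_m$ are eigenfunctions at distinct levels, then $-\p_\vf^2-2(\p/\p\vf)^2\log W[\psi_1,\dots,\psi_m]$ is the $m$-step Darboux transform, and identify the first $m-1$ (resp.\ first $m+\tilde m-2$) seeds as producing $L_{m,\tilde m}$ and the last one as producing $L$ in \eqref{Ldarbaux}; (3) match the potential: the transformed potential $-2(\p/\p\vf)^2\log W$ must equal $m(m+1)/\sin^2\vf+\tilde m(\tilde m+1)/\cos^2\vf+\sum_j 2/\sin^2(\vf-\vf_j)$, which follows because the single remaining Darboux step from $L_{m,\tilde m}$ via $\tilde Q$ gives exactly \eqref{Ldarbaux}, and $L_{m,\tilde m}$ itself is the Crum transform with the first seeds; (4) for the formula \eqref{propnu}, use the intertwining relation $A\tilde Q\propto 0$ together with $\tilde Q = Q(\cos\vf)^{\tilde m}(\sin\vf)^m$ and the Crum identity expressing the new ground-type solution as $W[\chi_1,\dots,\chi_m]/W[\chi_1,\dots,\chi_{m-1}]$ divided by the old one; tracking the powers of $\sin\vf$ and $\cos\vf$ coming from the seed Wronskians $W[\sin\vf,\sin 2\vf,\dots]$ (which are elementary $\prod(k_p-k_q)$ times powers of trigonometric functions) produces the exponents $-\tilde m(\tilde m+1)/2$ and $-m(m+1)/2$ and the constant $\nu$.

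The main obstacle I expect is step (3)–(4): the bookkeeping of which intermediate operators appear and, especially, pinning down the constant $\nu$ and the exact exponents in \eqref{propnu}. This requires a clean computation of Wronskians of the form $W[\sin k_1\vf,\dots,\sin k_r\vf]$, expressing them as $c_r \prod_{p>q}(k_p^2-k_q^2)\,(\sin\vf)^{a}(\cos\vf)^{b}\cdot(\text{polynomial in }\cos\vf)$ or similar, and then checking that the polynomial factor is precisely the one that reconstitutes $P(w)$ after substituting $w=e^{2i\vf}$, using \eqref{Qvf-diff-eq} to identify $Q$ up to scalar. The eigenfunction verifications in step (1) and the general Crum machinery in step (2) are routine; it is the precise normalization in \eqref{propnu}, together with confirming that the shifted exponents $k_{m-\tilde m+j}=m-\tilde m+2j$ are the correct missing levels (equivalently, that no eigenfunction is repeated and the Wronskian does not degenerate), that carries the real content. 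One should also note that it suffices to prove \eqref{Lform} and \eqref{propnu} as identities of meromorphic functions of $\vf$, so convergence or domain issues can be handled by analytic continuation once the polynomial identity underlying \eqref{difeq_two_mult_Q} is in hand.
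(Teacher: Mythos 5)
Your overall strategy --- a Crum chain starting from $-\p_\vf^2$, with the first $m-1$ seeds reproducing $L_{m,\tilde m}$ (as in the paper, quoted from the literature) and the last seed producing $L$ --- is the same as the paper's. But there is a genuine gap at the one step that carries the real content of the theorem: the identification of the final seed as $\sin((m+\tilde m+n)\vf)$ rather than a general combination $a e^{i(m+\tilde m+n)\vf}+b e^{-i(m+\tilde m+n)\vf}$. The Crum step attached to a seed $\chi_m$ is the Darboux transformation of $L_{m,\tilde m}$ by the eigenfunction $W[\chi_1,\ldots,\chi_m]/W[\chi_1,\ldots,\chi_{m-1}]$ at level $(m+\tilde m+n)^2$, while the operator $L$ of \eqref{Ldarbaux} is, by construction, the Darboux transform by the particular eigenfunction $\tilde Q$ at that same level. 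The eigenspace is two-dimensional and different eigenfunctions give different transformed potentials, so you must prove that $\tilde Q$ is proportional to that Wronskian quotient, i.e.\ that $a=-b$. You assert this (``performs exactly the Darboux step built from $\tilde Q$'') and later propose to ``use \eqref{Qvf-diff-eq} to identify $Q$ up to scalar'', but a second-order ODE does not determine its solution up to scalar; something must single out $Q$ inside the two-dimensional solution space. The paper does this by writing $\tilde Q = a\psi(i(m+\tilde m+n),\vf)+b\psi(-i(m+\tilde m+n),\vf)$ in terms of the trigonometric Baker--Akhiezer function of type $BC_1$ and observing that such a combination has a pole of order $m-1$ at $\vf=0$ unless $a=-b$, whereas $\tilde Q=Q(\cos\vf)^{\tilde m}(\sin\vf)^{m}$ is manifestly regular there. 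A substitute consistent with your explicit-Wronskian route would be: show that $W[\chi_1,\ldots,\chi_m]$ vanishes at $\vf=0$ (resp.\ $\vf=\pi/2$) to order $m(m+1)/2$ (resp.\ $\tilde m(\tilde m+1)/2$) --- a parity count on the exponents $k_j$ --- so that the right-hand side of \eqref{propnu} is a trigonometric polynomial, and then note that \eqref{Qvf-diff-eq} cannot have two independent trigonometric-polynomial solutions since their Wronskian would be a nonzero multiple of $(\sin\vf)^{-2m}(\cos\vf)^{-2\tilde m}$. That computation is precisely what you have deferred, so as written the argument is incomplete at its crux; the determination of $\nu$ in step (4) also presupposes this identification.

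Two smaller corrections: $\sin((m+\tilde m+n)\vf)$ is an eigenfunction of $-\p_\vf^2$, not of $L_{m,\tilde m}$ (for $m\ge 2$ the latter is false); and the Wronskian $W[\chi_1,\ldots,\chi_m]$ is certainly not ``nowhere zero'' --- its zeros at $\vf_j$, $0$, $\pi/2$ are exactly what create the singularities of $L$ --- nor is non-vanishing needed, since \eqref{Lform} is an identity of meromorphic functions. The parenthetical seed count ``first $m+\tilde m-2$'' should be $m-1$ in all cases.
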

\begin{proof}
It is discussed in \cite{BL} that the operator \mref{lmmt} can be obtained as a sequence of Darboux transformations with the specified functions $\chi_1,\ldots,\chi_{m-1}$:
$$
L_{m, \tilde m}=-\p^2_\vf -2\left(\frac{\p}{\p \vf }\right)^2 \log W[\chi_1,\ldots, \chi_{m-1}],
$$
so that $W[\chi_1,\ldots, \chi_{m-1}]=\mu (\sin \vf)^{\frac{m(m-1)}{2}} (\cos \vf)^{\frac{\tilde m(\tilde m-1)}{2}}$ for some constant $\mu$.

It follows from the properties of Darboux-Crum transformations that the operator $L$ can be expressed by the formula \mref{Lform} where the function $\chi_{m}= a e^{i(\tilde m+m+n)\vf} + b e^{-i(\tilde m+m+n)\vf}$, and that the function $\tilde Q$ has the form $\tilde Q = c W[\chi_1,\ldots, \chi_{m}]/W[\chi_1,\ldots, \chi_{m-1}]$, where $a, b, c$ are some  constants. It is easy to see  that $\tilde Q = a \psi(i(\tilde m+m+n),\vf) + b  \psi(-i(\tilde m+m+n),\vf)$ where $\psi(k,\vf)$ is the trigonometric Baker-Akhiezer function associated with the root system $BC_1$ with the multiplicity $m - \tilde m$ of the short root and the multiplicity $m -1$ of the long root (unless $m=0$ in which case the statement can be verified directly) \cite{Ch}. It follows from the properties of the Baker-Akhiezer functions and the equation \mref{eqQtilde} that $\tilde Q$ has a pole at $\phi=0$ of order $m -1$ unless $a=-b$. Thus we have to have $a=-b$ so $\chi_{m}$ has the required form. The value of the coefficient $\nu$ is obtained by comparing  $e^{i n \vf}$ terms in both sides of \mref{propnu}.
\end{proof}
The case $m<\tilde m$ can be reduced to the case considered in Theorem \ref{Darboux-theorem} by performing a rotation $\vf \to \vf +\frac{\pi}{2}$.  Further, by replacing the Darboux parameters $k_j$ with $q k_j$, $q \in \N$ the formula \mref{Lform} gives the angular part of the Schr\"odinger operator associated with the configuration ${\mathcal{A}^q_{(m,\tilde m, 1^n)}}$. Now, the Baker-Akhiezer functions of the two-dimensional configurations are expressed in \cite{BCE} in terms of Chebyshev polynomials and Darboux transformations data. Thus Theorem \ref{Darboux-theorem} and \cite[Theorems 2,3]{BCE} provide explicit expressions for the Baker-Akhiezer functions for the configurations  ${\mathcal{A}^q_{(m,\tilde m, 1^n)}}$.

\section{Quasi-invariants: preliminary results}\label{quasi-preliminary}
We now turn our attention to the algebras of quasi-invariants $Q_{\mathcal{A}}$, where $\mathcal{A}$ has type $(m,1^n)$ with $m,n \in \Z_{\ge 1}$. In this Section we get a partial information on the Hilbert series of the  algebra $Q_\A$. In Section~\ref{poincare-am1n} we derive the complete Hilbert series for the algebra of quasi-invariants for the configuration $\amn$ and conclude that the algebra is  Gorenstein. In Section \ref{all-gorenstein-am1n} we prove that there are no other configurations of type $(m,1^n)$ with Gorenstein algebras of quasi-invariants except the configuration $\amn$.

Let us fix some notation. Let $\A$ be a finite set of non-collinear vectors $\beta_0, \beta_1, \ldots ,\beta_n$ with the multiplicities $m_0=m$, $m_i=1$ for $1\le i \le n$. Let $\beta_i=(1,\alpha_i)$ with $\alpha_i \in \mathbb{C}$. We fix $\beta_0=(0,1)$.
Let $Q_\A$ be the corresponding algebra of quasi-invariants (see Definition \ref{qinvsdefn}). It is clear that the algebra is graded. Define its Hilbert series
\begin{equation}\label{poincare2}
P_\A(t)=\sum_{k=0}^{\infty}b_kt^k,
\end{equation}
where $b_k= \dim Q_\A^{(k)}$ and $Q_\A^{(k)}$ denotes the space of homogeneous polynomials of degree $k$  which are quasi-invariant with respect to $\A$.
It will be convenient to fix different notations for various parts of the series \eqref{poincare2} as follows.
Define
\begin{equation*} P_{\mathcal A}^{k,l}(t)=P^{k,l}=\sum_{i=k}^lb_it^i,
\end{equation*}
\begin{equation*} P_{{\mathcal A}, odd }^{k,l} (t)=P^{k,l}_{odd}=\sum_{k\leq 2i+1 \leq l}b_{2i+1}t^{2i+1},
\end{equation*}
and
\begin{equation*} P_{{\mathcal A}, even}^{k,l} (t) =P^{k,l}_{even}=\sum_{k \leq 2i \leq l}b_{2i}t^{2i}.
\end{equation*}

\begin{lemma}\label{pg1} Let $k \in \N$ satisfy $0 \leq k \leq n$. Then
$b_k=0$ if $k$ is odd, and $b_k=1$ if $k$ is even, in the Hilbert series \eqref{poincare2}.
\end{lemma}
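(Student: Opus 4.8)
The plan is to pass to the complex coordinates $z=x_1+ix_2$, $\bar z=x_1-ix_2$ (formally independent linear forms on $\C^2$), in which $(x,x)=x_1^2+x_2^2=z\bar z$ and a homogeneous polynomial $p$ of degree $k$ is written $p=\sum_{a+b=k}c_{ab}z^a\bar z^b$. For a single vector $\beta=(\beta^1,\beta^2)$, put $w=\beta^1+i\beta^2$ and $\bar w=\beta^1-i\beta^2$; then $(\beta,x)=\tfrac12(\bar w z+w\bar z)$ and $\p_\beta=w\p_z+\bar w\p_{\bar z}$. The first step is the elementary computation: expanding $\p_\beta p$ on monomials and imposing that the resulting degree $k-1$ binary form vanish on the line $\Pi=\{(\beta,x)=0\}$ (equivalently, be divisible by $\bar w z+w\bar z$, equivalently vanish at $z=-(w/\bar w)\bar z$), one finds that the quasi-invariant condition $\p_\beta p|_\Pi=0$ is equivalent to the single linear relation $\sum_{a=0}^k(-1)^a(k-2a)c_{a,k-a}\,w^a\bar w^{k-a}=0$.

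Next I apply this to a configuration $\A$ of type $(m,1^n)$. Each of the $n+1$ vectors $\beta_0,\ldots,\beta_n$ contributes such a relation: for $i\ge1$ it is the only quasi-invariant condition attached to $\beta_i$, and for $\beta_0$ it is the $s=1$ instance of the conditions in Definition \ref{qinvsdefn} (the higher ones only adding constraints). Setting $d_a=(-1)^a(k-2a)c_{a,k-a}$ and $D(s,t)=\sum_{a=0}^k d_a s^a t^{k-a}$, these relations say precisely that the binary form $D$ of degree $k$ vanishes at the $n+1$ points $[w_i:\bar w_i]\in\mathbb P^1$. Since the lines $\Pi_i$ are pairwise distinct (the $\beta_i$ are non-collinear), these points are pairwise distinct, and the hypothesis $k\le n$ gives $n+1>k$ of them; hence $D\equiv0$, i.e. $(k-2a)c_{a,k-a}=0$ for every $a$.

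The conclusion is then immediate. If $k$ is odd, $k-2a$ never vanishes, so all $c_{a,k-a}=0$ and $b_k=0$. If $k$ is even, only $c_{k/2,k/2}$ survives, so $p$ is a scalar multiple of $(z\bar z)^{k/2}=(x,x)^{k/2}$, giving $b_k\le1$; and $b_k=1$ because $(x,x)^{k/2}$ is quasi-invariant with respect to any configuration — in any orthonormal coordinate system adapted to a mirror, $(x,x)$ is even in the normal coordinate, so all odd normal derivatives of its powers vanish on the mirror.

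I do not expect a genuine obstacle. The step most prone to error is the bookkeeping in the monomial computation that reduces $(\beta,x)\mid\p_\beta p$ to the stated linear relation; the rest is the bare observation that the number of available conditions, $n+1$, exceeds the degree $k$, which is exactly the hypothesis $0\le k\le n$. One should also note that the non-isotropy assumption $(\beta_i,\beta_i)\neq0$ is not actually required for this argument (it only guarantees $w_i,\bar w_i\neq0$, which is not used), though it is harmless to retain.
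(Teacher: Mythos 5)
Your proof is correct, and it takes a genuinely different route from the paper's. The paper works in real coordinates with the monomial basis $x^{k-i}y^i$: it first uses the conditions at $\beta_0=(0,1)$ to kill the coefficients $\lambda_1,\lambda_3,\ldots,\lambda_{2m-1}$, then encodes the remaining $n$ conditions as a matrix $AC=0$ and computes $\rank A$ by exhibiting the column dependences via an interpolation argument (a polynomial of degree $\le k-1<n$ vanishing at the $n$ points $\alpha_j$ must vanish identically), followed by a small recursion on the $a_i$. Your passage to $z,\bar z$ diagonalizes the problem: each first-order condition becomes the evaluation of one universal binary form $D$ of degree $k$ at the point $[w_i:\bar w_i]\in\mathbb P^1$, non-collinearity gives $n+1>k$ distinct evaluation points, and $D\equiv 0$ follows at once. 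This treats all $n+1$ vectors symmetrically (you correctly use only the $s=1$ condition at $\beta_0$, which suffices for the upper bound), avoids the matrix bookkeeping, and delivers the explicit answer $Q_{\A}^{(k)}=\C\cdot(x,x)^{k/2}$ for even $k$; what it costs is nothing beyond the one-line monomial computation you flag, which I have checked. One small caveat: your parenthetical that non-isotropy is ``not used'' is slightly overstated --- your substitution $z=-(w/\bar w)\bar z$ and your adapted-orthonormal-coordinates argument for the quasi-invariance of $(x,x)^{k/2}$ both implicitly use $w_i,\bar w_i\ne 0$; both steps are easily rephrased to avoid this (evaluate the form at the projective root $[w:-\bar w]$, and check $(z\bar z)^{k/2}$ directly), and in any case the paper assumes non-isotropy throughout, so this does not affect the proof.
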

\begin{proof} Let us first suppose that $2m \leq k$. Let $q$ be a homogeneous polynomial of degree $k$:
\begin{equation*}
q=\sum_{i=0}^{k}\lambda_{i}x^{k-i}y^{i},
\end{equation*}
for some $\lambda_i \in \mathbb{C}$. We know that $\lambda_1=\lambda_3=\ldots=\lambda_{2m-1}=0$ due to the quasi-invariance condition for the vector $\beta_0$. The quasi-invariance conditions for the vectors $\beta_j$ for $1 \leq j \leq n$ can be expressed in the matrix form $AC=0$,
where $C^T=(\lambda_0, \lambda_2,\ldots, \lambda_{2m-2}, \lambda_{2m}, \lambda_{2m+1},\ldots, \lambda_k)$
and the matrix $A$ consists of the columns $A_0, A_2, \ldots ,A_{2m-2}$, $A_{2m}, A_{2m+1}, \ldots ,A_{k}$ given by 
\begin{equation*}
A_{i}=\left( \begin{array}{c}
(k-i)\alpha_1^{k-i-1}-i\alpha_1^{k-i+1}\\
(k-i)\alpha_2^{k-i-1}-i\alpha_2^{k-i+1}\\
(k-i)\alpha_3^{k-i-1}-i\alpha_3^{k-i+1}\\
\vdots \\
(k-i)\alpha_n^{k-i-1}-i\alpha_n^{k-i+1}\\
\end{array}\right).
\end{equation*}

Suppose that $\sum_{i=0}^{m}a_{2i}A_{2i}+\sum_{i=2m+1}^ka_iA_i=0$ for some $a_i \in \mathbb{C}$. It follows that the polynomial
$
p(x)=\sum_{j=0}^{k-1} c_j x^j
$
satisfies $p(\alpha_j)=0$ for all $1 \leq j \leq n$, where we defined
\begin{equation}\label{small-deg-eqn1}
c_j=(j+1)a_{k-j-1}-(k-j+1) a_{k-j+1}
\end{equation}
and $a_{-1}=a_{1}=a_3=\ldots=a_{2m-1}=a_{k+1}=0$. Since $\deg p \leq k-1<n$ we conclude $p(x)\equiv 0$.
We have $c_j=0$ for $0 \leq j \leq k-1$. Suppose $k$ is odd. Then it follows from (\ref{small-deg-eqn1}) that $a_i=0$ for all $i=0,\ldots,k$.
Thus $A$ has rank $k+1-m$ and $b_k=0$. If $k$ is even then we have
\begin{equation*}a_1=a_3= a_5 = \ldots = a_{k-1}=0
\end{equation*}
and
$a_0\sim a_2 \sim a_4 \sim \ldots \sim a_k$,
where $\sim$ denotes proportionality with a non-zero coefficient. Thus $A$ has rank at least $k-m$. In fact the rank of $A$ is exactly $k-m$. This is because in this case one can produce a linear dependence in the  columns of $A$ by applying elementary column transformations to the columns $A_{2j}$, $0 \leq j \leq k/2$. Hence $b_k=1$. In the case $2m>k$ the arguments are similar.
\end{proof}

\begin{corollary}\label{pg1-cor}
The initial part of the Hilbert series is given by
\begin{equation*}
P^{0,n}=\frac{1-t^{2[\frac{n+2}{2}]}}{1-t^2}.
\end{equation*}
\end{corollary}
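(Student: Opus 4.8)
The plan is to read off the claimed closed form directly from Lemma \ref{pg1}. By that lemma the coefficients $b_k$ in the Hilbert series satisfy $b_k = 1$ for $k$ even and $b_k = 0$ for $k$ odd, in the range $0 \le k \le n$. Hence, by the definition of $P^{0,n} = P^{0,n}_{\mathcal A}(t)$ as the truncation $\sum_{i=0}^n b_i t^i$ of the Hilbert series, we have
\begin{equation*}
P^{0,n} = \sum_{\substack{0 \le 2i \le n}} t^{2i} = 1 + t^2 + t^4 + \cdots + t^{2\lfloor n/2\rfloor}.
\end{equation*}

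The remaining step is purely a matter of rewriting this finite geometric sum. The number of even integers in $\{0,1,\ldots,n\}$ is $\lfloor n/2 \rfloor + 1 = \lfloor (n+2)/2 \rfloor$, so the sum has that many terms, the last being $t^{2(\lfloor (n+2)/2\rfloor - 1)} = t^{2\lfloor (n+2)/2\rfloor - 2}$. Summing the geometric progression with ratio $t^2$ gives
\begin{equation*}
P^{0,n} = \frac{1 - (t^2)^{\lfloor (n+2)/2 \rfloor}}{1 - t^2} = \frac{1 - t^{2\lfloor \frac{n+2}{2}\rfloor}}{1 - t^2},
\end{equation*}
which is exactly the asserted formula (using $[\,\cdot\,]$ for the floor function as in the statement).

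There is essentially no obstacle here: the corollary is a bookkeeping consequence of Lemma \ref{pg1}, and the only thing to be slightly careful about is the indexing of the geometric sum — in particular checking that $\lfloor n/2\rfloor + 1 = \lfloor (n+2)/2 \rfloor$ for both parities of $n$, which is immediate. One could alternatively phrase the proof by saying the generating function of the sequence $1,0,1,0,\ldots$ truncated after the term of degree $n$ is obtained from $\frac{1}{1-t^2}$ by subtracting the tail $\frac{t^{2\lfloor(n+2)/2\rfloor}}{1-t^2}$, but the direct geometric-sum computation is the cleanest.
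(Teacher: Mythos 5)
Your proof is correct and is exactly the intended argument: the paper offers no separate proof of this corollary because it is an immediate geometric-sum consequence of Lemma \ref{pg1}, and your verification of the indexing identity $\lfloor n/2\rfloor+1=\lfloor (n+2)/2\rfloor$ for both parities is the only point requiring any care.
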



\begin{lemma}\label{lemma2m+n-1}
In the Hilbert series \eqref{poincare2} we have $b_{2m+n-1}=m$ if $n$ is even. If $n$ is odd then $b_{2m+n-2}=m-1$.
\end{lemma}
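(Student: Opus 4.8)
The plan is to carry out the same kind of rank computation as in the proof of Lemma \ref{pg1}, but now in the degree $k=2m+n-1$ (or $k=2m+n-2$), where the polynomial $p(x)$ produced from a linear dependence among the columns of the quasi-invariance matrix $A$ has degree exactly $n-1$ (respectively $n-2$) and hence is forced to vanish identically once it has $n$ roots $\alpha_1,\ldots,\alpha_n$ — except that we must now also count the contributions of columns $A_j$ with $j$ near $k$. First I would set up the matrix form $AC=0$ exactly as before: write a homogeneous degree-$k$ polynomial $q=\sum_{i=0}^k\lambda_i x^{k-i}y^i$, use the vector $\beta_0$ quasi-invariance to kill $\lambda_1,\lambda_3,\ldots,\lambda_{2m-1}$, and encode the remaining $n$ conditions from $\beta_1,\ldots,\beta_n$ as $AC=0$, where the columns of $A$ are the vectors $A_i$ with $i\in\{0,2,\ldots,2m-2\}\cup\{2m,2m+1,\ldots,k\}$. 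Then $b_k = (\#\text{columns}) - \rank A = (k+1-m) - \rank A$, so the claim $b_{2m+n-1}=m$ is equivalent to $\rank A = k-2m+1 = n$ when $n$ is even, and $b_{2m+n-2}=m-1$ is equivalent to $\rank A = n$ when $n$ is odd.

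Next I would translate a linear dependence $\sum a_i A_i=0$ (with $a_{-1}=a_1=\cdots=a_{2m-1}=a_{k+1}=0$) into the statement that $p(x)=\sum_{j=0}^{k-1}c_j x^j$ with $c_j=(j+1)a_{k-j-1}-(k-j+1)a_{k-j+1}$ vanishes at all $\alpha_1,\ldots,\alpha_n$. Since now $\deg p\le k-1 = 2m+n-2$, the polynomial $p$ need not vanish identically; instead $p(x) = r(x)\prod_{i=1}^n(x-\alpha_i)$ for some polynomial $r$ of degree $\le 2m-2$. The content of the lemma is then a dimension count: the $c_j$ are determined by the $a_i$ through a fixed linear map, and I would analyse the space of solutions $(a_i)$ to the system "$p$ is divisible by $\prod_i(x-\alpha_i)$". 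Rather than the divisibility picture it is cleaner to argue directly on $\rank A$: show that the $n$ columns $A_{2m}, A_{2m+1},\ldots, A_{2m+n-1}$ (for $k=2m+n-1$) are linearly independent — this follows because the corresponding submatrix is, up to invertible diagonal scaling and elementary column operations of the form used at the end of Lemma \ref{pg1}, a Vandermonde-type matrix in $\alpha_1,\ldots,\alpha_n$ — and that every remaining column $A_{2j}$ with $2j\le 2m-2$ lies in their span. The latter span statement is exactly the parity phenomenon already exploited in Lemma \ref{pg1}: applying elementary column transformations to the even-indexed columns produces, after cancellation, a column expressible through the top $n$ columns. The case $k=2m+n-2$ is handled the same way with one fewer column, giving rank $n$ and hence $b_k=(k+1-m)-n=m-1$.

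The main obstacle I expect is the precise bookkeeping of which columns survive and the verification that the top $n$ columns are independent while the lower even-indexed columns are dependent on them — i.e. pinning down $\rank A = n$ exactly, not just $\ge n-\text{something}$ or $\le n+\text{something}$. Getting the upper bound $\rank A\le n$ requires producing enough explicit column relations (one relation for each of the $\lceil (2m-1)/\ldots\rceil$ surplus columns), and getting the lower bound $\rank A\ge n$ requires the non-degeneracy of a Vandermonde-like minor, which uses that $\alpha_1,\ldots,\alpha_n$ are pairwise distinct (the vectors $\beta_i$ are non-collinear). The even/odd split of $n$ enters precisely here: the number of surplus even-indexed columns, and hence the count $k+1-m-\rank A$, depends on the parity of $k=2m+n-1$ versus $2m+n-2$, which is where the two cases $b_{2m+n-1}=m$ and $b_{2m+n-2}=m-1$ come from. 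I would organise the computation so that the parity of $n$ is isolated into a single counting step at the end, with the independence/dependence of columns argued uniformly.
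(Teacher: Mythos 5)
Your overall strategy coincides with the paper's: reduce the claim to computing the rank of the $n\times(k+1-m)$ matrix $A$ of quasi-invariance conditions and show that $\mathrm{rk}\, A=n$ by exhibiting a non-zero $n\times n$ minor of Vandermonde type; your count $b_k=(k+1-m)-\mathrm{rk}\,A$ and the equivalence of the Lemma with $\mathrm{rk}\,A=n$ are correct. However, the key step is executed incorrectly in two respects. First, the ``upper bound $\mathrm{rk}\,A\le n$'' that you propose to establish by producing explicit column relations is automatic: $A$ has only $n$ rows (one per vector $\beta_1,\dots,\beta_n$), so the only thing to prove is $\mathrm{rk}\,A\ge n$. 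Second, and this is the genuine gap, the specific minor you choose --- the one on the columns $A_{2m},A_{2m+1},\dots,A_{2m+n-1}$ --- is not a Vandermonde matrix up to scaling and column operations within that block, and it can vanish for legitimate configurations. Those $n$ columns involve the $n+1$ monomial vectors $(\alpha_j^{e})_{j=1}^n$ with $0\le e\le n$, which are linearly dependent in $\C^n$, so nothing forces their independence. Concretely, for $m=1$, $n=2$, $k=3$ one has $A_2=(1-2\alpha_j^2)_j$ and $A_3=(3\alpha_j)_j$, and the determinant of $[A_2\,|\,A_3]$ equals $3(\alpha_2-\alpha_1)(1+2\alpha_1\alpha_2)$, which vanishes when $\alpha_1\alpha_2=-1/2$ even though $\alpha_1\ne\alpha_2$; the Lemma still holds there because the full matrix $[A_0\,|\,A_2\,|\,A_3]$ has rank $2$.

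The repair, which is what the paper actually does, is to column-reduce the \emph{entire} matrix rather than a preselected block. The columns split into two chains according to the parity of the exponents they involve: each column is a combination of two monomial vectors $(\alpha_j^{e})_j$, $(\alpha_j^{e+2})_j$ of the same parity, and the extreme columns $A_0\propto(\alpha_j^{2m+n-2})_j$ and $A_{2m+n-1}\propto(\alpha_j)_j$ are already pure monomials. A triangular elimination along each chain therefore reduces $A$ to the pure monomial columns $(\alpha_j^{2m+n-2})_j,(\alpha_j^{2m+n-4})_j,\dots,(\alpha_j^{2})_j,(1)_j$ and $(\alpha_j^{n-1})_j,(\alpha_j^{n-3})_j,\dots,(\alpha_j)_j$. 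The $n$ reduced columns with exponents $n-1,n-2,\dots,1,0$ then form a genuine Vandermonde matrix, non-singular because the $\alpha_j$ are pairwise distinct (non-collinearity of the $\beta_j$), giving $\mathrm{rk}\,A=n$. With this replacement your counting, including the odd-$n$ case in degree $2m+n-2$, goes through.
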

\begin{proof} Suppose $n$ is even, the odd case is similar. Let $q$ be a homogeneous polynomial of degree $2m+n-1$:
\begin{equation*}
q=\sum_{i=0}^{2m+n-1}a_{i}x^{2m+n-1-i}y^{i}
\end{equation*}
for some $a_i \in \mathbb{C}$. It follows from the quasi-invariance conditions for the vector $\beta_0=(0,1)$ that is from $\partial_y^s q|_{y=0}=0$  for $s=1,3,\ldots 2m-1$ that $a_i=0$ for $i=1,3, \ldots ,2m-1$.

The quasi-invariance conditions for the vectors $\beta_j$ for $1 \leq j \leq n$ can be expressed in the matrix form $AC=0$,
where
$C^T=(a_0, a_2, \ldots, a_{2m}, a_{2m+1}, a_{2m+2},\ldots, a_{2m+n-1})$
and the matrix $A$ consists of $m+n$ columns $A_0, A_2, \ldots ,A_{2m}, A_{2m+1}, \ldots, A_{2m+n-1}$ given by the following. For $0 \leq i \leq m$ one has
\begin{equation*}
A_{2i}=\left( \begin{array}{c}
(2m+n-1-2i)\alpha_1^{2m+n-2i-2}-2i\alpha_1^{2m+n-2i}\\
(2m+n-1-2i)\alpha_2^{2m+n-2i-2}-2i\alpha_2^{2m+n-2i}\\
(2m+n-1-2i)\alpha_3^{2m+n-2i-2}-2i\alpha_3^{2m+n-2i}\\
\vdots \\
(2m+n-1-2i)\alpha_n^{2m+n-2i-2}-2i\alpha_n^{2m+n-2i}\\
\end{array}\right),
\end{equation*}
while for $2m+1 \leq i \leq 2m+n-1$ one has
\begin{equation*}
A_{i}=(-1)^i\left( \begin{array}{c}
(2m+n-1-i)\alpha_1^{2m+n-i-2}-i\alpha_1^{2m+n-i}\\
(2m+n-1-i)\alpha_2^{2m+n-i-2}-i\alpha_2^{2m+n-i}\\
(2m+n-1-i)\alpha_3^{2m+n-i-2}-i\alpha_3^{2m+n-i}\\
\vdots \\
(2m+n-1-i)\alpha_n^{2m+n-i-2}-i\alpha_n^{2m+n-i}\\
\end{array}\right).
\end{equation*}
By applying appropriate elementary column transformations we reduce the matrix $A$ to the form
\begin{equation*}
{\small
\left( \begin{array}{cccccccccc}
\alpha_1^{2m+n-2} & \alpha_1^{2m+n-4} & \ldots & \alpha_1^2 & 1 & \alpha_1^{n-1} &\alpha_1^{n-3}\ldots &\alpha_1^3 &\alpha_1\\
\alpha_2^{2m+n-2} & \alpha_2^{2m+n-4} & \ldots & \alpha_2^2 & 1 & \alpha_2^{n-1} &\alpha_2^{n-3}\ldots &\alpha_2^3 &\alpha_2\\
\alpha_3^{2m+n-2} & \alpha_3^{2m+n-4} & \ldots & \alpha_3^2 & 1 & \alpha_3^{n-1} &\alpha_3^{n-3}\ldots &\alpha_3^3 &\alpha_3\\
\vdots&\vdots&&\vdots&\vdots&\vdots&\vdots&&\vdots \\
\alpha_n^{2m+n-2} & \alpha_n^{2m+n-4} & \ldots & \alpha_n^2 & 1 & \alpha_n^{n-1} &\alpha_n^{n-3}\ldots &\alpha_n^3 &\alpha_n\\
\end{array}\right).
}
\end{equation*}
Note that after the column permutation we have the $n \times n$ Vandermonde minor
\begin{equation*}\label{minorfor2m+n-1}
{\small
Q=\left| \begin{array}{cccccccc}
\alpha_1^{n-1}&\alpha_1^{n-2}&\alpha_1^{n-3}&\ldots&\alpha_1^3&\alpha_1^2&\alpha_1&1\\
\alpha_2^{n-1}&\alpha_2^{n-2}&\alpha_2^{n-3}&\ldots&\alpha_2^3&\alpha_2^2&\alpha_2&1\\
\alpha_3^{n-1}&\alpha_3^{n-2}&\alpha_3^{n-3}&\ldots&\alpha_3^3&\alpha_3^2&\alpha_3&1\\
\vdots&\vdots&\vdots&&\vdots&\vdots&\vdots&\vdots \\
\alpha_n^{n-1}&\alpha_n^{n-2}&\alpha_n^{n-3}&\ldots&\alpha_n^3&\alpha_n^2&\alpha_n&1\\
\end{array}\right|.
}
\end{equation*}
Since $Q \neq 0$ we get $\rank A=n$. Hence $b_{2m+n-1}=m+n-\rank A=m$ as stated.
\end{proof}
It is easy to find all the odd coefficients of the Hilbert series for terms larger than a certain degree.
\begin{proposition}\label{corronodd2m+n-1}  Let $i \geq 2m+n-1$ with $i$ odd. Then $b_{i}=i+1-m-n$ in the Hilbert series (\ref{poincare2}).
\end{proposition}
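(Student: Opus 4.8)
The plan is to count the dimension of the space of homogeneous quasi-invariants of odd degree $i \ge 2m+n-1$ by the same linear-algebra method used in Lemma~\ref{lemma2m+n-1}, showing the rank of the relevant coefficient matrix saturates. Write a homogeneous polynomial $q = \sum_{j=0}^{i} a_j x^{i-j} y^j$ of degree $i$. The quasi-invariance conditions for $\beta_0 = (0,1)$ force $a_1 = a_3 = \cdots = a_{2m-1} = 0$, leaving $i+1-m$ free coefficients arranged in a vector $C$. The conditions for $\beta_1,\ldots,\beta_n$ become a system $AC = 0$ with $A$ an $n \times (i+1-m)$ matrix whose columns $A_s$ (indexed by the surviving $s \in \{0,2,\ldots,2m-2\} \cup \{2m,2m+1,\ldots,i\}$) have entries proportional to $(i-s)\alpha_r^{\,i-s-1} - s\,\alpha_r^{\,i-s+1}$, $r = 1,\ldots,n$. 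Then $b_i = (i+1-m) - \operatorname{rk} A$, and the claim $b_i = i+1-m-n$ is exactly the assertion $\operatorname{rk} A = n$, i.e. $A$ has full row rank.

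First I would show $\operatorname{rk} A \le n$ trivially (it has $n$ rows). For the reverse inequality, I would exhibit an $n \times n$ submatrix with nonzero determinant. The natural choice is to run the same elementary column operations as in the proof of Lemma~\ref{lemma2m+n-1}: each column $A_s$ is a linear combination of the two ``monomial vectors'' $(\alpha_r^{\,i-s-1})_r$ and $(\alpha_r^{\,i-s+1})_r$, so by induction on $s$ one can clear out the top power and reduce the column span of $A$ to that of the monomial vectors $(\alpha_r^{\,p})_r$ for $p$ ranging over a suitable set of $n$ distinct exponents. Concretely, after reduction the columns corresponding to $s = 2m+1, 2m+3, \ldots$ and $s = 2m, 2m+2, \ldots$ together produce the monomial vectors for $n$ consecutive (or at least $n$ distinct) exponents $p \in \{0, 1, \ldots, n-1\}$, yielding an $n \times n$ Vandermonde minor $\prod_{r<r'}(\alpha_{r'}-\alpha_r) \ne 0$ since the $\alpha_r$ are pairwise distinct (the vectors $\beta_r$ are non-collinear). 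Hence $\operatorname{rk} A = n$ and $b_i = i+1-m-n$.

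The one point requiring care — and the main (mild) obstacle — is bookkeeping the exponents: I must check that for $i$ odd and $i \ge 2m+n-1$ the surviving column indices really do, after the triangular column reduction, supply $n$ distinct exponents among $\{0,1,\ldots,n-1\}$ and that no collision or cancellation occurs with the low-index columns $A_0, A_2, \ldots, A_{2m-2}$ (these contribute exponents $\ge 2m+n-2 > n-1$, hence lie in a complementary range and do not interfere). The hypothesis $i \ge 2m+n-1$ is precisely what guarantees that the ``generic'' block of columns $A_{2m}, A_{2m+1}, \ldots, A_i$ is long enough to generate all of $\{0,\ldots,n-1\}$; and oddness of $i$ is what makes the parity of the recursion in \eqref{small-deg-eqn1}-type relations come out so that no extra linear dependence is imposed beyond full row rank. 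Once the exponent ranges are laid out this is a routine verification entirely parallel to Lemma~\ref{lemma2m+n-1}, so I would state it as such and refer back to that proof for the column-reduction mechanics.
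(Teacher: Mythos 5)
Your proposal is correct and follows essentially the same route as the paper, whose proof is just a two-line reference back to Lemma~\ref{lemma2m+n-1}: set up the matrix $A$ of quasi-invariance conditions, note $b_i=i+1-m-\operatorname{rk}A$, and obtain $\operatorname{rk}A=n$ by the same column reduction to a Vandermonde minor in the pairwise distinct $\alpha_1,\ldots,\alpha_n$. The exponent bookkeeping you flag is exactly the verification the paper delegates to the earlier lemma, so no new idea is needed.
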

\begin{proof} As in the proof of Lemma \ref{lemma2m+n-1} we have $b_i=i+1-m-\rank A$, where $A$ is the matrix of the system of quasi-invariance conditions as equations for the coefficients of a polynomial of degree $i$. By the same reasons as in Lemma \ref{lemma2m+n-1} we get $\rank A=n$, so the statement follows.
\end{proof}

\begin{corollary}\label{pg2} For any configuration $\mathcal{A}$ of type $(m,1^n)$ we have
\begin{align*}&P_{odd }^{2m+n+1, 2m+2n-3}
&=\frac{(m+2)t^{2m+n+1} -mt^{2m+n+3} -(m+n)t^{2m+2n-1}+(m+n-2)t^{2m+2n+1}}{(t^2-1)^2}
\end{align*}
if $n$ is even, and
\begin{align*}&P_{odd }^{2m+n,2m+2n-3}
&=\frac{(m+1)t^{2m+n}-(m-1)t^{2m+n+2}-(m+n)t^{2m+2n-1}+ (m+n-2)t^{2m+2n+1} }{(t^2-1)^2}
\end{align*}
if $n$ is odd.
\end{corollary}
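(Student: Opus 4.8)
\textbf{Proof of Corollary \ref{pg2}.}

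The statement is a direct consequence of Proposition \ref{corronodd2m+n-1}. Indeed, every odd index $j$ occurring in $P^{2m+n+1,2m+2n-3}_{odd}$ (when $n$ is even) or in $P^{2m+n,2m+2n-3}_{odd}$ (when $n$ is odd) satisfies $j\ge 2m+n-1$, and the endpoints $2m+n+1$ (resp. $2m+n$) and $2m+2n-3$ are themselves odd, so that these endpoints are genuinely summed; hence by Proposition \ref{corronodd2m+n-1} the corresponding coefficients are $b_j=j+1-m-n$. Writing $j=2m+n+1+2s$ in the even case and $j=2m+n+2s$ in the odd case, this gives
\[
P^{2m+n+1,2m+2n-3}_{odd}=\sum_{s=0}^{n/2-2}(m+2+2s)\,t^{2m+n+1+2s}\qquad(n\text{ even}),
\]
\[
P^{2m+n,2m+2n-3}_{odd}=\sum_{s=0}^{(n-3)/2}(m+1+2s)\,t^{2m+n+2s}\qquad(n\text{ odd}).
\]

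It then remains only to evaluate these finite arithmetic–geometric sums in closed form. The plan is to use the elementary formal power series identity $\sum_{s\ge 0}(a+2s)u^s=\bigl(a+(2-a)u\bigr)(1-u)^{-2}$ with $u=t^2$, and to write each finite sum as the full series minus its tail beginning at the first omitted value of $s$; the tail is a shifted series of the same type and is summed by the same identity. Substituting $a=m+2$ with cutoff $N=n/2-2$ in the even case, and $a=m+1$ with $N=(n-3)/2$ in the odd case, and simplifying, one finds that the full series contributes $(m+2)t^{2m+n+1}-mt^{2m+n+3}$ (resp. $(m+1)t^{2m+n}-(m-1)t^{2m+n+2}$) to the numerator, while the tail contributes $-(m+n)t^{2m+2n-1}+(m+n-2)t^{2m+2n+1}$, and the common denominator is $(1-t^2)^2=(t^2-1)^2$. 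This is exactly the claimed expression.

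There is no genuine obstacle here: once Proposition \ref{corronodd2m+n-1} is available the argument is pure bookkeeping. The only points that deserve care are the parities of the interval endpoints (so that one is summing over the correct set of odd indices and the bound $j\ge 2m+n-1$ holds throughout the range) and the index shifts in the tail sum; both can be verified on small cases, e.g. $n=2$ forces an empty sum and a numerator that vanishes identically, in agreement with $P^{2m+3,2m+1}_{odd}=0$.
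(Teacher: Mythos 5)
Your proof is correct and follows the only natural route, which is also the paper's (the corollary is stated without explicit proof, but it is exactly the summation of $b_i=i+1-m-n$ from Proposition \ref{corronodd2m+n-1} over the odd indices in the stated ranges, using the same arithmetic--geometric series identities the paper records as \eqref{sum-geom}). Your index bookkeeping, parity checks, and the closed-form evaluation all check out, including the degenerate case $n=2$.
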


We need some lemmas before analyzing the even terms of the Hilbert series.
Let $\Delta= \prod_{1\leq i<j\leq [n/2]}(\alpha_i^2-\alpha_j^2)$. Refer to the elementary symmetric polynomials in the variables $\alpha_1^2, \alpha_2^2, \ldots ,\alpha_{[n/2]}^2$ as $\widehat{e}_i, 0 \leq i \leq [n/2]$, ($\widehat{e}_0=1$). Let $B$ be the $[n/2] \times (n-s)$ matrix, where $0 \leq s \leq n-1$, with the columns
\begin{equation}\label{CiX}C_i(X)=
\left( \begin{array}{c}
(2i-1)\alpha_1^{2i-2}-(2X+2n-2i+1)\alpha_1^{2i}\\
(2i-1)\alpha_2^{2i-2}-(2X+2n-2i+1)\alpha_2^{2i}\\
(2i-1)\alpha_3^{2i-2}-(2X+2n-2i+1)\alpha_3^{2i}\\
\vdots \\
(2i-1)\alpha_{[n/2]}^{2i-2}-(2X+2n-2i+1)\alpha_{[n/2]}^{2i}\\
\end{array}\right)
\end{equation}
where $1 \leq i \leq n-s$.
\begin{lemma}\label{construction1} Let $s \leq [n/2]$. Let $B_{L}$ be the minor of $B$ formed by taking the determinant of the square submatrix with columns $C_{L}(X), C_{L+1}(X), \ldots ,C_{L+[n/2]-1}(X)$ where $1 \leq L \leq [(n+1)/2]-s+1$. Then
\begin{align}\label{BL}
B_L=&\Delta \prod_{i=1}^{[n/2]}\alpha_i^{2L-2}\sum_{i=0}^{[n/2]}(-1)^i\prod_{r=i+1}^{[n/2]}(2[n/2]-2r+2L-1)\notag \\
&\times \prod_{r=1}^{i}(2X+2r+2\left[\frac{n+1}{2}\right]-2L+1) \widehat{e}_i.
\end{align}
\end{lemma}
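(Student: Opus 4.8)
The plan is to compute the determinant $B_L$ directly by exploiting the two-term structure of the columns $C_i(X)$. First I would pull out a common factor: in row $r$ (corresponding to $\alpha_r$), the entries of the columns $C_L(X),\dots,C_{L+[n/2]-1}(X)$ are
$(2i-1)\alpha_r^{2i-2}-(2X+2n-2i+1)\alpha_r^{2i}$ for $i=L,\dots,L+[n/2]-1$; the lowest power of $\alpha_r$ appearing across these columns is $\alpha_r^{2L-2}$, so factoring $\alpha_r^{2L-2}$ out of row $r$ produces the prefactor $\prod_{i=1}^{[n/2]}\alpha_i^{2L-2}$ and reduces the problem to the determinant of an $[n/2]\times[n/2]$ matrix whose row $r$ reads $(2i-1)\alpha_r^{2(i-L)}-(2X+2n-2i+1)\alpha_r^{2(i-L+1)}$. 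Setting $t_r=\alpha_r^2$ and $j=i-L$ (so $j=0,\dots,[n/2]-1$), this is a matrix in the $t_r$ of the shape $[\,a_j t_r^{\,j}+b_j t_r^{\,j+1}\,]$ with $a_j=2(j+L)-1$ and $b_j=-(2X+2n-2(j+L)+1)$.

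Next I would recognise that such a matrix is a product (in a suitable sense) of a Vandermonde-type matrix with a bidiagonal matrix, or, more concretely, expand the determinant by multilinearity over the two terms in each column. Expanding column $j$ into its $a_j$-part (contributing the monomial $t_r^{\,j}$) and its $b_j$-part (contributing $t_r^{\,j+1}$), each of the $2^{[n/2]}$ resulting terms is a scalar times a determinant of a matrix whose $j$-th column is $t_r^{\,d_j}$ for some exponent $d_j\in\{j,j+1\}$. Such a determinant is nonzero only when the exponents $d_0<d_1<\dots<d_{[n/2]-1}$ are strictly increasing, which forces them to be a set of the form $\{0,1,\dots,\widehat{k},\dots,[n/2]\}$ for some omitted index; equivalently, the only surviving terms are those in which one uses the $b$-part for the first $i$ columns ($j=0,\dots,i-1$) and the $a$-part for the remaining ones ($j=i,\dots,[n/2]-1$), for $i=0,\dots,[n/2]$. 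This is the standard mechanism by which such "two-term" determinants collapse to a sum over elementary symmetric functions. The surviving determinant for a given $i$ is the Vandermonde-like minor with columns $t_r^{\,0},\dots,t_r^{\,i-1},t_r^{\,i+1},\dots,t_r^{\,[n/2]}$, which equals $\Delta\cdot\widehat{e}_{[n/2]-i}$ up to sign; after reindexing $i\leftrightarrow[n/2]-i$ this yields the claimed $\sum_i(\dots)\widehat{e}_i$ with $\Delta=\prod_{r<r'}(t_r-t_{r'})$ out front.

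Finally I would track the scalar coefficients. For the term indexed by $i$, one picks up $\prod_{j=0}^{i-1}b_j=\prod_{j=0}^{i-1}\bigl(-(2X+2n-2(j+L)+1)\bigr)$ and $\prod_{j=i}^{[n/2]-1}a_j=\prod_{j=i}^{[n/2]-1}(2(j+L)-1)$, together with a sign $(-1)^{\#\{\text{inversions}\}}$ coming from the reordering of the columns to Vandermonde form. Re-substituting $j\mapsto r$ via appropriate shifts and using $2n-2L+1 = 2[n/2]+2[(n+1)/2]-2L+1$ to split $2n$, the $b$-product becomes $(-1)^i\prod_{r=1}^i(2X+2r+2[(n+1)/2]-2L+1)$ and the $a$-product becomes $\prod_{r=i+1}^{[n/2]}(2[n/2]-2r+2L-1)$, matching \eqref{BL}; one checks the overall sign is $+1$ by evaluating, say, at $X$ generic and comparing leading coefficients in $X$, or by the $i=0$ term which must reproduce the plain Vandermonde $\Delta\prod\alpha_i^{2L-2}$ up to the product of the $a_j$. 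I expect the bookkeeping of signs and the index shifts in the last step to be the only real obstacle; the structural collapse of the determinant to a sum over $\widehat{e}_i$ is routine once the substitution $t_r=\alpha_r^2$, $j=i-L$ is made, and the hypothesis $s\le[n/2]$ together with $L\le[(n+1)/2]-s+1$ is exactly what guarantees the requested $[n/2]$ consecutive columns actually lie inside $B$.
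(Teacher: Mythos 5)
Your strategy is exactly the one the paper uses: write each column $C_i(X)$ as a two-term combination, expand the determinant multilinearly, keep only the collision-free exponent patterns, and identify the surviving minors with $\Delta\,\widehat e_i$ via the bialternant (Schur polynomial) formula. The paper's proof does precisely this, writing $C_i(X)=x_{2i-2}-y_{2i}$ and exhibiting the surviving terms $B_1^t$ as the determinants with columns $y_n,\dots,y_{n-2t+2},x_{n-2t-2},\dots,x_0$. So there is no methodological difference; the problem is that the combinatorial core of your version is stated backwards, and as written the argument does not verify \eqref{BL}.

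With $d_j\in\{j,j+1\}$, a collision is avoided only if, once a column takes the higher exponent $j+1$ (your $b$-part), every later column does too: if $d_j=j+1$ and $d_{j+1}=j+1$ the determinant vanishes. Hence the surviving terms take the $a$-part on an \emph{initial} block of columns and the $b$-part on the \emph{final} block, say $a$ for $j=0,\dots,[n/2]-i-1$ and $b$ for $j=[n/2]-i,\dots,[n/2]-1$, omitting the exponent $[n/2]-i$. Your assignment ($b$-part on $j=0,\dots,i-1$, $a$-part on the rest) yields the exponent multiset $\{1,\dots,i\}\cup\{i,\dots,[n/2]-1\}$, in which $i$ is repeated, so those determinants are zero for $1\le i\le[n/2]-1$; they are not the surviving terms, and the exponent set $\{0,\dots,i-1,i+1,\dots,[n/2]\}$ you then use does not follow from that assignment. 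The error propagates into the coefficients: $\prod_{j=0}^{i-1}b_j$ has factors $2X+2n-2L+1-2j$, $j=0,\dots,i-1$, which is \emph{not} the set $\{2X+2r+2[\frac{n+1}{2}]-2L+1:\,r=1,\dots,i\}$ appearing in \eqref{BL} (they coincide only when $i=[n/2]$), so the ``re-substitution'' step cannot produce the stated product, and similarly for the $a$-product. With the blocks interchanged everything does close up: one checks
\begin{equation*}
\prod_{j=0}^{[n/2]-i-1}a_j=\prod_{r=i+1}^{[n/2]}\bigl(2[n/2]-2r+2L-1\bigr),\qquad
\prod_{j=[n/2]-i}^{[n/2]-1}b_j=(-1)^i\prod_{r=1}^{i}\Bigl(2X+2r+2\bigl[\tfrac{n+1}{2}\bigr]-2L+1\Bigr),
\end{equation*}
which recovers \eqref{BL}. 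The repair is a mechanical index swap, but as submitted the proof asserts the wrong surviving terms and the wrong coefficient products, so it does not establish the lemma.
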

\begin{proof} For $1 \leq i \leq n-s$ let us introduce the column vectors
\begin{equation*}x_{i}=(i+1)\left( \begin{array}{c}
\alpha_1^{i}\\
\alpha_2^{i}\\
\alpha_3^{i}\\
\vdots \\
\alpha_{[n/2]}^{i}
\end{array}\right),\,\, y_{i}=(2X+2n-i+1)\left( \begin{array}{c}
\alpha_1^{i}\\
\alpha_2^{i}\\
\alpha_3^{i}\\
\vdots \\
\alpha_{[n/2]}^{i}
\end{array}\right).
\end{equation*}
Thus $C_i(X)=x_{2i-2}-y_{2i}$.
Consider $B_1$ and suppose $n$ is even, odd case is similar.
Notice that
$$
B_1=\sum_{t=0}^{n/2}(-1)^tB_1^t,
$$
where
$B_1^t$ is  the determinant of the matrix with the columns $y_n, y_{n-2}, \ldots ,y_{n-2t+2}, x_{n-2t-2}, \ldots ,x_0$. We have
\begin{align*}B_1^t&=\prod_{r=t}^{n/2-1}(n-2r-1)\prod_{r=0}^{t-1}(2X+2r+n+1)\left |\begin{array}{cccccc}
\alpha_1^{n}&\ldots&\alpha_1^{n-2t+2}&\alpha_1^{n-2t-2}&\ldots&1\\
\alpha_2^{n}&\ldots&\alpha_2^{n-2t+2}&\alpha_2^{n-2t-2}&\ldots&1\\
\vdots&&\vdots&\vdots&&\vdots\\
\alpha_{n/2}^{n}&\ldots&\alpha_{n/2}^{n-2t+2}&\alpha_{n/2}^{n-2t-2}&\ldots&1\\
\end{array}\right|\\
&=\prod_{r=t}^{n/2-1}(n-2r-1)\prod_{r=0}^{t-1}(2X+2r+n+1)\Delta \widehat{e}_t
\end{align*}
since the elementary symmetric polynomials $\widehat{e}_t$ are particular Schur polynomials.
Hence $B_1$ has the required form. It is easy to see that we can adopt the same strategy used to expand $B_1$ to deal with each $B_{L}$, $1 \leq L \leq n/2-s+1$.
\end{proof}

\begin{lemma}\label{construction3} Let $s=0$ and let $B_L$ be defined by (\ref{BL}) with $X\in \mathbb{R}_{\ge 0}$. Suppose that $\Delta \prod_{i=1}^{[n/2]}\alpha_i \ne  0$. Then for even $n$ $\exists L, 1 \leq L \leq n/2+1$ s.t. $B_L \neq 0$. For odd $n$ $\exists L, 2 \leq L \leq \frac{n+3}{2}$ s.t. $B_L \neq 0$.
\end{lemma}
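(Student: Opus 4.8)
The plan is to use Lemma~\ref{construction1} to peel off a nonzero factor from each $B_L$, leaving a quantity that is polynomial of small degree in $L$, and then to rule out simultaneous vanishing by evaluating at the half-integer $L=\tfrac12$, where the resulting sum collapses to a single manifestly nonzero term.

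Concretely, I would take $s=0$ in Lemma~\ref{construction1}. The ranges in the statement --- $\{1,\dots,n/2+1\}$ for $n$ even and $\{2,\dots,(n+3)/2\}$ for $n$ odd --- are both contained in $\{1,\dots,[(n+1)/2]+1\}$, so \eqref{BL} applies to every $L$ in them. Writing $S_L$ for the sum $\sum_{i=0}^{[n/2]}(\cdots)\widehat e_i$ appearing on the right-hand side of \eqref{BL}, we have $B_L=\Delta\bigl(\prod_{i=1}^{[n/2]}\alpha_i^{2L-2}\bigr)S_L$; since $\Delta\neq 0$ and $\prod_{i=1}^{[n/2]}\alpha_i\neq 0$, the factor $\Delta\prod_{i=1}^{[n/2]}\alpha_i^{2L-2}$ is nonzero for every $L$ in the range, so the vanishing of $B_L$ is equivalent to that of $S_L$. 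Thus it suffices to produce $L$ in the range with $S_L\neq0$.

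Next I would record two facts visible directly from \eqref{BL}. First, in the $i$-th summand of $S_L$ the first product has $[n/2]-i$ and the second has $i$ factors, each of them linear in $L$, so each summand is (a constant times) a polynomial in $L$ of degree $[n/2]$, and hence $S_L$ is a polynomial in $L$ of degree at most $[n/2]$. Second, each of the two ranges in the statement consists of exactly $[n/2]+1$ consecutive integers. Combining these: if $B_L$ vanished for all $L$ in the range, then $S_L$ would be a polynomial in $L$ of degree at most $[n/2]$ with $[n/2]+1$ distinct roots, hence identically zero in $L$; in particular $S_{1/2}=0$.

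The final step is to compute $S_{1/2}$ and reach a contradiction. At $L=\tfrac12$ the first product $\prod_{r=i+1}^{[n/2]}(2[n/2]-2r+2L-1)=\prod_{r=i+1}^{[n/2]}(2[n/2]-2r)$ contains the zero factor corresponding to $r=[n/2]$ whenever $i<[n/2]$, so only the top summand $i=[n/2]$ survives, giving
\[
S_{1/2}=(-1)^{[n/2]}\prod_{r=1}^{[n/2]}\bigl(2X+2r+2[(n+1)/2]\bigr)\,\widehat e_{[n/2]}
=(-2)^{[n/2]}\Bigl(\prod_{r=1}^{[n/2]}\bigl(X+r+[(n+1)/2]\bigr)\Bigr)\prod_{i=1}^{[n/2]}\alpha_i^2 .
\]
Since $X\ge 0$, every factor $X+r+[(n+1)/2]$ is strictly positive, and $\prod_{i=1}^{[n/2]}\alpha_i^2\neq 0$ by hypothesis, so $S_{1/2}\neq 0$ --- contradicting $S_{1/2}=0$. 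Hence not all $B_L$ in the range can vanish, which is the assertion of the lemma. The only delicate part is the index bookkeeping in the degree count and in the collapse of the sum at $L=\tfrac12$; the one conceptual point is that replacing the $[n/2]+1$ integer sample points by the half-integer $\tfrac12$ turns the problem into a non-vanishing statement that follows immediately from $X\ge 0$ and $\prod_{i=1}^{[n/2]}\alpha_i\neq 0$. I do not anticipate a genuine obstacle.
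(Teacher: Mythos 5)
Your proof is correct, but it takes a genuinely different route from the paper's. Both arguments begin the same way: cancel the nonzero prefactor $\Delta\prod_{i=1}^{[n/2]}\alpha_i^{2L-2}$ and suppose the remaining sums $S_L$ vanish for every $L$ in the stated range (both ranges indeed consist of exactly $[n/2]+1$ consecutive integers lying within the range of validity of \eqref{BL} for $s=0$). From there the paper reads these $[n/2]+1$ equations as a square linear system in the unknowns $\widehat e_0,\ldots,\widehat e_{[n/2]}$ and shows the coefficient determinant $|Q|$ is nonzero for $X\ge 0$: first $|Q|\not\equiv 0$ as a polynomial in $X$ (the matrix becomes triangular at $2X+2[n/2]+1=0$), then iterated column operations exhibit the factor $\prod_{i=1}^{[n/2]}(X+n+1-i)^{[n/2]-i+1}$, which has the same degree as $|Q|$ and no zeros for $X\ge 0$; the contradiction is with $\widehat e_0=1$. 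You instead view $S_L$ as a polynomial of degree at most $[n/2]$ in the index $L$, use the $[n/2]+1$ integer roots to force $S\equiv 0$, and evaluate at $L=\tfrac12$, where the sum collapses to the single $i=[n/2]$ term; your contradiction uses $X\ge 0$ together with $\widehat e_{[n/2]}=\prod_{i=1}^{[n/2]}\alpha_i^2\ne 0$. Your degree count and the collapse at $L=\tfrac12$ are both right, so the argument is complete. What it buys is brevity: no determinant factorization is needed. What it gives up is the stronger intermediate fact that the coefficient matrix is invertible --- you only show that a vector with nonzero top entry $\widehat e_{[n/2]}$ cannot lie in its kernel. The paper relies on that invertibility elsewhere (Lemma \ref{corrtoconstruction3}, the unique solvability of the related subsystem for $\widehat e_1,\ldots,\widehat e_{[n/2]}$, is stated to follow from this same proof), so your shortcut establishes the lemma as stated but would not substitute for the paper's argument at that later point.
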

\begin{proof} By Lemma \ref{construction1} we have
{\small
\begin{align}\label{BL2}
B_L=&\Delta \prod_{i=1}^{[n/2]}\alpha_i^{2L-2}\sum_{i=0}^{[n/2]}(-1)^i\prod_{r=i+1}^{[n/2]}(2[n/2]-2r+2L-1)
 \prod_{r=1}^{i}(2X+2r+2\left[\frac{n+1}{2}\right] -2L+1)\widehat{e}_i.
\end{align}
}
Suppose $B_L=0$ for $1 \leq L \leq n/2+1$ if $n$ is even and suppose $B_L=0$ for $2 \leq L \leq \frac{n+3}{2}$ if $n$ is odd. Let us cancel $\Delta \prod_{i=1}^{[n/2]}\alpha_i^{2L-2}$ and consider the resulting conditions as a system of linear equations for the unknowns $\widehat{e}_0, \widehat{e}_1, \ldots ,\widehat{e}_{[n/2]}$. Refer to the corresponding matrix as $Q$. We will show that the determinant $|Q|\neq 0$, which would be a contradiction as  $\widehat e_0\ne 0$. We consider $|Q|$ as a polynomial in $X$. First we show $|Q|$ is not identically zero in $X$. Set $2X+2[n/2]+1=0$. Let us enumerate the rows of $Q$ by $L=1,2, \ldots ,n/2+1$ when $n$ is even and $L=2, \ldots ,\frac{n+3}{2}$ when $n$ is odd,  and the columns of $Q$ by $i=0,1,\ldots, [n/2]$. Then it follows from (\ref{BL2}) that the first $i$ entries in the $i$-th column are 0.  So $|Q|$ is the product of the diagonal entries  and this is non-zero.

Next we show that there are no positive values of $X$ for which $|Q|=0$. Note that as a polynomial in $X$, $|Q|$ has degree $\sum_{i=0}^{[n/2]}i=\frac{[n/2]([n/2]+1)}{2}$. Let us subtract the $(i+1)$st column from the $i$th column of $Q$, $i=0, 1, \ldots ,[n/2]-1$. Then the $L$th entry of the $i$th column is given by
\begin{align*}&(-1)^i\prod_{r=i}^{[n/2]-1}(2[n/2]-2r+2L-3)\prod_{r=0}^{i-1}(2X+2r+2[(n+1)/2]-2L+3)\\
&+(-1)^i\prod_{r=i+1}^{[n/2]-1}(2[n/2]-2r+2L-3)\prod_{r=0}^{i}(2X+2r+2[(n+1)/2]-2L+3)\\
&=(-1)^i\prod_{r=i+1}^{[n/2]-1}(2[n/2]-2r+2L-3)\prod_{r=0}^{i-1}(2X+2[(n+1)/2]-2L+3+2r)\\
&\times 2(X+n).
\end{align*}
We can repeat this process (subtracting the $(i+1)$st column from the $i$th column of $Q$ where $i=0, 1 \ldots ,[n/2]-k$ at the $k$-th iteration) to see that the expression
\begin{align*}\prod_{i=1}^{[n/2]}(X+n+1-i)^{[n/2]-i+1}
\end{align*}
is a factor of $|Q|$. This expression has the same total degree in $X$ as $|Q|$ and is non-zero for  $X\ge 0$. So $|Q|\neq 0$ and we are done.
\end{proof}


Now we are in the position to determine all the even coefficients of the Hilbert series starting with a certain degree.
\begin{proposition}\label{corron2m+2n} Let $i=2(m+n+t)$, where $t \in \mathbb{Z}_{\geq 0}$. Then  $b_{i}=i+1-m-n$ in the Hilbert series  (\ref{poincare2}).
\end{proposition}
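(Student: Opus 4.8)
The plan is to follow the same strategy used to prove Proposition~\ref{corronodd2m+n-1}, but for even degrees, using the linear algebra set up in Lemmas~\ref{construction1} and~\ref{construction3}. As in the earlier proofs, for a homogeneous polynomial $q$ of degree $i=2(m+n+t)$ the quasi-invariance condition for $\beta_0=(0,1)$ kills the coefficients of $x^{i-1}y, x^{i-3}y^3,\ldots, x^{i-2m+1}y^{2m-1}$, and the remaining quasi-invariance conditions for $\beta_1,\ldots,\beta_n$ take the matrix form $AC=0$ for a suitable matrix $A$ with $i+1-m$ columns. Then $b_i = i+1-m-\rank A$, so it suffices to show $\rank A = n$. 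Since $A$ clearly has at most $n$ rows' worth of independent content — more precisely, the rows of $A$ are indexed by the $n$ vectors $\alpha_1,\ldots,\alpha_n$, so $\rank A \le n$ — the real content is the lower bound $\rank A \ge n$.

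First I would reduce the problem to the symmetric situation. Because $i$ is even, the relevant columns of $A$ split according to parity of the exponent, and after elementary column operations (exactly as in Lemma~\ref{lemma2m+n-1}) the matrix $A$ is equivalent to a block matrix whose rank is controlled by two Vandermonde-type blocks in the $\alpha_j^2$: one must exhibit an $n\times n$ nonvanishing minor. The point of passing to $\alpha_j^2$ and the $[n/2]\times(\text{something})$ matrix $B$ with columns $C_i(X)$ of \eqref{CiX} is that the even-degree conditions naturally involve the quantities $(2j-1)\alpha^{2j-2}-(2X+2n-2j+1)\alpha^{2j}$ for an appropriate shift parameter $X$ (here $X$ will be a linear function of $m,n,t$). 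I would identify this $X$ explicitly, check it is a nonnegative real number for $t\ge 0$, and then invoke Lemma~\ref{construction3}: it guarantees that some minor $B_L$ is nonzero, which after unwinding the column reductions produces the needed nonvanishing $n\times n$ minor of $A$ and hence $\rank A = n$.

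The main obstacle I anticipate is bookkeeping rather than conceptual: one has to match the degree $i=2(m+n+t)$ precisely against the index ranges in Lemmas~\ref{construction1} and~\ref{construction3}, verify that the shift $X$ that arises is genuinely in $\mathbb{R}_{\ge 0}$ (this is where $t\ge 0$ is used, and it is the reason the statement is restricted to $i \ge 2(m+n)$), and confirm that the non-generic rows killed by the $\beta_0$ conditions do not destroy the Vandermonde structure. One also needs the nondegeneracy hypothesis $\Delta\prod_i\alpha_i\ne 0$ of Lemma~\ref{construction3}; since the $\alpha_j$ are pairwise distinct and the configuration is (up to equivalence) symmetric with respect to $\beta_0$ one can arrange $\alpha_j\ne 0$ and $\alpha_i^2\ne\alpha_j^2$, so this holds. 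Granting all of this, $\rank A=n$ and therefore $b_i = i+1-m-n$, as claimed.

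Finally I would remark that, combined with Corollary~\ref{pg1-cor}, Corollary~\ref{pg2}, and Proposition~\ref{corronodd2m+n-1}, this determines the Hilbert series of $Q_{\mathcal A}$ for any configuration of type $(m,1^n)$ in all degrees except the finitely many even degrees strictly between $n$ and $2(m+n)$; pinning down those remaining even coefficients is exactly the content of Section~\ref{poincare-am1n}, where the specific configuration $\amn$ enters.
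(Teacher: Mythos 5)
Your high-level skeleton is right: $b_i=i+1-m-\operatorname{rank}M$, where $M$ is the matrix of the quasi-invariance conditions for $\beta_1,\ldots,\beta_n$; the target is $\operatorname{rank}M=n$; and in the symmetric situation \eqref{assumption} the relevant block $\mathcal B$ is exactly the matrix $B$ of \eqref{CiX} with $X=m+t\ge 0$, so Lemma \ref{construction3} supplies the required nonvanishing minor. The gap is in how you arrive at that situation. The proposition is stated for an \emph{arbitrary} configuration of type $(m,1^n)$ (this generality is essential later, in Section \ref{all-gorenstein-am1n}), and such a configuration need not be symmetric about $\beta_0$. Your claim that up to equivalence one can arrange $\alpha_j\ne 0$ and $\alpha_i^2\ne\alpha_j^2$ is false: equivalence cannot move $\beta_0=(0,1)$ relative to the other vectors without changing the quasi-invariance conditions, so a coincidence $\alpha_i=-\alpha_j$ or $\alpha_k=0$ is an intrinsic feature that cannot be removed --- indeed $\amn$ itself satisfies $\alpha_i^2=\alpha_{[n/2]+i}^2$ for all $i$. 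Moreover the hypothesis $\Delta\prod_{i=1}^{[n/2]}\alpha_i\ne 0$ of Lemma \ref{construction3} concerns only the half-set $\alpha_1,\dots,\alpha_{[n/2]}$ and holds automatically once \eqref{assumption} is in force (pairwise distinctness of the $\alpha_j$ forces it), so that is not where the difficulty lies.

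What is missing is the descent argument that occupies most of the paper's proof: assuming $\operatorname{rank}M<n$, one examines a chain of Vandermonde-type $n\times n$ minors ($Q_1$, $Q_1'$, $Q_2$, $Q_2'$, \dots in the paper's notation) whose vanishing successively forces, up to relabelling, $\alpha_1^2=\alpha_2^2$, then $\alpha_3^2=\alpha_4^2$, and so on, until the full symmetry \eqref{assumption} is established; only then do \emph{row} operations pairing $\alpha_i$ with $-\alpha_i$ produce the block form \eqref{2m+2nwithallpairs}, after which Lemma \ref{construction3} yields the contradiction. (Your proposed mechanism --- column operations splitting exponents by parity --- does not create any block decomposition without the symmetry of the $\alpha_j$.) In the genuinely generic case the very first minor $Q_1$ is already nonzero and Lemma \ref{construction3} is not needed at all; the entire difficulty of the proposition lives in the partially degenerate intermediate cases, which your proposal does not address.
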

\begin{proof}
Let $q$ be a homogeneous polynomial of degree $2m+2n+2t$,
\begin{equation*}
q=\sum_{i=0}^{2m+2n+2t}a_{i}x^{2m+2n+2t-i}y^{i},
\end{equation*}
where $a_i \in \mathbb{C}$ and $a_{2j-1}=0$ for $1 \le j \le m$. Consider the quasi-invariance conditions for $q$ for the vectors $\beta_1,\ldots,\beta_n$. The matrix of the corresponding system of linear equations for the coefficients of $q$ has the following structure after elementary transformations
{\small
\begin{equation}\label{2m+2n}
\left( \begin{array}{cccccc|c}
\alpha_1^{2m+2n+2t-1} & \alpha_1^{2m+2n+2t-3} & \ldots & \alpha_1^5 & \alpha_1^3 & \alpha_1&\multirow{5}{*}{$A$}\\
\alpha_2^{2m+2n+2t-1} & \alpha_2^{2m+2n+2t-3} & \ldots & \alpha_2^5 & \alpha_2^3 & \alpha_2&\\
\alpha_3^{2m+2n+2t-1} & \alpha_3^{2m+2n+2t-3} & \ldots & \alpha_3^5 & \alpha_3^3 & \alpha_3&\\
\vdots&\vdots&\ldots&\vdots&\vdots&\vdots \\
\alpha_n^{2m+2n+2t-1} & \alpha_n^{2m+2n+2t-3} & \ldots & \alpha_n^5 & \alpha_n^3 & \alpha_n&\\
\end{array}\right),
\end{equation}
}
where the block $A$ consists of $n+t$ columns $A_i$
\begin{equation*}
A_i=\left( \begin{array}{c}
(2i-1)\alpha_1^{2i-2}-(2m+2n+2t-2i+1)\alpha_1^{2i}\\
(2i-1)\alpha_2^{2i-2}-(2m+2n+2t-2i+1)\alpha_2^{2i}\\
(2i-1)\alpha_3^{2i-2}-(2m+2n+2t-2i+1)\alpha_3^{2i}\\
\vdots \\
(2i-1)\alpha_n^{2i-2}-(2m+2n+2t-2i+1)\alpha_n^{2i}\\
\end{array}\right)
\end{equation*}
with $1 \leq i \leq n+t$. Suppose the rank of the matrix (\ref{2m+2n}) is not full, so the rank is less than $n$.  We are going to show that at least one $n \times n$ minor is non-zero, a contradiction which implies the rank of the original matrix is in fact $n$. First we assume that
\begin{equation}\label{assumption}
\alpha_i=-\alpha_{[n/2]+i}, \, 1 \leq i \leq [n/2], \quad  \text{and} \,\,  \alpha_n=0 \,\,  \text{if} \,\,  n \,\, \text{is odd}.
\end{equation}
This assumption will be justified later. Then for odd $n$ the matrix (\ref{2m+2n}) can be rearranged to the matrix
\begin{equation}\label{2m+2nwithallpairs}
{\small
\left( \begin{array}{cccccc|c}
\alpha_1^{2m+2n+2t-1} & \alpha_1^{2m+2n+2t-3} & \ldots & \alpha_1^5 & \alpha_1^3 & \alpha_1&\multirow{5}{*}{0}\\
\alpha_2^{2m+2n+2t-1} & \alpha_2^{2m+2n+2t-3} & \ldots & \alpha_2^5 & \alpha_2^3 & \alpha_2&\\
\alpha_3^{2m+2n+2t-1} & \alpha_3^{2m+2n+2t-3} & \ldots & \alpha_3^5 & \alpha_3^3 & \alpha_3&\\
\vdots&\vdots&\ldots&\vdots&\vdots&\vdots \\
\alpha_{[n/2]}^{2m+2n+2t-1} & \alpha_{[n/2]}^{2m+2n+2t-3} & \ldots & \alpha_{[n/2]}^5 & \alpha_{[n/2]}^3 & \alpha_{[n/2]}&\\
\hline
&&\multirow{2}{*}{0}&&&&$ $\mathcal B$ $
\end{array}\right)
},
\end{equation}
where block $\mathcal B$ consists of $n+t$ columns of the form
\begin{equation}\label{blockB} {\mathcal C}_i=
\left( \begin{array}{c}
(2i-1)\alpha_1^{2i-2}-(2m+2n+2t-2i+1)\alpha_1^{2i}\\
(2i-1)\alpha_2^{2i-2}-(2m+2n+2t-2i+1)\alpha_2^{2i}\\
(2i-1)\alpha_3^{2i-2}-(2m+2n+2t-2i+1)\alpha_3^{2i}\\
\vdots \\
(2i-1)\alpha_{[n/2]}^{2i-2}-(2m+2n+2t-2i+1)\alpha_{[n/2]}^{2i}\\
(2i-1)0^{2i-2}\\
\end{array}\right)
\end{equation}
where $1 \leq i \leq n+t$. When $n$ is even the block $\mathcal B$ and the columns ${\mathcal C}_i$ have the same form (\ref{blockB}) with the final row removed. In this case the matrix $\mathcal B$ with columns ${\mathcal C}_i$ coincides with the matrix $B$ with columns $C_i(X)$ defined by  \eqref{CiX} with $X=m+t$.
By Lemma \ref{construction3}  the block $\mathcal B$ contains at least one non-zero $n/2 \times n/2$ minor when $n$ is even. When $n$ is odd $\mathcal B$ contains a non-zero $\frac{n+1}{2} \times \frac{n+1}{2}$ minor which contains the last row and the first column of $\mathcal B$.

In order to show that the rank of the original matrix (\ref{2m+2n}) is $n$ it is left to justify the assumption  \mref{assumption}. Since the block $\mathcal B$ contains at least one non-zero $[\frac{n+1}{2}]\times [\frac{n+1}{2}]$ minor, it contains at least one non-zero minor of each size $k \times k$, where $1 \leq k \leq [\frac{n+1}{2}]$. Suppose the rank of matrix
\mref{2m+2n}
is not $n$ so any $n \times n$ minor is zero. Consider
\begin{align*}
Q_1=&\left| \begin{array}{cccccc}
\alpha_1^{2n-1}&\alpha_1^{2n-3}&\alpha_1^{2n-5}&\ldots&\alpha_1^3&\alpha_1\\
\alpha_2^{2n-1}&\alpha_2^{2n-3}&\alpha_2^{2n-5}&\ldots&\alpha_2^3&\alpha_2\\
\alpha_3^{2n-1}&\alpha_3^{2n-3}&\alpha_3^{2n-5}&\ldots&\alpha_3^3&\alpha_3\\
\vdots\\
\alpha_{n}^{2n-1}&\alpha_{n}^{2n-3}&\alpha_n^{2n-5}&\ldots&\alpha_n^3&\alpha_n
\end{array}\right|\\
=\prod_{i=1}^n\alpha_i&\left| \begin{array}{cccccc}
\alpha_1^{2n-2}&\alpha_1^{2n-4}&\alpha_1^{2n-6}&\ldots&\alpha_1^2&1\\
\alpha_2^{2n-2}&\alpha_2^{2n-4}&\alpha_2^{2n-6}&\ldots&\alpha_2^2&1\\
\alpha_3^{2n-2}&\alpha_3^{2n-4}&\alpha_3^{2n-6}&\ldots&\alpha_3^2&1\\
\vdots\\
\alpha_{n}^{2n-2}&\alpha_{n}^{2n-4}&\alpha_n^{2n-6}&\ldots&\alpha_n^2&1
\end{array}\right|.
\end{align*}
So $Q_1=0$ implies that   $\alpha_i^2=\alpha_j^2$ for some $1 \leq i <j \leq n$, or $\alpha_k=0$ for some $k, 1 \leq k \leq n$. Suppose first that $\prod_{i<j} (\alpha_i^2-\alpha_j^2)\ne 0$. Then after relabelling of the indecies $\alpha_1=0$. Hence the original matrix has the following form
\begin{equation*}
{\small
\left( \begin{array}{cccccc|c}
0 & 0 & \ldots & 0 & 0 & 0&\multirow{5}{*}{A}\\
\alpha_2^{2m+2n+2t-1} & \alpha_2^{2m+2n+2t-3} & \ldots & \alpha_2^5 & \alpha_2^3 & \alpha_2&\\
\alpha_3^{2m+2n+2t-1} & \alpha_3^{2m+2n+2t-3} & \ldots & \alpha_3^5 & \alpha_3^3 & \alpha_3&\\
\vdots&\vdots&\ldots&\vdots&\vdots&\vdots \\
\alpha_n^{2m+2n+2t-1} & \alpha_n^{2m+2n+2t-3} & \ldots & \alpha_n^5 & \alpha_n^3 & \alpha_n&\\
\end{array}\right),
}
\end{equation*}
where the block $A$ consists of $n+t$ columns each with the following structure
\begin{equation*}
D_i=\left( \begin{array}{c}
(2i-1)0^{2i-2}\\
(2i-1)\alpha_2^{2i-2}-(2m+2n+2t-2i+1)\alpha_2^{2i}\\
(2i-1)\alpha_3^{2i-2}-(2m+2n+2t-2i+1)\alpha_3^{2i}\\
\vdots \\
(2i-1)\alpha_n^{2i-2}-(2m+2n+2t-2i+1)\alpha_n^{2i}\\
\end{array}\right)
\end{equation*}
with $1 \leq i \leq n+t$. In this situation the $n\times n$ determinant
\begin{align*}
Q_{1}^{'}=&\left| \begin{array}{ccccc|c}
0&0&0&\ldots&0&\multirow{5}{*}{$D_1$}\\
\alpha_2^{2n-3}&\alpha_2^{2n-5}&\alpha_2^{2n-7}&\ldots&\alpha_2&\\
\alpha_3^{2n-3}&\alpha_3^{2n-5}&\alpha_3^{2n-7}&\ldots&\alpha_3&\\
\vdots&\vdots&\vdots&&\vdots\\
\alpha_{n}^{2n-3}&\alpha_{n}^{2n-5}&\alpha_n^{2n-7}&\ldots&\alpha_n&\\
\end{array}\right| \neq 0.
\end{align*}
So we may assume that $\alpha_1^2=\alpha_2^2$ after relabelling. In this situation the matrix (\ref{2m+2n}) is equivalent by row transformations to the matrix
\begin{equation}\label{2m+2n+2tstage3}
{\small
\left( \begin{array}{cccccc|c}
0 & 0 & \ldots & 0 & 0 & 0&\multirow{5}{*}{$\widehat{A}$}\\
\alpha_1^{2m+2n+2t-1} & \alpha_1^{2m+2n+2t-3} & \ldots & \alpha_1^5 & \alpha_1^3 & \alpha_1&\\
\alpha_3^{2m+2n+2t-1} & \alpha_3^{2m+2n+2t-3} & \ldots & \alpha_3^5 & \alpha_3^3 & \alpha_3&\\
\vdots&\vdots&\ldots&\vdots&\vdots&\vdots \\
\alpha_n^{2m+2n+2t-1} & \alpha_n^{2m+2n+2t-3} & \ldots & \alpha_n^5 & \alpha_n^3 & \alpha_n&\\
\end{array}\right)
}
\end{equation}
where the block $\widehat{A}$ consists of $n+t$ columns each with the following structure
\begin{equation*}
\left( \begin{array}{c}
(2i-1)\alpha_1^{2i-2}-(2m+2n+2t-2i+1)\alpha_1^{2i}\\
0\\
(2i-1)\alpha_3^{2i-2}-(2m+2n-2i+2t+1)\alpha_3^{2i}\\
\vdots \\
(2i-1)\alpha_n^{2i-2}-(2m+2n-2i+2t+1)\alpha_n^{2i}\\
\end{array}\right)
\end{equation*}
with $1 \leq i \leq n+t$. Note that the rows $1, 3 \ldots ,[n/2], [n/2]+1$ of $\widehat{A}$ coincide up to relabelling of the $\alpha_i$ with the matrix $\widehat{B}$ formed by the first $[n/2]$ rows of the matrix $\mathcal B$ from (\ref{2m+2nwithallpairs}), (\ref{blockB}). As $\rank {\mathcal B}=[\frac{n+1}{2}]$ and $\rank \widehat B = [n/2]$ the first row of the matrix $\widehat{A}$ contains some non-zero entry which we denote by $B_{1 \times 1}$. Now, consider the following minor of the matrix (\ref{2m+2n+2tstage3})
\begin{align*}
Q_2=&\left| \begin{array}{ccccccc}
0&0&0&\ldots&0&0&B_{1 \times 1}\\
\alpha_1^{2n-3}&\alpha_1^{2n-5}&\alpha_1^{2n-7}&\ldots&\alpha_1^3&\alpha_1&0\\
\alpha_3^{2n-3}&\alpha_3^{2n-5}&\alpha_3^{2n-7}&\ldots&\alpha_3^3&\alpha_3&\multirow{4}{*}{*}\\
\alpha_4^{2n-3}&\alpha_4^{2n-5}&\alpha_4^{2n-7}&\ldots&\alpha_4^3&\alpha_4&\\
\vdots&\vdots&\vdots&&\vdots&\vdots&\\
\alpha_{n}^{2n-3}&\alpha_{n}^{2n-5}&\alpha_n^{2n-7}&\ldots&\alpha_n^3&\alpha_n&
\end{array}\right|.
\end{align*}
Since $B_{1 \times 1}\neq 0$, $Q_2=0$ implies that $\alpha_i^2=\alpha_j^2$ for some $3 \leq i <j \leq n$ or $\alpha_k=0$ for $k \neq 1,2$. Suppose that $\prod_{i<j; i,j=3}^n (\alpha_i^2-\alpha_j^2)\ne 0$ so $\alpha_3=0$ up to relabelling. Then (\ref{2m+2n+2tstage3}) is equivalent to a matrix which has minor
\begin{align*}
Q_2^{'}=&\left| \begin{array}{ccccccc}
0&0&0&\ldots&0&0&\multirow{2}{*}{$B_{2 \times 2}$}\\
0&0&0&\ldots&0&0&\\
\alpha_1^{2n-5}&\alpha_1^{2n-7}&\alpha_1^{2n-9}&\ldots&\alpha_1^3&\alpha_1&\multirow{4}{*}{*}\\
\alpha_4^{2n-5}&\alpha_4^{2n-7}&\alpha_4^{2n-9}&\ldots&\alpha_4^3&\alpha_4&\\
\vdots&\vdots&\vdots&&\vdots&\vdots&\\
\alpha_{n}^{2n-5}&\alpha_{n}^{2n-7}&\alpha_n^{2n-9}&\ldots&\alpha_n^3&\alpha_n&
\end{array}\right|,\\
\end{align*}
where $B_{2 \times 2}$ is a $ 2 \times 2$ submatrix of the first two rows of A such that the corresponding minor $|B_{2 \times 2}|\neq 0$. Such a submatrix exists since $\rank \widehat{B}=[\frac{n}{2}]$.  So we have $Q_2^{'}\neq 0$. So we may assume that $\alpha_1^2=\alpha_2^2$ and $\alpha_3^2=\alpha_4^2$ up to relabelling. It is not hard to see that we can continue in this way to deduce that  (up to relabelling) $\alpha_i^2=\alpha_{[n/2]+i}^2, \, 1 \leq i \leq [n/2]$. If $n$ is even this justifies the assumption \mref{assumption}.
Suppose now that  $n$ is odd. Then the matrix (\ref{2m+2n}) is equivalent to
\begin{equation}\label{2m+2n+2tstage4}
{\small
\left( \begin{array}{cccccc|c}
\alpha_1^{2m+2n+2t-1} & \alpha_1^{2m+2n+2t-3} & \ldots & \alpha_1^5 & \alpha_1^3 & \alpha_1&\multirow{5}{*}{0}\\
\alpha_2^{2m+2n+2t-1} & \alpha_2^{2m+2n+2t-3} & \ldots & \alpha_2^5 & \alpha_2^3 & \alpha_2&\\
\alpha_3^{2m+2n+2t-1} & \alpha_3^{2m+2n+2t-3} & \ldots & \alpha_3^5 & \alpha_3^3 & \alpha_3&\\
\vdots&\vdots&\ldots&\vdots&\vdots&\vdots \\
\alpha_{\frac{n-1}{2}}^{2m+2n+2t-1} & \alpha_{\frac{n-1}{2}}^{2m+2n+2t-3} & \ldots & \alpha_{\frac{n-1}{2}}^5 & \alpha_{\frac{n-1}{2}}^3 & \alpha_{\frac{n-1}{2}}&\\
\hline
&&0&&&&\widehat{B}\\
\hline
\alpha_{n}^{2m+2n+2t-1} & \alpha_{n}^{2m+2n+2t-3} & \ldots & \alpha_{n}^5 & \alpha_{n}^3 & \alpha_{n}&*
\end{array}\right)
},
\end{equation}
where the block $\widehat{B}$ is up to relabelling the first $(n-1)/2$ rows of the block $\mathcal B$ introduced in (\ref{2m+2nwithallpairs}), (\ref{blockB}). Thus we can consider the following minor of (\ref{2m+2n+2tstage4})
\begin{equation}
{\small
\left( \begin{array}{ccccc|c}
\alpha_1^{n} & \alpha_1^{n-2} & \ldots  & \alpha_1^3 & \alpha_1&\multirow{5}{*}{0}\\
\alpha_2^{n} & \alpha_2^{n-2} & \ldots &  \alpha_2^3 & \alpha_2&\\
\vdots&\vdots&\ldots&\vdots&\vdots \\
\alpha_{\frac{n-1}{2}}^{n} & \alpha_{\frac{n-1}{2}}^{n-2} & \ldots  & \alpha_{\frac{n-1}{2}}^3 & \alpha_{\frac{n-1}{2}}&\\
\hline
&&0&&&B_{\frac{n-1}{2}\times\frac{n-1}{2}}\\
\hline
\alpha_{n}^{n} & \alpha_{n}^{n-2} & \ldots  & \alpha_{n}^3 & \alpha_{n}&*
\end{array}\right)
},
\end{equation}
where $B_{\frac{n-1}{2} \times \frac{n-1}{2}}$ corresponds to a non-zero $(n-1)/2 \times (n-1)/2$ minor of $\widehat{B}$. We have $Q_{n-1}^{'}=0$ implies that $\alpha_n=0$.
This justifies the assumption \mref{assumption} for odd $n$.
\end{proof}

\begin{corollary}\label{pg3} For any configuration $\mathcal{A}$ of type $(m,1^n)$ we have
$$
P^{2m+2n-1, \infty}=\frac{t^{2m+2n-1}}{(1-t)^2}(m+n-(m+n-1)t).
$$
\end{corollary}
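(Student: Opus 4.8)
The plan is to combine the two preceding rank computations, Propositions \ref{corronodd2m+n-1} and \ref{corron2m+2n}, to determine every coefficient $b_k$ of the Hilbert series with $k\ge 2m+2n-1$, and then to sum the resulting series by an elementary power-series manipulation. First I would record that, since $m,n\in\Z_{\ge 1}$, we have $2m+n-1\le 2m+2n-2<2m+2n-1$, so Proposition \ref{corronodd2m+n-1} applies to \emph{every} odd integer $k\ge 2m+2n-1$ and yields $b_k=k+1-m-n$. For the even indices, note that the smallest even integer that is $\ge 2m+2n-1$ is $2m+2n=2(m+n)$, which is precisely the base value $t=0$ in Proposition \ref{corron2m+2n}; hence $b_k=k+1-m-n$ for every even $k\ge 2m+2n$ as well. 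Putting these together gives $b_k=k+1-m-n$ for all $k\ge 2m+2n-1$, and in particular $b_{2m+2n-1}=m+n>0$, which is consistent with the leading behaviour of the claimed expression.

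Next I would carry out the summation. Setting $k=j+2m+2n-1$ with $j\ge 0$, the coefficient becomes $k+1-m-n=j+m+n$, so
\begin{equation*}
P^{2m+2n-1,\infty}=\sum_{j\ge 0}(j+m+n)\,t^{\,j+2m+2n-1}
= t^{2m+2n-1}\left(\frac{m+n}{1-t}+\frac{t}{(1-t)^2}\right),
\end{equation*}
using the standard identities $\sum_{j\ge 0}t^j=(1-t)^{-1}$ and $\sum_{j\ge 0}j\,t^j=t(1-t)^{-2}$. Bringing the two terms over the common denominator $(1-t)^2$ gives numerator $(m+n)(1-t)+t=(m+n)-(m+n-1)t$, which is exactly the asserted formula
$P^{2m+2n-1,\infty}=\dfrac{t^{2m+2n-1}}{(1-t)^2}\bigl(m+n-(m+n-1)t\bigr)$.

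I do not expect a genuine obstacle here: the only point requiring a little care is the bookkeeping at the junction of the two propositions, namely checking that \emph{every} integer $k\ge 2m+2n-1$ is indeed covered — the odd ones by Proposition \ref{corronodd2m+n-1} (whose threshold $2m+n-1$ lies strictly below $2m+2n-1$) and the even ones by Proposition \ref{corron2m+2n} (whose base case $2(m+n)$ is the first even integer exceeding $2m+2n-2$). Granting those two results, the corollary is immediate.
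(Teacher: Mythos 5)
Your proposal is correct and follows essentially the same route as the paper: both combine Propositions \ref{corronodd2m+n-1} and \ref{corron2m+2n} to get $b_k=k+1-m-n$ for all $k\ge 2m+2n-1$ and then sum the resulting series, differing only in the trivial algebraic bookkeeping (you shift the index, the paper splits the sum into $\sum i t^i$ and $\sum t^i$). Your extra care in checking that the odd and even indices are fully covered by the two propositions is a point the paper leaves implicit, but there is no substantive difference.
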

\begin{proof} Using Proposition \ref{corronodd2m+n-1} and Proposition \ref{corron2m+2n} we have
{\small
\begin{align*} P^{2m+2n-1, \infty}
=\sum_{i \geq 2m+2n-1}(i+1-m-n)t^i=\sum_{i \geq 2m+2n-1}it^i-(m+n-1)\sum_{i \geq 2m+2n-1}t^i\\
=\frac{t^{2m+2n-1}(2m+2n-1-2(m+n-1)t)}{(1-t)^2} -
\frac{(m+n-1)t^{2m+2n-1}}{1-t}.
\end{align*}
}
\end{proof}

\section{Hilbert series of the algebra $Q_{\arrange}$} \label{poincare-am1n}

In this Section we complete the derivation of the Hilbert series of the algebra of quasi-invariants $Q_\A$ for the configuration $\A=\amn$. This configuration was defined in Section~\ref{mm1n}, Definition \ref{m1n} in terms of the elementary symmetric polynomials $e_i$ of the variables $z_j=e^{2i \varphi_j}$ where  the vectors of $\A$ had the form $(\cos \vf_j, \sin \vf_j)$. In Section \ref{quasi-preliminary} we studied the quasi-invariance conditions for the configurations of type $(m,1^n)$ consisting of vectors $\beta_i=(1,\alpha_i)$. We worked with such conditions by making use of the elementary symmetric polynomials $\widehat e_r$ of $\alpha_i^2$ (see Proposition \ref{corron2m+2n} and its proof, Lemma \ref{construction1} and the notation before the lemma). Thus we start with the rearranging the definition of the configuration $\amn$ in terms of quantities $\widehat e_r$. We keep notation of  the previous Sections. 

\begin{proposition}\label{eqofconfigs} Suppose a configuration $\mathcal{A}$ satisfies the symmetry properties \mref{assumption}
 and
\begin{equation} \label{er-hat-explicitely}
\widehat{e}_r=\binom{[n/2]}{r}\displaystyle\prod_{i=1}^r\frac{2[\frac{n+1}{2}]-2i+1}{2m+2i-1},
\end{equation}
where $0 \leq r \leq [n/2]$. Then $\mathcal{A}$ is equivalent to $\arrange$.
\end{proposition}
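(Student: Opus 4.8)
\emph{Proof plan.} The first task is to write both descriptions of the configuration in the same coordinates. In Definition~\ref{m1n} the configuration $\amn$ is given through vectors $(\cos\vf_j,\sin\vf_j)$ with $\vf_0=0$, the variables $z_j=e^{2i\vf_j}$, and the elementary symmetric polynomials $e_i=e_i(z_1,\dots,z_n)$ determined (at $z_0=1$) by \eqref{ehats}. Rotating by $\pi/2$ and rescaling gives the equivalent configuration with $\beta_0=(0,1)$ of multiplicity $m$ and $\beta_j=(1,\alpha_j)$, $\alpha_j=-\cot\vf_j$, for which a direct computation gives
\[
\alpha_j^{2}=-\frac{(z_j+1)^2}{(z_j-1)^2},\qquad\text{equivalently}\qquad z_j+z_j^{-1}=\frac{2(\alpha_j^{2}-1)}{\alpha_j^{2}+1};
\]
thus $\alpha_j^{2}$ is a fixed M\"obius function $\mu^{-1}$ of $v_j:=z_j+z_j^{-1}$, with $\mu(u)=\tfrac{2(u-1)}{u+1}$. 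The reflection symmetry of $\amn$ about $\alpha_0$ (that is $z_jz_{n+1-j}=1$) becomes the symmetry $\alpha_j\mapsto-\alpha_j$ of the multiplicity-one vectors, so $\amn$ itself satisfies \eqref{assumption} and the quantities $\widehat e_r$ are defined for it. A configuration of type $(m,1^n)$ obeying \eqref{assumption} is recovered up to equivalence from the multiset $\{\alpha_1^{2},\dots,\alpha_{[n/2]}^{2}\}$, since by \eqref{assumption} its multiplicity-one vectors are exactly $(1,\pm\alpha_i)$, $i=1,\dots,[n/2]$, together with $(1,0)$ when $n$ is odd. Hence it suffices to prove that the $\widehat e_r$ of $\amn$ equal the right-hand side of \eqref{er-hat-explicitely}.

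To compute these, set $P(w)=\prod_{j=1}^n(w-z_j)$; by \eqref{ehats} this is the explicit monic polynomial $\sum_{k=0}^n(-1)^{n-k}e_{n-k}w^k$, which one recognizes (by comparing coefficients) as the terminating monic Gauss polynomial ${}_2F_1(-n,m;1-m-n;w)$. Since $\{z_j\}$ is closed under $z\mapsto z^{-1}$ and contains $z=-1$ exactly when $n$ is odd, $P$ factors as $P(w)=w^{[n/2]}\,\tilde P(w+w^{-1})$ for even $n$ and $P(w)=(w+1)\,w^{[n/2]}\,\tilde P(w+w^{-1})$ for odd $n$, with $\tilde P$ monic of degree $[n/2]$. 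Combining this factorization with the M\"obius relation $\alpha_i^{2}=\mu^{-1}(v_i)$ produces an identity of the shape
\[
\prod_{i=1}^{[n/2]}(u-\alpha_i^{2})=\frac{(u+1)^{[n/2]}}{\tilde P(2)}\;\tilde P\!\left(\frac{2(u-1)}{u+1}\right),
\]
where $\tilde P(2)=P(1)=\prod_{j=1}^n(1-z_j)$, so that $\widehat e_r$ is, up to sign, the coefficient of $u^{[n/2]-r}$ on the right-hand side. It then remains to substitute the explicit $P$ and expand. I would carry this out either without special functions --- by converting the recurrence \eqref{rec} for the $e_i$ into a recurrence for the coefficients of $\tilde P$, transporting it through the substitution $w+w^{-1}=\mu(u)$, and checking that the numbers in \eqref{er-hat-explicitely} satisfy the resulting one-step relation $\widehat e_{r+1}/\widehat e_r=\frac{([n/2]-r)\,(2[(n+1)/2]-2r-1)}{(r+1)(2m+2r+1)}$ together with $\widehat e_0=1$ --- or, more conceptually, by applying the classical quadratic transformation for ${}_2F_1(a,b;a-b+1;\cdot)$ with $a=-n$, $b=m$, which turns $\tilde P$ into a hypergeometric polynomial in $w+w^{-1}$ that, after the M\"obius substitution, expands with coefficients exactly those in \eqref{er-hat-explicitely}.

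The change of coordinates and the factorization of $P$ through its $z\mapsto z^{-1}$-symmetry are routine. I expect the real work to be in the last step: keeping track of the normalization $\tilde P(2)=P(1)$ and of the powers of $(u+1)$ introduced by the M\"obius substitution, confirming the precise product formula, and handling the odd-$n$ case (with its extra factor $(w+1)$ and the exceptional root $\alpha_n=0$) in parallel with the even case. Once the displayed coefficient identity is established, the proposition follows: $\mathcal A$ and the normalized form of $\amn$ then both satisfy \eqref{assumption} with the same $\widehat e_r$, hence have the same multiplicity-one vectors $(1,\alpha_i)$, so $\mathcal A$ is equivalent to $\amn$.
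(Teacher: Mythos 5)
Your reduction is the same as the paper's: under the symmetry \eqref{assumption} the configuration is determined up to equivalence by the multiset $\{\alpha_1^2,\dots,\alpha_{[n/2]}^2\}$, and the rotated $\amn$ has $\alpha_j=\pm\cot\vf_j$, so everything comes down to computing the $\widehat e_r$ of $\amn$ from the known $e_k$ of Theorem \ref{1st-id-implies-condig}. Where you diverge is in how that computation is organized. The paper passes through the elementary symmetric polynomials $f_i$ of $u_j=\sin^2\vf_j$ (Lemmas \ref{valuesforfi} and \ref{valuesforeiarrange}), proving two explicit term-by-term binomial identities, one of which is an instance of Saalsch\"utz's theorem \eqref{Saals}; your route packages the same two steps as generating-function manipulations, first factoring $P(w)=w^{[n/2]}\tilde P(w+w^{-1})$ (times $(w+1)$ for odd $n$) and then applying the M\"obius substitution $v=2(u-1)/(u+1)$ to land directly on $\prod_i(u-\alpha_i^2)$. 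Your displayed product formula is correct (I checked $\mu(u)-\mu(s)=4(u-s)/((u+1)(s+1))$ gives exactly the stated normalization, with $\tilde P(2)=P(1)$ needing the extra factor $2$ for odd $n$, as you flag), the identification $P={}_2F_1(-n,m;1-m-n;w)$ is right (it is indeed monic), and the one-step ratio you write for $\widehat e_{r+1}/\widehat e_r$ is consistent with \eqref{er-hat-explicitely}. The quadratic-transformation route buys a more conceptual explanation of why the answer is a single hypergeometric product rather than two separate summation identities; the paper's route is more elementary and more directly verifiable term by term. The one soft spot is that your alternative ``transport the recurrence \eqref{rec}'' argument presupposes that the transported relation is one-step — that is true only because the resulting polynomial in $u$ is again of ${}_2F_1$ type, which is really the content of the quadratic transformation — and, as you acknowledge, the actual expansion that constitutes the heart of the proof is set up but not carried out; to be fair, the paper also only sketches these two identities and defers details to \cite{DJ}.
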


More exactly we show that the configuration $\mathcal A$ defined by Proposition \ref{eqofconfigs} coincides with the configuration $\amn$ rotated by $\pi/2$ after renumbering of vectors. That is  the parameters are related by $\alpha_j = \cot \vf_j$ for $j=1,\ldots, n$, and the vectors of $\amn$  of multiplicity 1 are renumbered so that $\vf_j+\vf_{[n/2]+j}\in \pi\Z$, for $1\le j \le [n/2]$.

We prove Proposition \ref{eqofconfigs} by establishing the following two lemmas. Define $u_j=\sin^2 \vf_j$ for $1 \leq j \leq [n/2]$. Denote by $f_i$, $0 \leq i \leq [n/2]$, the $i$-th elementary symmetric polynomial in the variables $u_j$.
 First we find in Lemma \ref{valuesforfi} the values of the elementary symmetric polynomials $f_i$ for the configuration $\arrange$. Then we check in Lemma \ref{valuesforeiarrange} that these $f_i$ lead to the relations  \mref{er-hat-explicitely}.

\begin{lemma}\label{valuesforfi} For the configuration $\amn$ the elementary symmetric polynomials $f_i$, $0 \leq i \leq [n/2]$, take the  values
\begin{equation}\label{fiunderehatreln1}f_i=\binom{[n/2]}{i}2^{-i}\prod_{s=1}^i\frac{2m+2[n/2]-2s+1}{(m+n-s)}.
\end{equation}
\end{lemma}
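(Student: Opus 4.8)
The plan is to compute the elementary symmetric polynomials $f_i$ of the quantities $u_j = \sin^2\varphi_j$ directly from the defining data of the configuration $\amn$, namely the values of $e_k = e_k(z_1,\ldots,z_n)$ with $z_j = e^{2i\varphi_j}$ given in Definition \ref{m1n} (equivalently Theorem \ref{1st-id-implies-condig}). The starting point is the algebraic dictionary between the $z_j$ and the $u_j$: since $z_j = e^{2i\varphi_j}$ and $u_j = \sin^2\varphi_j = \tfrac14 (2 - z_j - z_j^{-1}) = -\tfrac14 z_j^{-1}(z_j-1)^2$, and since the configuration $\amn$ is symmetric with $z_j z_{[n/2]+j} = 1$ (so $u_j = u_{[n/2]+j}$), each value $u_j$ with $1\le j\le [n/2]$ arises from the pair $\{z_j, z_j^{-1}\}$. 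Concretely, I would introduce the polynomial $P(z) = \prod_{j=1}^n (z-z_j) = \sum_{k=0}^n (-1)^k e_k z^{n-k}$ and relate $\prod_{j=1}^{[n/2]}(u - u_j)$ to $P$ via the substitution that sends a pair $z, z^{-1}$ to $u = -\tfrac14 z^{-1}(z-1)^2$; this is a degree-2 change of variable $z\mapsto u$, so $\prod_{j=1}^{[n/2]}(u-u_j)$ is (up to normalization) the ``resultant-type'' reduction of $P(z)$ under $z + z^{-1} = 2-4u$.

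The cleanest route, I expect, is to avoid resultants and instead use the differential equation \mref{difeq} satisfied by $P(w)$ (with $z_0 = 1$ for the configuration $\amn$) to derive a differential or recurrence relation directly for the polynomial $g(u) := \prod_{j=1}^{[n/2]}(u-u_j) = \sum_{i=0}^{[n/2]} (-1)^i f_i\, u^{[n/2]-i}$. Setting $z_0=1$ and writing $w + w^{-1} = 2 - 4u$, i.e. $u = -\tfrac14 w^{-1}(w-1)^2$, one converts \mref{difeq} into a second-order ODE in the variable $u$ for the function $g$; the symmetry $z\mapsto z^{-1}$ guarantees the result descends to a genuine ODE in $u$. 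Reading off the coefficient recurrence from this ODE should give a two-term recurrence for the $f_i$ whose unique polynomial solution with $f_0 = 1$ is exactly \mref{fiunderehatreln1} — one then verifies \mref{fiunderehatreln1} satisfies that recurrence, in the same spirit as the verification of \mref{ehats} against \mref{rec} in Theorem \ref{1st-id-implies-condig}. As a sanity check one has $f_1 = \sum_j \sin^2\varphi_j$, which can be cross-checked against $e_1 = -mn/(m+n-1)$ through $\sum \sin^2\varphi_j = \tfrac12\sum(1 - \cos 2\varphi_j) = \tfrac{[n/2]}{1}\cdot\tfrac12 - \tfrac14(e_1 + \bar e_1 \text{-type sum})$, which after using the symmetry collapses to $f_1 = \tfrac{[n/2]}{2}\cdot\tfrac{2m+2[n/2]-1}{m+n-1}$, matching \mref{fiunderehatreln1} at $i=1$.

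The main obstacle I anticipate is bookkeeping the parity of $n$: when $n$ is odd, one of the vectors of $\amn$ must be fixed by the symmetry $z\mapsto z^{-1}$, i.e. $z = \pm 1$, and since $z=1=z_0$ is already occupied by the multiplicity-$m$ vector, the self-paired multiplicity-one vector sits at $z = -1$, contributing $u = 1$ to the list; thus for odd $n$ the polynomial $P(z)$ has the factor $(z+1)$, and $g(u)$ picks up the factor $(u-1)$. I would handle this by splitting $P(z) = (z+1)^{\delta} \widetilde P(z)$ with $\delta \in \{0,1\}$ according to the parity, deriving the ODE/recurrence for the reduced polynomial, and then checking that multiplying back by $(u-1)^\delta$ reproduces \mref{fiunderehatreln1} — the binomial $\binom{[n/2]}{i}$ and the product $\prod_{s=1}^i \frac{2m+2[n/2]-2s+1}{m+n-s}$ are manifestly parity-uniform, so the only real work is confirming the recurrence transforms correctly under this factorization. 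Everything else is a routine (if slightly lengthy) symmetric-function computation of the type already carried out in Section \ref{section-m1n}.
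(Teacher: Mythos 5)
Your plan is sound and it reaches the result by a genuinely different route than the paper. The paper's proof of Lemma \ref{valuesforfi} is combinatorial: it writes the change of basis $e_r=\sum_{i=0}^r(-1)^i2^{2i}\binom{n-2i}{r-i}f_i$ (for even $n$), substitutes the claimed $f_i$ together with the known values \mref{ehats} of $e_r$, and reduces the verification to the binomial identity \mref{identityAarrangements}, which is a specialisation of Saalsch\"utz's theorem; the odd-$n$ case is deferred to \cite{DJ}. You instead fold the differential equation \mref{difeq} through $w+w^{-1}=2-4u$, using $P(w)=(-4)^{[n/2]}w^{[n/2]}(w+1)^{\delta}g(u)$ with $g(u)=\prod_{j=1}^{[n/2]}(u-u_j)$, to obtain a hypergeometric-type ODE for $g$ --- for even $n$ one gets $4u(1-u)g''+(2+4m-(4+4m)u)g'+n(2m+n)g=0$, whose coefficient recurrence $2(k+1)(2k+2m+1)c_{k+1}+(n-2k)(n+2k+2m)c_k=0$ is two-term and is satisfied by the claimed $f_i$. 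This is exactly the folding the paper itself performs later in Section \ref{DarbouxTr} (compare \mref{difeq_two_mult_Q} and \mref{Qvf-diff-eq}), so your route reuses machinery already present, avoids Saalsch\"utz entirely, and handles odd $n$ as the $\tilde m=1$ specialisation of that folding; the paper's route buys a self-contained symmetric-function computation at the cost of a nontrivial hypergeometric input. One correction to your parity bookkeeping: by the paper's definition the $f_i$ are symmetric functions of $u_1,\dots,u_{[n/2]}$ only, i.e.\ of the roots of the \emph{reduced} polynomial after the self-paired factor $(w+1)$ is removed, so there is no ``multiplying back by $(u-1)^{\delta}$'' at the end --- formula \mref{fiunderehatreln1} is the coefficient list of $g$ itself, and the odd-$n$ recurrence should be derived for $g$ after substituting $Q=\mathrm{const}\cdot\cos\varphi\cdot g(\sin^2\varphi)$ into the $\varphi$-form of the equation. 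Your $f_1$ sanity check is consistent with $e_1=-mn/(m+n-1)$ via $e_1=n-4f_1$ for even $n$.
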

The outline of the proof of Lemma \ref{valuesforfi} is as follows. When $n$ is even one can check that
\begin{equation}\label{fiunderehatreln2} e_r=\sum_{i=0}^r(-1)^i2^{2i}\binom{n-2i}{r-i}f_i,
\end{equation}
where $0\le r \le n/2$.
This expression is derived via directly expressing the symmetric polynomials $e_r$ through the variables $u_i$. Then we have to show that
\begin{align}\label{identityAarrangements} (-1)^r\binom{n}{r}\binom{m+r-1}{r}\binom{m+n-1}{r}^{-1} =\sum_{i=0}^r(-1)^{i}2^{i}\binom{n-2i}{r-i}\binom{n/2}{i}\prod_{s=1}^i\frac{2m+n-2s+1}{m+n-s}.
\end{align}
Recall the Saalsch{\"{u}}tz's theorem (see e.g \cite{Gessel}) for the generalised hypergeometric function
\begin{equation}\label{Saals}
\,_3F_2(a,b,-p;c,1+a+b-c-p;1)=\frac{(c-a)_p(c-b)_p}{(c)_p(c-a-b)_p},
\end{equation}
where $(\alpha)_p=\Gamma(\alpha+p)/\Gamma(\alpha)$.
The identity (\ref{identityAarrangements}) follows upon specialising \mref{Saals} for $a=-r, b=r-n, p=m+\frac{n-1}{2}, c = - \frac{n-1}{2}$.
We refer for further details including the case of odd $n$ to \cite{DJ}.

\begin{lemma}\label{valuesforeiarrange}Suppose that for $0 \leq r \leq [n/2]$
\begin{equation*}f_r=\binom{[n/2]}{r}2^{-r}\prod_{s=1}^r\frac{2m+2[n/2]-2s+1}{(m+n-s)}.
\end{equation*}
Then for $0 \leq r \leq [n/2]$ we have
\begin{equation}
\widehat{e}_r=\binom{[n/2]}{r}\prod_{s=1}^r\frac{2[\frac{n+1}{2}]-2s+1}{2m+2s-1}.
\end{equation}
\end{lemma}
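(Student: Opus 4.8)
The link between the two sets of variables is the change of parameters $\alpha_j=\cot\varphi_j$ that the discussion following Proposition~\ref{eqofconfigs} attaches to $\amn$. My plan is to convert this into a triangular linear change of coordinates between the $f_i$ and the $\widehat{e}_r$, and then to evaluate the resulting sum by Saalsch\"utz's theorem, in complete analogy with the proof of Lemma~\ref{valuesforfi}.

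First I would record the elementary identity $\alpha_j^2=\cot^2\varphi_j=1/u_j-1$, so that the numbers $\alpha_j^2$ and $u_j$ correspond under the M\"obius map $x\mapsto 1/(x+1)$. Writing $N=[n/2]$ and $P(x)=\prod_{j=1}^N(x-u_j)=\sum_{i=0}^N(-1)^if_ix^{N-i}$, the monic polynomial with roots $\alpha_1^2,\dots,\alpha_N^2$ is $\prod_{j=1}^N(y-\alpha_j^2)$, and since $1-u_j(y+1)=-u_j(y-\alpha_j^2)$ one gets
$$
(y+1)^NP\!\left(\frac{1}{y+1}\right)=\sum_{i=0}^N(-1)^if_i(y+1)^i=(-1)^Nf_N\prod_{j=1}^N(y-\alpha_j^2),
$$
where $f_N=\prod_{j=1}^Nu_j\ne 0$ because $u_j=\sin^2\varphi_j\ne0$. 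Comparing coefficients of $y^{N-r}$ on the two sides yields
\begin{equation*}
f_N\,\widehat{e}_r=\sum_{k=0}^r(-1)^{r-k}\binom{N-k}{r-k}f_{N-k},\qquad 0\le r\le N ,
\end{equation*}
a triangular system expressing each $\widehat{e}_r$ through $f_N,f_{N-1},\dots,f_{N-r}$.

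The second step substitutes the hypothesised values of the $f_i$ into this relation. Factoring out the common products of linear terms and reindexing the sum, the asserted formula for $\widehat{e}_r$ reduces to a single terminating hypergeometric identity; written as a ${}_3F_2$ at argument $1$ it is again Saalsch\"utzian, and it follows by specialising \eqref{Saals} with an appropriate choice of $a,b,c,p$, parallel to the specialisation used for Lemma~\ref{valuesforfi}, the only change between $n$ even and $n$ odd coming from the values of $[n/2]$ and $[(n+1)/2]$ that enter the closed forms. The bookkeeping here --- recasting the alternating binomial sum into Saalsch\"utz's normal form and checking the balancing condition --- is the one nonroutine point, and the details are carried out in \cite{DJ}.

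Finally I would note why only the $f_i$ with $0\le i\le[n/2]$ are relevant: when $n$ is odd the configuration has a middle line with $\varphi=\pi/2$, i.e.\ $\alpha=0$, $u=1$, and each remaining line is paired with one carrying the same value of $\sin^2\varphi$; since $\widehat{e}_r$ and $f_r$ are by definition the elementary symmetric functions in the $[n/2]$ representatives $\alpha_1^2,\dots,\alpha_{[n/2]}^2$ and $u_1,\dots,u_{[n/2]}$, the middle zero and the duplicated values play no role. Combining the triangular system with the Saalsch\"utz evaluation proves Lemma~\ref{valuesforeiarrange}, and together with Lemma~\ref{valuesforfi} this completes the proof of Proposition~\ref{eqofconfigs}.
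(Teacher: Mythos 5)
Your reduction is exactly the paper's: the change of variables $\alpha_j^2=u_j^{-1}-1$, the identity $(y+1)^NP(1/(y+1))=(-1)^Nf_N\prod_j(y-\alpha_j^2)$, and the resulting triangular system $f_N\widehat{e}_r=\sum_{k=0}^r(-1)^{r-k}\binom{N-k}{r-k}f_{N-k}$ reproduce precisely the step in the paper's scheme of proof where $\widehat{e}_r$ is expressed through the $f_i$ divided by $f_{[n/2]}$, and substituting the hypothesised $f_i$ gives the same alternating binomial sum as display \eqref{identityBarrangements}. The divergence, and the one genuine problem, is in how you close the identity. The sum in question, after the reduction $\binom{N-i}{r-i}\binom{N}{i}=\binom{N}{r}\binom{r}{i}$ and absorbing the factors of $2$ into the Pochhammer symbols, is
$(-1)^r\sum_{i=0}^r\frac{(-r)_i\,(m+[\frac{n+1}{2}])_i}{i!\,(m+\frac12)_i}$,
i.e.\ a terminating ${}_2F_1$ at argument $1$, not a ${}_3F_2$. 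It is \emph{not} Saalsch\"utzian: padding it to a ${}_3F_2$ with a cancelling parameter pair, the balancing condition forces $[\frac{n+1}{2}]=r-\frac12$, which never holds. So the step ``specialise \eqref{Saals} with an appropriate choice of $a,b,c,p$'' cannot be carried out as you describe; the specialisation of Saalsch\"utz used in the paper is for Lemma \ref{valuesforfi} (identity \eqref{identityAarrangements}), where the sum genuinely is a balanced ${}_3F_2$, and it does not transfer to this lemma.

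The flaw is repairable in one line: the terminating ${}_2F_1(-r,b;c;1)$ is summed by the Chu--Vandermonde (Gauss) theorem, $(c-b)_r/(c)_r$, and with $b=m+[\frac{n+1}{2}]$, $c=m+\frac12$ this yields exactly $\prod_{s=1}^r\frac{2[\frac{n+1}{2}]-2s+1}{2m+2s-1}$ after cancelling the signs and powers of $2$. Note also that the paper itself closes this lemma differently: it observes that after cancelling $\binom{[n/2]}{r}$ both sides of \eqref{identityBarrangements} are monic polynomials of degree $r$ in $n$ and checks that their roots coincide. Either route works; just replace the appeal to Saalsch\"utz by Chu--Vandermonde (or by the paper's root-matching argument). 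Your closing remark on why only the $[n/2]$ representatives $u_1,\dots,u_{[n/2]}$ enter is consistent with the conventions set before Lemma \ref{valuesforfi} and is fine.
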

{\it Scheme of proof.}
The symmetric polynomials $\widehat{e}_r$ are elementary symmetric polynomials in $u_i^{-1}-1$, $1 \le i \le [n/2]$. They can be expressed through the elementary symmetric polynomials $f_i$ in the variables $u_i$ factored by $f_{[n/2]}=\prod_{i=1}^{[n/2]} u_i$. This leads to the formula
\begin{equation*}
\widehat{e}_r=\sum_{i=0}^r(-1)^i2^i\binom{[n/2]-i}{r-i}\binom{[n/2]}{i}\prod_{s=0}^{i-1}\frac{m+[\frac{n+1}{2}]+s}{2m+2s+1}.
\end{equation*}
So one has to check the identity
\begin{align}\label{identityBarrangements}
\sum_{i=0}^r(-1)^{r-i}2^i\binom{[n/2]-i}{r-i}\binom{[n/2]}{i}\prod_{s=0}^{i-1}\frac{m+[\frac{n+1}{2}]+s}{2m+2s+1} 
=\binom{[n/2]}{r}\prod_{s=1}^r\frac{2[\frac{n+1}{2}]-2s+1}{2m+2s-1}.
\end{align}
Notice that $\binom{[n/2]-i}{r-i}\binom{[n/2]}{i}=\binom{[n/2]}{r}\binom{r}{i} $ and cancel $\binom{[n/2]}{r}$ in both sides of \mref{identityBarrangements}.  After the cancellation each side of (\ref{identityBarrangements}) is a polynomial in $n$ of degree $r$ with the highest coefficient 1 both for even and odd $n$ cases. It is easy to see in each case that the roots of the corresponding pair of polynomials coincide which implies the Lemma.

We continue to study the quasi-invariance conditions for the configuration $\amn$ with the help of Proposition \ref{eqofconfigs}. Firstly we note the following technical result.

\begin{lemma}\label{construction2} Let
\begin{equation}\label{erintheconfigequality}
\widehat{e}_r=\binom{[n/2]}{r}\prod_{i=1}^r\frac{2[\frac{n+1}{2}]-2i+1}{2m+2i-1},
\end{equation}
where $0 \leq r \leq [n/2]$. For $1 \leq s \leq [n/2]$ and $1 \leq L \leq [(n+1)/2]-s+1$ let $B_L$ be defined by (\ref{BL}) with $X=m-s$. Then $B_L=0$.
\end{lemma}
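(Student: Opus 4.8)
The idea is to plug the explicit values \mref{erintheconfigequality} of $\widehat e_r$ into the closed form \mref{BL} for $B_L$ from Lemma \ref{construction1}, specialised at $X=m-s$, and show that the resulting sum over $i$ telescopes or reduces to an evaluable hypergeometric sum that vanishes because of the constraint $1\le L\le [(n+1)/2]-s+1$. Concretely, after cancelling the overall factor $\Delta\prod_{i=1}^{[n/2]}\alpha_i^{2L-2}$, one is left with
$$
\sum_{i=0}^{[n/2]}(-1)^i\Big(\prod_{r=i+1}^{[n/2]}(2[n/2]-2r+2L-1)\Big)\Big(\prod_{r=1}^{i}(2(m-s)+2r+2[\tfrac{n+1}{2}]-2L+1)\Big)\widehat e_i,
$$
and substituting \mref{erintheconfigequality} for $\widehat e_i$ turns each product into a ratio of Pochhammer symbols. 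I would rewrite the two descending/ascending products and the binomial $\binom{[n/2]}{i}$ in the standard $(a)_i$ notation, so that the whole sum becomes (up to a nonzero prefactor independent of $i$) a terminating $\,_2F_1$ or $\,_3F_2$ evaluated at $1$.

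The key algebraic point I expect is that the top of the first product, $\prod_{r=i+1}^{[n/2]}(2[n/2]-2r+2L-1)$, contributes the factor $(2L-1)(2L+1)\cdots$ and is an $i$-dependent truncation of a fixed product; pulling out the $i$-independent piece leaves a factor like $\big(\tfrac12(2L-1)\big)_{?}$ in the denominator. Meanwhile the second product and the $\prod_{i=1}^r (2[\tfrac{n+1}{2}]-2i+1)/(2m+2i-1)$ in $\widehat e_i$ combine so that the shift by $s$ inside $X=m-s$ exactly lines up one numerator Pochhammer with one denominator Pochhammer, producing a $\,_2F_1(-[n/2],\,b;\,c;1)$ of Chu–Vandermonde type (or a Saalsch\"utzian $\,_3F_2$ as used already in the proof of Lemma \ref{valuesforfi}). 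The Chu–Vandermonde / Saalsch\"utz evaluation then gives a product of Pochhammer symbols, one factor of which is $(-L+1)_{\,s-1}$ or similar; since $1\le L$ and the allowed range $L\le [(n+1)/2]-s+1$ forces this Pochhammer to contain a zero (a nonpositive integer among consecutive integer factors), the whole expression vanishes. I would check the even and odd $n$ cases separately at the very end, since $[n/2]$ and $[(n+1)/2]$ differ there, exactly as in the proofs of Lemmas \ref{valuesforfi} and \ref{valuesforeiarrange}; the bookkeeping of floor functions is routine but must be done carefully.

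The main obstacle will be identifying the correct hypergeometric normalisation: matching the two products in \mref{BL} and the Pochhammer form of $\widehat e_i$ to a \emph{recognisable} closed-form summation, and then pinning down which Pochhammer factor in the evaluated product is killed by the hypothesis $L\le [(n+1)/2]-s+1$. An alternative, if the direct hypergeometric route is messy, is to argue by a dimension/rank count: $B_L$ is (up to the nonzero factor $\Delta\prod\alpha_i^{2L-2}$) a fixed polynomial in $\widehat e_0,\dots,\widehat e_{[n/2]}$, and one can instead show that the vector $(\widehat e_0,\dots,\widehat e_{[n/2]})$ given by \mref{erintheconfigequality} lies in the common kernel of all the linear functionals $B_L$, $1\le L\le [(n+1)/2]-s+1$, by exhibiting $\widehat e$ as (a specialisation of) the unique solution to the recurrence that these functionals encode — this is parallel to how Theorem \ref{1st-id-implies-condig} was proved from the recurrence \mref{rec}. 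Either way the content is a single terminating binomial identity, and the hard part is purely the combinatorial identification, not the structure of the argument.
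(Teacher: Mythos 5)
Your opening move is exactly the paper's: substitute the explicit values \eqref{erintheconfigequality} into \eqref{BL} with $X=m-s$, cancel the prefactor $\Delta\prod_{i=1}^{[n/2]}\alpha_i^{2L-2}$, and reduce the claim to the vanishing of a single terminating alternating sum. But the proposal stops exactly where the proof has to happen: you yourself flag ``identifying the correct hypergeometric normalisation'' and ``pinning down which Pochhammer factor is killed'' as the main obstacle and do not carry either out, and your guess that the evaluated product contains a factor like $(-L+1)_{s-1}$ is not verified. As written this is a plan, not a proof.

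The paper's mechanism is simpler than a Saalsch\"utz evaluation, and you should know it because it closes your gap cleanly. Writing $\widehat e_i=\binom{[n/2]}{i}\prod_{j=1}^i\frac{2[(n+1)/2]-2j+1}{2m+2j-1}$, the product $\prod_{r=1}^{i}\bigl(2(m-s)+2r+2[\frac{n+1}{2}]-2L+1\bigr)$ telescopes against $\prod_{j=1}^i(2m+2j-1)^{-1}$ into a constant times $\prod_{r=0}^{[(n+1)/2]-s-L}(2m+2i+2r+1)$, which is a genuine \emph{polynomial in $i$} of degree $[(n+1)/2]-s-L+1$ (this is where $L\le[(n+1)/2]-s+1$ is used); similarly $\prod_{r=i+1}^{[n/2]}(2[n/2]-2r+2L-1)$ combines with $\prod_{j=1}^i(2[(n+1)/2]-2j+1)$ into a constant times a polynomial in $i$ of degree $L-1$ (even $n$) or $L-2$ (odd $n$), using $L\ge 1$. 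The summand is therefore $(-1)^i\binom{[n/2]}{i}U(i)$ with $\deg U=[n/2]-s<[n/2]$ since $s\ge 1$, and the sum vanishes because the $[n/2]$-th finite difference $\sum_{i=0}^{N}(-1)^i\binom{N}{i}U(i)$ of a polynomial of degree less than $N$ is zero. Your hypergeometric route is a heavier repackaging of this same fact and could be made to work, but the decisive input is the degree count above, not a vanishing Pochhammer of the form you conjectured; your alternative ``common kernel of the functionals $B_L$'' route is likewise only a sketch and would still require this identity or an equivalent one.
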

\begin{proof} We have
{\small
\begin{align*}B_L&=\Delta \prod_{i=1}^{[n/2]}\alpha_i^{2L-2}\sum_{i=0}^{[n/2]}(-1)^i\prod_{r=i+1}^{[n/2]}(2[n/2]-2r+2L-1)\prod_{r=1}^{i}(2m-2s+2r+2[(n+1)/2]-2L+1)\widehat{e}_i\\
&=y \Delta \prod_{i=1}^{[n/2]}\alpha_i^{2L-2} \sum_{i=0}^{[n/2]}(-1)^i\binom{[n/2]}{i}U(i),
\end{align*}
}
where
\begin{equation*}y=\prod_{r=1}^{[n/2]}(2m+2r-1)^{-1} \prod_{r=[(n+1)/2]-s-L+1}^{[n/2]-1}(2m+2r+1)\prod_{r=2L-[(n+1)/2]}^{L}(2[(n+1)/2]-2L+2r-1),
\end{equation*}
and
\begin{equation*}U(x)=\prod_{r=0}^{[(n+1)/2]-s-L}(2m+2x+2r+1)\prod_{r=0}^{\tilde L}(2[(n+1)/2]-2x+2r+1),
\end{equation*}
with $\tilde L= L-2$ when $n$ is even, and $\tilde L= L-3$ when $n$ is odd.
Note that $U(x)$ is a polynomial of degree $[n/2]-s<[n/2]$. It follows that $B_L=0$ by a standard result on sums of binomial coefficients.
\end{proof}

We will also need the following few lemmas.

\begin{lemma}\label{corrtoconstruction3} Let $B_L$ be defined by (\ref{BL}) with $X\ge 0$, and suppose that $\Delta \prod_{i=1}^{[n/2]}\alpha_i\ne 0$. The system of equations $B_L= 0$ where $1 \leq L \leq n/2$ if $n$ is even and $2 \leq L \leq \frac{n+1}{2}$ if $n$ is odd as the system of linear equations for the unknowns $\widehat{e}_1, \ldots ,\widehat{e}_{[n/2]}$ has a unique solution.
\end{lemma}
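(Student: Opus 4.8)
The statement is about a square linear system: there are $[n/2]$ equations, one for each $L$ in the stated range, in the $[n/2]$ unknowns $\widehat e_1,\dots,\widehat e_{[n/2]}$, so it has a unique solution precisely when its coefficient matrix is invertible. The plan is to write this matrix down and show that its determinant is nonzero for $X\ge 0$.

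First I would make the system explicit. Since $\Delta\prod_{i=1}^{[n/2]}\alpha_i\ne 0$, by~\mref{BL} each equation $B_L=0$ is equivalent, after cancelling the nonzero common factor $\Delta\prod_i\alpha_i^{2L-2}$, to
\begin{gather*}
\sum_{i=1}^{[n/2]}(-1)^i a_i(L)\,\widehat e_i=-\,a_0(L),\\
a_i(L)=\prod_{r=i+1}^{[n/2]}\bigl(2[n/2]-2r+2L-1\bigr)\prod_{r=1}^{i}\bigl(2X+2r+2[\tfrac{n+1}{2}]-2L+1\bigr),
\end{gather*}
where we used $\widehat e_0=1$. Thus the coefficient matrix is $M=\bigl[(-1)^i a_i(L)\bigr]$ with $L$ in the stated range and $i=1,\dots,[n/2]$, and $M$ is precisely the submatrix of the matrix $Q$ occurring in the proof of Lemma~\ref{construction3} obtained by deleting the column $i=0$ and the last row. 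So it suffices to prove $\det M\ne 0$ for $X\ge 0$.

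Next I would study $\det M$ as a polynomial in $X$. Its leading coefficient equals, up to a nonzero constant, the determinant of the matrix $\bigl[\prod_{r=i+1}^{[n/2]}(2[n/2]-2r+2L-1)\bigr]_{L,\,i}$, whose columns are polynomials in $L$ of the pairwise distinct degrees $[n/2]-i$; this is a nonzero generalized Vandermonde, so $\det M\not\equiv 0$ and $\deg_X\det M=\binom{[n/2]+1}{2}$. To locate the roots I would apply the column operations from the proof of Lemma~\ref{construction3} (iteratively subtract each column from its left neighbour), which extract the factor $\prod_{k=1}^{[n/2]-1}(X+n+1-k)^{[n/2]-k}$; because $M$ has one column fewer than $Q$ at each stage, this product accounts only for degree $\binom{[n/2]}{2}$, leaving a residual cofactor of degree $[n/2]$ in $X$. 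I would evaluate this residual by a Laplace expansion along the one column never altered, namely $(-1)^{[n/2]}\prod_{r=1}^{[n/2]}(2X+2r+2[\tfrac{n+1}{2}]+1-2L)$, pulling out one row factor at a time; I expect the outcome to be a nonzero constant times $\prod_L(2X+2[\tfrac{n+1}{2}]+3-2L)$, the product taken over the stated range of $L$. Since $k\le[n/2]-1$ and $L\le[n/2]\le[\tfrac{n+1}{2}]$, every linear factor of $\det M$ then vanishes at some $X<0$, which gives $\det M\ne 0$ for all $X\ge 0$. The odd $n$ case runs the same way after replacing the row range by $2\le L\le\tfrac{n+1}{2}$, the only change being the index shift in the extracted factors that is already present in Lemma~\ref{construction3}.

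The routine part here is the linear algebra over $\C[X]$ already carried out in the proofs of Lemmas~\ref{construction1} and~\ref{construction3}. The genuinely new and delicate point is the residual evaluation: in contrast to Lemma~\ref{construction3}, the column operations do not reduce $\det M$ to a product of factors of the form $X+n+1-k$ alone, so one must compute the surviving degree-$[n/2]$ cofactor explicitly and verify that all its roots are negative. As a reality check this gives $\det M=-(2X+3)$ for $n=2$ and $\det M=4(X+4)(2X+3)(2X+5)$ for $n=4$, both clearly nonvanishing for $X\ge 0$.
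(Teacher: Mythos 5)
Your setup is right and your end formula is correct, but the decisive step is asserted rather than proved. Reducing the lemma to $\det M\ne 0$, computing the exact degree $\binom{[n/2]+1}{2}$ via the generalized Vandermonde leading coefficient, and extracting the factor $\prod_{k=1}^{[n/2]-1}(X+n+1-k)^{[n/2]-k}$ of degree $\binom{[n/2]}{2}$ by the column operations of Lemma~\ref{construction3} are all fine, and you correctly identify that, unlike in Lemma~\ref{construction3}, a residual factor of degree $[n/2]$ survives and must be shown to have only negative roots. But at exactly that point you write ``I expect the outcome to be'' a constant times $\prod_L\bigl(2X+2[\tfrac{n+1}{2}]+3-2L\bigr)$ and check only $n=2,4$. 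Moreover, the method you propose for the general case --- Laplace expansion along the unaltered last column --- would not deliver this: the expansion is a \emph{sum} over rows, each term carrying a different linear factor from its own row's entry, and a sum of such terms is not visibly divisible by the product of all of them. So the proof as written has a genuine gap at its crux.

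The gap is easy to close, and more cheaply than by computing the cofactor. For every column index $i\ge 1$ the product $\prod_{r=1}^{i}\bigl(2X+2r+2[\tfrac{n+1}{2}]-2L+1\bigr)$ in \eqref{BL} contains the $r=1$ factor $2X+2[\tfrac{n+1}{2}]+3-2L$; since the $\widehat e_0$ column ($i=0$, the only column with an empty such product) has been deleted, this linear form is a common factor of \emph{every} entry of row $L$ of $M$, hence $H(X)=\prod_{L}\bigl(2X+2[\tfrac{n+1}{2}]+3-2L\bigr)$ divides $\det M$ directly. Its roots are half-integers, while the roots of the column-operation factor are the integers $-(n+1-k)$, so the two factors are coprime; their degrees sum to $\binom{[n/2]+1}{2}=\deg_X\det M$, so $\det M$ is a nonzero constant times their product, and all roots are negative, giving $\det M\ne0$ for $X\ge0$. (For comparison, the paper offers no details at all here --- it simply states that the lemma follows from the proof of Lemma~\ref{construction3}; as you observed, that proof does not transfer verbatim, because the special evaluation $2X+2[n/2]+1=0$ makes $M$ strictly lower triangular and the extracted factor no longer exhausts the degree, so some supplement of the above kind is genuinely needed.)
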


The Lemma follows from the proof of Lemma \ref{construction3}.
Recall now the notation of Lemma~\ref{construction1}.

\begin{lemma}\label{construction7} Let $0 \leq s \leq n-1$. Fix $q\in \Z_{\ge 0}$ such that $q\ge s - [(n+1)/2]$.  Let $\kappa=\{k_1, k_2, \ldots ,k_{q}\}$ where $1 \leq k_i \leq [n/2]$ are such that $k_i \ne k_j$ if $i\ne j$.
Take $L \in \Z$ such that $1 \le L \le q+[(n+1)/2]-s+1$.
Let $D_L^{\kappa}$ be the minor of $B$ formed by taking  the square submatrix with columns $C_L(X), C_{L+1}(X), \ldots ,C_{L+[n/2]-q-1}(X)$ and we include all rows of $B$ except rows $k_1, k_2, \ldots k_{q}$. Then
\begin{align}\label{Ek}
D_L^{\kappa}=&\Delta_{\kappa} \prod_{\substack{i=1\\i\not\in \kappa}}^{[n/2]}\alpha_i^{2L-2}
\sum_{i=0}^{[n/2]-q}(-1)^i\prod_{r=i+1}^{[n/2]-q}(2[n/2]-2r-2q+2L-1)
\notag \\
&\times \prod_{r=1}^{i}(2X+2r+2q+2[(n+1)/2]-2L+1)\widehat{e}_i^{\kappa},
\end{align}
where $\Delta_\kappa= \prod_{\nad{i<j}{i,j \notin \kappa}}^{[n/2]} (\alpha_i^2-\alpha_j^2)$, and $\widehat{e}_i^{\kappa}$ denotes the $i$-th elementary symmetric polynomials in the variables $\alpha_i^2, 1 \leq i \leq [n/2]$, $i \ne k_j$ where $j=1,\ldots, q$. 
\end{lemma}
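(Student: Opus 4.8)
The plan is to reproduce verbatim the computation of Lemma~\ref{construction1}, the sole change being that we now work with the $m':=[n/2]-q$ variables $\alpha_i^2$ with $i\notin\kappa$ in place of all $[n/2]$ of them (Lemma~\ref{construction1} is the case $q=0$). First I would relabel the surviving row indices as $i_1<i_2<\ldots<i_{m'}$ and put $\beta_k=\alpha_{i_k}^2$, so that $D_L^{\kappa}$ is the determinant of the $m'\times m'$ matrix whose columns are the restrictions of $C_L(X),C_{L+1}(X),\ldots,C_{L+m'-1}(X)$ to the rows $i_1,\ldots,i_{m'}$. Exactly as in Lemma~\ref{construction1}, the restriction of $C_i(X)$ splits as $x_{2i-2}-y_{2i}$, where $x_j=(j+1)(\alpha_{i_1}^{j},\ldots,\alpha_{i_{m'}}^{j})^T$ and $y_j=(2X+2n-j+1)(\alpha_{i_1}^{j},\ldots,\alpha_{i_{m'}}^{j})^T$.

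Next I would expand $D_L^{\kappa}$ multilinearly in its $m'$ columns. A term which selects the $x$-part of column $i$ together with the $y$-part of column $i-1$ has two proportional columns, both equal to a scalar times $(\beta_k^{\,i-1})_k$, and therefore vanishes; consequently the only surviving terms are those in which the chosen $y$-parts occupy a terminal block of $t$ columns, $0\le t\le m'$, giving $D_L^{\kappa}=\sum_{t=0}^{m'}(-1)^t D_L^{\kappa,t}$. In $D_L^{\kappa,t}$ one pulls out the scalar factors $\prod_{i=L+m'-t}^{L+m'-1}(2X+2n-2i+1)$ and $\prod_{i=L}^{L+m'-t-1}(2i-1)$ together with the common row factor $\prod_{i\notin\kappa}\alpha_i^{2L-2}$, and is left, up to a global sign independent of $t$, with the determinant of $(\beta_k^{\,j})$ where $j$ ranges over $\{0,1,\ldots,m'\}\setminus\{m'-t\}$. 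By the bialternant formula this determinant equals $\Delta_\kappa$ times the Schur function $s_{(1^t)}(\beta)$, i.e. $\Delta_\kappa\,\widehat e_t^{\kappa}$.

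Finally I would match indices with \mref{Ek}. The substitution $i=L+m'-r$ turns $\prod_{i=L}^{L+m'-t-1}(2i-1)$ into $\prod_{r=t+1}^{[n/2]-q}(2[n/2]-2r-2q+2L-1)$, and $\prod_{i=L+m'-t}^{L+m'-1}(2X+2n-2i+1)$ into $\prod_{r=1}^{t}(2X+2r+2q+2[\tfrac{n+1}{2}]-2L+1)$, where I use the identity $2n-2[n/2]=2[\tfrac{n+1}{2}]$, valid for both parities of $n$; renaming $t$ as $i$ then reproduces \mref{Ek} exactly. That at least one admissible $L$ exists, i.e. $L+m'-1\le n-s$, follows from the hypothesis $q\ge s-[(n+1)/2]$. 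The computation is routine; the only point requiring care is the sign bookkeeping in the reduction of $D_L^{\kappa,t}$ to a Schur determinant, since reordering the columns into decreasing exponent order introduces a factor $(-1)^{m'(m'-1)/2}$ which, being independent of $t$, must be reconciled with the normalisation already fixed in Lemma~\ref{construction1} (equivalently, one checks that the $i=0$ term of \mref{Ek} carries a plus sign). This is the identical verification performed implicitly in Lemma~\ref{construction1}, so no new difficulty arises.
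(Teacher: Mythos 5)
Your proposal is correct and follows exactly the route the paper intends: the paper's entire proof of this lemma is the remark that it is the same as the proof of Lemma \ref{construction1}, and you have simply carried out that argument with the rows indexed by $\kappa$ deleted (so with $[n/2]-q$ variables $\alpha_i^2$, $i\notin\kappa$, in place of all $[n/2]$), including the correct cancellation of mixed $x$/$y$ terms, the index substitution $i=L+m'-r$, and the identity $2n-2[n/2]=2[(n+1)/2]$. The sign bookkeeping you flag is present already in Lemma \ref{construction1} and is harmless for the way these minors are used, so no new issue arises.
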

The proof is same as  the one of Lemma \ref{construction1}.

\begin{lemma}\label{construction8} In the assumptions of Lemma \ref{construction7} let $L$ be fixed. Suppose that  $q\ge 1$ and that
$\Delta \prod_{i=1}^{[n/2]} \alpha_i \ne 0$.
 Then $\exists \kappa$ such that $D_L^{\kappa} \neq 0$.
\end{lemma}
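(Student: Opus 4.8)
The plan is to argue by contradiction: suppose $D_L^{\kappa}=0$ for every admissible index set $\kappa$ of size $q$. The idea is to use formula \eqref{Ek} to convert this family of vanishing conditions into an overdetermined linear system for a single collection of unknowns, and then derive a contradiction from a rank/determinant computation, in the spirit of Lemma \ref{construction3} and Corollary \ref{corrtoconstruction3}.

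First I would reduce to the generic case. Since $\Delta\prod_{i=1}^{[n/2]}\alpha_i\ne 0$, for any $\kappa$ the prefactor $\Delta_{\kappa}\prod_{i\notin\kappa}\alpha_i^{2L-2}$ in \eqref{Ek} is nonzero, so $D_L^{\kappa}=0$ is equivalent to the vanishing of the scalar sum
$$
S^{\kappa}:=\sum_{i=0}^{[n/2]-q}(-1)^i\prod_{r=i+1}^{[n/2]-q}(2[n/2]-2r-2q+2L-1)\prod_{r=1}^{i}(2X+2r+2q+2[(n+1)/2]-2L+1)\,\widehat{e}_i^{\kappa}.
$$
Now here is the key observation: the coefficients $c_i:=(-1)^i\prod_{r=i+1}^{[n/2]-q}(2[n/2]-2r-2q+2L-1)\prod_{r=1}^{i}(2X+2r+2q+2[(n+1)/2]-2L+1)$ depend only on $i$ (and on the fixed data $L,X,n,q$), \emph{not} on $\kappa$, and moreover $c_0\ne 0$. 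On the other hand $\widehat e_i^{\kappa}=\widehat e_i^{\kappa}(\alpha_1^2,\dots,\alpha_{[n/2]}^2)$ are, for varying $\kappa$, the elementary symmetric functions of the $[n/2]-q$ variables obtained by deleting $q$ of the $\alpha_j^2$. So the hypothesis becomes: the polynomial identity
$$
\sum_{i=0}^{[n/2]-q} c_i\, e_i(\text{the remaining }\alpha_j^2)=0
$$
holds for every $([n/2]-q)$-element subset of $\{\alpha_1^2,\dots,\alpha_{[n/2]}^2\}$. Equivalently, writing $T$ for the set of the $[n/2]-q$ kept indices, $\prod_{j\in T}(t-\alpha_j^2)$ evaluated appropriately, one gets that the univariate polynomial $g(t)=\sum_i c_i\,(-1)^? t^{[n/2]-q-i}$ — more precisely the polynomial whose coefficients are $c_i$ up to sign — has the property that a suitable symmetric expression vanishes for all such $T$. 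The cleanest route: $\sum_i c_i e_i(\alpha_{j}^2, j\in T)$ is, up to the overall sign $(-1)^{[n/2]-q}$, the value $h(-1)$ of $h(t):=\prod_{j\in T}(t-\alpha_j^2)$ only if all $c_i$ equal $1$; in general it is $\sum_i c_i e_i$, a linear functional on the coefficients of $h$. I would instead fix the contradiction directly at the level of the $c_i$: since $c_0\ne 0$, the relation $\sum_i c_i \widehat e_i^{\kappa}=0$ for \emph{all} $\kappa$ forces, after choosing enough subsets $\kappa$, that the vector $(c_0,\dots,c_{[n/2]-q})$ is orthogonal to the span of the vectors $(\widehat e_0^{\kappa},\dots,\widehat e_{[n/2]-q}^{\kappa})$; so it suffices to show these span vectors are \emph{not} contained in any hyperplane with nonzero first coordinate, i.e.\ that their span contains a vector with $\widehat e_0^{\kappa}=1$ component fixed but arbitrary higher components — which is automatic since $\widehat e_0^{\kappa}=1$ always and, as $\Delta\prod\alpha_i\ne 0$, varying $\kappa$ over all $q$-subsets makes the tuples $(\widehat e_1^{\kappa},\dots,\widehat e_{[n/2]-q}^{\kappa})$ span an affine subspace of dimension $[n/2]-q$. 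The cleanest way to see the latter is to pick $\kappa$ making a Vandermonde-type minor nonzero, exactly as in Lemma \ref{construction3}: one constructs explicitly, using the nonvanishing of $\Delta$ and of the $\alpha_i$, a specific $\kappa$ for which the submatrix of \eqref{Ek}-coefficients against $(\widehat e_i^{\kappa})$ is triangular with nonzero diagonal, forcing $S^{\kappa}\ne 0$.

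Concretely, the steps in order: (1) factor out the nonzero prefactor to reduce $D_L^{\kappa}=0$ to $S^{\kappa}=0$; (2) record that $c_0\ne 0$ (this uses $\prod_{r=1}^{[n/2]-q}(2[n/2]-2r-2q+2L-1)\ne 0$, which holds because the factors are odd integers, hence nonzero — this needs a short check that the argument $2[n/2]-2r-2q+2L-1$ is genuinely odd and that the bound $1\le L$ keeps it from being forced to vanish, analogous to the parity argument in Lemma \ref{construction2}); (3) choose $\kappa=\{1,2,\dots,q\}$ (or some relabelling making a Vandermonde minor nonzero) and compute $S^{\kappa}$ for that choice directly, showing it is a nonzero multiple of a determinant of Vandermonde type in the remaining $\alpha_j^2$ plus possibly a polynomial-in-$X$ factor that is nonzero for $X\ge 0$, mirroring the factorization $\prod_{i}(X+n+1-i)^{\cdots}$ obtained in the proof of Lemma \ref{construction3}; (4) conclude that $S^{\kappa}\ne 0$ for that $\kappa$, contradicting the assumption, so some $D_L^{\kappa}\ne 0$.

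The main obstacle I anticipate is step (3): producing the \emph{explicit} $\kappa$ (and possibly an auxiliary specialization of $X$, then a degree-counting argument to rule out positive real roots, exactly as in Lemma \ref{construction3}) for which $S^{\kappa}$ is visibly nonzero. One has to be careful that, unlike in Corollary \ref{corrtoconstruction3} where a \emph{square} system is solved, here $L$ is a single fixed value, so only one equation per $\kappa$ is available; the work is in verifying that ranging $\kappa$ over all $q$-subsets already forces the coefficient vector $(c_i)$ to be zero unless $c_0=0$, using that the $\widehat e_i^{\kappa}$ tuples are genuinely $([n/2]-q+1)$ affinely independent — which in turn rests on $\Delta\prod_i\alpha_i\ne 0$. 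I expect this to go through by the same column-subtraction and parity bookkeeping already used in the proofs of Lemmas \ref{construction2} and \ref{construction3}, and I would cite \cite{DJ} for the routine binomial-sum verifications if they become unwieldy.
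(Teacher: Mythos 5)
Your overall strategy is the right one and matches the paper's: argue by contradiction, cancel the nonzero prefactor $\Delta_\kappa\prod_{i\notin\kappa}\alpha_i^{2L-2}$ in \eqref{Ek}, observe that the $i=0$ coefficient $c_0=\prod_{r=1}^{[n/2]-q}(2[n/2]-2r-2q+2L-1)$ is a product of odd integers and hence automatically nonzero (no bound on $L$ is needed for this), and then force the whole vector $(c_0,\ldots,c_{[n/2]-q})$ to vanish by showing that the tuples $(\widehat e_0^{\kappa},\ldots,\widehat e_{[n/2]-q}^{\kappa})$ span $\C^{[n/2]-q+1}$ as $\kappa$ varies. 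But that spanning statement is the entire technical content of the lemma and you never prove it; worse, the concrete plan you give in step (3) --- fix a single $\kappa$, say $\{1,\ldots,q\}$, and show $S^{\kappa}$ is ``visibly nonzero'' as a Vandermonde-type expression times a factor nonvanishing for $X\ge 0$ --- cannot work. For one fixed $\kappa$ the quantity $S^{\kappa}=\sum_i c_i\widehat e_i^{\kappa}$ is a single nonconstant polynomial in the surviving $\alpha_j^2$, and the hypotheses (distinct, nonzero $\alpha_j^2$) do not prevent it from vanishing: sums of exactly this shape do vanish at admissible configurations (compare Lemma \ref{construction2}, where the analogous sums are shown to be zero for $\amn$). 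Any correct proof must therefore play several $\kappa$'s against one another, and the $X\ge 0$ root-counting machinery of Lemma \ref{construction3} that you invoke is not the relevant tool here.

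The paper closes this gap as follows. Take the $[n/2]-q+1$ index sets $\kappa_j=\{1,\ldots,q-1,q+j-1\}$ for $j=1,\ldots,[n/2]-q+1$, and form the square matrix $A=(a_{jl})$ with $a_{jl}=\widehat e_l^{\kappa_j}$, so that the assumed vanishing of all $D_L^{\kappa_j}$ becomes $AY=0$ with $Y=(c_0,\ldots,c_{[n/2]-q})$. As a polynomial in $\alpha_q^2,\ldots,\alpha_{[n/2]}^2$, $\det A$ has degree $\binom{[n/2]-q+1}{2}$ and vanishes whenever $\alpha_l^2=\alpha_m^2$ for $q\le l<m\le[n/2]$ (two rows then coincide), so it is a nonzero multiple of $\prod_{q\le l<m\le[n/2]}(\alpha_l^2-\alpha_m^2)$ and hence nonzero under the hypothesis $\Delta\ne 0$. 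Then $Y=0$, contradicting $c_0\ne 0$. Note that $X$ plays no role at all: it enters only through the unknowns $c_i$, never through $A$. To repair your write-up, replace step (3) by this determinant-degree argument (or any other proof that a family of $[n/2]-q+1$ of the $\widehat e^{\kappa}$-tuples is linearly independent); as it stands the proposal has a genuine gap at its central step.
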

\begin{proof}
Suppose that $D_L^{\kappa}=0$ for the following collections of $\kappa=\kappa_j=\{1,2,\ldots, q-1, q+j-1\}$, where $j=1,  \ldots ,[n/2]-q+1$. We cancel the term $\Delta_\kappa \prod_{\substack{i=1\\i\not \in \kappa_j}}^{[n/2]}\alpha_i^{2L-2}$ in the equation $D_L^{\kappa_j}=0$, and consider the resulting conditions as a system of linear equations for the unknowns
\begin{align*}x_i=(-1)^i\prod_{r=i+1}^{[n/2]-q}(2[n/2]-2r-2q+2L-1)\prod_{r=1}^{i}(2X+2r+2q+2[(n+1)/2]-2L+1),
\end{align*}
where $i=0, \ldots ,[n/2]-q$. The system takes the matrix form
$AY=0$,
where $A=(a_{jl})$, $1 \leq j \leq [n/2]-q+1$, $0 \leq l \leq [n/2]-q$, $a_{jl}=\widehat{e}_l^{\kappa_j}$ and $Y=(x_{0}, \ldots ,x_{[n/2]-q})$. Note that the determinant of $A$ has degree
$$1+2+ \ldots +([n/2]-q)=\binom{[n/2]-q+1}{2}$$
as a polynomial in  $\alpha_q^2, \ldots, \alpha_{[n/2]}^2$. We claim that $\det A\neq 0$. This can be seen by setting $\alpha_l^2=\alpha_m^2$ for $q\leq l<m \leq [n/2]$. In this situation we have $\widehat{e}_i^{{\kappa}_{l-q+1}}=\widehat{e}_i^{{\kappa}_{m-q+1}}$ for all $ 0 \leq i \leq [n/2]-q$ and thus $\det A=0$. Since $\det A$ has degree $\binom{[n/2]-q+1}{2}$ it has no zeroes under our assumptions.  This is a contradiction as $x_0\ne 0$ so $Y\ne 0$. \end{proof}

\begin{lemma}\label{construction11} In the notation of Lemma \ref{construction7}, set $s=1$. Fix $q=1$ and $k \in \N$ such that $1\le k \le [n/2]$. Suppose $X\ge 0$ and $\Delta\prod_{i=1}^{[n/2]} \alpha_i\ne 0$.     Then $\exists L, 2 \leq L \leq [(n+1)/2]+1$ such that $D_L^k \neq 0$.
\end{lemma}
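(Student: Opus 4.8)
To prove this I would first put $D_L^k$ into the explicit form supplied by Lemma~\ref{construction7}. With $q=1$, $\kappa=\{k\}$ and $s=1$ that lemma gives
\[
D_L^k=\Delta_\kappa\Big(\prod_{\substack{i=1\\ i\neq k}}^{[n/2]}\alpha_i^{2L-2}\Big)S_L,\qquad S_L=\sum_{i=0}^{[n/2]-1}(-1)^i\Bigg(\prod_{r=i+1}^{[n/2]-1}(2[n/2]-2r+2L-3)\Bigg)\Bigg(\prod_{r=1}^{i}(2X+2r+2[(n+1)/2]-2L+3)\Bigg)\widehat{e}_i^{\kappa}.
\]
Under the hypothesis $\Delta\prod_{i=1}^{[n/2]}\alpha_i\neq0$ the scalar $\Delta_\kappa\prod_{i\neq k}\alpha_i^{2L-2}$ is nonzero, so $D_L^k=0$ exactly when the linear form $S_L$ in $(\widehat{e}_0^{\kappa},\dots,\widehat{e}_{[n/2]-1}^{\kappa})$ vanishes. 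Comparing $S_L$ with the bracketed sum in Lemma~\ref{construction1}, one sees that $S_L$ is precisely the expression attached by Lemmas~\ref{construction1}--\ref{construction3} to the configuration formed by the $[n/2]-1$ numbers $\alpha_i^2$, $i\neq k$ --- that is, with $[n/2]$ there replaced by $[n/2]-1$ --- and with the parameter $X$ there replaced by $X+2$.

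I would then argue by contradiction. Suppose $D_L^k=0$, equivalently $S_L=0$, for every $L$ with $2\le L\le[(n+1)/2]+1$. Viewing these as homogeneous linear equations in $\widehat{e}_0^{\kappa},\dots,\widehat{e}_{[n/2]-1}^{\kappa}$, the coefficient matrix $\widetilde Q(X)$ has its rows indexed by those $L$ and its columns by $i=0,\dots,[n/2]-1$; for even $n$ it is a square $\tfrac n2\times\tfrac n2$ matrix, while for odd $n$ it has one more row than column. Since $\widehat{e}_0^{\kappa}=1$, it suffices to show that $\widetilde Q(X)$ has trivial kernel --- i.e.\ that $\det\widetilde Q(X)\neq0$ when $n$ is even, and that some maximal minor of $\widetilde Q(X)$ is nonzero when $n$ is odd --- to reach a contradiction.

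This nonvanishing I would obtain by reproducing, with the substitutions $[n/2]\rightsquigarrow[n/2]-1$ and $X\rightsquigarrow X+2$, the two steps of the proof of Lemma~\ref{construction3}: first, specialise $X$ to the value that makes the products $\prod_{r=1}^i(\cdots)$ produce the triangular vanishing pattern, which shows $\det\widetilde Q(X)$ is not identically zero; second, perform the successive column subtractions of that proof to exhibit a factor $\prod_i(X+c_i)^{d_i}$ of $\det\widetilde Q(X)$ whose degree equals $\deg\det\widetilde Q(X)$ and all of whose roots are negative, so that $\det\widetilde Q(X)\neq0$ for $X\ge0$; note $X+2\ge0$ whenever $X\ge0$, so the positivity step survives the shift. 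For odd $n$ this can be said more cheaply: $D_L^k$ equals the minor $B_L$ of Lemma~\ref{construction1} for the reduced configuration of the $[n/2]-1$ numbers $\alpha_i^2$ ($i\neq k$) with parameter $X+2$, and Lemma~\ref{construction3} directly produces an $L$ with $2\le L\le\tfrac{n+1}2\le[(n+1)/2]+1$ for which $B_L\neq0$.

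The routine parts are the expansion via Lemma~\ref{construction7} and the two determinant manipulations, which are essentially verbatim those of Lemma~\ref{construction3}. I expect the only point needing genuine care --- the main obstacle --- to be the bookkeeping for even $n$: a direct appeal to Lemma~\ref{construction3} there yields only an $L$ in $\{1,\dots,n/2\}$, so one must actually re-run its argument with the window of row indices shifted to $\{2,\dots,[(n+1)/2]+1\}$ and check that both the triangularisation specialisation and the factorisation of the determinant still go through with this shifted window.
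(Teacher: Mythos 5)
Your proposal is correct and follows essentially the same route as the paper, which disposes of this lemma with the single remark that ``the proof is same as the one of Lemma \ref{construction3}'': you expand $D_L^k$ via Lemma \ref{construction7}, reduce to the nonsingularity of the coefficient matrix in the unknowns $\widehat{e}_i^{\kappa}$, and rerun the two determinant arguments (triangularising specialisation, then column subtractions yielding a factor with only negative roots) with the parameters shifted. Your explicit bookkeeping of the shifted window $2\le L\le[(n+1)/2]+1$ and of the substitutions $[n/2]\rightsquigarrow[n/2]-1$, $X\rightsquigarrow X+2$ is more detailed than what the paper records, but it is the same argument.
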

 The proof is same as the one of Lemma \ref{construction3}.


Now we investigate the dimension of quasi-invariants of degree $2(m+n-1)$. It appears that this dimension is the same for any configuration $\mathcal{A}$ of type $(m, 1^n)$ except for one configuration whose geometry is fully fixed, namely for the configuration $\arrange$.
\begin{proposition}\label{lemma2m+2n-2} Suppose $\mathcal{A}$ has type $(m,1^n)$. Then  $b_{2(m+n-1)}=m+n-1$ in the Hilbert series (\ref{poincare2}), unless $\mathcal{A}$ is equivalent to $\arrange$.
\end{proposition}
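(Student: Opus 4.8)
Following the proofs of Lemma~\ref{lemma2m+n-1} and Proposition~\ref{corron2m+2n}, write a homogeneous quasi-invariant $q$ of degree $2(m+n-1)=2m+2n-2$ as $q=\sum_{i=0}^{2m+2n-2}a_i x^{2m+2n-2-i}y^i$. The quasi-invariance conditions for $\beta_0=(0,1)$ force $a_1=a_3=\dots=a_{2m-1}=0$, leaving $m+2n-1$ free coefficients, and the conditions for $\beta_1,\dots,\beta_n$ amount to a linear system $AC=0$ with an $n\times(m+2n-1)$ matrix $A$. Hence $b_{2(m+n-1)}=m+2n-1-\rank A$, so the Proposition is equivalent to the assertion that $\rank A=n$ unless $\mathcal A\cong\arrange$. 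Since $\rank A\le n$ automatically, it suffices to prove that $\rank A<n$ implies $\mathcal A$ is equivalent to $\arrange$.

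So assume $\rank A<n$. First I would run the column reduction from the proof of Proposition~\ref{corron2m+2n} in the boundary case that formally corresponds to ``$t=-1$'': this brings $A$ to a block form with a Vandermonde-type block in the odd powers $\alpha_j,\alpha_j^3,\dots$ and a second block which, once the sign-symmetry \eqref{assumption} is imposed, coincides with the matrix $B$ of Lemma~\ref{construction1} with $s=1$ and parameter $X=m-1$ (this $X$ being the specialization at $t=-1$ of the value $X=m+t$ occurring in Proposition~\ref{corron2m+2n}). Next I would show that, after relabelling and passing to an equivalent configuration, $\mathcal A$ satisfies \eqref{assumption}. This is carried out exactly as in the second half of the proof of Proposition~\ref{corron2m+2n}: the successive Vandermonde-type determinants $Q_1,Q_1',Q_2,Q_2',\dots$ detect the equalities $\alpha_i^2=\alpha_j^2$, together with a vanishing coordinate when $n$ is odd, and hence establish \eqref{assumption}. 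The only input this uses about the $B$-block is that it contains nonzero minors of each size up to $[(n+1)/2]$; if that fails, then in particular all the minors $B_L$ of \eqref{BL} vanish, and the final step below already forces $\mathcal A\cong\arrange$, so there is no loss in assuming it.

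Now, with \eqref{assumption} in force, the reduced matrix is (up to the extra row present when $n$ is odd) the direct sum of a Vandermonde block of rank $[n/2]$ in $\alpha_1,\dots,\alpha_{[n/2]}$ and the block $B$ of Lemma~\ref{construction1} with $X=m-1$. Because of this block structure, a single nonzero minor $B_L$ would combine with a nonzero maximal Vandermonde minor to produce a nonzero $n\times n$ minor of $A$, contradicting $\rank A<n$; hence $B_L=0$ for all $L$ in the relevant range. By Lemma~\ref{construction1} each of these, after cancelling the common factor $\Delta\prod\alpha_i^{2L-2}$, is a linear equation for $\widehat e_0,\dots,\widehat e_{[n/2]}$ with $\widehat e_0=1$; by Lemma~\ref{corrtoconstruction3} the corresponding inhomogeneous linear system for $\widehat e_1,\dots,\widehat e_{[n/2]}$ has a unique solution, and by Lemma~\ref{construction2} (applied with $s=1$, so $X=m-1$) the values \eqref{er-hat-explicitely} satisfy it. Therefore the $\widehat e_r$ take the values \eqref{er-hat-explicitely}, and together with \eqref{assumption} Proposition~\ref{eqofconfigs} gives $\mathcal A\cong\arrange$. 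The low cases, in particular $n=1$ where $[n/2]=0$ and $\mathcal A\cong\arrange$ simply means the two vectors are orthogonal, are checked directly.

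I expect the main obstacle to be the second paragraph---faithfully transporting the relabelling and Vandermonde-minor bookkeeping of Proposition~\ref{corron2m+2n} to the degree $2(m+n-1)$, where the argument can no longer be closed by citing Lemma~\ref{construction3} for a contradiction, so that one must carry through, at each relabelling step, the dichotomy ``$B$-block nondegenerate $\Rightarrow\rank A=n$'' versus ``$B$-block degenerate $\Rightarrow$ the $\widehat e_r$ are forced to the values \eqref{er-hat-explicitely}, hence $\mathcal A\cong\arrange$''. The identification in the third paragraph, while it rests on the technical Lemmas~\ref{construction1}, \ref{corrtoconstruction3}, \ref{construction2} and Proposition~\ref{eqofconfigs}, is by comparison routine.
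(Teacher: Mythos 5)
Your overall architecture is the same as the paper's: reduce to showing that $\rank A<n$ forces $\mathcal A\cong\arrange$, establish the symmetry \eqref{assumption}, and then conclude from the vanishing of the minors $B_L$ (with $s=1$, $X=m-1$, exactly the parameters you identify) via Lemmas \ref{construction1}, \ref{corrtoconstruction3}, \ref{construction2} and Proposition \ref{eqofconfigs}. Your third paragraph is essentially the paper's final step verbatim, and your parameter bookkeeping ($X=m-1$ as the ``$t=-1$'' specialization) is correct.

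However, your second paragraph contains a genuine gap, precisely at the point you flag as the main obstacle. The input you propose to import from the proof of Proposition \ref{corron2m+2n} --- that the $\mathcal B$-block contains nonzero minors of each size up to $[(n+1)/2]$ --- is exactly what is \emph{unavailable} in the case of interest: when $\rank A<n$ and \eqref{assumption} holds, the block $\mathcal B$ has rank strictly less than $[(n+1)/2]$, so all its maximal minors vanish. Your fallback (``if that fails, all $B_L$ vanish and the final step forces $\mathcal A\cong\arrange$'') is circular: the final step requires \eqref{assumption} to even define the quantities $\widehat e_r$ and to invoke Proposition \ref{eqofconfigs}, whereas the failure of the minor condition occurs while you are still trying to \emph{establish} \eqref{assumption} through the relabelling argument with the determinants $Q_1,Q_1',Q_2,\dots$; at that stage there is no well-defined symmetric $\mathcal B$-block whose minors you can speak of. The paper closes this gap with Lemma \ref{construction11} (resting on Lemmas \ref{construction7} and \ref{construction8}): for \emph{any} configuration with $\Delta\prod_i\alpha_i\ne 0$, the block contains nonzero minors of every size up to $[n/2]-1$, obtained by deleting rows as well as restricting columns. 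These strictly submaximal minors are guaranteed unconditionally --- in particular they do not vanish even for $\arrange$ --- and they suffice both to show $\rank M\ge n-1$ and to run the Vandermonde bootstrapping that justifies \eqref{assumption}. Without this replacement input your dichotomy does not close, so you need to invoke Lemma \ref{construction11} (or reprove its content) in the second paragraph rather than only in the degenerate fallback.
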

\begin{proof}
Let $q$ be a homogeneous quasi-invariant polynomial of degree $2(m+n-1)$, let
\begin{equation*}
q=\sum_{i=0}^{2m+2n-2}a_{i}x^{2m+2n-2-i}y^{i},
\end{equation*}
where $a_i \in \mathbb{C}$, and $a_{2j-1}=0$ for $1\le j \le m$. The matrix of the system of quasi-invariance conditionce for the vectors $\beta_1,\ldots,\beta_n$ for the coefficients of $q$ has the structure
\begin{equation}\label{2m+2n-2}
{\small
M=\left( \begin{array}{cccccc|c}
\alpha_1^{2m+2n-3} & \alpha_1^{2m+2n-5} & \ldots & \alpha_1^5 & \alpha_1^3 & \alpha_1&\multirow{5}{*}{A}\\
\alpha_2^{2m+2n-3} & \alpha_2^{2m+2n-5} & \ldots & \alpha_2^5 & \alpha_2^3 & \alpha_2&\\
\alpha_3^{2m+2n-3} & \alpha_3^{2m+2n-5} & \ldots & \alpha_3^5 & \alpha_3^3 & \alpha_3&\\
\vdots&\vdots&\ldots&\vdots&\vdots&\vdots \\
\alpha_n^{2m+2n-3} & \alpha_n^{2m+2n-5} & \ldots & \alpha_n^5 & \alpha_n^3 & \alpha_n&\\
\end{array}\right),
}
\end{equation}
where the block A consists of $n-1$ columns each with the following structure
\begin{equation*}
A_i=\left( \begin{array}{c}
(2i-1)\alpha_1^{2i-2}-(2m+2n-2i-1)\alpha_1^{2i}\\
(2i-1)\alpha_2^{2i-2}-(2m+2n-2i-1)\alpha_2^{2i}\\
(2i-1)\alpha_3^{2i-2}-(2m+2n-2i-1)\alpha_3^{2i}\\
\vdots \\
(2i-1)\alpha_n^{2i-2}-(2m+2n-2i-1)\alpha_n^{2i}\\
\end{array}\right)
\end{equation*}
with $1 \leq i \leq n-1$.  Assume initially that the symmetry assumption \mref{assumption} holds. Then the matrix $M$ is equivalent to
\begin{equation}\label{2m+2n-2withallpairs}
{\small
\left( \begin{array}{cccccc|c}
\alpha_1^{2m+2n-3} & \alpha_1^{2m+2n-5} & \ldots & \alpha_1^5 & \alpha_1^3 & \alpha_1&\multirow{5}{*}{0}\\
\alpha_2^{2m+2n-3} & \alpha_2^{2m+2n-5} & \ldots & \alpha_2^5 & \alpha_2^3 & \alpha_2&\\
\alpha_3^{2m+2n-3} & \alpha_3^{2m+2n-5} & \ldots & \alpha_3^5 & \alpha_3^3 & \alpha_3&\\
\vdots&\vdots&\ldots&\vdots&\vdots&\vdots \\
\alpha_{[n/2]}^{2m+2n-3} & \alpha_{[n/2]}^{2m+2n-5} & \ldots & \alpha_{[n/2]}^5 & \alpha_{[n/2]}^3 & \alpha_{[n/2]}&\\
\hline
&&\multirow{2}{*}{0}&&&& \mathcal B
\end{array}\right),
}
\end{equation}
where
the block $\mathcal B$ consists of $n-1$ columns, each of which has the following structure for odd $n$:
\begin{equation}\label{blockB2m+2n-2} {\mathcal C}_i=
\left( \begin{array}{c}
(2i-1)\alpha_1^{2i-2}-(2m+2n-2i-1)\alpha_1^{2i}\\
(2i-1)\alpha_2^{2i-2}-(2m+2n-2i-1)\alpha_2^{2i}\\
(2i-1)\alpha_3^{2i-2}-(2m+2n-2i-1)\alpha_3^{2i}\\
\vdots \\
(2i-1)\alpha_{[n/2]}^{2i-2}-(2m+2n-2i-1)\alpha_{[n/2]}^{2i}\\
(2i-1)0^{2i-2}
\end{array}\right)
\end{equation}
where $1 \leq i \leq n-1$. In the case of even $n$ the last row in ${\mathcal C}_i$  should be removed. By Lemma~\ref{construction11} there exists a non-zero $([n/2]-1) \times ([n/2]-1)$ minor $B_{([n/2]-1)\times([n/2]-1)}$ of ${B}$ where ${B}$ is the submatrix of $\mathcal B$ formed by the first $[n/2]$ rows. Thus it follows that $\rank M\ge n-1$. The existence of this minor $B_{([n/2]-1) \times ([n/2]-1)}$ also allows us to reproduce the arguments from the proof of Proposition \ref{corron2m+2n} which justify the assumption \mref{assumption}  if  $\rank M<n$. Thus it follows that $\rank M\ge n-1$, and that the symmetry \mref{assumption} holds if $\rank M=n-1$.

Suppose now that  $\rank M=n-1$ that is $b_{2(m+n-1)}\ne m+n-1$. Then any $n \times n$ minor of $M$ must vanish. By Lemma \ref{construction1} in its notation we have
\begin{align*}
B_L=&\Delta \prod_{i=1}^{[n/2]}\alpha_i^{2L-2}\sum_{i=0}^{[n/2]}(-1)^i\prod_{r=i}^{[n/2]-1}(2[n/2]-2r+2L-3)\\
&\times \prod_{r=0}^{i-1}(2m+2r+2n-2[n/2]-2L+1)\widehat{e}_i,
\end{align*}
where $1 \leq L \leq n/2$ if $n$ is even and $2 \leq L \leq \frac{n+1}{2}$ if $n$ is odd.
Note that for even $n$ $B_L$ equals the minor of $\mathcal B$ where the columns $L,\ldots,L+n/2-1$ are kept. In case of odd $n$ $B_L$ has same absolute value as the minor of $\mathcal B$ where the columns $1, L, \ldots, L+(n-1)/2-1$ are kept.

Thus $B_L=0$ for the specified range of $L$.  Regard the resulting conditions as a system of linear equations for the unknowns $\widehat{e}_0, \widehat{e}_1, \ldots , \widehat{e}_{[n/2]}$. Then by Lemma \ref{corrtoconstruction3} the solution is unique, and it is given by
\begin{equation*}\widehat{e}_r=\binom{[n/2]}{r}\displaystyle\prod_{i=1}^r\frac{2[(n+1)/2]-2i+1}{2m+2i-1}
\end{equation*}
for $0 \leq r \leq [n/2]$ by Lemma \ref{construction2}. Thus $\mathcal A$ is equivalent to $\amn$ by Proposition \ref{eqofconfigs}.
\end{proof}

Now we determine the even coefficients $b_i$ of the Hilbert series $P_{\amn}$ where $\min(2m+2, n+1) \leq i \leq 2m+2n-4$. These even coefficients remained unknown after the general analysis of Section \ref{quasi-preliminary}.

\begin{lemma}\label{arrangecase1} Let $\mathcal{A}=\arrange$. Let $i=2(m+n-s)$, where $1\le s \le [n/2]$. Then $b_i=i-m-n+2$ in the Hilbert series (\ref{poincare2}).
\end{lemma}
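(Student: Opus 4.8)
The plan is to argue exactly as in Proposition \ref{lemma2m+2n-2}, by computing the rank of the matrix of quasi-invariance conditions. Write a homogeneous polynomial of degree $2(m+n-s)$ as $q=\sum_i a_i x^{2(m+n-s)-i}y^i$. The quasi-invariance conditions for $\beta_0=(0,1)$ force $a_{2j-1}=0$ for $1\le j\le m$, leaving $m+2n-2s+1$ free coefficients, and the conditions for $\beta_1,\ldots,\beta_n$ become a homogeneous linear system $MC=0$ with $M$ an $n\times(m+2n-2s+1)$ matrix, after the same elementary column operations as in Proposition \ref{corron2m+2n}. Hence $b_{2(m+n-s)}=m+2n-2s+1-\rank M$, so the assertion is equivalent to the statement $\rank M=n-1$; note that this is consistent with Proposition \ref{lemma2m+2n-2} in the case $s=1$.

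By Proposition \ref{eqofconfigs} and the discussion following it, after rotation by $\pi/2$ and renumbering we may assume that $\amn$ is presented in the coordinates $\beta_i=(1,\alpha_i)$ by vectors satisfying the symmetry \mref{assumption} with $\widehat e_r$ as in \mref{er-hat-explicitely}, and with $\Delta\prod_i\alpha_i\ne 0$. Using \mref{assumption}, row and column operations bring $M$ to the block‑diagonal form of \mref{2m+2n-2withallpairs}--\mref{blockB2m+2n-2}, one diagonal block being a Vandermonde‑type matrix $V$ of full row rank $[n/2]$ (the $\alpha_i^2$ being distinct and nonzero, and $m+n-s>[n/2]$) and the other being the block $\mathcal B$ formed from the columns $C_i(X)$ of \eqref{CiX} with $X=m-s$ and $1\le i\le n-s$. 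Since these two blocks share no rows or columns, $\rank M=[n/2]+\rank\mathcal B$, so it suffices to prove $\rank\mathcal B=[(n+1)/2]-1$. For the upper bound I would use that every consecutive window of maximal minors of $\mathcal B$ is one of the expressions $B_L$ evaluated at $X=m-s$, and these all vanish by Lemma \ref{construction2}, whose range $1\le L\le[(n+1)/2]-s+1$ is precisely the set of consecutive windows; on the other hand any block of consecutive columns of $\mathcal B$ is linearly independent, owing to the structure $C_i(X)=x_{2i-2}-y_{2i}$ from the proof of Lemma \ref{construction1} with nonzero leading coefficients $2i-1$. Two overlapping windows therefore have one and the same $([(n+1)/2]-1)$‑dimensional column span, which must then contain all columns of $\mathcal B$, giving $\rank\mathcal B=[(n+1)/2]-1$. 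The accompanying lower bound $\rank\mathcal B\ge[(n+1)/2]-1$, used to rule out further degeneration and, if needed, to re‑run the justification of \mref{assumption} as in Proposition \ref{corron2m+2n}, is supplied by Lemma \ref{construction11} (equivalently Lemma \ref{construction8} with $q=1$), using $\Delta\prod_i\alpha_i\ne 0$. Combining, $\rank M=n-1$ and $b_{2(m+n-s)}=m+n-2s+2=i-m-n+2$.

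The part I expect to be the main obstacle is the bookkeeping around the parity of $n$: in the odd case the block $\mathcal B$ carries the distinguished zero‑row coming from $\alpha_n=0$ (as in \mref{2m+2nwithallpairs}), so one must pivot on that row and correctly identify which minors of the remaining submatrix the expressions $B_L$, $D_L^\kappa$ represent before invoking Lemma \ref{construction2}; and one must check that the vanishing of the consecutive windows together with the independence of consecutive columns really does pin the rank down to $[(n+1)/2]-1$ rather than something smaller. Once this combinatorial organization is in place, the explicit values \mref{er-hat-explicitely} of $\widehat e_r$ for $\amn$, through Lemma \ref{construction2}, carry the rest of the argument.
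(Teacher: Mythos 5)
Your proposal follows essentially the same route as the paper's proof: reduce via the symmetry \eqref{assumption} to the block $\mathcal B$ built from the columns $C_i(X)$ with $X=m-s$, annihilate the consecutive-window maximal minors $B_L$ by Lemma \ref{construction2}, and pin the rank down to $[(n+1)/2]-1$ (hence $\rank M=n-1$, with the odd-$n$ zero row handled by pivoting) using the non-degenerate sub-minors supplied by Lemmas \ref{construction8} and \ref{construction11}. The only point to repair is your justification of the linear independence of $[(n+1)/2]-1$ consecutive columns of $\mathcal B$: this does not follow from the nonzero leading coefficients $2i-1$ in $C_i(X)=x_{2i-2}-y_{2i}$, but is exactly the content of Lemma \ref{construction8} with $q=1$ (equivalently Lemma \ref{construction11}), which you already invoke for the lower bound, so the gap is one of attribution rather than substance.
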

\begin{proof} Let $q$ be a homogeneous quasi-invariant of degree $i$. Recall that $\alpha_i^2=\alpha_{[n/2]+i}^2, \, 1 \leq i \leq [n/2]$, and $\alpha_n=0$ if $n$ is odd. Then the matrix $M$ expressing the quasi-invariant conditions as linear equations for the non-zero coefficients of $q$ is equivalent to
\begin{equation}\label{2m+2n-2rmatrix}
{\small
M=\left( \begin{array}{cccccc|c}
\alpha_1^{2m+2n-2s-1} & \alpha_1^{2m+2n-2s-3} & \ldots & \alpha_1^5 & \alpha_1^3 & \alpha_1&\multirow{5}{*}{0}\\
\alpha_2^{2m+2n-2s-1} & \alpha_2^{2m+2n-2s-3} & \ldots & \alpha_2^5 & \alpha_2^3 & \alpha_2&\\
\alpha_3^{2m+2n-2s-1} & \alpha_3^{2m+2n-2s-3} & \ldots & \alpha_3^5 & \alpha_3^3 & \alpha_3&\\
\vdots&\vdots&\ldots&\vdots&\vdots&\vdots \\
\alpha_{[n/2]}^{2m+2n-2s-1} & \alpha_{[n/2]}^{2m+2n-2s-3} & \ldots & \alpha_{[n/2]}^5 & \alpha_{[n/2]}^3 & \alpha_{[n/2]}&\\
\hline
&&\multirow{2}{*}{0}&&&&\mathcal B
\end{array}\right).
}
\end{equation}
The block $\mathcal B$ consists of $n-s$ columns ${\mathcal C}_1, \ldots , {\mathcal C}_{n-s}$ with the following structure
\begin{equation*}{\mathcal C}_j=
\left( \begin{array}{c}
(2j-1)\alpha_1^{2j-2}-(2m+2n-2j+1-2s)\alpha_1^{2j}\\
(2j-1)\alpha_2^{2j-2}-(2m+2n-2j+1-2s)\alpha_2^{2j}\\
(2j-1)\alpha_3^{2j-2}-(2m+2n-2j+1-2s)\alpha_3^{2j}\\
\vdots \\
(2j-1)\alpha_{[n/2]}^{2j-2}-(2m+2n-2j+1-2s)\alpha_{[n/2]}^{2j}\\
(2j-1)0^{2j-2}
\end{array}\right)
\end{equation*}
where $1 \leq j \leq n-s$, and the last row should be removed if $n$ is even. We are going to show at first that $\rank M<n$. It is clear that any $n\times n$ minor equals zero unless we take exactly $[\frac{n+1}{2}]$ columns from the block $\mathcal B$. Now, consider the block $\mathcal B$. Let $B_L$ be the minor formed by taking the determinant of the square submatrix with columns ${\mathcal C}_L, {\mathcal C}_{L+1}, \ldots , {\mathcal C}_{L+[(n+1)/2]-1}$. Suppose $n$ is even. By Lemma \ref{construction2} $B_L=0$ for $1 \leq L \leq n-s-[n/2]+1$. Since $B_1=0$ we have $\sum_{i=1}^{[n/2]}\lambda_i {\mathcal C}_i=0$ for some $\lambda_i \in \mathbb{C}$. We can assume that $\lambda_1\neq 0$. Indeed, by Lemma \ref{construction8}, we can find a non-zero $([n/2]-1) \times ([n/2]-1)$ minor in the block with columns ${\mathcal C}_2, \ldots , {\mathcal C}_{[n/2]}$. We can construct the linear dependence  $\sum_{i=1}^{[n/2]}\lambda_i {\mathcal C}_i=0$ using appropriate $([n/2]-1) \times ([n/2]-1)$ minors as coefficients $\lambda_i$ so we can take $\lambda_1\neq 0$. Thus ${\mathcal C}_1 \in \langle {\mathcal C}_2, {\mathcal C}_3, \ldots , {\mathcal C}_{[n/2]} \rangle$. By repeated application of Lemma \ref{construction8} we can go on to deduce that for $1 \leq j \leq n-s-[n/2]+1$, ${\mathcal  C}_j \in \langle {\mathcal C}_{n-s-[n/2]+2}, \ldots , {\mathcal C}_{n-s}\rangle$. This means that any $[n/2] \times [n/2]$ minor taken from the block $\mathcal B$ equals zero.
So the rank of the matrix (\ref{2m+2n-2rmatrix}) is at most $n-1$.

Let now $n$ be odd. Any non-zero $[(n+1)/2]\times [(n+1)/2]$ minor of $\mathcal B$ must contain the first column and the last row of $\mathcal B$. It is equal by absolute value to the corresponding $[(n-1)/2]\times [(n-1)/2]$ minor of $\mathcal B$ where these row and column are removed. By Lemmas \ref{construction2},   \ref{construction8} applied repeatedly for $2\le L\le [(n+1)/2]-s+1$ we conclude that
the dimension of the space spanned by the columns of ${\mathcal B}$ is at most $(n-1)/2$.

We will now show that the rank of the matrix (\ref{2m+2n-2rmatrix}) is precisely $n-1$. For $1 \leq k \leq [n/2]$, $2 \leq L \leq [(n+1)/2]-s+2$ let $D_L^k$ be the minors formed by taking the square submatrix with columns ${\mathcal C}_L, {\mathcal C}_{L+1}, \ldots , {\mathcal C}_{L+[(n+1)/2]-2}$ where we include all but the $k$th row of ${\mathcal B}$.  Let $n$ be even. Then by Lemma \ref{construction8} $\exists k$ such that $D_L^k\neq 0$. Denote the corresponding matrix by $D_{[\frac{n-1}{2}] \times [\frac{n-1}{2}]}$. Then the following $(n-1) \times (n-1)$ minor of (\ref{2m+2n-2rmatrix})
\begin{equation*}
{\small
\left| \begin{array}{cccccc|c}
\alpha_1^{n-1} & \alpha_1^{n-3} & \ldots & \alpha_1^5 & \alpha_1^3 & \alpha_1&\multirow{5}{*}{0}\\
\alpha_2^{n-1} & \alpha_2^{n-3} & \ldots & \alpha_2^5 & \alpha_2^3 & \alpha_2&\\
\alpha_3^{n-1} & \alpha_3^{n-3} & \ldots & \alpha_3^5 & \alpha_3^3 & \alpha_3&\\
\vdots&\vdots&\ldots&\vdots&\vdots&\vdots \\
\alpha_{[n/2]}^{n-1} & \alpha_{[n/2]}^{n-3} & \ldots & \alpha_{[n/2]}^5 & \alpha_{[n/2]}^3 & \alpha_{[n/2]}&\\
\hline
&&0&&&&D_{[\frac{n-1}{2}] \times [\frac{n-1}{2}]}
\end{array}\right| \neq 0.
}
\end{equation*}
If $n$ is odd then the matrix $D_{[(n-1)/2]\times [(n-1)/2]}$ should be extended by adjoining the column ${\mathcal C}_1$ and the last row of $\mathcal B$.
\end{proof}

\begin{corollary}\label{pm1nexp3} Let $\mathcal{A}=\arrange$. Then
\begin{align*}&P_{even}^{2m+n, 2m+2n-2}
=\frac{(m+2)t^{2m+n}-mt^{2m+n+2}-(m+n+2)t^{2m+2n}+(m+n)t^{2m+2n+2}}{(t^2-1)^2}
\end{align*}
if $n$ is even, and
\begin{align*}&P_{even}^{2m+n, 2m+2n-2}
=\frac{(m+3)t^{2m+n+1}-(m+1)t^{2m+n+3}-(m+n+2)t^{2m+2n}+(m+n)t^{2m+2n+2}}{(t^2-1)^2}
\end{align*}
if $n$ is odd.
\end{corollary}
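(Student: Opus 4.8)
The plan is to obtain all the even coefficients $b_i$ with $2m+n\le i\le 2m+2n-2$ from Lemma~\ref{arrangecase1} and then sum the resulting finite series.

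First I would check that Lemma~\ref{arrangecase1} covers exactly the right degrees. An even integer $i$ lies in the range $2m+n\le i\le 2m+2n-2$ precisely when $i=2(m+n-s)$ for an integer $s$ with $1\le s\le[n/2]$: the value $s=1$ gives the top degree $i=2m+2n-2$, and $s=[n/2]$ gives the bottom degree, which is $2m+n$ when $n$ is even and $2m+n+1$ when $n$ is odd. Hence by Lemma~\ref{arrangecase1} we have $b_i=i-m-n+2$ for every even $i$ in the window, and by the definition of $P^{k,l}_{even}$ this yields
\[
P^{2m+n,\,2m+2n-2}_{even}=\sum_{\substack{2m+n\le i\le 2m+2n-2\\ i\ \mathrm{even}}}(i-m-n+2)\,t^{i}.
\]

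It remains to evaluate this sum in closed form, separating the two parity cases for $n$. Writing the even degrees as $i=i_0+2j$, where $i_0=2m+n$ and $0\le j\le(n-2)/2$ for even $n$ (resp. $i_0=2m+n+1$ and $0\le j\le(n-3)/2$ for odd $n$), the coefficients become $c_j=c_0+2j$ with $c_0=m+2$ (resp. $c_0=m+3$), and in both cases the top coefficient equals $m+n$. A short telescoping computation, multiplying $\sum_j c_j x^j$ by $(x-1)^2$ with $x=t^2$, collapses the sum to the four boundary terms and gives
\[
(t^2-1)^2\,P^{2m+n,\,2m+2n-2}_{even}=c_0\,t^{i_0}-(c_0-2)\,t^{i_0+2}-(m+n+2)\,t^{2m+2n}+(m+n)\,t^{2m+2n+2},
\]
which is precisely the stated numerator in each case (with $c_0=m+2$, $i_0=2m+n$ for even $n$, and $c_0=m+3$, $i_0=2m+n+1$ for odd $n$). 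The same closed form can alternatively be obtained by differentiating a geometric series.

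There is no essential difficulty here; the only points needing care are the bookkeeping of the parity of $n$---so that the window is entered at the correct first even degree and Lemma~\ref{arrangecase1} is invoked for exactly the admissible values of $s$---and keeping track of the index shift between the summation variable $j$ and the degree $i$ in the telescoping step.
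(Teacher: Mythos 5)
Your proposal is correct and follows the same route the paper intends: the corollary is stated immediately after Lemma \ref{arrangecase1} precisely because the even coefficients $b_i=i-m-n+2$ for $i=2(m+n-s)$, $1\le s\le[n/2]$, cover exactly the even degrees in the window $[2m+n,\,2m+2n-2]$, and the closed form is just the summation of the resulting arithmetic-coefficient geometric series (cf.\ the identities \eqref{sum-geom} used elsewhere in the paper). Your parity bookkeeping and the telescoping evaluation both check out.
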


\begin{lemma}\label{arrangecase2} Let $\mathcal{A}$ satisfy the symmetry property \eqref{assumption}. Let $i=2(m+n-s)$, where $[n/2]+1\le s \le \min(n, m+[(n+1)/2])$. Then $b_i=i/2-[n/2]+1$ in the Hilbert  series (\ref{poincare2}).
\end{lemma}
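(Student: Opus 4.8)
The plan is to reproduce the matrix-rank computation used in the proof of Proposition \ref{corron2m+2n} and Lemma \ref{arrangecase1}, exploiting that in this degree range the block coming from the multiplicity-one conditions becomes ``tall'' rather than ``wide'', so that only the symmetry \eqref{assumption} (and not the finer structure of $\arrange$) is needed. Write a homogeneous polynomial $q$ of degree $i=2(m+n-s)$ as $q=\sum_{j=0}^{i}a_jx^{i-j}y^j$. Since $s\le n$ we have $i\ge 2m$, so the quasi-invariance conditions for $\beta_0=(0,1)$ force $a_1=a_3=\cdots=a_{2m-1}=0$, leaving $i-m+1$ free coefficients. The conditions for $\beta_1,\ldots,\beta_n$ form a homogeneous linear system $MC=0$ with $M$ an $n\times(i-m+1)$ matrix, so $b_i=(i-m+1)-\rank M$; since $s=m+n-i/2$, the claim $b_i=i/2-[n/2]+1$ is equivalent to $\rank M=n-s+[n/2]$.

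I would then carry out the symmetry reduction exactly as in the cited proofs. As $i$ is even, the entry of $M$ in the column indexed by $a_k$ is an even function of the relevant $\alpha_l$ when $k$ is odd and an odd function when $k$ is even; under \eqref{assumption} the rows indexed by $\beta_j$ and $\beta_{[n/2]+j}$ ($1\le j\le[n/2]$) are obtained from one another by $\alpha\mapsto-\alpha$, so replacing them by their sum and difference decouples $M$ (up to column operations) into a block $V$ on the even-$k$ columns and a block $\mathcal B$ on the odd-$k$ columns, each with $[n/2]$ rows indexed by $\alpha_1,\ldots,\alpha_{[n/2]}$. The odd-$k$ columns are those with $2m+1\le k\le i-1$; the column indexed by $a_k$ has entries proportional to $(i-k)\alpha_l^{i-k-1}-k\alpha_l^{i-k+1}$, which upon setting $i-k=2j-1$ is the column $C_j(m-s)$ of \eqref{CiX} (because then $2(m-s)+2n-2j+1=k$). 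Hence $\mathcal B$ has the $n-s$ columns $C_1(m-s),\ldots,C_{n-s}(m-s)$. On the even-$k$ side each column has entries that are odd polynomials in $\alpha_l$, so after dividing row $l$ by $\alpha_l$ they become polynomials in $\alpha_l^2$; among them the $m+n-s$ columns $a_2,a_4,\ldots,a_i$ have pairwise distinct leading monomials $\alpha_l^0,\alpha_l^2,\ldots,\alpha_l^{2(m+n-s-1)}$. Observe that \eqref{assumption} and pairwise non-collinearity force $\Delta\prod_{l=1}^{[n/2]}\alpha_l\ne 0$, the standing hypothesis of the auxiliary lemmas.

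Now the two ranks. For $V$: those $m+n-s$ columns reduce by column operations to a Vandermonde matrix in $\alpha_1^2,\ldots,\alpha_{[n/2]}^2$, and since $s\le m+[(n+1)/2]$ gives $m+n-s\ge n-[(n+1)/2]=[n/2]$, its $[n/2]$ rows are linearly independent, so $\rank V=[n/2]$. For $\mathcal B$: it has only $n-s$ columns, so $\rank\mathcal B\le n-s$, and since $s\ge[n/2]+1$ gives $n-s\le[n/2]-1<[n/2]$, I would invoke Lemma \ref{construction8} with $q_0:=[n/2]-(n-s)=s-[n/2]\ge 1$ rows deleted and $L=1$: the hypotheses $q_0\ge s-[(n+1)/2]$ and $1\le L\le q_0+[(n+1)/2]-s+1=[(n+1)/2]-[n/2]+1$ both hold, and the lemma yields a nonzero $(n-s)\times(n-s)$ minor built from all of $C_1(m-s),\ldots,C_{n-s}(m-s)$ and $n-s$ of the $[n/2]$ rows, whence $\rank\mathcal B=n-s$. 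As $V$ and $\mathcal B$ occupy disjoint columns, $\rank M=\rank V+\rank\mathcal B=[n/2]+(n-s)$, which is the required equality.

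Finally, for odd $n$ one must, as usual, account for the row of $M$ coming from $\beta_n$, where $\alpha_n=0$: this row reduces to $a_{i-1}=0$, i.e.\ it is the standard covector supported on the column $C_1(m-s)$ of $\mathcal B$. Appending it to $\mathcal B$ gives $\rank\mathcal B=1+\rank\mathcal B'$ with $\mathcal B'$ the block on columns $C_2(m-s),\ldots,C_{n-s}(m-s)$; Lemma \ref{construction8} applied with $L=2$ gives $\rank\mathcal B'=n-s-1$, so again $\rank M=1+[n/2]+(n-s-1)=n-s+[n/2]$, and the degenerate case $s=n$ (possible only when $m\ge[n/2]$, in which case $\mathcal B$ is empty) is checked directly. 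The main obstacle is purely bookkeeping: one must line the column and row counts of the two blocks up with the precise arithmetic constraints on $s$ — $[n/2]+1\le s$ is exactly what makes $\mathcal B$ tall so that Lemma \ref{construction8} applies and $\rank\mathcal B=n-s$, while $s\le m+[(n+1)/2]$ is exactly what makes $V$ wide enough to have full row rank $[n/2]$ — and then invoke Lemma \ref{construction8} with the right parameters. No geometric input beyond \eqref{assumption} enters, which is why the statement holds for every such $\mathcal A$ and not only for $\arrange$.
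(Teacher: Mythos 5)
Your proof is correct and follows essentially the same route as the paper's: reduce the quasi-invariance conditions to the block form \eqref{2m+2n-2rmatrix} via the symmetry \eqref{assumption}, get $\operatorname{rank} M=[n/2]+(n-s)$ from a Vandermonde minor in the odd-power block (using $s\le m+[(n+1)/2]$) and a full-column-rank minor of $\mathcal B$ from Lemma \ref{construction8} (using $s\ge [n/2]+1$). Your parameter bookkeeping for Lemma \ref{construction8} is in fact more explicit than the paper's (which simply cites the lemma and says the odd-$n$ case is similar); the only blemish is the identity $[n/2]-(n-s)=s-[n/2]$, which holds only for even $n$, but this is harmless since you rerun the odd-$n$ case separately with the correct parameters.
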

\begin{proof} The matrix M of the system of linear equations for the non-zero coefficients of a quasi-invariant  $q$ of degree $i$ is equivalent to the form (\ref{2m+2n-2rmatrix}). It is easy to see that $\rank M \le [n/2]+n-s$ as $0\le n-s \le [(n+1)/2]$. For even $n$  the block $\mathcal B$ contains a non-zero $(n-s) \times (n-s)$ minor $B_{(n-s) \times (n-s)}$ by Lemma \ref{construction8}. Hence the $([n/2]+n-s) \times ([n/2]+n-s)$ minor
\begin{equation*}
{\small
\left| \begin{array}{cccccc|c}
\alpha_1^{n-1} & \alpha_1^{n-3} & \ldots & \alpha_1^5 & \alpha_1^3 & \alpha_1&\multirow{5}{*}{0}\\
\alpha_2^{n-1} & \alpha_2^{n-3} & \ldots & \alpha_2^5 & \alpha_2^3 & \alpha_2&\\
\alpha_3^{n-1} & \alpha_3^{n-3} & \ldots & \alpha_3^5 & \alpha_3^3 & \alpha_3&\\
\vdots&\vdots&\ldots&\vdots&\vdots&\vdots \\
\alpha_{[n/2]}^{n-1} & \alpha_{[n/2]}^{n-3} & \ldots & \alpha_{[n/2]}^5 & \alpha_{[n/2]}^3 & \alpha_{[n/2]}&\\
\hline
&&\multirow{2}{*}{0}&&&&B_{(n-s) \times (n-s)}
\end{array}\right|
}\ne 0.
\end{equation*}
The case of odd $n$ is similar.
\end{proof}
As a corollary we get a part of the Hilbert series with even terms preceeding those found in Corollary \ref{pm1nexp3} .

\begin{corollary}\label{pm1nexp4} Let $\mathcal{A}$ satisfy the symmetry property \eqref{assumption}.  If $n \le 2m+1$ then
\begin{align*}&P_{even}^{2m+2, 2m+n-2}
=\frac{(m-n/2+2)t^{2m+2}-(m-n/2+1)t^{2m+4}-(m+1)t^{2m+n}+mt^{2m+n+2}}{(t^2-1)^2}
\end{align*}
if $n$ is even and
\begin{align*}&P_{even}^{2m+2, 2m+n-1}
=\frac{(m-\frac{n-5}{2})t^{2m+2}-(m-\frac{n-3}{2})t^{2m+4}-(m+2)t^{2m+n+1}+(m+1)t^{2m+n+3}}{(t^2-1)^2}
\end{align*}
if $n$ is odd. If $n \ge 2m+1$ then
\begin{align*}&P_{even}^{n+1, 2m+n-2}=\frac{2t^{n+2}-t^{n+4}-(m+1)t^{2m+n}+mt^{2m+n+2}}{(t^2-1)^2}
\end{align*}
if $n$ is even and
\begin{align*}&P_{even}^{n+1, 2m+n-1}=\frac{2t^{n+1}-t^{n+3}-(m+2)t^{2m+n+1}+(m+1)t^{2m+n+3}}{(t^2-1)^2}
\end{align*}
if $n$ is odd.
\end{corollary}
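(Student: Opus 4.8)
The plan is to obtain Corollary \ref{pm1nexp4} by summing the formula of Lemma \ref{arrangecase2}. Setting $i=2j$ with $j=m+n-s$, that lemma asserts $b_{2j}=j-[n/2]+1$ for every $j$ with $\max\{m,[n/2]\}\le j\le m+n-[n/2]-1$: the upper limit always comes from $s=[n/2]+1$, while the lower limit is $s=n$ (giving $j=m$) precisely when $\min(n,m+[(n+1)/2])=n$, i.e. when $n\le 2m+1$, and is $s=m+[(n+1)/2]$ (giving $j=[n/2]$) when $n\ge 2m+1$. In terms of degree the top even degree produced is $2(m+n-[n/2]-1)$, which is $2m+n-2$ for even $n$ and $2m+n-1$ for odd $n$ — exactly the degrees just below the range of Corollary \ref{pm1nexp3}, so the pieces abut. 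Since Corollary \ref{pg1-cor} already records $b_d=1$ for every even $d$ with $d\le n$ (for $n$ even) or $d\le n-1$ (for $n$ odd), I would start the sum at the first even degree not yet accounted for: $2m+2$ when $n\le 2m+1$, and $n+2$ (even $n$) or $n+1$ (odd $n$) when $n\ge 2m+1$.

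Next I would carry out the summation using the elementary identity
\[
\sum_{k=0}^{N}(a+k)u^{k}=\frac{a+(1-a)u-(a+N+1)u^{N+1}+(a+N)u^{N+2}}{(1-u)^{2}},
\]
with $u=t^{2}$; this accounts for the denominator $(t^{2}-1)^{2}$ common to all four formulas. In each of the four cases — $n$ even or odd, crossed with $n\le 2m+1$ or $n\ge 2m+1$ — one writes the partial sum $\sum_j (j-[n/2]+1)\,t^{2j}$ as $t^{2j_{0}}\sum_{k=0}^{N}(a+k)u^{k}$, where $j_{0}$ is the starting index, $a=j_{0}-[n/2]+1$ is the corresponding coefficient, and $N=m+n-[n/2]-1-j_{0}$; substituting into the identity and converting powers of $u$ back to powers of $t$ yields the stated numerator. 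For example, when $n$ is even and $n\le 2m+1$ one has $j_{0}=m+1$, $a=m-n/2+2$, $N=n/2-2$, and the identity returns $\tfrac{(m-n/2+2)t^{2m+2}-(m-n/2+1)t^{2m+4}-(m+1)t^{2m+n}+m\,t^{2m+n+2}}{(t^{2}-1)^{2}}$, as claimed; the other three cases are entirely analogous, the parity of $n$ entering only through the floors $[n/2]$ and $[(n+1)/2]$.

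The summation itself is routine; the main obstacle I anticipate is the endpoint bookkeeping — pinning down $j_{0}$ and $N$ in each of the four cases from the interplay of the floors $[n/2]$, $[(n+1)/2]$ and the quantity $\min(n,m+[(n+1)/2])$, keeping the parity of $n$ straight throughout, and checking the consistency of the resulting patchwork. In particular one should verify the boundary value $n=2m+1$ (where the claim is stated in both the ``$n\le 2m+1$'' and ``$n\ge 2m+1$'' forms, so the two must agree), and confirm that between Corollary \ref{pg1-cor}, the present corollary, and Corollary \ref{pm1nexp3} every even degree in the relevant window is counted exactly once, with no overlap and no gap.
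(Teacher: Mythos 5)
Your proposal is correct and matches the paper's (implicit) proof: the corollary is obtained exactly by summing the dimensions $b_{2j}=j-[n/2]+1$ from Lemma~\ref{arrangecase2} over the stated even-degree window, starting just above the range already covered by $P^{n+1,2m}$ and ending just below that of Corollary~\ref{pm1nexp3}, using the identities \eqref{sum-geom}. Your endpoint bookkeeping, including the agreement of the two formulas at $n=2m+1$, checks out.
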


It remains to find the odd coefficients $b_i$ of the Hilbert series $P_{\amn}$  with $\max(2m+1, n+1) \leq i \leq 2m+n-3$ and the coefficients $b_i$ with $n+1 \leq i \leq 2m$. We deal with the latter case first in the following lemma.

\begin{lemma}\label{m1ncase1} Let $\mathcal{A}$ satisfy the symmetry property \eqref{assumption}. Let $ n \leq i \leq 2m$. Then in (\ref{poincare2}) $b_i=\frac{i+1}{2}-[\frac{n+1}{2}]$ if $i$ is odd and $b_i=\frac{i}{2}+1-[\frac{n}{2}]$ if $i$ is even.
\end{lemma}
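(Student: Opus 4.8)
The plan is to analyse the quasi-invariance conditions directly as a linear system on the coefficients of a homogeneous polynomial $q=\sum_{j=0}^i a_jx^{i-j}y^j$. First I would invoke quasi-invariance with respect to $\beta_0=(0,1)$, which forces $a_1=a_3=\dots=a_{2m-1}=0$; since $i\le 2m$ this kills \emph{every} odd-indexed coefficient, so $q=\sum_{l=0}^{[i/2]}a_{2l}x^{i-2l}y^{2l}$, and in particular there is no Vandermonde block of the type occurring in the neighbouring lemmas. Substituting $x=-\alpha_jy$ into $\partial_{\beta_j}q$ and clearing the factor $y^{\,i-1}$ (and, when $i$ is even, a factor $\alpha_j\ne 0$), the condition for $\beta_j$ with $1\le j\le n$ becomes $R(\alpha_j^2)=0$, where, with $p=[i/2]$,
\[
R(\beta)=\sum_{l=0}^{p} a_{2l}\,\beta^{\,p-l}\bigl((i-2l)-2l\beta\bigr),
\]
a polynomial in $\beta$ whose coefficients are linear in the $a_{2l}$. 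When $n$ is odd the vector with $\alpha_n=0$ must be treated separately, reading the condition off $\partial_xq|_{x=0}=0$: for even $i$ it holds automatically, and for odd $i$ it says $a_{2p}=0$, i.e.\ $R(0)=0$.

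Next I would examine the linear map $(a_0,a_2,\dots,a_{2p})\mapsto R$, using that the coefficient of $\beta^{\,p-k}$ in $R$ equals $(i-2k)a_{2k}-2(k+1)a_{2k+2}$ (with $a_{2p+2}:=0$). When $i=2p$ is even, the $\beta^0$-coefficient is $(i-2p)a_{2p}=0$, so $R$ is divisible by $\beta$; and $R\equiv 0$ forces $a_{2k+2}=\tfrac{p-k}{k+1}a_{2k}$, hence $a_{2l}=\binom{p}{l}a_0$, so the kernel is one-dimensional, spanned by $(x^2+y^2)^p$ (which is quasi-invariant with respect to every line, since $\partial_{\beta_j}(x^2+y^2)^p=2p(x^2+y^2)^{p-1}(\beta_j,x)$ vanishes on $\Pi_j$). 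Thus $a\mapsto R/\beta$ is a surjection onto the polynomials in $\beta$ of degree $\le p-1$. When $i=2p+1$ is odd, $R$ has degree $\le p$ and $R\equiv 0$ forces $a_{2p}=0$ and then, descending, $a_{2l}=0$ for all $l$, so $a\mapsto R$ is an isomorphism onto the polynomials of degree $\le p$.

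Finally I would impose the symmetry \eqref{assumption}: the conditions from $\beta_j$ and $\beta_{[n/2]+j}$ coincide because $\alpha_j^2=\alpha_{[n/2]+j}^2$, and $\alpha_1^2,\dots,\alpha_{[n/2]}^2$ are pairwise distinct and nonzero (a coincidence or a zero among them would make two of the $\beta_i$ collinear). For even $i$ the surviving conditions require the degree-$\le(p-1)$ polynomial $R/\beta$ to vanish at $[n/2]$ distinct points; these are independent because $i\ge n$ gives $p=i/2\ge[n/2]$, so $b_i=1+\bigl(p-[n/2]\bigr)=\frac{i}{2}+1-[\frac{n}{2}]$. For odd $i$ they require the degree-$\le p$ polynomial $R$ to vanish at $[(n+1)/2]$ distinct points — namely $\alpha_1^2,\dots,\alpha_{[n/2]}^2$, together with $0$ when $n$ is odd — which are again independent since $i\ge n$, so $b_i=(p+1)-[(n+1)/2]=\frac{i+1}{2}-[\frac{n+1}{2}]$.

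The derivative computation and the two small facts about $a\mapsto R$ are routine; the step requiring care is tracking the four parity combinations of $i$ and $n$ at once, together with the special behaviour of the vector with $\alpha_n=0$ when $n$ is odd, and noticing that the always-quasi-invariant polynomial $(x^2+y^2)^{i/2}$ is what produces the extra $+1$ in the even case. The only point at which the hypothesis $i\ge n$ genuinely enters is the independence of the $[n/2]$ (resp.\ $[(n+1)/2]$) evaluation conditions, which is the Vandermonde argument and needs $p=[i/2]$ to be at least that large.
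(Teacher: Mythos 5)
Your proof is correct and follows essentially the same route as the paper: both reduce to the even-indexed coefficients via the $\beta_0$-conditions and $i\le 2m$, use the symmetry \eqref{assumption} to cut the remaining conditions down to $[n/2]$ (resp.\ $[\frac{n+1}{2}]$) evaluations at distinct values of $\alpha_j^2$, and count the rank by a Vandermonde argument, with $i\ge n$ guaranteeing independence. Your reformulation via the polynomial $R(\beta)$ and the explicit kernel element $(x^2+y^2)^{i/2}$ is just a more detailed bookkeeping of the paper's column reduction of the condition matrix.
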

\begin{proof} Let $q$ be a homogeneous quasi-invariant of degree $i$. It has has no odd powers of $y$ since $i \leq 2m$. Suppose first that $i$ is odd. The matrix of linear equations on the non-zero coefficients of $q$, which express quasi-invariance conditions, is equivalent to
\begin{equation*}
A=\left( \begin{array}{cccc}
\alpha_1^{i-1}& \ldots& \alpha_1^2&1\\
\alpha_2^{i-1}& \ldots& \alpha_2^2&1\\
\vdots&&\vdots&\vdots \\
\alpha_{[\frac{n+1}{2}]}^{i-1}& \ldots& \alpha_{[\frac{n+1}{2}]}^2&1\\
\end{array}\right).
\end{equation*}
Thus $\rank A=\min(\frac{i+1}{2}, [\frac{n+1}{2}])=[\frac{n+1}{2}]$. So the dimension of homogeneous quasi-invariants of degree $i$ is $b_i=i+1-\frac{i+1}{2}-\rank A=\frac{i+1}{2}-[\frac{n+1}{2}]$.
Now let $i$ be even. The matrix expressing the quasi-invariance conditions is equivalent to
\begin{equation*}
\tilde{A}=\left( \begin{array}{cccc}
\alpha_1^{i-1}& \ldots& \alpha_1^3&\alpha_1\\
\alpha_2^{i-1}& \ldots& \alpha_2^3&\alpha_2\\
\vdots&&\vdots&\vdots \\
\alpha_{[\frac{n}{2}]}^{i-1}& \ldots& \alpha_{[\frac{n}{2}]}^3&\alpha_{[\frac{n}{2}]}\\
\end{array}\right).
\end{equation*}
Thus $\rank \tilde{A}=\min(\frac{i}{2}, [\frac{n}{2}])=[\frac{n}{2}]$. So the dimension of homogeneous quasi-invariants of degree $i$ is $b_i=\frac{i}{2}+1-[\frac{n}{2}]$.
\end{proof}

\begin{corollary}\label{pm1nexp1} Let $\mathcal{A}$ satisfy the symmetry property \eqref{assumption}. Suppose that $n \le 2m$. Then
\begin{align*}P^{n+1, 2m}&=\frac{t^{n+1}+2t^{n+2}-t^{n+4}-(m-n/2+1)t^{2m+1}}{(t^2-1)^2}\\
&+\frac{(m-n/2)t^{2m+3}-(m-n/2+2)t^{2m+2}+(m-n/2+1)t^{2m+4}}{(t^2-1)^2}
\end{align*}
if $n$ is even, and
\begin{align*}P^{n+1, 2m}&=\frac{2t^{n+1}+t^{n+2}-t^{n+3}-(m-\frac{n+1}{2}+1)t^{2m+1}}{(t^2-1)^2}\\
&+\frac{(m-\frac{n+1}{2})t^{2m+3}-(m-\frac{n-5}{2})t^{2m+2}+(m-\frac{n-3}{2})t^{2m+4}}{(t^2-1)^2}
\end{align*}
if $n$ is odd.
\end{corollary}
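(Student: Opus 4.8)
The plan is to obtain this Corollary by directly summing the generating function $P^{n+1,2m}=\sum_{i=n+1}^{2m}b_it^i$ from the explicit values of the coefficients $b_i$ supplied by Lemma \ref{m1ncase1}. Since that lemma gives $b_i$ as an affine-linear function of $i$ on the whole range $n\le i\le 2m$, with one formula for odd $i$ and another for even $i$, the sum splits as $P^{n+1,2m}=P_{odd}^{n+1,2m}+P_{even}^{n+1,2m}$, and each of the two pieces is a finite arithmetic--geometric series in the variable $t^{2}$.

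First I would fix the parity of $n$, because this determines which of the two pieces contains the lower endpoint $t^{n+1}$: when $n$ is even the index $n+1$ is odd, so $P_{odd}^{n+1,2m}$ runs over $i=n+1,n+3,\ldots,2m-1$ and $P_{even}^{n+1,2m}$ over $i=n+2,n+4,\ldots,2m$, while for odd $n$ the roles are reversed; in either case the upper endpoint $2m$ is even. For the odd piece I would set $i=2j+1$ and use $b_{2j+1}=j+1-[\tfrac{n+1}{2}]$, for the even piece $i=2j$ with $b_{2j}=j+1-[\tfrac{n}{2}]$, thereby reducing everything to truncations of $\sum_{j\ge 0}(j+c)t^{2j}$. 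The identity $\sum_{j\ge 0}(j+1)t^{2j}=(1-t^{2})^{-2}$, together with its finite truncations, is what produces the denominator $(t^{2}-1)^{2}$ in the statement; the numerator monomials supported near $t^{n+1}$ and near $t^{2m}$ arise as the boundary contributions of the two truncated sums at their lower end $i=n+1$ and their upper end $i=2m$.

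Finally I would add the two pieces, simplify, and match the outcome against the closed form stated for each parity of $n$; a convenient sanity check is the extreme case $n=2m$, where the interval $[n+1,2m]$ is empty and the asserted rational function indeed collapses to $0$. The computation is entirely routine; the only genuine care is in the endpoint bookkeeping --- tracking exactly which residues modulo $2$ occur in the summation range, the off-by-one between $[\tfrac{n+1}{2}]$ and $[\tfrac{n}{2}]$ in the two parities of $n$, and the fact that Lemma \ref{m1ncase1} is valid precisely on $n\le i\le 2m$, so the summation must neither overshoot $2m$ nor be applied below degree $n$. This endpoint-and-parity accounting is the main (and essentially the only) obstacle.
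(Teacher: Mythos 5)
Your proposal is correct and is exactly the argument the paper intends: the corollary is stated without proof immediately after Lemma \ref{m1ncase1}, and it follows by summing the coefficient formulas of that lemma over $n+1\le i\le 2m$, splitting by parity and using the truncated arithmetic--geometric series identities (the paper's \eqref{sum-geom}). Your endpoint bookkeeping and the sanity check at $n=2m$ (where the stated rational function does collapse to $0$) are both sound.
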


Finally we find the remaining odd coefficients $b_i$ of the Hilbert series $P_{\amn}$.
\begin{lemma}\label{m1ncase2} Let $\mathcal{A}$ satisfy the symmetry property \eqref{assumption}. Let $i$ be odd such that  $2m+n-1\ge i \ge \max(2m-1, n-1)$. Then $b_i=\frac{i+1}{2}-[\frac{n+1}{2}]$ in the Hilbert series (\ref{poincare2}).
\end{lemma}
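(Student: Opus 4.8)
The plan is to follow the scheme of Lemmas \ref{arrangecase1} and \ref{arrangecase2}, reducing the claim to a rank computation for the matrix $M$ of quasi-invariance conditions for a homogeneous polynomial $q$ of odd degree $i$. Since $i\le 2m+n-1$, the quasi-invariance condition for $\beta_0$ forces the coefficients of $x^{i-2j+1}y^{2j-1}$ to vanish for $1\le j\le m$ whenever $2m-1 \le i$, and more generally kills all the low odd powers of $y$; writing out which monomials survive, the conditions for $\beta_1,\ldots,\beta_n$ become a linear system whose matrix, after elimination of the columns that come from the $\beta_0$-constraints and elementary transformations exploiting the symmetry \eqref{assumption} ($\alpha_j^2=\alpha_{[n/2]+j}^2$, and $\alpha_n=0$ for odd $n$), splits into a Vandermonde-type block in the variables $\alpha_1,\ldots,\alpha_{[n/2]}$ together with a block $\mathcal B$ of the form \eqref{CiX} with $X=m+(i-2m-2n+1)/2$ or a similar explicit value. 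Since $b_i = i+1-m-\rank M$, the statement $b_i=\frac{i+1}{2}-[\frac{n+1}{2}]$ is equivalent to $\rank M$ taking a specific value; I would show this by an upper bound on $\rank \mathcal B$ and the exhibition of one non-vanishing minor of the required size.

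First I would record, for the given range $\max(2m-1,n-1)\le i \le 2m+n-1$ with $i$ odd, exactly how the $\beta_0$-conditions constrain $q$ and hence the column structure of $M$, treating the sub-cases $i\le 2m+n-1$ near the top of the range and $i$ near $n-1$ at the bottom (here the argument overlaps with Lemma \ref{m1ncase1}, which handles $n\le i\le 2m$). Next I would invoke Lemma \ref{construction2}: for the configuration $\amn$ the relevant minors $B_L$ of $\mathcal B$ (with the appropriate value of $X=m-s$, where $s$ is read off from $i$) vanish for the stated range of $L$, so repeated use of Lemma \ref{construction8} — exactly as in the proof of Lemma \ref{arrangecase1} — shows that the columns $\mathcal C_j$ of $\mathcal B$ with small index lie in the span of those with large index, giving an upper bound on $\rank \mathcal B$ and hence on $\rank M$. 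Finally, for the matching lower bound I would produce an explicit non-zero minor of $M$ of the required size: a Vandermonde block in $\alpha_1,\ldots,\alpha_{[n/2]}$ bordered by a non-vanishing minor $D_L^k$ of $\mathcal B$ supplied by Lemma \ref{construction8} (for odd $n$, adjoining the first column and last row of $\mathcal B$ as in the earlier proofs). Combining the two bounds pins down $\rank M$ and hence $b_i$.

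The main obstacle I anticipate is bookkeeping rather than conceptual: one must check that the value of $X$ induced by an odd degree $i$ in this range is precisely of the form $m-s$ with $1\le s\le[n/2]$ (or more exactly in the range where Lemma \ref{construction2} applies), so that the vanishing of the $B_L$ really does kick in, and simultaneously that the degree count $i+1-m-\rank M$ comes out to $\frac{i+1}{2}-[\frac{n+1}{2}]$ uniformly across the even/odd $n$ dichotomy and across the overlap with Lemmas \ref{m1ncase1} and \ref{arrangecase1}. I would handle the even and odd $n$ cases separately, as in all the preceding lemmas, and verify the endpoint cases $i=2m-1$, $i=n-1$, and $i=2m+n-1$ by hand to make sure no off-by-one error creeps into the range of $L$ or the size of the asserted non-vanishing minor. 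Once the rank is shown to be $[\frac{n+1}{2}] + (\text{size of the Vandermonde block})$ the formula for $b_i$ is immediate.
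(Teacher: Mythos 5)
Your plan misidentifies the mechanism at work in this lemma, and one of the key steps you propose would not apply. First, the statement is for an \emph{arbitrary} configuration of type $(m,1^n)$ satisfying the symmetry \eqref{assumption}, whereas Lemma \ref{construction2}, which you want to invoke for the upper bound on the rank, holds only under the $\amn$-specific values \eqref{erintheconfigequality} of the $\widehat e_r$. Following that route you would at best prove the claim for $\amn$, not for the general symmetric configuration to which Corollary \ref{pm1nexp2} (and hence the later sections) apply the lemma. Second, and more fundamentally, the block of the form \eqref{CiX} that you anticipate does not arise when the degree $i$ is odd. For odd $i$ the quasi-invariance condition at $\beta=(1,\alpha)$ couples the even-index coefficients of $q$ only to \emph{even} powers of $\alpha$ and the odd-index coefficients only to \emph{odd} powers; under \eqref{assumption}, adding and subtracting the conditions for $\pm\alpha_j$ decouples the system completely into two blocks, and in each block the two-term columns reduce by elementary column operations (starting from the column that is already a pure power of $\alpha$) to a genuine Vandermonde matrix. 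The machinery of the minors $B_L$, Lemma \ref{construction2} and Lemma \ref{construction8} is the apparatus for the \emph{even}-degree case (Lemmas \ref{arrangecase1}, \ref{arrangecase2}); here it is neither needed nor applicable.

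Concretely, the paper's proof writes the matrix of conditions as a block-anti-diagonal matrix with an $[n/2]\times(\frac{i+1}{2}-m)$ Vandermonde block $A$ in the odd powers $\alpha_1,\dots,\alpha_{[n/2]}$ and a $[\frac{n+1}{2}]\times\frac{i+1}{2}$ Vandermonde block $B$ in the even powers, so that $\rank D=\min([\frac{n+1}{2}],\frac{i+1}{2})+\min(\frac{i+1}{2}-m,[n/2])=[\frac{n+1}{2}]+\frac{i+1}{2}-m$ in the stated range of $i$, and $b_i=i+1-m-\rank D$ gives the answer. Note in particular that \emph{both blocks have full rank}: the rank deficiency of $D$ relative to $\min(i+1-m,n)$ comes entirely from the parity splitting, not from any vanishing minors. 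Your proposed strategy of showing that low-index columns lie in the span of high-index ones would be looking for a degeneration that does not occur here, and would not produce the correct count.
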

\begin{proof} Let $q$ be a homogeneous quasi-invariant of degree $i$ where $2m+1 \leq i \leq 2m+n-1$ where $i$ is odd. The matrix expressing quasi-invariance conditions as equations on the coefficients of $q$ can be rearranged to the form
\begin{equation*}
D=\left( \begin{array}{cccc}
0&A\\
B&0
\end{array}\right)
\end{equation*}
where the block $A$ consists of columns $A_j$,  $1 \leq j \leq \frac{i+1}{2}-m$, while the block $B$ consists of  columns $B_j$, $1 \leq j \leq \frac{i+1}{2}$, with

\begin{equation*}
A_j=\left( \begin{array}{c}
\alpha_1^{2j-1}\\
\alpha_2^{2j-1}\\
\alpha_3^{2j-1}\\
\vdots \\
\alpha_{[n/2]}\\
\end{array}\right), \quad
B_j=\left( \begin{array}{c}
\alpha_1^{2j-2}\\
\alpha_2^{2j-2}\\
\alpha_3^{2j-2}\\
\vdots \\
\alpha_{[\frac{n+1}{2}]}^{2j-2}\\
\end{array}\right).
\end{equation*}
Thus $\rank D=\min ([\frac{n+1}{2}], \frac{i+1}{2})+\min(\frac{i+1}{2}-m, [n/2])=[\frac{n+1}{2}]+\frac{i+1}{2}-m$. So we have $b_i=\frac{i+1}{2}-[\frac{n+1}{2}]$.
\end{proof}

\begin{corollary}\label{pm1nexp2} Let $\mathcal{A}$ satisfy the symmetry property \eqref{assumption}.  If $2m \ge n$ then
\begin{align*}&P_{odd}^{2m+1, 2m+n-1}=\frac{(m-n/2+1)t^{2m+1}+(n/2-m)t^{2m+3}-(m+1)t^{2m+n+1}+mt^{2m+n+3}}{(t^2-1)^2}
\end{align*}
if $n$ is even and
\begin{align*}&P_{odd}^{2m+1, 2m+n-2}
=\frac{(m-\frac{n+1}{2}+1)t^{2m+1}+(\frac{n+1}{2}-m)t^{2m+3}-mt^{2m+n}+(m-1)t^{2m+n+2}}{(t^2-1)^2}
\end{align*}
if $n$ is odd. If $2m \le n$ then
\begin{align*}&P_{odd}^{n+1, 2m+n-1}=\frac{t^{n+1}-(m+1)t^{2m+n+1}+mt^{2m+n+3}}{(t^2-1)^2}
\end{align*}
if $n$ is even and
\begin{align*}&P_{odd}^{n+1, 2m+n-2}=\frac{t^{n+2}-mt^{2m+n}+(m-1)t^{2m+n+2}}{(t^2-1)^2}
\end{align*}
if $n$ is odd.
\end{corollary}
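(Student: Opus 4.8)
The corollary follows from Lemma~\ref{m1ncase2} by a routine summation, in the same spirit as Corollaries~\ref{pg2}, \ref{pg3}, \ref{pm1nexp3} and~\ref{pm1nexp4}. The plan is as follows.

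\emph{Step 1: reduce to Lemma~\ref{m1ncase2}.} In each of the four cases I would identify the smallest and largest odd degrees $a\le b$ appearing in the relevant series $P_{odd}^{\cdot,\cdot}$ and check that every odd $i$ with $a\le i\le b$ satisfies the hypothesis $\max(2m-1,n-1)\le i\le 2m+n-1$ of Lemma~\ref{m1ncase2}. When $2m\ge n$ one has $\max(2m-1,n-1)=2m-1$ and the summation starts at $a=2m+1>2m-1$; when $2m\le n$ one has $\max(2m-1,n-1)=n-1$ and the summation starts at $a=n+1>n-1$, which must be replaced by $a=n+2$ when $n$ is odd since then $n+1$ is even and contributes nothing to $P_{odd}$. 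In all cases the upper limit $b$ is $2m+n-1$ or $2m+n-2$, both $\le 2m+n-1$. Hence $b_i=\tfrac{i+1}{2}-[\tfrac{n+1}{2}]$ for every odd $i$ in the range, and Lemma~\ref{m1ncase1} is not needed here.

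\emph{Step 2: evaluate the sum.} Put $c=[\tfrac{n+1}{2}]$, $K=(b-a)/2$, and write $i=a+2k$, $x=t^2$. Then
\[
P_{odd}^{a,b}=\sum_{\substack{a\le i\le b\\ i\ \mathrm{odd}}}\Bigl(\tfrac{i+1}{2}-c\Bigr)t^i=t^a\sum_{k=0}^{K}\bigl(b_a+k\bigr)x^k,\qquad b_a=\tfrac{a+1}{2}-c.
\]
Using $\sum_{k=0}^{K}(b_a+k)x^k=b_a\frac{1-x^{K+1}}{1-x}+\frac{x-(K+1)x^{K+1}+Kx^{K+2}}{(1-x)^2}$ and simplifying, the right-hand side collapses to
\[
P_{odd}^{a,b}=\frac{b_a\,t^{a}+(1-b_a)\,t^{a+2}-(b_b+1)\,t^{b+2}+b_b\,t^{b+4}}{(t^2-1)^2},
\]
where $b_b=\tfrac{b+1}{2}-c$ (using $a+2K=b$ and $b_a+K=b_b$). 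It then remains only to substitute the explicit values of $a,b,b_a,b_b$ in each of the four cases, which reproduces the four stated expressions. Note that whenever $2m\le n$ one has $b_a=1$, so the $t^{a+2}$ term drops out, which is exactly why those two numerators have only three terms; degenerate (empty) ranges likewise give $0$, consistent with the formulas.

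There is no substantial obstacle here: the only real content is the elementary arithmetic--geometric summation of Step~2, everything else being parity and boundary bookkeeping. The points deserving care are the odd-$n$ cases, where the first odd degree contributing to $P_{odd}$ is $n+2$ rather than $n+1$ so that the lower exponents in the numerator shift accordingly, and the verification in Step~1 that the lower endpoint $a$ actually lies in the range of Lemma~\ref{m1ncase2}, which holds precisely because $2m+1>2m-1$ and $n+1>n-1$.
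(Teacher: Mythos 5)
Your proposal is correct and is exactly what the paper intends: the corollary is stated without proof immediately after Lemma~\ref{m1ncase2}, from which it follows by the same arithmetic--geometric summation \eqref{sum-geom} used throughout Section~\ref{poincare-am1n}. Your range checks (including the shift of the lower odd endpoint to $n+2$ when $n$ is odd) and the closed-form evaluation all verify against the four stated expressions.
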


Thus we arrive at the main result of this Section.

\begin{theorem}\label{anm-is-gor} The Hilbert series of the  algebra of quasi-invariants $Q_{\amn}$ is given by
\begin{equation*}P_{\amn}(t)=\frac{1-t^2+t^{n+1}+t^{n+2}+t^{2m+n}+t^{2m+n+1}-t^{2m+2n}+t^{2m+2n+2}}{(t^2-1)^2}.
\end{equation*}
\end{theorem}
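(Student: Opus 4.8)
The plan is to assemble $P_{\amn}(t)=\sum_{k\ge 0}b_k t^k$ from the values of the coefficients $b_k$ that have already been determined, organising the sum by the degree ranges covered by the preceding corollaries. Recall first that, as explained after Proposition~\ref{eqofconfigs}, the configuration $\amn$ is equivalent (after a rotation by $\pi/2$ and a renumbering of the multiplicity-one vectors) to a configuration of type $(m,1^n)$ satisfying the symmetry property \mref{assumption}, and then also the relations \mref{er-hat-explicitely}. Since the algebra $Q_{\A}$, hence its Hilbert series, depends only on the equivalence class of $\A$, all the results of Sections~\ref{quasi-preliminary} and~\ref{poincare-am1n} apply to $\amn$: Corollaries~\ref{pg1-cor}, \ref{pg2}, \ref{pg3} (any configuration of type $(m,1^n)$), Corollary~\ref{pm1nexp3} (the configuration $\amn$ itself), and Corollaries~\ref{pm1nexp4}, \ref{pm1nexp1}, \ref{pm1nexp2} (any configuration with the symmetry \mref{assumption}).

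Next I would split into the four cases given by the parity of $n$ and by whether $n\le 2m$ or $n>2m$. In each case the set of degrees decomposes into consecutive blocks, each furnished by one corollary. For instance, when $n$ is even and $n\le 2m$,
\begin{align*}
P_{\amn} &= P^{0,n} + P^{n+1,2m} + P^{2m+1,\,2m+n-1}_{odd} + P^{2m+2,\,2m+n-2}_{even} \\
&\quad + P^{2m+n,\,2m+2n-2}_{even} + P^{2m+n+1,\,2m+2n-3}_{odd} + P^{2m+2n-1,\,\infty},
\end{align*}
the seven summands coming respectively from Corollaries~\ref{pg1-cor}, \ref{pm1nexp1}, \ref{pm1nexp2}, \ref{pm1nexp4}, \ref{pm1nexp3}, \ref{pg2}, \ref{pg3}; one checks that these ranges, together with the even/odd restrictions, cover each non-negative integer exactly once. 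When $n>2m$ the block $P^{n+1,2m}$ is absent and the range $n+1\le k\le 2m+n-1$ is filled instead by the blocks $P^{n+1,\,2m+n-2}_{even}$ and $P^{n+1,\,2m+n-1}_{odd}$ of Corollaries~\ref{pm1nexp4} and~\ref{pm1nexp2} in their $n\ge 2m+1$ branch; the two cases with $n$ odd are the same up to the parity shifts of the endpoints already recorded in those corollaries.

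Finally, in each case I would bring all summands over the common denominator $(t^2-1)^2$, using $\frac{1-t^{n+2}}{1-t^2}=\frac{(1-t^{n+2})(1-t^2)}{(t^2-1)^2}$ for the initial block and $\frac{t^{2m+2n-1}(m+n-(m+n-1)t)}{(1-t)^2}=\frac{t^{2m+2n-1}(m+n-(m+n-1)t)(1+t)^2}{(t^2-1)^2}$ for the tail, and then add the numerators. In the representative case above the coefficients of $t^{n+4}$, $t^{2m+1}$, $t^{2m+2}$, $t^{2m+3}$, $t^{2m+4}$, $t^{2m+n+2}$, $t^{2m+n+3}$, $t^{2m+2n-1}$ and $t^{2m+2n+1}$ all cancel, and the surviving numerator is exactly $1-t^2+t^{n+1}+t^{n+2}+t^{2m+n}+t^{2m+n+1}-t^{2m+2n}+t^{2m+2n+2}$; the remaining three cases are handled identically and give the same numerator.

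The only genuinely delicate part is the bookkeeping: checking, separately in each of the four parity/size cases, that the blocks taken from the corollaries tile the degrees with no gap and no overlap (the endpoints depend on the parity of $n$ and on the sign of $n-2m$), and confirming that in the degenerate small cases — such as $m=1$, or $n$ equal to $2m$ or $2m+1$ — the rational expressions of the corollaries, which are written so as to telescope, still contribute correctly, some blocks then being empty. Once the blocks are correctly assembled the final cancellation is a routine polynomial identity. I would close by remarking that the numerator $N(t)$ obtained satisfies $t^{2m+2n+2}N(1/t)=N(t)$, whence $P_{\amn}(1/t)=t^{2-2m-2n}P_{\amn}(t)$, the functional equation characteristic of the Hilbert series of a Gorenstein algebra; the Gorenstein property of $Q_{\amn}$ follows as an immediate corollary.
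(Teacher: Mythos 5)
Your proposal is correct and follows exactly the paper's route: the paper's proof of Theorem~\ref{anm-is-gor} consists precisely of citing Corollaries~\ref{pg1-cor}, \ref{pg2}, \ref{pg3}, \ref{pm1nexp3}, \ref{pm1nexp4}, \ref{pm1nexp1}, \ref{pm1nexp2} and summing the resulting blocks over the common denominator $(t^2-1)^2$, which is the assembly you describe (with the same parity/size case split and the same justification, via Proposition~\ref{eqofconfigs}, that the symmetry~\eqref{assumption} results apply to $\amn$). Your closing remark on palindromicity of the numerator and the Gorenstein consequence likewise matches the paper's subsequent discussion.
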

The Theorem follows from Corollaries \ref{pg1-cor}, \ref{pg2}, \ref{pg3}, \ref{pm1nexp3}, \ref{pm1nexp4}, \ref{pm1nexp1}, \ref{pm1nexp2}.

\section{Gorenstein configurations of type $(m,1^n)$} \label{all-gorenstein-am1n}

Let $\mathcal A$ be a configuration of vectors $\beta_j \in \C^2$, $0\le j \le n$, with multiplicities $m_j \in \N$. As before we assume that $(\beta_j, \beta_j)\ne 0$ $\forall j$. Let $Q_{\mathcal A}\subset \C[x_1,x_2]$ be the associated algebra of quasi-invariants. Let $P_{\mathcal A}(t)$ be its Hilbert series.

\begin{definition}
A configuration $\mathcal A$ is called {\it Gorenstein} if $P_{\mathcal A}(t^{-1})=t^M P_{\mathcal A}(t)$ for some $M\in \Z$.
\end{definition}

This terminology is justified by the fact that the algebra $Q_{\mathcal A}$ is Gorenstein if and only if the configuration $\mathcal A$ is Gorenstein. This follows from the Stanley criterion \cite{Stan} and the following proposition.
\begin{proposition}
The graded ring $Q_{\mathcal A}$ is  Cohen-Macaulay.
\end{proposition}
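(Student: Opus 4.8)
Write $S=\C[x_1,x_2]$, $R=Q_{\mathcal A}$, $\delta=\prod_{i}(\beta_i,x)^{2m_i}$, let $R_+$ denote the homogeneous maximal ideal of $R$, and let $\Pi_j=\{(\beta_j,x)=0\}$. The plan is to prove directly that $\operatorname{depth}_{R_+}R=2=\dim R$, from which Cohen--Macaulayness follows.

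\emph{Framing $R$ between $\delta S$ and $S$.} First I would record the inclusions $\delta S\subseteq R\subseteq S$: the right one is the definition of $Q_{\mathcal A}$, and for the left one, if $h\in S$ then $\delta h$ is divisible by $(\beta_j,x)^{2m_j}$, so $\partial_{\beta_j}^{2s-1}(\delta h)$ is divisible by $(\beta_j,x)^{2m_j-2s+1}$, a positive power when $1\le s\le m_j$, hence it vanishes on $\Pi_j$; as $j$ is arbitrary, $\delta h\in R$. Consequently $S\cong\delta S$ is an $R$-submodule of the Noetherian module $R$, so $S$ is a finitely generated $R$-module; in particular $R$ is a two-dimensional graded domain and $S/R$ is a finitely generated $R$-module killed by $\delta$.

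\emph{Generically non-vanishing quasi-invariants.} For each $j$ set $\mathfrak p_j=(\beta_j,x)S$ and $\mathfrak q_j=\mathfrak p_j\cap R$. Since $S/\mathfrak p_j$ is module-finite over $R/\mathfrak q_j$ one has $\dim R/\mathfrak q_j=\dim S/\mathfrak p_j=1$, so $\mathfrak q_j\ne R_+$; as both are homogeneous there is a homogeneous $g_j\in R_+$ with $(\beta_j,x)\nmid g_j$, i.e.\ $g_j|_{\Pi_j}\not\equiv 0$, and being homogeneous of positive degree $g_j|_{\Pi_j}$ then vanishes only at the origin. Note this also gives $V(R_+S)=\{0\}$: since $R_+S$ contains $\delta x_1^2,\delta x_1x_2,\delta x_2^2$ and all the $g_j$, any point of $V(R_+S)$ lies on some $\Pi_k$ (or equals $0$) and lies in $V(g_k)$, hence equals $0$; so $\sqrt{R_+S}=S_+$.

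\emph{The key vanishing, and conclusion.} The crux is to show that if $f\in S$ is homogeneous with $R_+^Nf\subseteq R$ for some $N\ge 1$, then $f\in R$. Fix $j$ and put $g=g_j^N\in R_+^N$; then $g$ and $gf$ both lie in $R$, so both satisfy $\partial_{\beta_j}^{2s-1}(\,\cdot\,)|_{\Pi_j}=0$ for $1\le s\le m_j$, while $g|_{\Pi_j}\not\equiv 0$. I would prove $\partial_{\beta_j}^{2s-1}f|_{\Pi_j}=0$ by induction on $s$: expanding $\partial_{\beta_j}^{2s-1}(gf)|_{\Pi_j}=0$ by the Leibniz rule, a term $\binom{2s-1}{r}\partial_{\beta_j}^{r}(g)\,\partial_{\beta_j}^{2s-1-r}(f)|_{\Pi_j}$ with $1\le r\le 2s-1$ vanishes for $r$ odd because $\partial_{\beta_j}^{r}(g)|_{\Pi_j}=0$ (here $r$ is odd and $\le 2m_j-1$), and for $r$ even because $2s-1-r$ is then an odd integer in $\{1,3,\dots,2s-3\}$ so the inductive hypothesis applies; what survives is $g|_{\Pi_j}\cdot\partial_{\beta_j}^{2s-1}(f)|_{\Pi_j}=0$ on $\Pi_j$, and since $\Pi_j\cong\C$ is a domain and $g|_{\Pi_j}\not\equiv0$ the claim follows. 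Running this over all $j$ shows $f\in R$. Finally, from $0\to R\to S\to S/R\to 0$ and $H^0_{R_+}(S)=H^1_{R_+}(S)=0$ (by the previous paragraph $H^i_{R_+}(S)=H^i_{S_+}(S)=0$ for $i<\operatorname{depth}S=2$) the long exact sequence gives $H^1_{R_+}(R)\cong H^0_{R_+}(S/R)$, which is exactly the $R_+$-torsion of $S/R$, i.e.\ the classes of $f\in S$ with $R_+^Nf\subseteq R$ for some $N$, hence $0$ by the key step; together with $H^0_{R_+}(R)=0$ ($R$ is a domain of positive dimension) this gives $\operatorname{depth}_{R_+}R=2$, so $R$ is Cohen--Macaulay. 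The main obstacle is the key vanishing step and, inside it, the existence of the elements $g_j$; once these are in hand the Leibniz induction and the homological bookkeeping are routine. (Alternatively one could phrase the last step concretely by exhibiting a homogeneous system of parameters $\theta_1,\theta_2\in R$ and checking it is a regular sequence, but the local-cohomology formulation is the most economical.)
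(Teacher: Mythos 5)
Your overall strategy is sound and in fact more self-contained than the paper's own proof: the paper simply exhibits the homogeneous system of parameters $P_1=x_1^2+x_2^2$, $P_2=\prod_j(\beta_j,x)^{2m_j}$ inside $Q_{\mathcal A}$, deduces finiteness over $\C[P_1,P_2]$, and refers to \cite{DJ} for the freeness, i.e.\ for the actual Cohen--Macaulay content; your Leibniz-rule saturation argument ($R_+^Nf\subseteq R\Rightarrow f\in R$) together with the local-cohomology bookkeeping genuinely proves that content. The saturation step itself is correct: the induction on $s$ splits the Leibniz expansion exactly as you describe, and the surviving term $g|_{\Pi_j}\cdot\partial_{\beta_j}^{2s-1}f|_{\Pi_j}$ forces the odd normal derivatives of $f$ to vanish because $\C[\Pi_j]$ is a domain and $g|_{\Pi_j}\not\equiv 0$.

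There is, however, one genuine circularity at the start. You justify the finiteness of $S=\C[x_1,x_2]$ over $R=Q_{\mathcal A}$ by saying that $\delta S\cong S$ embeds in ``the Noetherian module $R$''; but $R$ being a Noetherian $R$-module is the same as $R$ being a Noetherian ring, which is not known at that point and is not automatic for graded subrings of $\C[x_1,x_2]$ (e.g.\ $\C[x_1,x_1x_2,x_1x_2^2,\dots]$ is not Noetherian). Everything downstream --- the existence of the $g_j$ via $\dim R/\mathfrak q_j=1$, hence $\sqrt{R_+S}=S_+$, hence $H^0_{R_+}(S)=H^1_{R_+}(S)=0$ --- rests on this. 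The repair is exactly the paper's first observation: $x_1^2+x_2^2\in Q_{\mathcal A}$, because $\partial_{\beta_j}(x,x)=2(\beta_j,x)$ vanishes on $\Pi_j$ and all higher derivatives of a quadratic vanish identically; since $x_1^2+x_2^2$ and $\delta$ have no common zero away from the origin, $S$ is module-finite over $\C[x_1^2+x_2^2,\delta]\subseteq R$, which makes $R$ Noetherian and $S$ module-finite over $R$. Moreover $(x,x)$ restricted to $\Pi_j$ equals $(\beta_j,\beta_j)$ times the square of a linear coordinate, which is nonzero by the paper's standing non-isotropy assumption, so this single element serves as $g_j$ for every $j$ and your prime-avoidance detour becomes unnecessary. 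With that element inserted at the beginning, your proof is complete.
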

\begin{proof}
Consider $P_1=x_1^2+x_2^2$, $P_2=\prod_{j=0}^n (\beta_j,x)^{2 m_j}$. It is easy to see that $P_1, P_2 \in Q_{\mathcal A}$. Let $I\subset \C[x_1, x_2]$ be the ideal generated by $P_1, P_2$. Since $P_1, P_2$ have no common zeroes outside the origin it follows that $\C[x_1, x_2]$ and hence $Q_{\mathcal A}$ are finite over $\C[P_1, P_2]$. It is easy to see that  $Q_{\mathcal A}$ is free over $\C[P_1, P_2]$ (see e.g. \cite{DJ} for details).
\end{proof}

Theorem \ref{anm-is-gor} implies that the configuration $\amn$ is Gorenstein. The main result of this Section is the converse statement that there are no other Gorenstein configurations of type $(m,1^n)$.
We establish it  by studying the quasi-invariant conditions and possible Hilbert series $P_{\mathcal A}(t)$.

Let $\mathcal A$ be of type $(m,1^n)$ and recall the notations from the beginning of Section~\ref{quasi-preliminary}. In particular, we fix $\beta_0=(0,1)$ and we denote $\beta_j=(1,\alpha_j)$ for $1\le j \le n$. It will be convenient to introduce the parameter $r$ associated with the configuration $\mathcal A$ as follows:
\begin{equation*}r=\text{number of different}\,\, \alpha_i^2, \,\, i=1,2, \ldots ,n.
\end{equation*}
We assume that the vectors $\beta_j$ are numerated so that $\alpha_1^2, \ldots, \alpha_r^2$ are pairwise different.
Parts of the Hilbert series \eqref{poincare2} depend on the parameter $r$ only rather than on the full geometry of the configuration $\mathcal A$. In this situation we will be using notations
\begin{equation*} P_{r}^{k,l}=\sum_{i=k}^lb_it^i, \quad  P_{r, odd }^{k,l}=\sum_{k\leq 2i+1 \leq l}b_{2i+1}t^{2i+1}, \quad P_{r, even}^{k,l}=\sum_{k \leq 2i \leq l}b_{2i}t^{2i}.
 \end{equation*}

\begin{lemma}\label{pg2m-1<n+11}
If $2r\le m+n$ then
\begin{align*}P_{r, odd}^{n+1, 2m+n-1}=\frac{t^{2r+1}+t^{2n+2m-2r+1}-(m+2)t^{2m+n+1}+mt^{2m+n+3}}{(t^2-1)^2}
\end{align*}
if $n$ is even and
\begin{align*}P_{r, odd}^{n+2, 2m+n-2}=\frac{t^{2r+1}+t^{2n+2m-2r+1}-(m+1)t^{2m+n}+(m-1)t^{2m+n+2}}{(t^2-1)^2}
\end{align*}
if $n$ is odd.

For $2r>m+n$ let $D$ be the matrix expressing the quasi-invariance conditions w.r.t. the vectors $\beta_1,\ldots,\beta_n$ for a homogeneous polynomial $q$ of odd degree $i$ with $n+1 \leq i \leq 2m+n-1$. Suppose that $D$ has maximal possible rank, that is $\emph{\rank} D=\min(i+1-m, n)$ if
\begin{equation*}\frac{i+1}{2}\leq r\,\,\,\,\, \text{and}\,\, \,\,\,n-r<\frac{i+1}{2}-m.
\end{equation*}
If $n, m$ are even then
\begin{align*}P_{odd}^{n+1, 2m+n-1}=\frac{2t^{m+n+1}-(m+2)t^{2m+n+1}+mt^{2m+n+3}}{(t^2-1)^2}.
\end{align*}
If $n, m$ are odd then
\begin{align*}P_{odd}^{n+1, 2m+n-1}=\frac{2t^{m+n+1}-(m+1)t^{2m+n}+(m-1)t^{2m+n+2}}{(t^2-1)^2}.
\end{align*}
If $n$ is even, $m$ is odd then
\begin{align*}P_{odd}^{n+1, 2m+n-1}=\frac{t^{m+n}+t^{m+n+2}-(m+2)t^{2m+n+1}+mt^{2m+n+3}}{(t^2-1)^2}.
\end{align*}
If $n$ is odd, $m$ is even then
\begin{align*}P_{odd}^{n+1, 2m+n-1}=\frac{t^{m+n}+t^{m+n+2}-(m+1)t^{2m+n}+(m-1)t^{2m+n+2}}{(t^2-1)^2}.
\end{align*}
\end{lemma}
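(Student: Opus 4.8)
The plan is to follow the same template as Lemmas~\ref{m1ncase1}, \ref{m1ncase2} and Proposition~\ref{lemma2m+2n-2}: for each odd degree $i$ in the stated range one computes $b_i$ from the rank of a Vandermonde-type coefficient matrix, and then sums $\sum_i b_i t^i$. Writing a homogeneous polynomial of degree $i$ as $q=\sum_{j=0}^i a_j x^{i-j}y^j$, the conditions at $\beta_0=(0,1)$ force $a_1=a_3=\dots=a_{2m-1}=0$, leaving $i-m+1$ coefficients, and (as in the proof of Lemma~\ref{pg1}) the condition at $\beta_\ell$ for $1\le\ell\le n$ reads $\sum_{k=0}^{i-1}c_k\alpha_\ell^k=0$ with $c_k=(k+1)a_{i-k-1}-(i-k+1)a_{i-k+1}$, so $b_i=i-m+1-\rank D$, where $D$ is the $n\times(i-m+1)$ matrix of these conditions. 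Since $i$ is odd, the $c_k$ with $k$ even are an invertible bidiagonal function of the even-index $a$'s and those with $k$ odd an invertible function of the free odd-index $a$'s; hence after elementary column operations $D$ becomes the submatrix of the Vandermonde matrix $(\alpha_\ell^p)_{1\le\ell\le n,\,0\le p\le i-1}$ obtained by deleting the $m-1$ columns with $p$ odd and $i-2m<p<i$. Equivalently, $D$ has one block of width $\tfrac{i+1}{2}$ whose $\ell$-th row is $(1,\alpha_\ell^2,\dots,\alpha_\ell^{i-1})$ and, when $i\ge 2m+1$, another block of width $\tfrac{i-2m+1}{2}$ whose $\ell$-th row is $\alpha_\ell(1,\alpha_\ell^2,\dots,\alpha_\ell^{i-2m-1})$.

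In the regime $2r\le m+n$ one shows that $\rank D$ depends only on $r$. I would argue this exactly as in Proposition~\ref{lemma2m+2n-2}: any drop of $\rank D$ below the value it takes for the symmetric configuration forces, after relabelling, the symmetry $\alpha_k^2=\alpha_{[n/2]+k}^2$ for $1\le k\le[n/2]$; for such configurations $D$ is block anti-diagonal, as in Lemma~\ref{m1ncase2}, and the ranks of its two Vandermonde-in-$\alpha_\ell^2$ blocks (which have $r$ distinct entries) are evaluated using Lemmas~\ref{construction1}, \ref{construction3}, \ref{construction8}, \ref{construction11} to be $\rank D=\min(r,\tfrac{i+1}{2})+\min(r,\tfrac{i-2m+1}{2})$, the second term absent for $i\le 2m$. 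Substituting into $b_i=i-m+1-\rank D$ gives $b_i=0$ for $i<2r+1$, $b_i=\tfrac{i-2r+1}{2}$ for $2r+1\le i<2m+2n-2r+1$, and $b_i=\tfrac{i-2r+1}{2}+\tfrac{i-2m-2n+2r+1}{2}$ for $2m+2n-2r+1\le i\le 2m+n-1$ (using Corollary~\ref{corronodd2m+n-1} at the top of the range); summing over the odd $i$ in $[n+1,2m+n-1]$ for even $n$ and in $[n+2,2m+n-2]$ for odd $n$ produces the two stated rational functions, the terms near $t^{2m+n}$ being exactly what is needed to truncate the series past the top of the range.

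In the regime $2r>m+n$, imposing that $D$ has maximal possible rank, the block of even columns has full column rank exactly when $\tfrac{i+1}{2}\le r$, and the two blocks jointly realise $n$ independent conditions exactly when $n-r<\tfrac{i+1}{2}-m$; outside this subrange the construction lemmas again pin $\rank D$ to an $r$-independent value, so in particular $\rank D=\min(i+1-m,n)$ for $2(n-r)+2m+1\le i\le 2r-1$ and $b_i=\max(0,i+1-m-n)$ there. Summing $\sum_i b_i t^i$ over odd $i\in[n+1,2m+n-1]$, the numerator acquires a head term $2t^{m+n+1}$ when $m+n$ is even and a pair $t^{m+n}+t^{m+n+2}$ when $m+n$ is odd, while the tail terms sit at $t^{2m+n+1},t^{2m+n+3}$ when $n$ is even and at $t^{2m+n},t^{2m+n+2}$ when $n$ is odd; this yields the four stated formulas. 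The main obstacle I anticipate is precisely this boundary and parity bookkeeping, together with the symmetry-forcing step borrowed from Proposition~\ref{lemma2m+2n-2}: one must locate exactly the odd degree at which each rank formula switches, treat the borderline degrees near $2r-1$, $2(n-r)+2m$ and $2m+n-1$ directly, and check that the resulting finite sums of $b_it^i$ collapse to the claimed closed forms.
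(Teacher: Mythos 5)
Your overall strategy is the same as the paper's: for each odd degree $i$ reduce the quasi-invariance conditions to a Vandermonde-type matrix, compute $b_i$ from its rank, and sum. The final values of $b_i$ you write down for the regime $2r\le m+n$ (namely $0$, $\tfrac{i+1}{2}-r$, $i+1-m-n$ on the three ranges separated at $2r+1$ and $2m+2n-2r+1$) agree with the paper. But the rank formula you propose to derive them from, $\rank D=\min(r,\tfrac{i+1}{2})+\min\bigl(r,\tfrac{i-2m+1}{2}\bigr)$, is wrong: the second summand must be $\min\bigl(n-r,\tfrac{i+1}{2}-m\bigr)$. The two blocks of $D$ share all $n$ rows, so $\rank D$ is not the sum of their standalone ranks; one must first subtract paired rows (if $\alpha_a^2=\alpha_b^2$ with $a\ne b$ then $\alpha_a=-\alpha_b$, so subtraction kills the even-power block and doubles the odd-power block) to reach an anti-triangular form $\left(\begin{smallmatrix}0&A\\ B&*\end{smallmatrix}\right)$ in which $B$ is the $r\times\tfrac{i+1}{2}$ Vandermonde in the distinct $\alpha_\ell^2$ and $A$ is the $(n-r)\times(\tfrac{i+1}{2}-m)$ block of odd powers on the $n-r$ duplicated rows. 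Your formula can exceed $n$ (take $n=4$, $r=3$, $m=2$, $i=7$: it gives $3+2=5>4$) and is inconsistent with the $b_i$ you then assert, so as written the derivation does not go through.

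The second problem is the mechanism you invoke. The symmetry-forcing argument of Proposition~\ref{lemma2m+2n-2} and the construction Lemmas~\ref{construction1}, \ref{construction3}, \ref{construction8}, \ref{construction11} are tailored to the \emph{even}-degree matrices, whose columns $C_i(X)$ mix two powers of the same parity; they establish a dichotomy (either full rank or the fully paired configuration $r=[\tfrac{n+1}{2}]$). Here the lemma concerns an \emph{arbitrary} $r$ with $2r\le m+n$, and the rank must be computed for every such configuration, not only the fully symmetric one; forcing the symmetry would only cover $r=[\tfrac{n+1}{2}]$ and cannot produce a formula depending on a general $r$. The correct (and simpler) route is the direct Vandermonde count above, giving $\rank D=\min(r,\tfrac{i+1}{2})+\min(n-r,\tfrac{i+1}{2}-m)$ in all cases; your treatment of the regime $2r>m+n$ (where the maximal-rank hypothesis is consumed exactly on the middle range $2(n-r)+2m+1\le i\le 2r-1$) and the final parity bookkeeping are then essentially right, but they too should not appeal to the construction lemmas.
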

\begin{proof} First let $q$ be a homogeneous quasi-invariant of degree $i$ with $n+1 \leq i \leq 2m-1$, $i$ odd. The matrix $D$ expressing the  quasi-invariance conditions as equations on the coefficients of $q$ consists of $\frac{i+1}{2}$ columns $B_j$
\begin{equation*}
B_j=\left( \begin{array}{c}
\alpha_1^{2j-2}\\
\alpha_2^{2j-2}\\
\alpha_3^{2j-2}\\
\vdots \\
\alpha_{r}^{2j-2}\\
\end{array}\right),
\end{equation*}
where $1 \leq j \leq \frac{i+1}{2}$. We have $\rank D=\min(r, \frac{i+1}{2})$, so that
 \begin{equation} \label{inter-n-m}
 b_i=\begin{cases}
0& \text{if}\,\, n+1 \leq i\leq 2r-1, \\
\frac{i+1}{2}-r & \text{if}\,\,  2r+1 \leq i \leq 2m-1.
\end{cases}
\end{equation}

Now let $q$ be a homogeneous quasi-invariant of degree $i$ with $\max(n+1, 2m+1) \leq i \leq 2m+n-1$, and $i$ is odd. The matrix expressing quasi-invariance conditions as equations on the coefficients of $q$ can be rearranged to the form
\begin{equation*}
D=\left( \begin{array}{cccc}
0&A\\
B&*
\end{array}\right),
\end{equation*}
where the block $A$ consists of columns $A_j$,  $1 \leq j \leq \frac{i+1}{2}-m$, while the block $B$ consists of  columns $B_j$, $1 \leq j \leq \frac{i+1}{2}$, with
\begin{equation*}
A_j=\left( \begin{array}{c}
\alpha_{r+1}^{2j-1}\\
\alpha_{r+2}^{2j-1}\\
\vdots \\
\alpha_{n}^{2j-1}\\
\end{array}\right), \quad
B_j=\left( \begin{array}{c}
\alpha_1^{2j-2}\\
\alpha_2^{2j-2}\\
\vdots \\
\alpha_{r}^{2j-2}\\
\end{array}\right).
\end{equation*}
  Suppose first that $2r\le m+n$. Then there are three possibilities for the shapes of the blocks $A, B$:
\begin{enumerate}[(I)]\item $\frac{i+1}{2}\leq r,\quad \frac{i+1}{2}-m\leq n-r,$
\item $\frac{i+1}{2}> r,\quad \frac{i+1}{2}-m \leq n-r,$
\item $\frac{i+1}{2}>r,\quad \frac{i+1}{2}-m > n-r.$
\end{enumerate}
Note that $\rank M=i+1-m$ in the case (I), $\rank M=r+\frac{i+1}{2}-m$ in the case (II) and $\rank M=n$ in the case (III). Hence the dimension $b_i$ of homogeneous quasi-invariants of degree $i$ where $n+1 \leq i \leq 2m+n-1$, with $i$ odd, is given by
\begin{equation} \label{inter-2} b_i=\begin{cases}
0& \text{if}\,\, i\leq 2r-1, \\
\frac{i+1}{2}-r & \text{if}\,\,  2r+1 \leq i \leq 2m+2n-2r-1, \\
i+1-m-n & \text{if}\,\,  2m+2n-2r+1 \leq i \leq 2m+n-1. \\
\end{cases}
\end{equation}
Thus for even $n$ the dimensions \eqref{inter-2} together with \eqref{inter-n-m}   give
\begin{align*}P_{r,odd}^{n+1, 2m+n-1}&=\sum_{\substack{i=2r+1\\i \, odd}}^{2m+2n-2r-1}(\frac{i+1}{2}-r)t^i+\sum_{\substack{i=2n+2m-2r+1\\i \, odd}}^{2m+n-1}(i+1-m-n)t^i\\
&=\sum_{s=r}^{m+n-r-1} (s+1-r) t^{2s+1} + \sum_{s=n+m-r}^{m+n/2-1}(2s+2-m-n)t^{2s+1}\\
&=\frac{t^{2r+1}+t^{2n+2m-2r+1}-(m+2)t^{2m+n+1}+mt^{2m+n+3}}{(t^2-1)^2},
\end{align*}
where we used the identities
\begin{equation}\label{sum-geom}
\sum_{s=a}^b t^s = \frac{t^a-t^{b+1}}{1-t}, \quad \sum_{s=a}^b (s+1)t^s = \frac{(a+1)t^a-a t^{a+1}-(b+2)t^{b+1}+(b+1)t^{b+2}}{(1-t)^2}.
\end{equation}
The case of odd $n$ is similar.

Suppose now that $2r> m+n$.  Then there are three possibilities for the shapes of the blocks $A, B$: the cases (I), (III) are as above while the case (II) is replaced with
\begin{enumerate}[(II')]\item
$\frac{i+1}{2}\leq r,\quad \frac{i+1}{2}-m > n-r$.
\end{enumerate}
We have   $\rank D = \min(i+1-m, n)$ in the case (II') due to the assumptions. Hence the dimension $b_i$ of homogeneous quasi-invariants of  odd degree $i$, where $n+1 \leq i \leq 2m+n-1$, is given by
\begin{equation*}b_i=\begin{cases}
0& \text{if}\,\, i\leq m+n-1, \\
i+1-m-n & \text{if}\,\,  m+n \leq i \leq 2m+n-1,
\end{cases}
\end{equation*}
where we used that for $i\le m+n-1 <2r-1$ we are in the cases (I), (II') with $\rank D = i+1-m$, and that the case (I) is impossible for $i \ge m+n$.
Thus we have
\begin{align*}P_{r,odd}^{n+1, 2m+n-1}&=\sum_{\substack{i=m+n\\i \, odd}}^{2m+n-1}(i+1-m-n)t^i,
\end{align*}
which gives the required expressions with the help of formulas  \eqref{sum-geom}.
\end{proof}

\begin{lemma}\label{pg2m-1<n+12}
If $m+n$ is odd then
\begin{align*}&P_{even}^{n+1, 2m+2n-2}\\
&=\frac{t^{2[\frac{n+2}{2}]}-t^{2[\frac{n+4}{2}]}+t^{m+n+1}+t^{m+n+3}-(m+n+1)t^{2m+2n}+(m+n-1)t^{2m+2n+2}}{(t^2-1)^2} \\
&+\sum_{j=[(n+2)/2]}^{m+n-1} a_{2j}t^{2j},
\end{align*}
where $a_{2j} \in \Z_{\ge 0}$ $\forall j$. If $m+n$ is even then
\begin{align*}&P_{even}^{n+1, 2m+2n-2}\\
&=\frac{t^{2[\frac{n+2}{2}]}-t^{2[\frac{n+4}{2}]}+ 2 t^{m+n+2}-(m+n+1)t^{2m+2n}+(m+n-1)t^{2m+2n+2}}{(t^2-1)^2}
+\sum_{j=[(n+2)/2]}^{m+n-1} a_{2j}t^{2j},
\end{align*}
where $a_{2j} \in \Z_{\ge 0}$ $\forall j$.
\end{lemma}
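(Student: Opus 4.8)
The plan is to follow the scheme used in the proof of Lemma~\ref{pg2m-1<n+11}: for each even degree $i$ with $n+1\le i\le 2m+2n-2$ I bound from below the dimension $b_i$ of homogeneous quasi-invariants of degree $i$, and then sum the resulting series. Writing a homogeneous $q=\sum_j a_j x^{i-j}y^j$ of degree $i$, the conditions at $\beta_0=(0,1)$ force $a_j=0$ for every odd $j\le\min(2m-1,i-1)$, so the number $N_i$ of remaining coefficients is $i+1-m$ when $i\ge 2m$ and $i/2+1$ when $i<2m$; the quasi-invariance conditions at $\beta_1,\dots,\beta_n$ then form a linear system of at most $n$ equations in these coefficients, whence $b_i=N_i-\operatorname{rank}\ge N_i-n$.

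A second, independent lower bound is furnished by the explicit quasi-invariant $(x_1^2+x_2^2)^{i/2}\in Q_{\mathcal A}^{(i)}$: its odd-order $\beta_0$-derivatives are odd in $x_2$, hence vanish on $\Pi_0$, and each $\partial_{\beta_j}$ applied to it is divisible by $(\beta_j,x)$, so it is a nonzero quasi-invariant and $b_i\ge 1$. Combining the two bounds I will establish $b_i\ge\gamma_i$ for all even $i$ in the range, where $\gamma_i=1$ if $i\le m+n$ and $\gamma_i=i+1-m-n$ if $i\ge m+n+1$; this reduces to the elementary inequalities $N_i-n\ge i+1-m-n$ (immediate for $i\ge 2m$, and for $i<2m$ following from $b_i\ge i/2+1-n\ge i+1-m-n$ since $i\le 2m$ gives $m\ge i/2$), together with the observation that the remaining range of even $i$ — those with $i<2m$, $i\le 2n$ and $i\ge m+n+1$ — is empty, because $m\le n-1$ and $m\ge n+3$ cannot hold simultaneously. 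I then set $a_{2j}=b_{2j}-\gamma_{2j}\ge 0$ for $[(n+2)/2]\le j\le m+n-1$.

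It remains to sum $\sum\gamma_i t^i$ over even $i$ in the range and identify it with the displayed rational function, using \eqref{sum-geom}. Here $\sum_{\text{even }i\ge 2[(n+2)/2]}t^i=\dfrac{t^{2[(n+2)/2]}}{1-t^2}=\dfrac{t^{2[(n+2)/2]}-t^{2[(n+4)/2]}}{(t^2-1)^2}$, the starting exponent recording the parity of $n$; adding $\sum_{\text{even }i\ge m+n+1}(i-m-n)t^i$ (so that for such $i$ the coefficient becomes $1+(i-m-n)=\gamma_i$) contributes $\dfrac{t^{m+n+1}+t^{m+n+3}}{(t^2-1)^2}$ when $m+n$ is odd and $\dfrac{2t^{m+n+2}}{(t^2-1)^2}$ when $m+n$ is even; finally subtracting the tail $\sum_{\text{even }i\ge 2m+2n}\gamma_i t^i=\dfrac{(m+n+1)t^{2m+2n}-(m+n-1)t^{2m+2n+2}}{(t^2-1)^2}$ truncates the series at degree $2m+2n-2$ and produces precisely the displayed correction term. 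Carrying this out in the four sub-cases ($n$ even or odd, crossed with $m+n$ even or odd) gives the two stated formulas.

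I expect the main difficulty to be purely combinatorial bookkeeping rather than any new analytic input: one must keep careful track of $N_i$ across the thresholds $i=2m$ and $i=2n$, verify that the various degenerate ranges of $i$ are indeed empty, and perform the geometric summations with the correct endpoints so that all the cancellations producing the $t^{2m+2n}$ and $t^{2m+2n+2}$ terms occur simultaneously in each parity case. In particular, unlike the proofs of Lemmas~\ref{arrangecase1} and \ref{arrangecase2} and Proposition~\ref{lemma2m+2n-2}, no non-vanishing-minor results (Lemmas~\ref{construction3}, \ref{construction8}) are needed here, precisely because only lower bounds on $b_i$, not the exact ranks, enter the statement.
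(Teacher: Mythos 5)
Your proposal is correct and follows essentially the same route as the paper: establish the lower bounds $b_i\ge 1$ for even $i\le m+n$ and $b_i\ge i+1-m-n$ for even $i>m+n$ (splitting at the threshold $i=2m$ exactly as the paper does), then sum the minimal profile via \eqref{sum-geom} and absorb the excess into the nonnegative coefficients $a_{2j}$. The only cosmetic difference is that you justify $b_i\ge 1$ by exhibiting the explicit quasi-invariant $(x_1^2+x_2^2)^{i/2}$, where the paper merely asserts linear dependence of the columns of the condition matrix — your version is, if anything, the cleaner argument, and the spurious side remark about the range $i<2m$, $i\le 2n$, $i\ge m+n+1$ being empty is harmless since that case is already covered by your inequality $i/2+1-n\ge i+1-m-n$.
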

\begin{proof}
Let $b_i$ be the dimension of the space of homogeneous quasi-invariant polynomials of even degree $i$ such that $n+1 \le i \le 2(m+n-1)$.
We claim that $b_i \ge 1$ for $i\le m+n$ and $b_i \ge i+1-m-n$ for $i>m+n$. Indeed, consider the matrix $D$ expressing the quasi-invariant conditions for the non-zero coefficients of the polynomial of degree $i$. It is easy to see that the columns of $D$ are linearly dependent hence $b_i \ge 1$. Let now  $i>m+n$. Assume firstly that $i\ge 2m$. Then $D$ has $i+1-m>n$ columns and $\rank D \le n$, hence $b_i \ge i+1-m-n$ as required. Suppose now that $m+n<i<2m$. Then after the elementary transformations the matrix $D$ takes the form

\begin{equation*}
\tilde{D}=\left( \begin{array}{cccc}
\alpha_1^{i-1}& \ldots& \alpha_1^3&\alpha_1\\
\alpha_2^{i-1}& \ldots& \alpha_2^3&\alpha_2\\
\vdots&&\vdots&\vdots \\
\alpha_{r}^{i-1}& \ldots& \alpha_{r}^3&\alpha_{r}\\
\end{array}\right),
\end{equation*}
so $\rank D = \rank \tilde D \le \min(i/2, r)\le \min(m,n)=n$. Then $b_i \ge i/2+1-n > i+1-m-n$ since $i<2m$.

Let $P_{g, even}^{n+2, 2m+2n-2}$ be the segment of the Hilbert series where we take the minimal values $b_i$ from the above estimates:
\begin{equation}\label{pg}
P_{g, even}^{n+1, 2m+2n-2}
=\sum_{\nad{i=n+1}{ i \,\,  even}}^{m+n}t^i+\sum_{\nad{i=m+n+1}{i \,\, even}}^{2m+2n-2}(i+1-m-n)t^i.
\end{equation}
If $m+n$ is odd then \eqref{pg} takes the form
\begin{align}\label{calc1gen}&P_{g, even}^{n+1, 2m+2n-2}
=\sum_{i=[(n+2)/2]}^{(m+n-1)/2}t^{2i}+\sum_{i=(m+n+1)/2}^{m+n-1}(i+1-m-n)t^{2i}  \notag  \\
&=\frac{t^{2[(n+2)/2]}-t^{2[(n+4)/2]}+t^{m+n+1}+t^{m+n+3}-(m+n+1)t^{2m+2n}+(m+n-1)t^{2m+2n+2}}{(t^2-1)^2}
\end{align}
upon applying \eqref{sum-geom}. Similarly, if $m+n$ is even then
\begin{align}\label{calc2gen}&P_{g, even}^{n+1, 2m+2n-2}
=\sum_{i=[(n+2)/2]}^{(m+n)/2}t^{2i}+\sum_{i=(m+n+2)/2}^{m+n-1}(i+1-m-n)t^{2i}
\end{align}
and the statement follows from the relations \eqref{sum-geom}.
\end{proof}

Corollaries \ref{pg1-cor},
 \ref{pg2}, \ref{pg3} and Lemmas  \ref{pg2m-1<n+11}, \ref{pg2m-1<n+12} imply the following statement.

\begin{corollary}\label{pg2m-1<n+14}  If $2r\le m+n$ then the Hilbert series \eqref{poincare2} takes the form
\begin{equation}\label{possible-hilb}
P(t)=\frac{P_{g,r}(t)}{(t^2-1)^2} + \sum_{j=[(n+2)/2]}^{m+n-1} a_{2j} t^{2j},
\end{equation}
where $a_{2j}\in \Z_{\ge 0} \, \forall j$ and
$$
P_{g,r}(t) = 1-t^2+t^{2r+1}+t^{2n+2m-2r+1}+t^{m+n+1}+t^{m+n+3}
$$
when $m+n$ is odd, and
$$
P_{g,r}(t) = 1-t^2+t^{2r+1}+t^{2n+2m-2r+1}+2 t^{m+n+2}
$$
when $m+n$ is even.

If $2r>m+n$ then the Hilbert series \eqref{poincare2} takes the form
$$
P(t)=\frac{P_{g}(t)}{(t^2-1)^2} + \sum_{j=[(n+2)/2]}^{m+n-1} a_{2j} t^{2j} + \sum_{i=[\frac{n+1}{2}]}^{m+[n/2]-2}a_{2i+1}t^{2i+1},
$$
where $a_{2j}, a_{2i+1}\in \Z_{\ge 0} \, \forall i, j$ and
$$
P_{g}(t) = 1-t^2+t^{m+n}+t^{m+n+1}+t^{m+n+2}+t^{m+n+3}
$$
when $m+n$ is odd, and
$$
P_{g}(t) = 1-t^2+2t^{m+n+1}+2t^{m+n+2}
$$
when $m+n$ is even.
\end{corollary}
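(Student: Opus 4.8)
The plan is to read off $P(t)=\sum_{k\ge 0}b_kt^k$ as the sum of the blocks already computed in the cited results, which between them account for every non-negative degree exactly once. Concretely, Corollary~\ref{pg1-cor} gives the initial block $0\le k\le n$ as $(1-t^{2[(n+2)/2]})/(1-t^2)$; Lemma~\ref{pg2m-1<n+12} gives the even degrees $n+1\le k\le 2m+2n-2$ up to a non-negative correction $\sum_{j=[(n+2)/2]}^{m+n-1}a_{2j}t^{2j}$; Lemma~\ref{pg2m-1<n+11} gives the odd degrees $n+1\le k\le 2m+n-1$; Corollary~\ref{pg2} gives the odd degrees $2m+n+1\le k\le 2m+2n-3$; and Corollary~\ref{pg3} gives all $k\ge 2m+2n-1$ (the lower endpoints of the two middle odd blocks drop by $1$ when $n$ is odd, and degree $n$ itself is always covered by Corollary~\ref{pg1-cor}). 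The first thing to check is that these ranges tile $\Z_{\ge 0}$ without gaps or overlaps: for each parity of $n$ the odd part continues without a gap from $2m+n-1$ (resp.\ $2m+n-2$) to $2m+n+1$ (resp.\ $2m+n$), the even part from $n$ (resp.\ $n-1$) to $n+2$ (resp.\ $n+1$), and everything of degree $\ge 2m+2n-1$ is absorbed by Corollary~\ref{pg3}. Then $P(t)$ is just the sum of these five contributions.

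Next I would run the case split, which is along two independent axes. The parity of $m+n$ only selects between the explicit shapes already recorded in Lemmas~\ref{pg2m-1<n+11} and~\ref{pg2m-1<n+12}. The dichotomy $2r\le m+n$ versus $2r>m+n$ is more substantial. When $2r\le m+n$, Lemma~\ref{pg2m-1<n+11} supplies the odd middle block as an unconditional equality depending on $r$ through the two monomials $t^{2r+1}$ and $t^{2n+2m-2r+1}$, so summing the five pieces and putting everything over $(t^2-1)^2$ yields exactly $P_{g,r}(t)/(t^2-1)^2+\sum_j a_{2j}t^{2j}$ after collecting terms. When $2r>m+n$, the formula of Lemma~\ref{pg2m-1<n+11} for $P_{odd}^{n+1,2m+n-1}$ is only valid under the maximal-rank hypothesis on the matrix $D$; in general one only has the inequality $b_i\ge\max(0,\,i+1-m-n)$ on that range, with equality precisely when $D$ has maximal rank. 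Hence unconditionally the true odd middle block equals the ``generic'' one of Lemma~\ref{pg2m-1<n+11} plus a non-negative correction supported exactly on the odd degrees $i$ with $n+1\le i\le m+n-1$, that is on $i=2j+1$ for $[\tfrac{n+1}{2}]\le j\le m+[n/2]-2$; combining with the other four blocks and simplifying gives $P_g(t)/(t^2-1)^2+\sum_j a_{2j}t^{2j}+\sum_i a_{2i+1}t^{2i+1}$.

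The argument is thus essentially bookkeeping, and there is no conceptual obstacle; the only points that need care are (i) verifying the degree ranges tile $\Z_{\ge 0}$, including the boundary behaviour at $n$, at $2m+n-1$, and at $2m+2n-1$ for both parities of $n$; (ii) pinning down precisely which coefficients remain undetermined --- the even degrees $2[(n+2)/2]\le k\le 2m+2n-2$ in both cases, plus the odd degrees $2[\tfrac{n+1}{2}]+1\le k\le 2m+2[n/2]-3$ when $2r>m+n$ --- and matching these to the index sets of the $a_{2j}$ and $a_{2i+1}$; and (iii) the routine but lengthy collapse of the sum of four rational functions of denominator $(t^2-1)^2$ together with $(1-t^{2[(n+2)/2]})/(1-t^2)$ into the stated closed form, where the geometric-sum identities \eqref{sum-geom} do all the work. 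The main practical obstacle is keeping the parity-dependent endpoint shifts consistent through all these additions so that the numerators cancel down to $P_{g,r}(t)$ (resp.\ $P_g(t)$); beyond that the proof is mechanical.
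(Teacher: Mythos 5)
Your decomposition is exactly the paper's: the corollary is proved there in one line by citing Corollaries \ref{pg1-cor}, \ref{pg2}, \ref{pg3} and Lemmas \ref{pg2m-1<n+11}, \ref{pg2m-1<n+12}, and your assembly of these blocks, the tiling check over both parities of $n$, and the split into $2r\le m+n$ versus $2r>m+n$ all match the intended argument.

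One concrete point in your $2r>m+n$ discussion is wrong, though. You assert that the non-negative odd correction is ``supported exactly on the odd degrees $i$ with $n+1\le i\le m+n-1$'', and this also contradicts the index set you give in the same sentence: $i=2j+1$ with $[\frac{n+1}{2}]\le j\le m+[n/2]-2$ runs over odd degrees up to $2m+2[n/2]-3$ (i.e.\ $2m+n-3$ for even $n$), not up to $m+n-1$. The rank of $D$ can a priori fail to be maximal precisely in the regime (II$'$) of the proof of Lemma \ref{pg2m-1<n+11}, namely for odd $i$ with $2(m+n-r)+1\le i\le 2r-1$ intersected with $[\max(n+1,2m+1),\,2m+n-1]$, and since $2r-1\ge m+n$ this window extends beyond $m+n-1$ whenever $m>2$: for instance with $m=4$, $n=r=10$ a rank deficiency at $i=15>m+n-1=13$ cannot be excluded by your reasoning, so the correction must be allowed on the full range appearing in the statement. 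Relatedly, excluding the top odd degree $2m+n-1$ for even $n$ (resp.\ $2m+n-2$ for odd $n$) from the support is not automatic from the rank inequality alone; it requires Lemma \ref{lemma2m+n-1}, which fixes $b_{2m+n-1}=m$ (resp.\ $b_{2m+n-2}=m-1$) unconditionally. With the support corrected and that lemma invoked, your argument goes through and coincides with the paper's.
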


\begin{remark}\label{remark-on-a-critical}
Note that it follows from Proposition \ref{lemma2m+2n-2}, Lemmas \ref{arrangecase1}, \ref{pg2m-1<n+12} and the calculations \eqref{pg}-\eqref{calc2gen} that $a_{2(m+n-1)}\ne 0$ if and only if $\mathcal A$ is equivalent to the configuration $\amn$, in which case $a_{2(m+n-1)}=1$.
\end{remark}

We are now ready to prove the main result of this Section.

\begin{theorem}\label{gorenimpliesarrangeallthecases} Let a configuration $\mathcal A$ of type $(m,1^n)$ be Gorenstein. Then $\mathcal{A}$ is equivalent to the configuration $\arrange$.
\end{theorem}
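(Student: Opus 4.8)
The plan is to distill the Gorenstein condition down to the single coefficient $a_{2(m+n-1)}$ and then quote Remark~\ref{remark-on-a-critical}. By Corollary~\ref{pg2m-1<n+14}, for \emph{every} value of the parameter $r$ the Hilbert series can be written as
\[
P_{\mathcal A}(t)=\frac{h(t)}{(t^2-1)^2},\qquad h(t)=P_g(t)+(t^2-1)^2 g(t),
\]
where $P_g(t)$ is the explicit polynomial given there ($P_{g,r}(t)$ if $2r\le m+n$, and the polynomial denoted $P_g(t)$ there if $2r>m+n$), and $g(t)=\sum_j a_{2j}t^{2j}+\sum_i a_{2i+1}t^{2i+1}$ has non-negative coefficients supported in the windows $[(n+2)/2]\le j\le m+n-1$ and, only when $2r>m+n$, $[(n+1)/2]\le i\le m+[n/2]-2$. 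In particular $\deg g\le 2(m+n-1)=2m+2n-2$; the unique monomial of $g$ of degree $2m+2n-2$ is $a_{2(m+n-1)}t^{2m+2n-2}$; and $\deg P_g\le m+n+3<2m+2n+2$.

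Next I would reformulate the hypothesis. Since $(t^{-2}-1)^2=t^{-4}(t^2-1)^2$, the relation $P_{\mathcal A}(t^{-1})=t^M P_{\mathcal A}(t)$ is equivalent to $t^4 h(t^{-1})=t^{M}h(t)$, and because $h(0)=P_g(0)=1\ne 0$ this forces $h$ to be palindromic: $h(t)=t^{D}h(t^{-1})$ with $D=\deg h$ and $M=4-D$. For any palindromic polynomial of degree $D$ one has $2h'(1)=D\,h(1)$ (reindexing $k\mapsto D-k$ and using $c_k=c_{D-k}$ gives $h'(1)=\sum_k(D-k)c_k=D\,h(1)-h'(1)$). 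Now $(t^2-1)^2 g(t)$ vanishes to order $2$ at $t=1$, so it contributes nothing to $h(1)$ or $h'(1)$; and a direct inspection of the four shapes of $P_g$ in Corollary~\ref{pg2m-1<n+14} (the cases $2r\le m+n$ versus $2r>m+n$, and $m+n$ even versus odd) shows that the $r$-dependent exponents $2r+1$ and $2m+2n-2r+1$ enter $P_g'(1)$ only through their sum $2m+2n+2$, so that in every case
\[
h(1)=P_g(1)=4,\qquad h'(1)=P_g'(1)=4(m+n+1).
\]
Hence $D=2h'(1)/h(1)=2m+2n+2$.

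Finally I would compare degrees. Since $\deg P_g<2m+2n+2=D$, the degree of $h$ must be achieved through the term $(t^2-1)^2 g(t)$, whence $\deg g=2m+2n-2$; by the first paragraph this forces $a_{2(m+n-1)}\ne 0$. By Remark~\ref{remark-on-a-critical} (which in turn rests on Proposition~\ref{lemma2m+2n-2}), $a_{2(m+n-1)}\ne 0$ if and only if $\mathcal A$ is equivalent to $\amn$, which is exactly the assertion; note that the case split on $r$ has disappeared from the conclusion.

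The only computations needed along the way are the evaluations $P_g(1)=4$ and $P_g'(1)=4(m+n+1)$ for all four forms of $P_g(t)$, together with the routine bookkeeping that the even window $[(n+2)/2]\le j\le m+n-1$ is non-empty and does contain $j=m+n-1$ and that $\deg P_g<2m+2n+2$, with due attention to small values of $m,n$; I expect this bookkeeping to be the only place requiring care, since all the conceptual input — the precise shape of $P_{\mathcal A}(t)$ and the rigidity of $a_{2(m+n-1)}$ — has already been supplied by Corollary~\ref{pg2m-1<n+14} and Remark~\ref{remark-on-a-critical}.
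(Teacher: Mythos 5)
Your argument is correct, and it is a genuinely different (and more uniform) route than the paper's. The paper also reduces everything to showing $a_{2(m+n-1)}\ne 0$ and then invokes Remark~\ref{remark-on-a-critical}, but it gets there by a two-case analysis: for $2r\le m+n$ it observes that the only odd-degree monomials in the numerator are $t^{2r+1}$ and $t^{2(m+n-r)+1}$ and argues by a parity/matching argument that the palindromic numerator must have degree $2(m+n+1)$; for $2r>m+n$ it notes $\mathcal A\not\sim\amn$, hence $a_{2(m+n-1)}=0$, and then derives a contradiction through a rather delicate recursion $c_j=2c_{j-1}-c_{j-2}$ on differences of the palindromic coefficients. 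You replace all of this by the single identity $2h'(1)=(\deg h)\,h(1)$ for a palindromic numerator $h$, together with the observation that the correction term $(t^2-1)^2g(t)$ vanishes to order $2$ at $t=1$ and so contributes to neither $h(1)$ nor $h'(1)$; the evaluations $P_g(1)=4$ and $P_g'(1)=4(m+n+1)$ (which I checked hold in all four shapes of $P_g$, the $r$-dependence cancelling in the sum $(2r+1)+(2m+2n-2r+1)$) then pin down $\deg h=2(m+n+1)$ in every case at once, forcing $\deg g=2(m+n-1)$ and hence $a_{2(m+n-1)}\ne 0$. This collapses the paper's hardest case ($2r>m+n$) into the same one-line degree count, which is a real simplification.

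One small correction: your claim $\deg P_g\le m+n+3$ is false when $2r\le m+n$, since $P_{g,r}$ contains the monomial $t^{2m+2n-2r+1}$, whose degree exceeds $m+n+3$ as soon as $2r<m+n-2$. What you actually need, and what is true, is only the strict inequality $\deg P_g<2m+2n+2$: since $r\ge 1$ one has $2m+2n-2r+1\le 2m+2n-1$, and $m+n+3<2m+2n+2$ because $m,n\ge 1$. With that bound replaced, the rest of your bookkeeping (the even window $[(n+2)/2]\le j\le m+n-1$ is non-empty and contains $j=m+n-1$; the odd window tops out at degree $2m+2[n/2]-3<2m+2n-2$, so the top-degree coefficient of $g$ really is $a_{2(m+n-1)}$; and $4+\deg g\le 2m+2n+2$, so no cancellation issues arise at top degree) goes through, and the appeal to Remark~\ref{remark-on-a-critical} finishes the proof exactly as you say.
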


\begin{proof} We have $P(t)$ is palindromic. Suppose first that $2r\le m+n$.
 We rearrange the terms given by \eqref{possible-hilb} to the common denominator. Observe that the only terms of odd degree in the numerator are $t^{2r+1}$ and $t^{2(m+n-r)+1}$. Now, suppose the degree of the numerator is not $2(m+n+1)$. Then $t^{2r+1}$ and $t^{2(m+n-r)+1}$ cannot `match', so the total degree must be odd. However this means $-t^2$ must match with some term with odd power, which is not possible as the coefficients are different. So the total degree is $2(m+n+1)$. This means that $a_{2(m+n-1)}\ne 0$. It follows from Remark \ref{remark-on-a-critical} that $\mathcal{A}$ is equivalent to $\arrange$.

Now suppose that $2r > m+n$. Note that in this case $r \neq [\frac{n+1}{2}]$ and so $\mathcal{A} \neq \arrange$. By Remark \ref{remark-on-a-critical} this implies $a_{2m+2n-2}=0$.  Suppose that $n+m$ is even and that $\mathcal A$ is Gorenstein. Let us rearrange the series $P(t)$ as
{\small
\begin{align} \label{m+n-even}
&P(t)= \nonumber \\
&\frac{1-t^2+2t^{m+n+1}+2t^{m+n+2}+(t^2-1)^2\sum_{i=[\frac{n+1}{2}]}^{m+[n/2]-2}a_{2i+1}t^{2i+1} + \sum_{j=[\frac{n}{2}]+1}^{m+n+1}(a_{2j-4}-2a_{2j-2}+a_{2j})t^{2j}}{(t^2-1)^2},  \nonumber \\
\end{align}
}
where 
we put $a_{2[n/2]-2}=a_{2[n/2]}=a_{2(m+n)}=a_{2(m+n+1)}=0$.
Let $d$ be the degree of the numerator of  \eqref{m+n-even}.

Suppose that $d$ is odd. Then $d\le 2m+2[n/2]+1$. Notice that all the even powers $t^{2j}$ in the numerator of  \eqref{m+n-even} vanish for $m+n+2 \le 2j \le 2(m+n)$. This follows from the palindromicity of the numerator of  \eqref{m+n-even}. Indeed, if any of these powers has non-zero coefficient then it should match with some odd power $t^k$ since $d$ is even. However the coefficient at $t^k$ is zero unless $k \ge 2[(n+1)/2]+1$. In the latter case  $k+ 2j \ge 2[(n+1)/2]+1+m+n+2 >d$ since $n\ge m$, as $r \le n$, so this is impossible and the mentioned even powers vanish. Consider the even powers $t^{2j}$ with $m+n+4\le 2j\le 2(m+n)$. It follows recursively that $a_{2j-4}=0$. Consider now the even power $t^{m+n+2}$. It comes with the coefficient  \begin{equation}
\label{a-nonzero}
a_{m+n+2}-2 a_{m+n}+a_{m+n-2}+2 = a_{m+n-2}+2 \ne 0.
\end{equation}
This is a contradiction which implies that $d$ is even.

Consider now the odd terms in the numerator  of \eqref{m+n-even} which form the palindromic polynomial by themselves. Let $p,q\in \N$, $p\le q$,  be such that $a_{2p+1}$ and $a_{2q+1}$ be respectively  the first and the last non-zero coefficient $a_{2i+1}$ from the numerator. Note that they exist since otherwise the only remaining odd term $t^{m+n+1}$ has degree bigger than $d/2$. Notice that the degree $n+m+1 \ge 2p+1$ and define $\widehat q = \max(q, (m+n)/2-2)$.
 Define $c_j=a_{2 \widehat q+1-2j} - a_{2p+1+2j}$. For $0 \le j < \widehat q-\frac{m+n}{2}+2$ we have from the palindromicity that $c_j = 2c_{j-1}-c_{j-2}$ and it follows that $c_j=0$.  Then $c_{\widehat q-(m+n)/2+2}=-2$. Let now $\widehat q-(m+n)/2+2 \le j \le [(\widehat q-p+1)/2]$. It follows from $c_j = 2c_{j-1}-c_{j-2}$ recursively that $c_j=-2j+2 \widehat q-m-n+2$.

If $\widehat q-p$ is even then $c_{(\widehat q-p)/2}=p+\widehat q-m-n+2<0$ since $d=(2p+1)+(2 \widehat q+5)\le 2(m+n)$. On the other hand $c_{(\widehat q-p)/2}=0$ from the definition, which is a contradiction. If $\widehat q-p$ is odd then it follows that $c_{(\widehat q-p-1)/2}+c_{(\widehat q-p+1)/2}=2(p+\widehat q-m-n+2)<0$. On the other hand $c_{(\widehat q-p-1)/2}+c_{(\widehat q-p+1)/2}=0$ from the definition, which is a contradiction. Hence ${\mathcal{A}}$ is not Gorenstein when $2r>m+n$.

In the case $m+n$ is odd the arguments are similar. The odd term $2t^{m+n+1}$  in the numerator of \eqref{m+n-even} is replaced with two odd terms $t^{m+n}+t^{m+n+2}$. The total degree $d$ is still even. Indeed, the arguments are same except that the consideration of even power $t^{m+n+3}$ gives $
a_{m+n+3}-2 a_{m+n+1}+a_{m+n-1}+1 = a_{m+n-1}+1 \ne 0$ in place of
 the relation \eqref{a-nonzero}. Then $\widehat q$ should be defined by $\widehat q = \max (q, (n+m+1)/2-2)$
and one gets $c_{\widehat q-(m+n+1)/2+2}=-1$. The formula  $c_j=-2j+2 \widehat q-m-n+2$ for $j \ge \widehat q-(m+n+1)/2+2$ and the subsequent arguments remain unchanged.
\end{proof}

\section{Concluding remarks}

In this paper we studied the class $\mathcal{BA}$ of Baker-Akhiezer configurations on the plane. In the case when at most one multiplicity is arbitrary our results are most complete. Indeed, all the corresponding configurations  are explicitly described in terms of symmetric polynomials of the coordinates of the vectors, there is a description in terms of Darboux transformations too. In this case we also computed the Hilbert series of the corresponding algebras of quasi-invariants and noted the algebras are Gorenstein. We described the class $\mathcal{G}$ of all configurations with Gorenstein quasi-invariants and arrived to the same configurations:
\begin{equation}\label{bagor}
\mathcal{ BA = G}.
\end{equation}
It would be interesting to clarify whether the coincidence \eqref{bagor} holds for more general multiplicities on the plane and also for the configurations in higher dimensions. Although this looks plausible none of the two inclusions seems clear to us. A Gaussian bilinear form on the space of quasi-invariants when a configuration satisfies the conditions  \eqref{1stconds} can be defined (cf. \cite{FHV}), it might be relevant to the analysis of the Gorenstein property.  Furthermore, it would be important to clarify in the two-dimensional case whether the configurations ${\mathcal A}^q_{(m, \tilde m, 1^n)}$ that we considered in this paper exhaust the set $\mathcal{BA}$.

A further remark is on the class of locus configurations $\mathcal{BA}_w$ that admit the weaker version of the Baker-Akhiezer function \cite{CFV}. In the two-dimensional case these configurations are  described in \cite{BL}, \cite{B} (see also \cite{Muller}).  As all our planar configurations $\mathcal{BA}$ are happen to be real it would be interesting to clarify whether the subclass of real configurations from $\mathcal{BA}_w$ coincides with the class $\mathcal{BA}$ both in the two-dimensional case and in higher dimensions.

Finally, papers \cite{Zh1, Zh2} deal with the correspondence between `geometric data' and commutative rings of partial differential operators, particularly, in two variables. Thus it would be interesting to describe the geometric data corresponding to the rings of quasi-invariants considered in this paper.

\section{Acknowledgements}
We are grateful to O.A. Chalykh for useful discussions.
The work of both authors was supported by EPSRC grant EP/F032889/1.
 MF also acknowledges support  from Royal Society/RFBR joint project JP101196/11-01-92612.

\end{document}